\documentclass[12pt]{article}
\usepackage[dvipsnames,svgnames,x11names]{xcolor}
\usepackage{amsmath}
\usepackage{amsthm}
\usepackage{xspace}
\usepackage{float}
\usepackage[T1]{fontenc}
\usepackage{mdframed}
\usepackage{mathtools}
\usepackage{enumitem}
\usepackage[linesnumbered,ruled]{algorithm2e}
\usepackage{algpseudocode}
\usepackage{authblk}

\usepackage{fullpage}
\usepackage{thmtools}
\usepackage[colorlinks]{hyperref}
\hypersetup{colorlinks={true},linkcolor={blue},citecolor=magenta}

\usepackage{amssymb,amsfonts,amsmath,amsthm,amscd,dsfont,mathrsfs}
\usepackage{complexity}
\usepackage{tikz}
\usetikzlibrary{decorations.pathreplacing,backgrounds}
\usepackage{multirow}
\usepackage{mathpazo}
\usepackage{braket}
\usepackage{comment}
\usepackage{quantikz}

\usepackage[
backend=biber,
style=alphabetic,
sorting=ynt,
maxnames=99
]{biblatex}
\bibliography{ref}
\urlstyle{same}

\setcounter{biburlnumpenalty}{100}
\setcounter{biburlucpenalty}{100}
\setcounter{biburllcpenalty}{100}
\usepackage{url}
\usepackage{bbm}
\usepackage[capitalize]{cleveref}
\Urlmuskip=0mu plus 1mu
\usepackage{hyperref} 
\usepackage[margin=1in]{geometry}
\hypersetup{breaklinks=true}

  \theoremstyle{plain}
  \newtheorem{theorem}{Theorem}[section]
  \newtheorem{lemma}[theorem]{Lemma}
  \newtheorem{corollary}[theorem]{Corollary}
  \newtheorem{definition}[theorem]{Definition}
  \newtheorem{remark}[theorem]{Remark}
  \newtheorem{claim}[theorem]{Claim}

  \newtheorem{world}{World}
 \newtheorem*{theorem*}{Theorem}
\newtheorem*{definition*}{Definition}

\newmdtheoremenv[backgroundcolor=gray!10,
                 linewidth=0pt,
                 innerleftmargin=4pt,
                 innerrightmargin=4pt,
                 innertopmargin=1pt,
                 innerbottommargin=4pt,
            splitbottomskip=4pt]{problem}[prob]{Problem}

\newmdtheoremenv[backgroundcolor=gray!10,
                 linewidth=0pt,
                 innerleftmargin=4pt,
                 innerrightmargin=4pt,
                 innertopmargin=1pt,
                 innerbottommargin=5.5pt,
            splitbottomskip=4pt]{conjecture}[conj]{Conjecture}

\newcommand{\abs}[1]{\left| #1 \right|}

\newcommand{\N}{\mathbb{N}}

\newcommand{\Z}{\mathbb{Z}}

\newcommand{\cO}{\mbox{\sffamily cO}}
\newcommand{\inD}{\Pi^{\in\mathsf{db}}}
\newcommand{\ninD}{\Pi^{\not\in\mathsf{db}}}
\newcommand{\Q}{\mathsf{Q}}
\newcommand{\CC}{\mathsf{C}}

\newcommand\numberthis{\addtocounter{equation}{1}\tag{\theequation}}

\renewcommand{\proj}[1]{\ensuremath{|#1\rangle \langle #1|}}

\newcommand{\norm}[1]{\left\Vert {#1} \right\Vert}

\renewcommand{\C}{\mathbb{C}}
\renewcommand{\E}{\mathbb{E}}

\newcommand{\bit}{\{0,1\}}

\newcommand{\algo}{\mathcal}

\newcommand{\negl}{\ensuremath{\operatorname{negl}}\xspace}

\newcommand{\Sp}{\mathsf{Sp}}
\newcommand{\MSp}{\mathsf{MSp}}

\newcommand{\outerprod}[2]{|#1\rangle\langle #2|}

\newcommand{\ig}{\mathsf{ig}}
\newcommand{\rg}{\mathsf{rg}}

\def\myvdots{\ \vdots\ }

\title{The Sponge is Quantum Indifferentiable}
\author[1,2]{Gorjan Alagic}
\author[1]{Joseph Carolan}
\author[3]{Christian Majenz}
\author[3]{Saliha Tokat}
\affil[1]{University of Maryland}
\affil[2]{National Institute of Standards and Technology}
\affil[3]{Technical University of Denmark}

\date{}

\begin{document}

\maketitle

\abstract{The sponge is a cryptographic construction that turns a public permutation into a hash function. When instantiated with the Keccak permutation, the sponge forms the NIST SHA-3 standard. SHA-3 is a core component of most post-quantum public-key cryptography schemes slated for worldwide adoption.

While one can consider many security properties for the sponge, the ultimate one is \emph{indifferentiability from a random oracle}, or simply \emph{indifferentiability}. The sponge was proved indifferentiable against classical adversaries by Bertoni et al. in 2008. Despite significant efforts in the years since, little is known about sponge security against quantum adversaries, even for simple properties like preimage or collision resistance beyond a single round. This is primarily due to the lack of a satisfactory quantum analog of the lazy sampling technique for permutations.

In this work, we develop a specialized technique that overcomes this barrier in the case of the sponge. We prove that the sponge is in fact indifferentiable from a random oracle against quantum adversaries. Our result establishes that the domain extension technique behind SHA-3 is secure in the post-quantum setting. Our indifferentiability bound for the sponge is a loose $O(\mathsf{\poly}(q) 2^{-\min(r, c)/4})$, but we also give bounds on preimage and collision resistance that are tighter.}

\newpage
\tableofcontents
\newpage

\section{Introduction}

\subsection{The sponge construction}

The sponge construction \cite{BDPvA07,KeccakSponge3} is a domain extension scheme that uses a public cryptographic permutation to construct a hash function or extendable-output function (XOF). The sponge is most prominently used in the SHA-3 standard~\cite{fips202}. The SHA-3 hash functions and XOFs, in turn, are core ingredients in all post-quantum cryptographic schemes recently standardized by NIST~\cite{fips203,fips204,fips205}. Ascon, a suite of lightweight symmetric-key schemes selected by NIST for standardization is also based on the sponge~\cite{sp800-232}.

To instantiate the sponge, one chooses a permutation $\varphi: \bit^n \rightarrow \bit^n$ as well as positive ``rate'' $r$ and ``capacity'' $c$ such that $r+c = n$. Given this data, the sponge $\Sp^{\varphi}$ is a XOF defined as follows. Parse the input $x = x_1 \| x_2 \| \cdots x_\ell$ into length-$r$ blocks. Initialize the computation with the state $0^r \| 0^c$. XOR $x_1$ into the rate, and then apply $\varphi$ to the entire state; repeat this for each block of the input, i.e., $\ell$-many times. This completes the \emph{absorbing phase}. Next, output the $r$ bits contained in the rate, and then apply $\varphi$ to the entire state; repeat this process until the desired number of output bits is produced. This completes the \emph{squeezing phase}. Turning this XOF into a hash function amounts to selecting a fixed output length. For example, an $r$-bit output on a $2r$-bit input is simply the first $r$ bits of $\varphi((x_2\|0^c) \oplus \varphi(x_1 \|0^c))$, as depicted in \Cref{fig:sponge-example}.

\begin{figure}[h]
    \centering
    \includegraphics[width=0.5\linewidth]{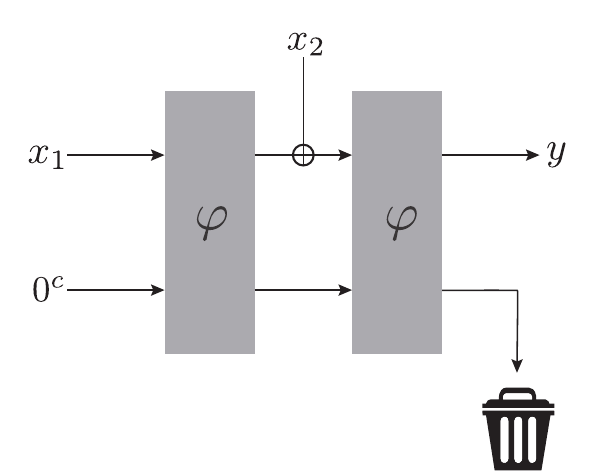}
    \caption{Sponge construction on input $x_1 \Vert x_2$, with output $y$.}
    \label{fig:sponge-example}
\end{figure}

\paragraph{Classical security.}

The security of the sponge construction is usually analyzed in the ideal permutation model, where the cryptographic permutation $\varphi$ is modeled as a uniformly random permutation given as an oracle. As with all hash functions, the minimum security requirements include one-wayness and collision resistance. The ``gold standard'' for domain extension scheme security is indifferentiability from a random oracle~\cite{MRH04}. Roughly speaking, indifferentiability in this context means that the pair (construction $\Sp^{\varphi}$, primitive $\varphi$) is  query-indistinguishable from the pair (truly random function $f$, simulated primitive $\algo S^f$). The sponge construction where $\varphi$ is a random permutation is indifferentiable from a random oracle, up to advantage $O(q^2/2^c)$ for a classical $q$-query adversary~\cite{BDPvA08}. This result is tight, and it implies tight collision resistance, as well as tight preimage resistance for some parameters. For the XOF setting, tight preimage resistance was proven separately \cite{LM22}.

The proofs for these classical results follow a general blueprint. In order to evaluate $\varphi$, the adversary has to query it. The simulator can examine the queries of the adversary and use its access to $f$ to try to respond in a consistent manner. When an adversary attempts to break a certain property (e.g., find a collision, or distinguish the real and ideal worlds of the indifferentiability experiment), the list of input-output pairs of the permutation that the adversary has learned is analyzed. It is then shown that a query list allowing the adversary to achieve its goal is unlikely.

\paragraph{Quantum security.}

Rapid technological advances have made it increasingly necessary to ensure that cryptographic mechanisms remain secure against adversaries equipped with a quantum computer. In the case of SHA-3, such adversaries are free to evaluate both the Keccak permutation and SHA-3 itself in coherent superposition. It is thus appropriate to analyze the sponge construction in the quantum-accessible ideal permutation model (QIPM). In this model, the adversary can make quantum queries to both the random permutation $\varphi$ and the sponge $\Sp^{\varphi}$. Proving non-trivial security results in the QIPM has turned out to be challenging. One obvious obstacle is that there is no quantum analog of the generic proof blueprint outlined above. In particular, quantum queries to a random permutation cannot be recorded into a list due to the fact that quantum information cannot be copied (\emph{no-cloning principle}). This obstruction can be circumvented, but only in the related setting of quantum-queryable \emph{random functions}, where Zhandry's compressed oracle technique \cite{Zhandry2018} has enabled generalizing a multitude of proof techniques to the post-quantum setting. Indifferentiability appears particularly challenging in the quantum-query context. For domain extension schemes, it is known to require stateful simulators \cite{ristenpart07careful}, and the only plausible quantum technique for that is the compressed oracle.

As a consequence of these challenges, prior results on the quantum security of the sponge either instantiate it with a random function instead of a random permutation (as in~\cite{CHS19,CGBHS18}), or limit the sponge to one round of absorption only
(as in~\cite{Zhandry21,carolan24oneway,cpz24precomp,MMW24}). There have also been attempts to generalize the compressed oracle technique to the random permutation setting. The fundamental idea of \emph{purifying} the random primitive is still sound \cite{ABKM22,ABKMS24,MMW24}. The statistical dependence of the outputs of the random permutation, however, complicate the task of \emph{oracle forensics}, the task of analyzing an adversary's behavior by inspecting the internal state of the oracle simulation. As a result, attempts at constructing a compressed permutation oracle that offers this feature have been unsuccessful \cite{Czajkowski2019,Unruh2021}\footnote{The difficulty of this problem has inspired very nice works that present approaches and characterizing results \cite{rosmanis2022tight,Unruh2023}} or are currently limited to the analysis of simple problems \cite{MMW24}.

The state of affairs is thus still very unsatisfactory: even basic security properties like quantum preimage or collision resistance, let alone indifferentiability, are not known to hold for the sponge, and seem out of range for existing techniques. This is a concerning gap, as SHA-3 is in widespread use and cryptographically-relevant quantum computers might appear in the not-too-distant future.

\subsection{Our approach}
\label{subsec:approach}

In this work, we establish the quantum security properties of the sponge discussed above. Rather than  constructing a fully general permutation analogue of the compressed oracle, we develop a tailored approach for our important special case. 

We begin by observing that there are subgroups of the full permutation group where random elements have nice compressed oracles. For example, one round of the Feistel construction~\cite{LR88} is a (highly structured) permutation, and the set of these forms a subgroup (with composition corresponding to XORing the underlying round functions). A random element from this set is constructed from a random function, and is thus amenable to the compressed oracle technique\footnote{Feistel itself is unfortunately still resistant to quantum security analysis, partly because there's no compressed oracle for the overall permutation.}.  For us it is natural to consider an unbalanced Feistel setting, split between the $c$ bits of the capacity and the $r$ bits of the rate. We express the sponge permutation $\varphi$ as  
\begin{equation}\label{eq:intro-permutation}
    \varphi = \omega_h \circ \tau_{k'} \circ \pi \circ \sigma_k\,,
\end{equation}
where $\sigma_k(x\|y) = x\|y \oplus k(x), \tau_{k'}(x\| y) = x \| y \oplus k'(x)$ and $\omega_h(x\|y) = x \oplus h(y)\|y$, where $k, k', h$ are random functions, and $\pi$ is a random permutation. In other words, first apply a one-round Feistel cipher, then apply a random permutation $\pi$ on $\bit^{r+c}$, and end with a two-round Feistel cipher. We are able to show that the three Feistel rounds combined with some global statistical properties of $\pi$ are sufficient to account for the intractability of $\varphi$ necessary for sponge security. Our focus here is on applying this idea to the sponge, but this technique may be useful for proving quantum security of other constructions based on ciphers and permutations, such as the Luby-Rackoff or Feistel construction itself. 

The reason this decomposition is useful is that we have ``offloaded'' the hardness of $\varphi$ onto the $h, k, k'$ functions, which we have quantum tools to analyze. The sponge becomes insecure when an adversary can find two inputs which lead to the same state after absorbing, and it turns out that finding two such inputs will require many queries to the random functions $h, k, k'$. This holds even in the case where an adversary gets to see the whole truth table of $\pi$.

When directly proving query bounds, such as for finding a sponge collision, we can now relax the problem: we provide oracle access to $h, k$ and $k'$, and hand the query algorithm the entire truth-table description of $\pi$. We can then formulate the problem as a search problem relative to a quantum-accessible random oracle which is entirely amenable to the compressed oracle technique. We apply that technique using the query bound framework of~\cite{chung2021compressed}.

For indifferentiability, we begin by showing that constructing $\varphi$ as in \Cref{eq:intro-permutation} is perfectly indifferentiable from sampling it uniformly at random. We then make a simplifying (but also perfectly indifferentiable) modification to the absorbing phase of the sponge: in each round, instead of XORing in the next message block, we simply replace the contents of the rate by that block. With these preparation steps out of the way, our high-level approach can then be seen as analogous to that of the quantum indifferentiability proof of the Merkle-Damg\r ard construction~\cite{Zhandry2018}. 

However, analyzing our construction proves more complicated. One reason is that, in Merkle-Damg\r ard, one can assume the round function has no state collisions. For the sponge, it is straightforward to find such collisions using inverse permutation calls; however, it is hard to find such collisions where both preimages are ``rooted'' at the initial state. In this way, one has to consider global properties that depend on many rounds, whereas single-round properties were sufficient for Merkle-Damg\r ard. Along the way, we uncover a gap in the Merkle-Damg\r ard proof. Our proof for the sponge avoids this problem, and we expect similar ideas to apply to Merkle-Damg\r ard, though they likely result in a looser bound.

\subsection{Summary of results}
\label{subsec:summary}

Recall that, given a permutation $\varphi: \bit^{r+c} \rightarrow \bit^{r+c}$, the sponge construction using $\varphi$ is denoted by $\Sp^{\varphi}$. 

\paragraph{Preimage and collision resistance.} For the case of a single round of squeezing (but arbitrary rounds of absorbing), we prove the following (non-tight) bounds for preimage and collision resistance. See \cref{sec:bounds-summary} for a technical overview.

\begin{theorem}[Informal summary of \cref{thm:quantum-collision,cor:quantum-preimage}]\label{thm:intro-CR}
The probability that a quantum algorithm $\mathcal{A}$ making no more than $q$ quantum queries to a random permutation $\varphi \in S_{\{0,1\}^{r+c}}$ and its inverse $\varphi^{-1}$ outputs a sponge collision is upper bounded as
\begin{align*}
    \underset{\varphi \leftarrow S_{\{0,1\}^n};\,m,m'\leftarrow \mathcal{A}}{\Pr}[m\neq m'\wedge \Sp^\varphi(m)=\Sp^\varphi(m')]\leq O(q^5n2^{-\min(r,c)}).
\end{align*}
The same bound also holds for preimage-resistance. For constant success probability, $\tilde\Omega\bigl(2^{\min(r,c)/5}\bigr)$ quantum queries are thus necessary to find a collision or a preimage.
\end{theorem}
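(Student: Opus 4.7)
The plan is to combine the perfect indifferentiability of the Feistel decomposition with a compressed-oracle analysis applied at the level of the Feistel round functions. First I would invoke \cref{eq:intro-permutation} together with the perfect indifferentiability claim from \cref{subsec:approach} to replace the random permutation $\varphi$ by the composition $\omega_h \circ \tau_{k'} \circ \pi \circ \sigma_k$ in the collision-finding experiment. By the perfect indifferentiability, the success probability is unchanged. Next I would strengthen the adversary by handing it the entire truth table of $\pi$ for free; this only helps the adversary, and the resulting game has a quantum attacker that makes $q$ queries solely to the three random functions $h, k, k'$. From this point on the remaining randomness in the experiment is random-function randomness, which is directly amenable to the compressed oracle framework.

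Next I would translate ``the returned $(m,m')$ is a sponge collision'' into a condition on the joint compressed database for $(h,k,k')$. Each absorbing step $s \mapsto \varphi(s \oplus (m_i\|0^c))$ unfolds into one query each to $k$, $k'$, and $h$, separated by the (now classical) application of $\pi$. A sponge collision between $m \neq m'$ therefore corresponds to two \emph{rooted query chains}, one per message, that both start from the fixed initial state $0^{r+c}$ and must meet at a common state (whose rate then determines the colliding output). The structural lemma I would need is that such a meeting forces at least one fresh Feistel output to coincide with a value fully determined by the rest of the transcript, with matching probability at most $2^{-r}$ in one Feistel layer and $2^{-c}$ in another, so the dominant contribution is $2^{-\min(r,c)}$. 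Knowing $\pi$ in full does not help here because $\pi$ is sandwiched between two independent layers of random-function mixing, so any claimed reduction in the number of fresh random bits to line up can be charged to one of the three uniformly random outputs of $k$, $k'$, or $h$.

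Having reduced the problem to this database form, I would apply the query bound framework of~\cite{chung2021compressed} to the joint compressed oracle for $(h,k,k')$. That framework reduces the overall quantum success probability to a sum over queries of a per-query progress probability: the probability that a single query extends the current database into the ``bad'' set of transcripts that witness a rooted collision. Using the structural lemma, this per-query probability is controlled by the number of candidate meeting configurations already present in the database times $2^{-\min(r,c)}$. A union bound over which absorbing round of which message hosts the meeting, over which Feistel layer carries the coincidence, and over $O(n)$ coordinate-level cases, together with the compressed-oracle bookkeeping, produces the stated $O(q^5\, n\, 2^{-\min(r,c)})$ estimate, and therefore the query lower bound $\tilde\Omega(2^{\min(r,c)/5})$ by inversion. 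The preimage bound follows by the same argument, where the adversary's target (the specified image) plays the role of one of the two rooted chains.

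The step I expect to be the main obstacle is the structural lemma in the second paragraph: carefully arguing that every sponge collision in the composed construction forces a detectable coincidence in the $(h,k,k')$-transcripts, even though $\pi$ is classically known and the queries are coherent superpositions. The delicate point is to rule out that the adversary could exploit global statistics of $\pi$ to cheaply align two rooted chains without paying a fresh Feistel coincidence; this is exactly what the three-round Feistel sandwich in \cref{eq:intro-permutation} is designed to prevent, and making that intuition quantitative at the level of compressed databases is where the real work lies. Once this lemma is in place, plugging it into the \cite{chung2021compressed} framework is largely mechanical.
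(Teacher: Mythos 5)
Your high-level plan is the same as the paper's: decompose $\varphi$ as $\omega_h\circ\tau_{k'}\circ\pi\circ\sigma_k$, hand the full truth table of $\pi$ to the adversary, use a compressed-oracle database for $(h,k,k')$, reduce "$\mathcal A$ outputs a collision" to "the database contains two rooted chains reaching the same output," and bound per-query progress with the \cite{chung2021compressed} transition-capacity framework. The paper's proof of \cref{thm:quantum-collision} (via \cref{lem:input-path}, \cref{lem:collision-in-db}, and \cref{cor:fund-lemma-with-R}) is exactly this pipeline.

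However, there is one genuine gap in your plan, and it sits precisely inside the step you flagged as "the real work." The per-query bound you want --- that a single query can complete a rooted collision only by hitting one of a small set of dangerous outputs, so the progress probability is (number of candidate configurations)$\cdot 2^{-\min(r,c)}$ --- is simply false for arbitrary databases. If at some point the database supports a state $z$ with many distinct tails, or several intermediate pairs sharing a prefix, the number of "candidate meeting configurations" grows multiplicatively and the count is not $\poly(t)$. The structural lemma as you state it therefore cannot be proved unconditionally; it must be conditioned on a \textbf{good-database} invariant (every $z$ has at most one tail, intermediate-pair prefixes are distinct; \cref{def:good-1}). The paper consequently does not bound $[\![\emptyset\xrightarrow{q}\mathrm C]\!]$ directly. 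It instead bounds $[\![\emptyset\xrightarrow{q}\mathrm C\cup\mathrm{BAD}]\!]$, splitting each query into a transition into $\mathrm C$ from a good database and a transition into $\mathrm{BAD}$ from a good database, each controlled via a local property of bounded distance (\cref{lem:bad-transition-bound}, \cref{lem:few-bad-attach}, \cref{lem:few-bad-comps}). Your proposal has no analogue of this good/bad decomposition, and without it the claimed per-query bound does not hold. It is also worth noting that the factor $n$ in the final bound arises not from "coordinate-level cases" but from the permutation tail bound in \cref{sec:perm-tail-bounds} (for a good $\pi$, at most $O(n + 2^{c-r})$ suffixes $z$ land in a given rate prefix), and that your framing "must meet at a common state" omits the possibility of a collision created by the final $h$-layer (case (3) in the proof of \cref{lem:collision-in-db}), which must be counted separately.
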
 

In the preimage resistance notion considered above, the adversary is given a uniformly random $y \in \bit^r$. This should be distinguished from one-wayness, where $y$ is instead be the image of a uniformly random input.

\paragraph{Indifferentiability.}

The following formalizes our main result, i.e., that the sponge is indifferentiable against adversaries making quantum queries to $\varphi,  \varphi^{-1}$ and $\Sp^{\varphi}$. Here $\Sp^{\varphi}$ denotes the full sponge, i.e., both the input and the output are of arbitrary length. For a given quantum query to $\Sp^{\varphi}$, the length of that query is simply the length of the longest message in the superposition. 

\begin{theorem}[Informal summary of \Cref{thm:main}]
    \label{thm:intro-indiff}
    There exists an efficient simulator $\algo S$ such that for all adversaries $\algo A$ making $q$ quantum queries each of length at most $r \cdot \ell$, 
    \begin{align*}
        \abs{\Pr[\algo A^{\varphi, \varphi^{-1}, \Sp^{\varphi}}() = 1] - \Pr[\algo A^{\algo S^f, f}() = 1]} = O\left(\ell^3 \sqrt[4]{q^9 2^{-\min(r, c)}}\right).
    \end{align*}
    In the above, $\varphi$ is a uniformly random permutation while $f$ is a uniformly random function with the same domain and range as $\Sp^{\varphi}$.
\end{theorem}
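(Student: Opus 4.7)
My plan follows the outline sketched in \Cref{subsec:approach}. First, I would apply the two perfectly-indifferentiable reductions noted there: rewrite the permutation as $\varphi = \omega_h \circ \tau_{k'} \circ \pi \circ \sigma_k$ with independent uniformly random Feistel functions $h, k, k'$ and an independent uniformly random permutation $\pi$, and replace the XOR-absorption of the sponge with overwrite-absorption. Both transformations are exact, so they contribute nothing to the final bound and leave us in a setting whose cryptographic hardness is concentrated in three random functions, a setting that is amenable to the compressed-oracle technique even when $\pi$ is treated as known to the adversary.

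Next, I would design the simulator $\mathcal{S}$. It maintains compressed-oracle databases $D_h, D_k, D_{k'}$ for the three Feistel functions alongside the truth table of $\pi$, and answers forward and inverse queries to $\varphi$ by composing the four stages against these databases. Before committing to a fresh value of $k$, the simulator inspects its databases to decide whether the resulting post-permutation state extends a rooted path, i.e., matches some partial sponge evaluation on a message $m$ starting from $0^{r+c}$. If so, it queries $f(m)$ and programs the remaining Feistel outputs on that branch so that squeezing reproduces the first $r$ output bits of $f(m)$; analogous programming occurs at each subsequent squeezing position. Under overwrite-absorption, every rooted path in the database corresponds to a unique $m$, which makes this extraction unambiguous.

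Third, I would execute a hybrid argument in the style of Zhandry's quantum indifferentiability proof for Merkle-Damg\r{a}rd, interpolating from $(\varphi, \varphi^{-1}, \Sp^\varphi)$ to $(\mathcal{S}^f, f)$ one component at a time. Each hybrid step is bounded using the compressed-oracle progress framework of~\cite{chung2021compressed}: I would maintain an invariant on the joint database state asserting that, except with small probability, the database encodes at most one rooted path ending in any given capacity, and every sponge image the adversary has observed is consistent with a rooted path stored in the database. The distinguishing advantage at each step is controlled by the probability that the invariant is broken. The two characteristic square-root losses — one from the compressed-oracle single-step transition bound and one from summing step-wise progress into a global bound — produce the $\sqrt[4]{\cdot}$ of the theorem, while the $\ell^3$ factor follows from a union bound over absorbing positions, squeezing positions, and the hybrid sequence.

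The principal obstacle is establishing the invariant under arbitrary interleavings of $\varphi$, $\varphi^{-1}$, and $\Sp^\varphi$ queries. Because inverse queries trivially create state collisions inside the permutation itself, the single-round collision argument that sufficed for Merkle-Damg\r{a}rd breaks down here, and one must rule out two distinct rooted paths that collide after any number of absorbing steps — an inherently global, multi-round event. The combinatorial heart of the plan is to express this bad event as a search problem over the compressed databases and apply the progress-bound machinery developed for \Cref{thm:intro-CR} in the adaptive, indifferentiability setting. This is also where the authors indicate that the earlier Merkle-Damg\r{a}rd proof has a gap, implicitly assuming a local property of the compression function that does not survive the passage to a permutation-based construction with inverse access.
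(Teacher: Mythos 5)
Your high-level plan matches the paper: decompose $\varphi$ as $\omega_h\circ\tau_{k'}\circ\pi\circ\sigma_k$, pass to overwrite-absorption (the Msponge), run a compressed-oracle simulator against a (known or black-box) permutation $\pi$, and control bad events via a transition-capacity argument about rooted paths. The bound you quote matches the informal statement. However, your simulator design, hybrid structure, and explanation of the loss terms all diverge from the paper in ways that matter.

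\textbf{Simulator design.} You propose a simulator that, when answering a fresh $k$ query, inspects the databases to detect an extension of a rooted path, queries $f(m)$, and then \emph{programs the remaining Feistel outputs on that branch so that squeezing reproduces $f(m)$}. This is an adaptive programming strategy that the paper deliberately avoids, and in the quantum setting it is very hard to make sound: at the time of a $k$-query the simulator cannot know whether the adversary will later extend the path, and programming compressed-oracle registers that the adversary may already be entangled with is exactly the kind of operation the compressed-oracle framework does not cleanly support. The paper's simulator answers $k$- and $k'$-queries with \emph{plain} compressed oracles, with no programming at all, and intervenes only at $h$-queries: on an $h$-query with input $z$, it runs $\mathsf{find\mbox{-}tail}$ to reconstruct the unique tail of $z$ from $D_k,D_{k'}$ and, if one exists, answers with $\mathsf{head}(z)\oplus f(\mathsf{tail}(z))$; otherwise it answers with a compressed $h$-oracle. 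The reason this works is precisely the structural point the Msponge buys you: every intermediate $h$-output is overwritten before it affects anything, so only the \emph{final} $h$-call determines the sponge output, and at that point the databases already record the full absorbed message. Pushing all the conditional behavior onto $h$ is the key simulator idea, and your proposal does not have it.

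\textbf{Hybrid structure.} The paper does not interpolate ``one component at a time'' from $(\varphi,\varphi^{-1},\Sp^\varphi)$ to $(\mathcal{S}^f,f)$. It instead invokes Zhandry's decomposition of indifferentiability into \emph{indistinguishability} (the simulator's primitive oracles alone are indistinguishable from the real ones) and \emph{consistency} (the construction evaluated on the simulator's oracles is indistinguishable from the ideal $f$). Indistinguishability is shown via an explicit isometry $V$ mapping real-world database triples $(D_h,D_k,D_{k'})$ to ideal-world quadruples $(D_h,D_k,D_{k'},D_f)$, and proving $V$ approximately intertwines the compressed query operators on the good subspace. Consistency is shown by an out-of-place computation of the Msponge and a comparison with $\mathcal{O}_{AF}$ on nearly-valid, nearly-good states. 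Your plan does not describe either of these, and the consistency half is where the technical difficulty lies.

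\textbf{The loss terms and the MD gap.} The $\sqrt[4]{\cdot}$ does not come from ``compounding two square roots over the hybrid sequence.'' It arises in the consistency proof because the projector onto \emph{valid} databases (those that always yield a non-$\bot$ output on decompression) and the projector onto \emph{good} databases do not commute; one therefore cannot restrict to a single nice subspace and must pay the deviation from each separately. Concretely, $\norm{\Pi^{\neg r}O^R\Pi^v}\le O(\sqrt[4]{l2^{-c}})$ comes from the fundamental lemma (one square root) composed with a square-root loss from \Cref{lem:proj-no-close}. This is also the actual nature of the gap the paper identifies in the Merkle--Damg\r{a}rd proof: the original argument implicitly projects onto good \emph{and} valid simultaneously, which is not a legal operation. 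Your characterization of the MD gap --- ``a local property of the compression function that does not survive the passage to a permutation-based construction with inverse access'' --- is not what the paper means; inverse access is not the culprit, the projector non-commutativity is, and the same issue is present in MD itself. Finally, the $\ell^3$ is not a union bound over positions; two factors of $\ell$ come from the two perfectly-indifferentiable reductions (Msponge-to-sponge via $\mathsf{fix}$, and single-squeeze XOF to full XOF), and the third comes from the per-query cost inside the consistency bound.

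In short: your reduction steps and bad-event framing are aligned with the paper, but the simulator you describe is not the one that works, the indistinguishability/consistency split is missing, and the account of where the fourth root and $\ell^3$ come from is incorrect. Without the ``only-the-last-$h$-call-matters'' observation and the $V$-isometry / out-of-place-consistency machinery, the plan as written does not close.
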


\subsection{Additional related work}
The only previous quantum indifferentiability result with a stateful simulator is for the Merkle-Damg\r ard construction \cite{Zhandry2018}. Prior to this, some barrier results for quantum indifferentiability were shown~\cite{CETU18}. Our approach to the direct query bounds for collision and preimage finding uses the query bound framework from \cite{chung2021compressed} and draws inspiration from the techniques built on top of it \cite{DFMS22,AMHJRY23,HJMN24,FAEST}. Some query bounds have been proven using the compressed oracle without additional generic tools \cite{liu19multicols,Hamoudi_2023}, and an alternative generic framework via oracle games presented in \cite{CMS19} is similarly versatile (see, e.g., \cite{LR24}).

\subsection{Acknowledgements}

Joseph Carolan thanks Alexander Poremba and Mark Zhandry for many helpful discussions. Gorjan Alagic was supported in part by NSF award CNS-21547. Joseph Carolan acknowledges support from the U.S. Department of Energy grant DE-SC0020264.

\newpage
\section{Technical summary}

\subsection{A View of permutation oracles}\label{sec:oracle-summary}

    We avoid the aforementioned barriers to quantum lazy-sampling permutations by developing an alternative view of permutation oracles that enables us to apply existing techniques. We now describe this approach in somewhat more detail. The full construction with proofs is given in \cref{sec:oracle-proofs}.
    
\paragraph{Stateful oracles inspired by Feistel.}
    Recall that we may write a uniform random permutation $\varphi$ as the product $\varphi = \pi \circ \sigma$ of a uniform random permutation $\pi$ with a potentially structured permutation $\sigma$. We observe that, for certain distributions on structured permutations $\sigma$, we can use existing compressed oracle techniques. For example, suppose we define our distribution by sampling a random function $k$, and defining 
    \begin{align}
        \sigma(x \Vert y) = x \Vert y \oplus k(x).
    \end{align}
    Clearly, answering a query to $\sigma$ can be done with one query to $k$.
    We also have $\sigma = \sigma^{-1}$, evading the problem of inverse queries. To build intuition for why this is useful, consider a toy problem introduced by Unruh~\cite{Unruh2021,Unruh2023}.

    \begin{problem}[Double-Sided Zero Search]
        Given query access to a permutation $\varphi$ on $\bit^{2r}$ and its inverse, find a ``zero pair'' $(x, y) \in \bit^r$ s.t. $\varphi(x \Vert 0^r) = y \Vert 0^r$.
    \end{problem}

    Lower bounds for this problem are known~\cite{carolan24oneway,MMW24}. Still, thinking about how to prove such bounds provides a simple jumping-off point for our technique. We will select independent random functions $k, k' : \bit^r \rightarrow \bit^r$, and define permutations $\tau$ and $\sigma$ as 
    \begin{align}
        \sigma(x \Vert y) =& x \Vert y \oplus k(x), &&
        \tau(x \Vert y) = x \Vert y \oplus k'(x).
    \end{align}
    Consider randomly selecting a $\pi$, and defining $\varphi = \tau \circ \pi \circ \sigma$. It follows that $\varphi$ is random, so it suffices to prove our lower bound against an adversary querying $\varphi$ and $\varphi^{-1}$. We can also strengthen the access model, so the adversary can query $k$ and $k'$ (which suffices to implement $\sigma$ and $\tau$), and also receives the \emph{entire truth table} for $\pi$. It turns out that, for almost all $\pi$, such an adversary must still make $\tilde \Omega(2^{r/2})$ quantum queries to $k$ and $k'$ to find a zero pair in $\varphi$. 
    
    To see why such a bound is likely to hold, consider how lazy-sampling classical queries to $k,k'$ might fail. When querying $\sigma$ on input $x \Vert 0^r$, a randomly selected suffix $w \in \bit^r$ is selected. This gives a candidate zero pair $y \Vert z = \pi(x \Vert w)$. In the case where $y$ is not yet queried to $k'$, this will be completed to a zero pair only when $k'(y)=z$, i.e., with probability $2^{-r}$. In the case where $y$ is already queried to $k'$ with image $z'=k'(y)$, observe that for most permutations $\pi$ the value of $z$ will have nearly $r$ bits of entropy from the entropy in $w$, making the event $z=z'$ have probability $\approx 2^{-r}$ (one can make this precise with a tail bound for $\pi$). To handle the case of quantum queries, one can use compressed oracles for $k, k'$ in place of lazy sampling and obtain the appropriate quadratically weaker bound.

    This simple example conveys the key observation underlying our approach. By writing a random permutation as a product involving a random permutation and well-chosen structured permutations, we can reduce (certain) problems about two-way-accessible random permutations to problems about random functions. 
    
\paragraph{The case of the sponge.}
    The example of Double-Sided Zero Search is somewhat analogous to the single round sponge with $c=r$. To analyze the many-round sponge, it turns out that we will need one more factor in our decomposition of the sponge permutation $\varphi$. Recall that $\varphi : \bit^n \rightarrow \bit^n$ where $n=r+c$, and that this determines the sponge construction $\Sp^{\varphi}$. We will construct $\varphi$ out of a random permutation $\pi : \bit^n \rightarrow \bit^n$ and permutations corresponding to functions $k,k' : \bit^r \rightarrow \bit^c$ and $h : \bit^c \rightarrow \bit^r$. We define the structured permutations by \begin{equation}
        \sigma_k(x \Vert y) = x \Vert y \oplus k(x), 
        \qquad \tau_{k'}(x \Vert y) = x \Vert y \oplus k'(x), 
        \qquad \omega_h(x \Vert y) = x\oplus h(y) \Vert y.
    \end{equation}
    Note that the application of $h$ in $\omega_h$ is ``upside down'' w.r.t. the applications of $k, k'$ in $\sigma_k, \tau_{k'}$, as in the Feistel or Luby-Rackoff construction. We then write 
    \begin{equation}
        \varphi \coloneqq \omega_h \circ \tau_{k'} \circ \pi \circ \sigma_k.
    \end{equation} 
    We require a particular property from $\pi$: for any fixed $x_1, x_2 \in \bit^r$, the number of suffixes $z \in \bit^c$ such that $\pi(x_1 \Vert z)$ begins with $x_2$ is not much larger than its average value of $2^{c-r}$ (or not larger than $O(n)$ if $r \geq c$). A standard argument shows that almost all permutations satisfy this property, described in~\Cref{sec:perm-tail-bounds}.

    With this in place, we will henceforth discuss the sponge construction in terms of a fixed permutation $\pi$ together with random functions $h, k, k'$ that are implemented using compressed oracles (or lazy sampling in the classical case). We can then talk about query transcripts or databases for these functions, which we will refer to as $D_h, D_k, D_{k'}$. Roughly speaking, these databases contain the points that the adversary knows, and do not contain unqueried points. 
    Central to our argument is the notion of a \emph{tail} for a given capacity value $z$; this is essentially an input which reaches state $z$.

    \begin{definition*}[Tail]\label{def:tail-intro}
        Let $z \in \bit^c$, and fix $D_k, D_{k'}$ and $\pi$. We say that $z$ has a ``tail'' under the following recursive conditions. 
        \begin{enumerate}[label=(\arabic*)]
            \item $z=0^c$ has a tail.
            \item $z$ has a tail if it can be reached as an internal state by a single round of absorption from a state $z_p$ which has a tail. That is, there exist $x_p \in D_k$, $x_i \in D_{k'}$, a $z_p \in \bit^c$ with a tail, and a $z_i \in \bit^c$ such that 
            \begin{align*}
                x_i \Vert z_i =& \pi(x_p \Vert z_p \oplus k(x_p)) \\
                z =& z_i \oplus k'(x_i),
            \end{align*}
        \end{enumerate}
        A tail of $z$ is a string of inputs required to reach $z$ according to the above conditions. Specifically, in case (1) the empty string is the unique tail of $z = 0^c$, and in case (2), any tail of $z_p$ concatenated with $x_p$ is a tail of $z$. We denote by $\mathsf{tail}(z)$ the set of tails of $z$.
    \end{definition*}

    A key property is that the tail does \emph{not} depend on $D_h$. This is because $h$ affects the top wire of the sponge immediately before another input is absorbed, and the input may be arbitrarily chosen. The adversary has full control over this wire during the absorption phase, and can simply undo the application of $h$ if it wishes. In this sense, the function $h$ does not prevent the adversary from reaching any given capacity state of the sponge---but, as it turns out, $h$ will be critical when we discuss obtaining final outputs from the sponge. On the other hand, the functions $k$ and $k'$ serve to ensure that the adversary can reach only a small number of states with a small number of queries, and furthermore that this set of reachable states is sufficiently random. This is necessary because an adversary that knows two tails that reach the same state can easily construct many collisions in the sponge, breaking security.

    Towards making this precise, we first define the notion of an intermediate pair. This will help us to treat queries to $k$ and $k'$ somewhat symmetrically.

    \begin{definition*}[Intermediate pair]
    \label{def:IP-intro}
        Fix a database pair $(D_k, D_{k'})$. We say a pair $(x, z)$ is an intermediate pair if there exists $x_p \in D_k$ and a $z_p \in \bit^c$ with a tail such that 
        \begin{align*}
            x \Vert z = \pi(x_p \Vert z_p \oplus k(x_p))\,.
        \end{align*}
        We let $\mathsf{IP}(D_k, D_{k'})$ denote the set of all intermediate pairs.
    \end{definition*}

    Note that because $\pi$ is a permutation, there is no multiplicity for intermediate pairs.  We are now ready to introduce the concept of a good database.
    \begin{definition*}[Good]
    \label{def:good-intro}
        A database pair $(D_k, D_{k'})$ is \textbf{good} if every $z$ has at most one tail, and $\mathsf{IP}(D_k, D_{k'})$ has unique $x$-values. In other words, we require both of the following conditions:
        \begin{align}
            \forall z \in \bit^c&, \,|\mathsf{tail}(z)| \leq 1 \\
            \forall (x_1, z_1), (x_2, z_2) \in \mathsf{IP}(D_k, D_{k'})&, \,(x_1, z_1) \neq (x_2, z_2) \Leftrightarrow x_1 \neq x_2
        \end{align}
    \end{definition*}

    We show in \Cref{sec:oracle-proofs} that any randomly selected output of $k$ or $k'$ is exponentially unlikely to create either an intermediate pair prefix collisions or a state collision. The intuition is that these are both collision type events, and so the number of possible ``bad outputs'' is bounded by $\mathsf{poly}(t)$ after $t$ queries. Using the framework of compressed oracles \cite{Zhandry2018} and quantum transition capacities \cite{chung2021compressed}, we can use this to show that the quantum databases for $D_k$ and $D_{k'}$ remain almost entirely in the good subspace.

\subsection{Query lower bounds}\label{sec:bounds-summary}

We now briefly describe our approach to proving quantum-query bounds for finding collisions or preimages. The details are given in \cref{sec:bounds-proofs}. The first step to establishing these bounds is to devise a simple reduction from a well-chosen search task. For the preimage finding case, for example, this task is roughly as follows: given a $y \in \bit^r$, output a ``path'' through the sponge that results in output $y$. Our reduction works for both classical and quantum oracle access, and is based on the indifferentiability of a random permutation oracle from the stateful oracle described in \cref{sec:oracle-summary}. 
We also derive the probability of an adversary with classical oracle access having a colliding input-output pair in its database.
Taken together, these results are sufficient for deriving classical-query lower bounds. 

As it turns out, the tools discussed above can also be used to establish quantum-query lower bounds. This can be done using the framework developed by~\cite{chung2021compressed}. In this framework, the square root of the probability that a database satisfies a predicate $\P$ after $q$ queries, denoted $[\![ \emptyset \xrightarrow{q} \P]\!]$, can be bounded using 
\begin{align*}
    [\![ \emptyset \xrightarrow{q} \P]\!]
    &\leq \sum_{t=1}^q  [\![ \neg \P_{t-1} \xrightarrow{} \P_t]\!]\,.
\end{align*} 
Here $[\![ \neg \P_{t-1} \xrightarrow{} \P_t]\!]$ is the maximum probability of transitioning, during the $t$-th query, from not satisfying predicate $\P$ to satisfying it. Using the framework of~\cite{chung2021compressed}, a bound on the probability of this transition in the classical setting also yields a bound in the quantum case. At a high level, this allows us to translate the classical tools described above into quantum-query lower bounds. The actual proof is more involved; for instance, it requires avoiding certain ``bad'' databases during the simulation. We do this by separating good and bad database at the beginning, and bounding the bad database case separately by treating them as a predicate.

\begin{theorem}[quantum collision resistance]
The probability that a quantum algorithm $\mathcal{A}$ with quantum query access to a random permutation $\varphi \in S_{\{0,1\}^{r+c}}$ and its inverse, making at most a total of $q$ queries, returns $m,m'\in\left(\{0,1\}^r\right)^{\le l}$ for $l\le q$ such that $m\neq m'$ and  $\Sp^\varphi(m)=\Sp^\varphi(m')$, can be upper bounded as
\begin{align*}
\Pr_{\substack{\varphi \sim S_{\{0,1\}^{r+c}} \\ m,m'\leftarrow \mathcal{A}^{\varphi,\varphi^{-1}}}}
    [\Sp^\varphi(m)=\Sp^\varphi(m')]\leq O\left(q^5n2^{-\min(r,c)}\right).
\end{align*}
\end{theorem}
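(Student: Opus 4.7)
The plan is to build on the framework sketched in \cref{sec:oracle-summary,sec:bounds-summary}. First I would invoke the perfect indifferentiability of a uniformly random $\varphi$ from the stateful construction $\omega_h \circ \tau_{k'} \circ \pi \circ \sigma_k$ (with $h, k, k'$ uniformly random functions and $\pi$ a uniformly random permutation), reducing the task to bounding the probability that an adversary, given quantum oracle access to $h, k, k'$ and the full truth-table of $\pi$, outputs distinct $m, m' \in (\bit^r)^{\le \ell}$ that collide under $\Sp^{\omega_h \circ \tau_{k'} \circ \pi \circ \sigma_k}$. I would then condition on $\pi$ obeying the tail bound of \cref{sec:perm-tail-bounds}, which holds with overwhelming probability over $\pi$ and contributes at most a negligible additive term.

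Next I would identify the bad database events. Using $\mathsf{tail}$ and $\mathsf{IP}(D_k, D_{k'})$ from \cref{sec:oracle-summary}, I would define a collision predicate $\mathcal{P}_{\mathrm{coll}}$ on the joint database $(D_h, D_k, D_{k'})$ that holds exactly when the database contains enough information to extract two distinct absorption paths whose sponge outputs match. The key structural claim is that $\mathcal{P}_{\mathrm{coll}}$ implies one of: (i) $(D_k, D_{k'})$ is not \textbf{good}, i.e., some $z$ has two tails or $\mathsf{IP}(D_k, D_{k'})$ contains two entries sharing the same $x$-value; or (ii) $(D_k, D_{k'})$ is \textbf{good} but the two paths reach distinct final capacities $z \neq z'$ and the squeeze step produces matching $r$-bit outputs, forcing an $h$-dependent equation of the form $h(z) \oplus x = h(z') \oplus x'$ for adversary-chosen rate blocks $x, x'$. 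This case split reduces collision resistance to bounding these two explicit, database-local events.

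To control both events, I would use the transition-capacity framework of~\cite{chung2021compressed}. After $t$ queries, both $|\mathsf{tail}|$ and $|\mathsf{IP}|$ are polynomial in $t$. For a fresh value of $k(x)$ or $k'(x)$, the number of outputs that create either a duplicate tail or an $x$-collision in $\mathsf{IP}$ is bounded via the $\pi$-tail bound: for each existing candidate $x$-value there are at most $O(2^{c-r} + n)$ preimages $z$ with $\pi(x \Vert z)$ beginning with that candidate, yielding a per-query classical transition probability of $O(\mathrm{poly}(t) \cdot n / 2^{\min(r,c)})$. A matching bound for case (ii) follows from the fact that a fresh $h$-output hits the required $r$-bit linear equation with at most $O(\mathrm{poly}(t)/2^r)$ probability over the remaining randomness. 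Applying the classical-to-quantum transition lemma of~\cite{chung2021compressed} and summing the per-query capacities across all $q$ queries yields the claimed $O(q^5 n 2^{-\min(r,c)})$ bound. As in \cref{sec:bounds-proofs}, I would handle the not-good and $h$-collision predicates as separate events so that the good-database invariant can be maintained while analyzing each, and then fold the $\pi$-failure probability into the final error term.

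The main obstacle I anticipate is the treatment of inverse queries to $\varphi$. The decomposition $\omega_h \circ \tau_{k'} \circ \pi \circ \sigma_k$ is natural in the forward direction, but a query to $\varphi^{-1}$ must be simulated by querying $h, k, k'$ in reverse order, and the resulting database entries are not of the ``extend a tail by one absorption'' type that the recursive definition of $\mathsf{tail}$ is built on. One must therefore formulate a good-database invariant that is symmetric under both query directions, and argue that inverse-query entries cannot shortcut into a new tail unless a bad event has already occurred. A secondary difficulty is the precise accounting of the $\pi$-tail constant, which propagates directly into the $n$-factor in the final estimate and needs to be computed against a fixed good $\pi$, rather than averaged over the randomness of $\pi$.
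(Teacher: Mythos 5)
Your plan follows essentially the same route as the paper: decompose $\varphi=\omega_h\circ\tau_{k'}\circ\pi\circ\sigma_k$ and hand the adversary oracles for $h,k,k'$ plus the full truth table of $\pi$; condition on $\pi$ satisfying the tail bound; split into a bad-database event (multiple tails or colliding $\mathsf{IP}$ prefixes) and a good-database collision event; and bound each via per-query transition capacities, lifting the classical counting argument to the quantum setting with the \cite{chung2021compressed} framework. The per-query counting, the $O(t^3 n + t^3 2^{c-r})$ distance for the local property, and the $q^5 n 2^{-\min(r,c)}$ arithmetic all match.

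Two remarks. First, the ``main obstacle'' you anticipate about inverse queries is not actually an obstacle in this framework, and it is worth understanding why. Once the adversary holds the full truth table of $\pi$, both $\varphi$ and $\varphi^{-1}$ are implemented by the adversary itself as circuits over the oracles $h,k,k'$ (plus classical lookups into $\pi$). The compressed databases $D_k, D_{k'}, D_h$ record only (input,output) pairs of these random \emph{functions}, with no notion of query direction. Crucially, the $\mathsf{tail}$ and goodness predicates (\cref{def:tail,def:good-1}) are purely combinatorial properties of the database \emph{contents} -- they quantify over pairs present in $D_k, D_{k'}$ without any reference to how or in what order those entries arose. An entry created by simulating a $\varphi^{-1}$ call participates in a tail exactly as one created via a forward call would, and the bad-attach / bad-completion counting (\cref{lem:few-bad-attach,lem:few-bad-comps}) applies unchanged. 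So no symmetrization of the invariant is needed.

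Second, your write-up is missing (or leaves implicit) the step that converts a bound on the probability of the compressed database satisfying a predicate into a bound on the probability of the adversary's actual output being a collision. The paper handles this by (i) reducing $\mathcal{A}^{\varphi,\varphi^{-1}}$ to an algorithm $\mathcal{B}^{h,k,k'}(\pi)$ that, after $\mathcal{A}$ terminates, makes $O(\ell)$ additional classical queries so the database provably contains the two colliding paths (\cref{lem:input-path}), and then (ii) applying \cref{cor:fund-lemma-with-R} with a relation $R$ capturing colliding paths, paying an additive $\sqrt{6\ell/N}$. Without this extraction step the transition-capacity bound only controls a database predicate, not the success probability of the attacker, and the argument is incomplete. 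Once you add it, your proposal coincides with the paper's proof.
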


The derivation for the bound of preimage finding is very similar to the collision finding case. The bound is dominated by terms contributed as a result of the bad database predicate.

\begin{theorem}[quantum preimage resistance]
Given $y\sim\bit^r$, the probability that a quantum algorithm $\mathcal{A}$ with quantum query access to a random permutation $\varphi \in S_{\{0,1\}^{r+c}}$ and its inverse, making at most a total of $q$ queries, returns $m\in\left(\{0,1\}^r\right)^{\leq l}$ for $l\leq q$ 
such that $y=\Sp^\varphi(m)$, can be upper bounded as
\begin{align*}
    \Pr_{\substack{y \sim \bit^r \\ m'\leftarrow \mathcal{A}^{\varphi,\varphi^{-1}}}}[y=\Sp^\varphi(m)]\leq O\left(q^5n2^{-\min(r,c)}\right).
\end{align*}
\end{theorem}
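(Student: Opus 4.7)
The plan is to mirror the approach used for quantum collision resistance (the previous theorem), since preimage-finding admits an analogous formulation in the stateful oracle model of \cref{sec:oracle-summary}. First I would invoke the indifferentiability of $(\varphi,\varphi^{-1})$ from the decomposition $(\omega_h \circ \tau_{k'} \circ \pi \circ \sigma_k,\,\text{inverse})$: this reduces the problem to the stronger model in which the adversary has compressed-oracle access to $h,k,k'$ and is handed the full truth table of $\pi$. I would then condition on $\pi$ satisfying the tail-bound property from \cref{sec:perm-tail-bounds}, absorbing the failure probability into a negligible additive term via a union bound over permutations.

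Next I would formulate the preimage condition as a database predicate $\mathsf{Pre}_y$: the tuple $(D_h, D_k, D_{k'})$ satisfies $\mathsf{Pre}_y$ if there is an intermediate pair $(x_i, z_i) \in \mathsf{IP}(D_k, D_{k'})$ with $x_i \in D_{k'}$, $z_i \oplus k'(x_i) \in D_h$, and $x_i \oplus h(z_i \oplus k'(x_i)) = y$. Let $\mathsf{Bad}$ denote the event that $(D_k, D_{k'})$ fails the goodness condition of Definition~2.1. The goal is to bound
\[
\Pr\bigl[m'\leftarrow\mathcal{A}^{\varphi,\varphi^{-1}} \text{ is a preimage of } y\bigr] \;\le\; [\![\,\emptyset \xrightarrow{q} \mathsf{Pre}_y \vee \mathsf{Bad}\,]\!]^2 + \negl,
\]
using the translation from a database-predicate success to adversarial success described in \cref{sec:oracle-summary}.

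Applying the Chung et al.\ transition-capacity framework, I would upper bound the above by $\bigl(\sum_{t=1}^{q} [\![\,\neg(\mathsf{Pre}_y\vee\mathsf{Bad})_{t-1} \xrightarrow{} (\mathsf{Pre}_y\vee\mathsf{Bad})_t\,]\!]\bigr)^2$ and analyze the transition capacity at step $t$ by query type. For an $h$-query, a fresh random output $v\in\bit^r$ creates a preimage only when there exists an intermediate pair $(x_i,z_i)\in\mathsf{IP}$ with queried input $z_i \oplus k'(x_i)$ and $x_i \oplus v = y$; since goodness gives $|\mathsf{IP}|\le |D_k|$, this yields at most $\tilde{O}\bigl(\sqrt{t/2^r}\bigr)$. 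For $k$- and $k'$-queries, transitions to $\mathsf{Pre}_y$ alone are impossible because $h$ has not been updated, so the entire contribution is to $\mathsf{Bad}$: a new random output can create a tail collision, extend a tail to a state that already has one, or create an intermediate-pair $x$-collision. The tail-bound property of $\pi$ limits each of these events to $\tilde{O}\bigl(\mathrm{poly}(t)\cdot n/2^{\min(r,c)}\bigr)$ possibilities among the $2^c$ (or $2^r$) candidate outputs, exactly as in the collision-resistance argument.

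Summing the per-query transition capacities over $t=1,\dots,q$ and squaring gives the claimed $O(q^5 n\, 2^{-\min(r,c)})$, with the leading term coming from the $\mathsf{Bad}$ contribution rather than the direct $\mathsf{Pre}_y$ term (which contributes only $O(q^2/2^r)$). The main obstacle is the careful bookkeeping for the $\mathsf{Bad}$ transitions: a single $k$- or $k'$-query can create several structurally different bad events, and each must be separately bounded using the $\pi$-tail bound, with the polynomial factors in $t$ compounding across queries to produce the $q^5 n$ prefactor. Once that analysis is done (and it is essentially the same as for collision resistance, since $\mathsf{Bad}$ does not depend on $y$), the result follows by combining it with the $h$-query transition bound and absorbing the probability of a bad $\pi$.
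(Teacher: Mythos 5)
Your overall strategy mirrors the paper's: reduce to the stateful model with $h,k,k'$ and a fixed good $\pi$, formulate preimage-finding as a database predicate, union it with a goodness-violation predicate, bound via the per-query transition capacities of Chung et al., and then translate the database bound into an adversarial success bound using the fundamental lemma. The paper itself presents this preimage bound as a "same reasoning as collision" corollary, and your proposal captures that.

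There is, however, one step in your transition analysis that is false as stated. You claim that ``for $k$- and $k'$-queries, transitions to $\mathsf{Pre}_y$ alone are impossible because $h$ has not been updated.'' This neglects the case where the adversary queried $h$ \emph{first}: if $(z, h(z))$ is already in $D_h$ and a subsequent $k'$-query on some $x'$ with $(x',z_i)\in\mathsf{IP}$ produces $k'(x')$ such that $z_i\oplus k'(x')=z$, then $z$ suddenly acquires a tail with head $x'$ and the preimage condition $y = x'\oplus h(z)$ can hold --- all without the database leaving your $\mathsf{Bad}$ set (which is only tail-uniqueness and $\mathsf{IP}$-prefix-uniqueness). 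The paper's classical analysis in \Cref{lem:collision-in-db}, case~(2.1), explicitly accounts for this event ($\bar z \in D_h^{t-1}$ after a $k'$-query) and bounds it by $O(t\,2^{-c})$. You either need to bound this $k'$-to-$\mathsf{Pre}_y$ transition directly (distance $O(t)$ over the $2^c$ possible $k'$-outputs, giving $O(\sqrt{t/2^c})$ per query, absorbed by the final bound), or enlarge $\mathsf{Bad}$ to include ``some $z$ with a tail is in $D_h$'' as in the paper's ideal-good notion in \Cref{sec:indiff-proofs}. Either fix works, but as written the transition decomposition is incomplete. A smaller slip: $|\mathsf{IP}|$ is bounded by $O(t^2)$ (\Cref{lem:few-tails-ips}), not by $|D_k|$; this happens not to matter for the $h$-query term because goodness gives at most one intermediate pair per queried $z$ value, but the stated reason is wrong.
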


These results are not tight. A sponge collision can be found in $O\bigl(2^{\min(r,c)/3}\bigr)$ quantum queries using, e.g., the BHT algorithm for collision finding in case $r<c$, and the BHT algorithm for claw finding in case $c\le r$ (``meet in the middle''). A preimage can be found using $O\bigl(\min\bigl(2^{c/3},2^{r/ 2}\bigr)\bigr)$ queries using Grover or claw-finding BHT.

\subsection{Sponge indifferentiability}\label{sec:indiff-summary}
    
Showing indifferentiability amounts to constructing a simulator $\algo S$ which answers permutation queries in a way consistent with the ideal (i.e., random-oracle) functionality $f$ corresponding to the sponge. We show the following bound.

\begin{theorem}
    There exists an efficient simulator $\algo S$ for the sponge construction such that all adversaries $\algo A$ making $q$ queries of block length at most $l$ have distinguishing advantage 
    \begin{align*}
        \norm{\Pr[\algo A^{\varphi, \varphi^{-1}, \Sp^{\varphi}}() = 1 - \algo A^{\algo S^f, f}() = 1]} = O\left(l^2 \sqrt{q^9 2^{-\min(r, c)}} + l^3 \sqrt[4]{q^5 2^{-\min(r, c)}}\right).
    \end{align*}
    \label{thm:main-intro}
\end{theorem}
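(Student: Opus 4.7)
The plan is to build the indifferentiability simulator on top of the structured decomposition $\varphi = \omega_h \circ \tau_{k'} \circ \pi \circ \sigma_k$ developed in \Cref{sec:oracle-summary}. Two preprocessing moves come essentially for free: replacing a uniformly random permutation with this decomposition (for a $\pi$ satisfying the tail-bound property) is perfectly indifferentiable, and switching from XOR-absorption to overwrite-absorption is also perfectly indifferentiable because the adversary has full control of the rate at the moment each block is absorbed. After these steps, every query to $\varphi$ or $\varphi^{-1}$ can be simulated using one query each to compressed oracles for $h, k, k'$, together with classical access to the (fixed) truth table of $\pi$.

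The simulator $\algo S$ I would construct maintains the three compressed databases $D_h, D_k, D_{k'}$. For each query, it decomposes through the Feistel structure; if the relevant intermediate pair $(x_i, z_i)$ does \emph{not} have a tail rooted at $0^c$, the simulator answers with fresh database randomness. The crucial case is when such a tail $m$ does exist, so the capacity $z = z_i \oplus k'(x_i)$ matches an honest sponge state reachable from the initial state via the message $m \Vert x_p$. In that case the simulator reprograms $D_h$ at $z$ so that $x_i \oplus h(z)$ equals the appropriate $r$-bit block of $f(m \Vert x_p)$, with an analogous rule for longer squeezes. On a good database the tail is unique, so this reprogramming is well-defined; and because $h$ enters only on the top wire immediately before the next absorption or squeeze emission, it is exactly the right knob to force consistency with the ideal random oracle $f$ without disturbing anything the adversary has already observed.

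I would analyze this simulator via a hybrid sequence, bounding each step using the quantum transition capacity framework of~\cite{chung2021compressed} already used for the direct query bounds. One hybrid switches to the structured $\varphi$; another introduces compressed simulation of $h, k, k'$; another projects onto the good-database subspace, paying the bad-event cost from \Cref{sec:oracle-proofs} coming from tail collisions and intermediate-pair $x$-collisions; the final hybrid installs the $h$-reprogramming, which is identically distributed to the real construction on the good subspace. The two summands in the claimed bound arise naturally from this analysis: the $l^2 \sqrt{q^9 \cdot 2^{-\min(r,c)}}$ term tracks the reprogramming-consistency advantage summed over all $q$ queries and up to $l$ squeeze blocks per query, while the $l^3 \sqrt[4]{q^5 \cdot 2^{-\min(r,c)}}$ term is the fourth-root cost incurred when the tail-structure bad event is lifted through a chained transition-capacity bound that involves both $k$ and $k'$ simultaneously.

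The main obstacle, as the authors flag when discussing the gap they uncovered in the Merkle-Damg\r ard proof, is controlling capacity states reached by ``unrooted'' paths. Unlike Merkle-Damg\r ard, the sponge lets the adversary use $\varphi^{-1}$ queries to land at a capacity value without any honest absorption history, which could then be spliced onto a legitimate prefix to forge a collision or a distinguisher. The Good predicate on $(D_k, D_{k'})$ is engineered precisely to forbid this: the unique-$x$ condition on intermediate pairs prevents two rooted paths from meeting, and the tail-bound property of $\pi$ controls how many fresh intermediate pairs a single query to $k$ or $k'$ can conjure. The delicate step is translating these global combinatorial conditions into single-round classical transition capacities that then lift quantumly through~\cite{chung2021compressed}; I expect this is where the bulk of the technical work, and the looseness in the final bound, will live.
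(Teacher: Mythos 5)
Your high-level plan matches the paper's: decompose $\varphi = \omega_h \circ \tau_{k'} \circ \pi \circ \sigma_k$, pass to the Msponge (overwrite absorption), simulate $h,k,k'$ with compressed oracles, detect tails to route $h$-queries to $f$, and control everything through the good-database predicate and transition capacities. Two points of your sketch would, however, break if implemented literally. The paper's simulator does \emph{not} reprogram $D_h$: on an $h$-query at a $z$ with a non-empty tail, it leaves $D_h$ untouched and answers on the fly as $\mathsf{head}(z)\oplus f(\mathsf{tail}_1(z))$, reserving the $D_h$ database exclusively for $z$-values with no tail. The indistinguishability proof (\Cref{thm:sim-indist}) rests on an isometry $V$ that maps real-world $(D_h,D_k,D_{k'})$ to ideal-world $(D_h,D_k,D_{k'},D_f)$ by \emph{relocating} every $h$-record whose argument has a tail over to $D_f$. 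If the simulator instead wrote the reprogrammed value back into $D_h$, the ideal-world $D_h$ would have the wrong support, $V$ would cease to intertwine queries, and the exact identity of \Cref{lem:h-queries-close} on the good subspace would not hold.

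Second, you misdiagnose the Merkle-Damg\r ard gap and, as a result, the origin of the $\sqrt[4]{\cdot}$ term. The gap (\Cref{subsec:MD-gap}) is not about unrooted capacity states --- those are what the tail/intermediate-pair combinatorics and the good predicate already handle --- but about the fact that the validity projector $\Pi^v$ and the good projector $\Pi^\ig$ do not commute, so you cannot project onto both at once. The consistency argument needs validity in order to argue (via the fundamental lemma, \Cref{lem:always-tail}) that the absorbed path is actually recorded in $D_k,D_{k'}$; converting that probability bound into a projection-norm bound via \Cref{lem:proj-no-close} costs a square root on top of the square root already present, and combining with the separately-paid bad-database cost produces the $l^3\sqrt[4]{q^5\,2^{-\min(r,c)}}$ term. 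Your hybrid outline, where one simply ``projects onto the good-database subspace'' and declares the reprogrammed world identically distributed, implicitly assumes the state remains valid after $\mathsf{find\mbox{-}tail}$ calls and after good-projections --- which is precisely the unsoundness the authors flag in \cite{Zhandry2018}. Making this rigorous requires the validity-tracking lemmas (\Cref{lem:ft-valid}, \Cref{lem:always-tail}) and is where the quartic-root looseness actually lives, not in a chained $k,k'$ transition capacity.
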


We now briefly discuss our approach for constructing and analyzing $\algo S$. For this discussion only, we restrict to a single round of squeezing. Our simulator is somewhat similar to Zhandry's quantum simulator for Merkle-Damg\r{a}rd \cite{Zhandry2018}, in that it will analyze the stored databases to determine whether the adversary is computing on an input to the sponge (in which case it answers using the ideal functionality), or on an arbitrary ``disconnected'' input (in which case it answers using a compressed oracle). 

The simulator $\algo S^f$ will simulate $\pi$ with a pseudorandom permutation on $\bit^{r+c}$ and use it only as a black box. It must also provide oracles $h, k, k'$ which are, together with $\pi$, consistent with the ideal functionality $f$, so that it appears to the adversary as if $f$ is the sponge built from $h, k, k'$ and $\pi$. For $k$ and $k'$, the simulator simply answers queries using a compressed oracle database. By our bounds on the probability of a bad event, these databases will remain almost entirely on the subspace where each state value $z$ has a unique tail. 

Our procedure for answering queries to $h$ is analyzed with a slightly modified sponge construction, the ``Msponge''. Whenever the sponge XORs an input block into the top wire, the Msponge will instead replace the content of the top wire with that input block. 
In \Cref{sec:oracle-proofs}, we show that indifferentiability of the Msponge implies indifferentiability of the sponge. The intuition is that one can compute the output of an $l$-block input to the standard sponge using $l$ queries to the Msponge by simply querying on each prefix, and then adapting the following block by XORing in the output before calling on the next prefix. This reduction runs both ways, so we simply work with the Msponge.

Conceptually, this makes it clear that only the final $h$ query is relevant, as all others are applied to a wire that is then immediately discarded. We show that the databases for $k$ and $k'$ will contain a complete record of the absorbed input whenever the adversary computes the sponge. Then, on the final $h$ query to $z$, the simulator can reconstruct $\mathsf{tail}(z)$, which is precisely the input used to reach said $z$ value. At this stage, the simulator knows to answer using the ideal functionality $f$ instead. Of course, this analysis only applies on good databases.

While our analysis is conceptually similar to Zhandry's Merkle-Damg\r{a}rd indifferentiability proof \cite{Zhandry2018}, the technical details are somewhat more involved due to the complexity of our construction. Further, we address the gap in the original work mentioned at the end of \Cref{subsec:approach}. This gap stems from the fact that projections onto ``valid'' databases (which always contain an output after decompressing) and ``good'' databases (which project out certain unwanted databases) do not commute. We elaborate in \Cref{subsec:MD-gap}, but it is worth noting that repairing this gap is one of the main reasons our indifferentiability bound is looser than our query bounds\footnote{Essentially, because we cannot project onto both good and valid simultaneously, we pay for the component on each subspace on each query.}. We expect that a similar idea could be applied to Merkle-Damg\r{a}rd as well, and would result in a similarly looser bound.

\newpage

\section{Preliminaries}
\label{sec:prelim}

\subsection{Quantum}

We consider quantum states as unit vectors of a Hilbert space $\mathbb C^D$. We define the Euclidean norm of a vector in such a Hilbert space as $\norm{\ket{\psi}}^2 = \braket{\psi | \psi}$.

\begin{definition}
    The Euclidean distance between quantum states $\ket{\psi},\ket{\phi} \in \mathbb C^D$ is \begin{align*}
        d(\ket{\psi}, \ket{\phi}) := \min_{\theta} \norm{\ket{\psi} - e^{i\theta} \ket{\phi}} = \sqrt{2(1-|\braket{\psi | \phi}|)}.
    \end{align*}
    \label{defn:state-distance}
\end{definition}

We will need the following standard result that close quantum states cannot be distingushed.

\begin{claim}
    Let $\ket{\psi},\ket{\phi} \in \mathbb C^D$ be quantum states that satisfy \begin{align*}
        d(\ket{\psi}, \ket{\phi}) = \epsilon.
    \end{align*}
    Then, no measurement can distinguish (a single copy of) $\ket{\psi}$ from $\ket{\phi}$ with advantage exceeding $\epsilon$.
    \label{claim:states-indis}
\end{claim}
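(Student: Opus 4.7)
The plan is to bound distinguishing advantage by the trace distance of the two pure states, and then show that the trace distance is at most the Euclidean distance $\epsilon$.

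First I would fix an arbitrary two-outcome POVM $\{M, I-M\}$ with $0 \le M \le I$, so that the distinguishing advantage is by definition
\[
\bigl|\braket{\psi|M|\psi} - \braket{\phi|M|\phi}\bigr| = \bigl|\Tr\bigl(M(\proj{\psi} - \proj{\phi})\bigr)\bigr|.
\]
Since $\proj{\psi}$ and $\proj{\phi}$ are invariant under global phases, I can assume without loss of generality that the inner product $\braket{\psi|\phi}$ is real and nonnegative, and then set $a \coloneqq \braket{\psi|\phi}$, which by the definition of $d$ satisfies $a = 1 - \epsilon^2/2$.

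Next I would expand $\ket{\phi} = a\ket{\psi} + b\ket{\psi^\perp}$ with $\ket{\psi^\perp}$ a unit vector orthogonal to $\ket{\psi}$ and $|b| = \sqrt{1-a^2}$. Restricted to the two-dimensional subspace spanned by $\{\ket{\psi}, \ket{\psi^\perp}\}$, the Hermitian operator $\proj{\phi} - \proj{\psi}$ is a $2\times 2$ matrix whose trace is $0$ and whose determinant works out to $-|b|^2 = -(1-a^2)$. Hence its two nonzero eigenvalues are $\pm\sqrt{1-a^2}$, which gives
\[
\tfrac{1}{2}\norm{\proj{\psi} - \proj{\phi}}_1 = \sqrt{1-a^2}.
\]

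Combining everything, the standard bound $\bigl|\Tr(M X)\bigr| \le \tfrac{1}{2}\norm{X}_1$ for $0 \le M \le I$ and $X$ traceless Hermitian gives distinguishing advantage at most $\sqrt{1-a^2}$. It remains to observe
\[
\sqrt{1-a^2} = \sqrt{(1-a)(1+a)} \le \sqrt{2(1-a)} = \epsilon,
\]
where I used $1+a \le 2$. Taking the supremum over POVMs yields the claim.

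The proof is essentially routine; the only small step that requires a moment of care is the final inequality $\sqrt{1-a^2} \le \sqrt{2(1-a)}$, i.e.\ the fact that trace distance for pure states is bounded by Euclidean distance (with equality only as $\epsilon \to 0$). Everything else is direct linear algebra in the two-dimensional span of $\ket{\psi}$ and $\ket{\phi}$, together with the standard operator-inequality relation between POVM distinguishing advantage and the trace norm.
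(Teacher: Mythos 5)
Your proof is correct and is the standard argument. The paper states \Cref{claim:states-indis} as a known fact without providing a proof, so there is no paper proof to compare against; your route (pure-state trace distance $\tfrac12\norm{\proj{\psi}-\proj{\phi}}_1=\sqrt{1-|\innerprod{\psi}{\phi}|^2}$, the bound $\abs{\Tr(MX)}\le \tfrac12\norm{X}_1$ for $0\le M\le I$ and traceless Hermitian $X$, and the elementary inequality $\sqrt{1-a^2}\le\sqrt{2(1-a)}=\epsilon$) is exactly the textbook derivation one would expect here, and all steps, including the $2\times 2$ determinant computation and the WLOG phase normalization, are valid.
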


An operator $O : \mathbb C^{D_1} \rightarrow \mathbb C^{D_2}$ is a linear map between Hilbert spaces. We say that $O$ is an isometry if inner products are preserved up to global phase. We say that $O$ is a unitary if it is an isometry, and $D_1 = D_2$. We can define both a norm and a Euclidean distance measure on operators.

\begin{definition}
    The norm of an operator $O : \mathbb C^{D_1} \rightarrow \mathbb C^{D_2}$ is \begin{align*}
        \norm{O} := \max_{\ket{\psi}} \frac{\norm{O\ket{\psi}}}{\norm{\ket{\psi}}}.
    \end{align*}
    \label{defn:operator-norm}
\end{definition}

\begin{definition}
    The distance of two operators $O, O' : \mathbb C^{D_1} \rightarrow \mathbb C^{D_2}$ is \begin{align*}
        d(O, O') := \max_{\ket{\psi}, \norm{\ket{\psi}} = 1} d(O \ket{\psi}, O' \ket{\psi}). 
    \end{align*}
    \label{defn:operator-distance}
\end{definition}

Note that $d(O, O') \leq \norm{O - O'}$, and similarly for states. We will occasionally say that an isometry ``appends a blank register'' or a similar statement, for instance an isometry $A$ which acts as \begin{align*}
    A_X \ket{x}_X = \ket{x}_X \ket{0}_Y.
\end{align*}
We use subscripts to denote the (domain) Hillbert space on which an operator acts. We may define the commutator on operators whose domain and range are equal in the usual way, though by an abuse of notation we also sometimes define a commutator between operators where one has an expanded range. This is defined as follows.

\begin{definition}
    Let $V : \algo H_A \rightarrow \algo H_{AB}$ and $U : \algo H_A \rightarrow \algo H_A$ be operators. We define their commutator as an operator from $\algo H_A \rightarrow \algo H_{AB}$ as \begin{align*}
        [U_A, V_A] = U_A V_A - V_A U_A,
    \end{align*}
    where the operator $U_A V_A$ has $U$ acting on the subspace $\algo H_A$ of $\algo H_{AB}$.
    \label{def:gen-comm}
\end{definition}
We will generally use consistent labeling within sections for subsystems, which allows one to track the range of an operator with distinct domain and range.

Finally, the following lemmas will prove useful in our analysis.

\begin{lemma}
    Let $A, B, B'$ be Hilbert spaces where $B$ is of a smaller or equal dimension to $B'$. Let $\ket{\psi}_{AB} \in \mathcal H_{AB}$ and $\ket{\phi}_{AB'} \in \mathcal H_{AB'}$ be quantum states. Suppose that there exists an isometry $V : \mathcal H_B \rightarrow \mathcal H_{B'}$ such that \begin{align}
        d(\ket{\phi}_{AB'}, I_A \otimes V_B \ket{\psi}_{AB}) \leq \epsilon.
    \end{align}
    Then no measurement of subsystem $A$ can distinguish $\ket{\psi}_{AB}$ from $\ket{\phi}_{AB'}$ with advantage exceeding $\epsilon$.
    \label{lem:approx-uhlmann}
\end{lemma}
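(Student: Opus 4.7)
The plan is to reduce the claim to Claim~\ref{claim:states-indis} by using the fact that an isometry acting only on subsystem $B$ is invisible to any measurement on the complementary subsystem $A$. The argument then has two clean steps: (i) replace $|\psi\rangle_{AB}$ by $(I_A \otimes V_B)|\psi\rangle_{AB}$, which lives in the same Hilbert space as $|\phi\rangle_{AB'}$ and produces identical statistics under any $A$-measurement, and (ii) invoke Claim~\ref{claim:states-indis} on the two states in $\mathcal{H}_{AB'}$.

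For step (i), fix any POVM $\{M_i\}$ on $A$. Using $V_B^\dagger V_B = I_B$, one directly computes, for every outcome $i$,
\begin{equation*}
\langle \psi|_{AB}\bigl((M_i)_A \otimes V_B^\dagger V_B\bigr)|\psi\rangle_{AB} = \langle\psi|_{AB}\bigl((M_i)_A \otimes I_B\bigr)|\psi\rangle_{AB},
\end{equation*}
so the outcome distribution of the lifted measurement $\{(M_i)_A \otimes I_{B'}\}$ on $(I_A \otimes V_B)|\psi\rangle_{AB}$ coincides with the outcome distribution of $\{(M_i)_A \otimes I_B\}$ on $|\psi\rangle_{AB}$. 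Equivalently, the reduced state of $(I_A \otimes V_B)|\psi\rangle\langle\psi|(I_A \otimes V_B^\dagger)$ on $A$ equals $\mathrm{Tr}_B(|\psi\rangle\langle\psi|_{AB})$. For step (ii), the hypothesis $d(|\phi\rangle_{AB'}, (I_A \otimes V_B)|\psi\rangle_{AB}) \leq \epsilon$ places both states in $\mathcal{H}_{AB'}$ at Euclidean distance at most $\epsilon$, so Claim~\ref{claim:states-indis} implies that no measurement on $\mathcal{H}_{AB'}$ — in particular no lifted $A$-measurement $\{(M_i)_A \otimes I_{B'}\}$ — distinguishes them with advantage exceeding $\epsilon$.

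Chaining these together: for any $A$-measurement, its statistics on $|\psi\rangle_{AB}$ equal its (lifted) statistics on $(I_A \otimes V_B)|\psi\rangle_{AB}$, which differ from its (lifted) statistics on $|\phi\rangle_{AB'}$ by at most $\epsilon$ in total variation. This is exactly the desired bound on distinguishing advantage. There is no real obstacle in this proof; the only point that requires care is bookkeeping the domains and ranges (noting that $\dim B \leq \dim B'$ is precisely what allows $V$ to be an isometry from $\mathcal{H}_B$ to $\mathcal{H}_{B'}$ and hence to exist at all), after which the argument is essentially a one-line consequence of isometry invariance of marginals combined with Claim~\ref{claim:states-indis}.
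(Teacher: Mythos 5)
Your proof is correct and follows essentially the same two-step route as the paper: observe that applying the isometry $V$ on the $B$ subsystem leaves the reduced state on $A$ unchanged (the paper phrases this via invariance under a channel on the complementary subsystem; you spell it out via $V^\dagger V = I$), then apply Claim~\ref{claim:states-indis} to $(I_A \otimes V_B)\ket{\psi}_{AB}$ and $\ket{\phi}_{AB'}$ in $\mathcal{H}_{AB'}$.
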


\begin{proof}
    The mixed state of subsystem $A$ is invariant under any quantum channel applied to the other subspace, so $\ket{\psi}_{AB}$ cannot be distinguished from $I \otimes V \ket{\psi}_{AB}$ by a measurement on subsystem $A$. However, by \Cref{claim:states-indis}, no measurement can distinguish $I \otimes V \ket{\psi}_{AB}$ from $\ket{\phi}_{AB'}$ with advantage exceeding $\epsilon$.
\end{proof}

\begin{lemma}
    Let $O : \mathcal H_N \rightarrow \mathcal H_N$ be a quantum operator. Consider orthogonal subspaces $S_1, \dots, S_m \in \mathcal H_N$ of dimension $\mathsf{dim}(S_i)$, where the $S_i$ span all of $\mathcal H_N$. Suppose that
    \begin{align}
        \forall \ket{x_i} \in S_i, \ket{x_j} \in S_j, i\neq j, \braket{x_i | O^\dagger O |x_j} = 0,
    \end{align}
    and we further have \begin{align*}
        \forall \ket{x_i} \in S_i, \norm{O \ket{x_i}} \leq \epsilon_i.
    \end{align*}
    Then we have \begin{align*}
        \norm{O} \leq& \max_{i \in [m]} \epsilon_i.
    \end{align*}
    \label{lem:ortho-subspaces-norm}
\end{lemma}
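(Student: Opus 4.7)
The plan is to reduce the operator-norm bound to the Pythagorean theorem, using the hypothesis that the subspaces $S_i$ remain orthogonal after the action of $O$. First, I would fix an arbitrary unit vector $\ket{\psi} \in \mathcal H_N$ and, using the assumption that $S_1,\dots,S_m$ are orthogonal and span $\mathcal H_N$, decompose it uniquely as $\ket{\psi} = \sum_{i=1}^m \ket{x_i}$ with $\ket{x_i} \in S_i$ and $\sum_i \|\ket{x_i}\|^2 = 1$. The hypothesis $\norm{O\ket{x_i}} \leq \epsilon_i$ should be read as a bound on the norm of $O$ restricted to $S_i$, so by homogeneity $\norm{O\ket{x_i}} \leq \epsilon_i \norm{\ket{x_i}}$ for every (not necessarily unit) vector in $S_i$.

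Next, I would expand
\begin{align*}
    \norm{O\ket{\psi}}^2 = \braket{\psi|O^\dagger O|\psi} = \sum_{i,j} \braket{x_i|O^\dagger O|x_j}.
\end{align*}
The cross-terms with $i \neq j$ vanish by the second hypothesis, leaving $\norm{O\ket{\psi}}^2 = \sum_i \norm{O\ket{x_i}}^2$. Applying the per-subspace bound and pulling out the maximum gives
\begin{align*}
    \norm{O\ket{\psi}}^2 \leq \sum_i \epsilon_i^2 \norm{\ket{x_i}}^2 \leq \left(\max_{i \in [m]} \epsilon_i\right)^2 \sum_i \norm{\ket{x_i}}^2 = \left(\max_{i \in [m]} \epsilon_i\right)^2.
\end{align*}
Taking square roots and then the supremum over unit $\ket{\psi}$ yields $\norm{O} \leq \max_i \epsilon_i$, as desired.

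There is really no conceptual obstacle here: the lemma is a clean piece of linear algebra. The only subtle point worth double-checking is the interpretation of the per-subspace norm bound (it must be understood up to scaling, otherwise it is vacuous on non-unit vectors) and the fact that the mutual $O^\dagger O$-orthogonality hypothesis is exactly what is needed to kill the cross-terms in the Pythagorean expansion. Once those are in place, the calculation above is the full argument.
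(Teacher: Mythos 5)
Your proof is correct. The paper itself does not prove this lemma; it simply cites it (Lemma 2.1 of Don, Fehr, Majenz, and Schaffner, ``Online-Extractability in the Quantum Random-Oracle Model''), so there is no in-paper argument to compare against. Your Pythagorean decomposition is the standard proof of that cited result, and you are right to flag that the per-subspace hypothesis must be read homogeneously (equivalently, restricted to unit vectors) for the statement to be non-vacuous; that reading is what the cited lemma uses as well.
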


\begin{proof}
    \cite{don2021extract}, Lemma 2.1.
\end{proof}

\begin{lemma}
    Let $\Pi : \algo H \rightarrow \algo H$ be a projector, and $\ket{\psi} \in \algo H$ be a unit norm quantum state. Suppose that \begin{align*}
        \norm{\Pi \ket{\psi}} \geq 1 - \epsilon.
    \end{align*}
    Then we have \begin{align*}
        \norm{\Pi^{\perp} \ket{\psi}} \leq \sqrt{2\epsilon}.
    \end{align*}
    \label{lem:proj-no-close}
\end{lemma}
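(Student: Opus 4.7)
The plan is to exploit the Pythagorean decomposition induced by the projector. Since $\Pi$ is a projector, $\Pi$ and $\Pi^{\perp} = I - \Pi$ are mutually orthogonal projectors whose sum is the identity, so for any state the squared norms of the two projections sum to the squared norm of the original vector. Concretely, I would write
\begin{align*}
    \norm{\ket{\psi}}^2 = \braket{\psi | (\Pi + \Pi^{\perp}) | \psi} = \norm{\Pi \ket{\psi}}^2 + \norm{\Pi^{\perp} \ket{\psi}}^2,
\end{align*}
using $\Pi^2 = \Pi$, $(\Pi^{\perp})^2 = \Pi^{\perp}$, and $\Pi \Pi^{\perp} = 0$.

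Next, since $\ket{\psi}$ has unit norm, rearranging gives $\norm{\Pi^{\perp}\ket{\psi}}^2 = 1 - \norm{\Pi \ket{\psi}}^2$. Plugging in the hypothesis $\norm{\Pi \ket{\psi}} \geq 1 - \epsilon$ yields
\begin{align*}
    \norm{\Pi^{\perp}\ket{\psi}}^2 \leq 1 - (1-\epsilon)^2 = 2\epsilon - \epsilon^2 \leq 2\epsilon,
\end{align*}
and taking square roots finishes the argument.

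There is really no obstacle here: the only thing to note is that the slightly loose bound $2\epsilon - \epsilon^2 \leq 2\epsilon$ is what accounts for the $\sqrt{2}$ factor; one could alternatively state the tighter bound $\sqrt{\epsilon(2-\epsilon)}$, but the weaker form suffices for later applications (and is monotone in $\epsilon$, which is convenient when chaining inequalities).
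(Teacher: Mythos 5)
Your proof is correct and follows essentially the same route as the paper's: both use the Pythagorean identity $\norm{\Pi\ket{\psi}}^2 + \norm{\Pi^\perp\ket{\psi}}^2 = \norm{\ket{\psi}}^2 = 1$, plug in the hypothesis, and discard the $-\epsilon^2$ term. Nothing to add.
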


\begin{proof}
    We have \begin{align*}
        \norm{\ket{\psi}}^2 =& \norm{\Pi \ket{\psi}}^2 + \norm{\Pi^{\perp} \ket{\psi}}^2 \\
        \geq& (1 - \epsilon)^2 + \norm{\Pi^{\perp} \ket{\psi}}^2,
    \end{align*}
    and $\norm{\ket{\psi}}^2=1$. Rearranging these inequalities, we obtain \begin{align*}
        \norm{\Pi^{\perp} \ket{\psi}}^2 \leq& 2\epsilon - \epsilon^2 \\
        \leq& 2\epsilon.
    \end{align*}
\end{proof}

\subsection{Indifferentiability}
Let us consider proving that oracle $O$ and construction $C^O$ is indifferentiable from oracle $Q$. Often, $Q$ will be a random oracle.

\begin{definition}
    Consider a construction $C^O$ with access to an oracle $O$, which syntactically matches ideal primitive $Q$. A simulator $\algo S$ is $(q, \epsilon)$-\emph{indifferentiable} for construction $C$ if, for all $q$-query quantum algorithms $\algo D$,
    \begin{align*}
        \left|\Pr[\algo D^{\algo S ^ Q, Q}() = 1] - \Pr[\algo D^{O, C^O}() = 1]\right| \leq \epsilon.
    \end{align*}
    The construction $C^O$ is indifferentiable from $Q$ if there exists an efficient simulator that is $(q(n), \negl(n))$-indifferentiable for all polynomial $q(n)$.
    \label{defn:sim-indiff}
\end{definition}

The following notions and helper lemma will be useful.

\begin{definition}
    A simulator $\algo S$ is $(q, \alpha)$-\emph{indistinguishable} if, for all $q$-query quantum algorithms $\algo D$,
    \begin{align*}
        \left|\Pr[\algo D^{\algo S ^ Q}() = 1] - \Pr[\algo D^O() = 1]\right| \leq \alpha.
    \end{align*}
    \label{defn:sim-ind}
\end{definition}

\begin{definition}
    A simulator $\algo S$ is $(q, \beta)$-\emph{consistent} if, for all $q$-query quantum algorithms $\algo D$,
    \begin{align*}
        \left|\Pr[\algo D^{\algo S ^ Q, Q}() = 1] - \Pr[\algo D^{\algo S ^ Q, C^{\algo S ^ Q}}() = 1]\right| \leq \beta.
    \end{align*}
    \label{defn:sim-cons}
\end{definition}

\begin{lemma}[\cite{Zhandry2018}]
    A simulator which is $(q, \alpha)$-indistinguishable and $(q, \beta)$-consistent is $(q, \alpha+\beta)$-indifferentiable.
    \label{lem:cons-and-ind}
\end{lemma}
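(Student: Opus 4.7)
The plan is a straightforward hybrid argument, inserting the world $(\algo S^Q, C^{\algo S^Q})$ as an intermediate between the ideal world $(\algo S^Q, Q)$ and the real world $(O, C^O)$. For any $q$-query distinguisher $\algo D$, the triangle inequality gives
\begin{align*}
    \left|\Pr[\algo D^{\algo S^Q, Q}()=1] - \Pr[\algo D^{O, C^O}()=1]\right| \leq \Delta_1 + \Delta_2,
\end{align*}
where $\Delta_1 = \left|\Pr[\algo D^{\algo S^Q, Q}()=1] - \Pr[\algo D^{\algo S^Q, C^{\algo S^Q}}()=1]\right|$ and $\Delta_2 = \left|\Pr[\algo D^{\algo S^Q, C^{\algo S^Q}}()=1] - \Pr[\algo D^{O, C^O}()=1]\right|$.

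The first term $\Delta_1$ is exactly the quantity bounded by the consistency definition (\Cref{defn:sim-cons}): a distinguisher with access to the pair $(\algo S^Q, \cdot)$ cannot tell whether the second oracle is $Q$ or $C^{\algo S^Q}$ with advantage exceeding $\beta$. Hence $\Delta_1 \leq \beta$ immediately.

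For the second term $\Delta_2$, I would reduce to indistinguishability (\Cref{defn:sim-ind}) by constructing a new distinguisher $\algo D'$ that has access to a single oracle $\mathcal{O} \in \{\algo S^Q, O\}$ and internally emulates $\algo D$. Specifically, $\algo D'$ forwards $\algo D$'s queries to the primitive oracle directly to $\mathcal{O}$, and answers $\algo D$'s queries to the construction oracle by coherently running the (public, deterministic) construction $C^{\mathcal{O}}$ using additional queries to $\mathcal{O}$. The emulation is perfect: when $\mathcal{O} = \algo S^Q$, $\algo D'$ reproduces $\algo D^{\algo S^Q, C^{\algo S^Q}}$ exactly, and when $\mathcal{O} = O$ it reproduces $\algo D^{O, C^O}$. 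Therefore $\Delta_2$ equals the advantage of $\algo D'$ in distinguishing $\algo S^Q$ from $O$, which by indistinguishability is at most $\alpha$. Summing yields the claimed $\alpha + \beta$ bound.

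The only subtlety, and the one mildly tricky point, is query accounting: each construction-oracle query that $\algo D$ makes may expand into several primitive queries inside $\algo D'$, so $\algo D'$ is not literally $q$-query. This is resolved by interpreting $q$ as a bound on the total number of primitive queries (including those induced by construction calls), which is the standard convention in the indifferentiability literature; alternatively one tracks the blowup factor (e.g., the maximum message block length $\ell$ for the sponge) explicitly and instantiates indistinguishability at the appropriately inflated query budget. Either way the structure of the hybrid argument is unaffected.
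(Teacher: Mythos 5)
Your hybrid argument is correct and is the standard proof of this decomposition: insert the world $(\algo S^Q, C^{\algo S^Q})$ between ideal and real, bound the first hop by consistency, and reduce the second hop to indistinguishability by letting $\algo D'$ simulate the construction oracle $C^{\mathcal O}$ using its own primitive queries. The paper itself supplies no proof, deferring to the citation of Zhandry, so there is nothing to diverge from; you have correctly flagged the one genuine subtlety (query accounting in the $\Delta_2$ reduction), and the paper in effect adopts the second of your two resolutions by carrying the block-length blowup $\ell$ explicitly through its final indifferentiability bound.
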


It is worth observing that indifferentiability is not in general symmetric, because the simulator is allowed to be stateful. However, it is transitive in the following sense, which will be used to justify our modifications of the sponge.

\begin{lemma}
    Suppose that oracle $A$ and construction $C_1^A$ is indifferentiable from oracle $B$. Suppose that oracle $B$ and construction $C_2^B$ is indifferentiable from oracle $Q$. Then oracle $A$ and construction $C_2^{C_1^A}$ is indifferentiable from oracle $Q$.
    \label{lem:indiff-trans}
\end{lemma}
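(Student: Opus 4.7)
The plan is a standard triangle-inequality composition argument, with the combined simulator obtained by nesting the two given simulators. Let $\algo S_1$ be the simulator that witnesses indifferentiability of $(A, C_1^A)$ from $B$, and let $\algo S_2$ be the simulator that witnesses indifferentiability of $(B, C_2^B)$ from $Q$. The natural candidate for the combined simulator is $\algo S := \algo S_1^{\algo S_2^Q}$, i.e. run $\algo S_1$ but whenever $\algo S_1$ queries its $B$-oracle, route the query to $\algo S_2^Q$ instead. If both $\algo S_1$ and $\algo S_2$ are efficient then so is $\algo S$, which is necessary for the asymptotic statement of \Cref{defn:sim-indiff}.

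Now fix an arbitrary $q$-query distinguisher $\algo D$ for the combined statement, whose two oracles are (primitive) and (construction on top of primitive). The first step is to pass from $(A, C_2^{C_1^A})$ to $(\algo S_1^B, C_2^B)$. Build an auxiliary distinguisher $\algo D_1$ that internally runs $\algo D$, forwards $\algo D$'s primitive queries to its own first oracle, and answers $\algo D$'s construction queries by itself computing $C_2$ while forwarding every $C_2$-level call to its own second oracle. Then $\algo D_1^{A, C_1^A}$ perfectly emulates $\algo D^{A, C_2^{C_1^A}}$, while $\algo D_1^{\algo S_1^B, B}$ perfectly emulates $\algo D^{\algo S_1^B, C_2^B}$. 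By the hypothesis on $\algo S_1$, these two outcomes differ by at most $\epsilon_1$ in probability.

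The second step is to pass from $(\algo S_1^B, C_2^B)$ to $(\algo S_1^{\algo S_2^Q}, Q) = (\algo S^Q, Q)$. Build a distinguisher $\algo D_2$ that internally runs $\algo D$, forwards $\algo D$'s construction queries to its own second oracle, and answers $\algo D$'s primitive queries by simulating $\algo S_1$ internally and routing $\algo S_1$'s $B$-queries to $\algo D_2$'s own first oracle. Then $\algo D_2^{B, C_2^B}$ perfectly emulates $\algo D^{\algo S_1^B, C_2^B}$ and $\algo D_2^{\algo S_2^Q, Q}$ perfectly emulates $\algo D^{\algo S^Q, Q}$. By the hypothesis on $\algo S_2$, these differ by at most $\epsilon_2$. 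Combining the two bounds by the triangle inequality gives
\begin{align*}
\bigl|\Pr[\algo D^{\algo S^Q, Q}()=1] - \Pr[\algo D^{A, C_2^{C_1^A}}()=1]\bigr| \le \epsilon_1 + \epsilon_2,
\end{align*}
which yields the claimed indifferentiability with a negligible bound whenever both hypotheses are negligible.

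The only real subtlety, and the step I would be most careful with, is the accounting of query counts: $\algo D_1$ makes more primitive/construction queries than $\algo D$ because each of $\algo D$'s construction queries expands into several oracle calls through $C_2$, and $\algo D_2$ multiplies $\algo D$'s primitive queries by the query complexity of $\algo S_1$. For the asymptotic statement in \Cref{defn:sim-indiff} this is harmless as long as $\algo S_1, \algo S_2, C_1, C_2$ are all polynomially bounded, but for a quantitative version one must apply the two hypotheses at the appropriate $q'$ rather than $q$. No new technique is needed beyond bookkeeping.
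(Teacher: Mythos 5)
Your proof is correct and follows essentially the same hybrid argument as the paper, passing through the intermediate world $(\algo S_1^B, C_2^B)$ with the nested simulator $\algo S_1^{\algo S_2^Q}$. You are slightly more explicit than the paper in spelling out the reduction distinguishers $\algo D_1, \algo D_2$ and in flagging the query-count blowup, but the underlying argument is the same.
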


\begin{proof}
    We prove this via a simple sequence of hybrids, where $\algo D$ is a $q$-query distinguisher taking oracles which syntactically match $A$ and $Q$. Let $\algo S_1^B$ be a simulator in the indifferentiability experiment between $A$ and $B$, and $\algo S_2^Q$ be the simulator in the indifferentiability experiment between $B$ and $Q$. Let $\lambda \in \mathbb N$ be our security parameter. \begin{enumerate}[label=\textbf{(Hybrid \arabic*)},align=left]
        \item The provided oracles are $A, C_2^{C_1^A}$. Let $p_0 = \Pr[\algo D^{A, C_2^{C_1^A}}() = 1]$.
        \item The provided oracles are $\algo S_1^B, C_2^B$. Let $p_1 = \Pr[\algo D^{\algo S_1^B, C_2^B}() = 1]$. 
        \item The provided oracles are $\algo S_1^{\algo S_2^Q}, Q$. Let $p_2 = \Pr[\algo D^{\algo S_1^{\algo S_2^Q}, Q}() = 1]$. 
    \end{enumerate}
    The first to the second hybrid replaces the $A$ oracle with $S_1^B$, and the $C_1^A$ oracle with $B$. We know that $\algo D$ and $C_2$ are both efficient and therefore make a polynomial number of queries to these oracles, and so by the indifferentiability of $A, C_1^A$ from $B$ we have $\abs{p_1-p_0} = \negl(\lambda)$. The second to the third hybrid similarly replaces $B$ and $C_2^B$ with $\algo S_2^Q$ and $Q$. Once again, the distinguisher, simulators, and constructions are all efficient, so by indifferentiability of $B, C_2^B$ from $Q$ we have $\abs{p_2-p_1} = \negl(\lambda)$. The union bound now implies the result.
\end{proof}

\subsection{The sponge construction}\label{sec:sponge}

The sponge is a construction that uses a fixed-length permutation to produce a function whose domain and codomain are arbitrarily large strings. Classically, the sponge is known to be indifferentiable from a random oracle~\cite{BDPvA08}. The international hash standard SHA-3 is a sponge construction instantiated with the public Keccak permutation family~\cite{KeccakSub3, KeccakSponge3}.

Let $r$ and $c$ be positive integers, and let $\varphi : \bit^{r+c} \rightarrow \bit^{r+c}$ be a permutation. This data defines a sponge function $\Sp^{\varphi}$, described below. The natural security parameters are the ``capacity'' $c$ and the ``rate'' $r$. The rate and capacity of a state $x\in\bit^n$ are denoted with $x|_0^r$ and $x|_r^{r+c}$, respectively. In general, $x|_a^b=x_ax_{a+1}\dots x_{b-1}$ for $x=x_0x_{1}\dots x_{n-1}$.

\begin{definition}
    A string $x \in \bit^*$ is a valid input to the sponge if it is of the form $x = x_1 \Vert \dots \Vert x_p$ for $x_i \in \bit^r$, where $p \geq 1$ and $x_p \neq 0^r$.
    \label{defn:valid-sponge-ins}
\end{definition}

To obtain a proper function $\bit^* \rightarrow \bit^*$, we can use an injective pad function from $\bit^*$ to the set of valid sponge inputs, and then apply the sponge. A simple example of such a function is \begin{align*}
    \textsf{PAD}(x) = x \Vert 1 0^{r-(|x| +1 \mod r)}.
\end{align*}

The sponge function $\Sp^{\varphi}$ is defined as follows. To process an input, we initialize the state $0^r \Vert 0^c$. We then alternate (i.) XORing in the next block of the input to the top $r$ bits, beginning with $x_1$, with (ii.) applying the permutation $\varphi$ to the state, until the entire input is ``absorbed''. To produce the output, we then alternate (i.) outputting the top $r$ bits with (ii.) applying $\varphi$, until as many bits as needed are output. The process of producing the output is called squeezing, and is what gives the sponge unbounded range.

As an alternative to a function with unbounded range, we can consider a random function which maps inputs of the form $x = x_1 \Vert \dots \Vert x_p$ for $x_i \in \bit^r$ \textit{but} which possibly end in $0^r$, to random $r$ bit strings. While such a construction has only finite output length, using a \textsf{PAD} function satisfying the above it naturally corresponds to functions with unbounded range in the following way. Let $f : (\bit^r)^* \rightarrow \bit^r$ be such a function, and define function $g: \bit^* \rightarrow \bit^*$ as \begin{align*}
    g(x) = f(\textsf{PAD}(x)) \Vert f(\textsf{PAD}(x) \Vert 0^r) \Vert f(\textsf{PAD}(x) \Vert 0^{2r}) \dots.
\end{align*}

In this way, we have the following remark.

\begin{claim}
    Suppose that the sponge construction with a single round of squeezing and no constraint on the inputs, $\Sp^{\varphi} : (\bit^r)^* \rightarrow \bit^r$, is indifferentiable from a random oracle $f : (\bit^r)^* \rightarrow \bit^r$. Then, the full sponge construction with arbitrary squeezing and a valid pad function is indifferentiable from a random oracle $f: \bit^* \rightarrow \bit^*$. 
    \label{claim:sponge-approach}
\end{claim}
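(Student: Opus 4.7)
The plan is to invoke the transitivity of indifferentiability (\Cref{lem:indiff-trans}) via the two-step decomposition implicit in the formula for $g$ given just above the claim. Let $C_1^\varphi = \Sp^\varphi : (\bit^r)^* \to \bit^r$ be the single-round sponge (the object of the hypothesis) and, for any oracle $O : (\bit^r)^* \to \bit^r$, define the ``wrapping'' construction
\[
    C_2^O(x) \;=\; O(\textsf{PAD}(x)) \,\Vert\, O(\textsf{PAD}(x) \Vert 0^r) \,\Vert\, O(\textsf{PAD}(x) \Vert 0^{2r}) \,\Vert\, \cdots.
\]
Because applying $\varphi$ to the sponge state after emitting the top $r$ bits is identical to absorbing an extra block of $0^r$ and then emitting the top $r$ bits, $C_2^{C_1^\varphi}$ is literally the full sponge with \textsf{PAD} and arbitrary squeezing. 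Since the hypothesis gives indifferentiability of $(\varphi, C_1^\varphi)$ from a random oracle $f : (\bit^r)^* \to \bit^r$, \Cref{lem:indiff-trans} reduces the task to showing that $(f, C_2^f)$ is indifferentiable from an XOF random oracle $F : \bit^* \to \bit^*$.

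For this second step I would construct a simulator $\algo T^F$ that perfectly simulates $f$. Since \textsf{PAD} is injective onto valid sponge inputs (sequences in $(\bit^r)^*$ whose last block is nonzero), every $y \in (\bit^r)^*$ decomposes uniquely as $y = y' \Vert 0^{ir}$ with $y'$ not ending in $0^r$ and $i \geq 0$. On query $y$, the simulator first computes this decomposition; if $y' = \textsf{PAD}(x)$ for some (necessarily unique) $x \in \bit^*$, it returns the $(i+1)$-th $r$-bit block of $F(x)$; otherwise it returns a value sampled from an internal compressed/lazy-sampled oracle keyed on $y$. For the canonical \textsf{PAD} displayed in the preliminaries, both the image-membership test and the inverse are efficient reversible operations, so $\algo T^F$ is an efficient quantum algorithm.

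I would then verify perfect indistinguishability and perfect consistency and conclude via \Cref{lem:cons-and-ind}. For indistinguishability, the joint distribution of $\algo T^F$'s responses on any set of distinct inputs is uniform and independent: queries routed to different $F(x)$'s are independent by randomness of $F$; queries routed to distinct block indices of a single $F(x)$ are independent by the XOF structure of $F$; and queries served by the internal oracle are uniform and independent by construction. This is precisely the joint distribution of a uniform random $f : (\bit^r)^* \to \bit^r$, so $\algo T^F$ is $(q,0)$-indistinguishable. For consistency, $C_2^{\algo T^F}(x)$ concatenates the $(i+1)$-th $r$-bit block of $F(x)$ over all $i \geq 0$, which equals $F(x)$ exactly; hence $\algo T^F$ is $(q,0)$-consistent.

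The one potential obstacle is making sure the simulator's dispatch (routing $y$ to $F$ versus to the internal oracle) remains well-defined and efficiently implementable under quantum superposition queries. This reduces to exhibiting a reversible circuit that decides $y' \in \textsf{PAD}(\bit^*)$ and computes $\textsf{PAD}^{-1}$, which is straightforward for the canonical \textsf{PAD}. Beyond this, the reduction is information-theoretically tight with no statistical loss, so the final indifferentiability bound inherits the quantitative parameters of the single-round result unchanged.
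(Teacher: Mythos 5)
Your proposal is correct and matches the argument the paper leaves implicit: the formula for $g$ displayed just before the claim is exactly your $C_2$, and \Cref{lem:indiff-trans} together with the perfectly indistinguishable, perfectly consistent dispatch simulator for $C_2$ is precisely how the paper intends the reduction to go. One quantitative caveat on your final sentence: the single-round bound is \emph{not} inherited unchanged, since in the transitivity argument the composed distinguisher against $(\varphi, C_1^\varphi)$ makes $C_2$'s queries on top of its own, giving a multiplicative factor in the query count (bounded by the maximum output block length); this is why the proof of \Cref{thm:main} explicitly incurs an extra $O(l)$ factor when invoking \Cref{claim:sponge-approach}.
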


We will prove the former, i.e., that $\Sp^{\varphi} : (\bit^r)^* \rightarrow \bit^r$, is indifferentiable from a random oracle. 

\newpage
\section{Compressed oracle framework}
\label{sec:comp-o}

The compressed oracle technique, introduced by Zhandry \cite{Zhandry2018}, can be viewed as a quantum analog of lazy sampling. Since it is the main technical workhorse of our result, we give a complete and self-contained introduction to our formulation of the technique. The exposition is most similar to that of \cite{don2021extract,chung2021compressed}.

By analyzing a suitable purification of a random oracle\footnote{More precisely: a suitable Stinespring dilation (in fact, a generalization thereof) of the random oracle viewed as a quantum channel with memory} in a certain ``compressed'' basis, one obtains a query transcript corresponding to points the adversary has queried. This technique is essentially the only known tool for stateful simulation of a quantum-accessible randomized oracle, as is required for quantum indifferentiability proofs of domain extenders. Further, the technique allows for simple and direct proofs of query lower bounds for many natural random oracle search problems. 

\subsection{Purified oracles}

Let $f:[M] \rightarrow [N]$ be a uniform random function. We are interested in analyzing quantum algorithms that query such functions, and which know nothing about $f$ other than the distribution from which it was picked. For this introductory section we will neglect the algorithms workspace. We will assume here that $M=2^m$ and $N=2^n$ for integers $m$ and $n$. 

Going forward, we will use subscripts to indicate which quantum register(s) a state is supported on: here $X$ will denote a register holding an input, $Y$  a register holding an output. A subscript of an operator or a state is always a label of a register; we use superscripts for additional information. We will sometimes omit labels on states or operators if it is clear from context. We define the quantum oracle for $f$ as $O^f$, which acts on an input and output register as
 \begin{align*}
    O^f \ket{x}_X \ket{y}_Y :=& \ket{x}_X\ket{y \oplus f(x)}_Y.
\end{align*}
We could instead write a purified oracle $\algo P$ that acts on an input, output, and a third function register (subscript $F$) as 
\begin{align*}
    \algo P \ket{x}_X\ket{y}_Y \ket{f}_F := \ket{x}_X \ket{y \oplus f(x)}_Y \ket{f}_F.
\end{align*}
The third register has the set $\mathfrak F$ of all functions $[M]\rightarrow [N]$ as computational basis. Such a representation allows us to model a distribution on functions as the purified superposition over functions. As $f$ is drawn uniformly at random, we can instead prepare an initial function oracle 
\begin{align}
    \ket{\mathfrak F}_F := \frac{1}{\sqrt{|\mathfrak F|}} \sum_{f \in \mathfrak F} \ket{f}_F \label{eqn:init-func},
\end{align}
and replace each application of $O^f$ with the purified oracle $\algo P$.  Measuring the $F$ register to obtain $f$ after an algorithm has finished, and drawing $f$ uniformly at random and then answering queries with $O^f$, yield identical joint distributions of $f$ and the algorithms (possibly quantum) output, showing that the purified oracle is an exactly faithful simulation of a random oracle.

\subsection{The compressed basis}

We will now introduce the ``compressed basis'', which is a convenient basis in which to analyze the function register. First, let us consider enlarging the Hilbert space of the function register to the span of the set $\mathfrak D$ of functions from $[M]$ to $[N] \cup \{\bot\}$. The resulting function register can be viewed  as an $(M+1)$-dimensional register for every output. In particular, for $g \in \mathfrak D$, 
\begin{align*}
    \ket{g}_F = \ket{g(0)}_0 \ket{g(1)}_1 \dots \ket{g(M-1)}_{M-1}\,.
\end{align*}
We will use $\ket{\mu}$ to denote the uniform superposition over (non-$\bot$) outputs, i.e.,
\begin{align*}
    \ket{\mu} = \frac{1}{\sqrt{N}}\sum_{y \in [N]} \ket{y}.
\end{align*}
The initial superposition over all total functions (i.e., functions with no $\bot$ outputs) can then be written as \begin{align*}
    \ket{\mathfrak F}_F = \bigotimes_{x \in [M]} \ket{\mu}_x.
\end{align*}
The transformation from the standard to the compressed basis is the unitary which, for each input $x$, exchanges the uniform superposition $\ket{\mu}_x$ with $\ket{\bot}_x$.

\begin{definition}
    Define the register compression function $C^x \in \mathrm{SU}(N+1)$, acting on register $x \in [M]$ of a function, as \begin{align*}
        C^x := I_x - \proj{\bot}_x - \proj{\mu}_x + \outerprod{\bot}{\mu}_x + \outerprod{\mu}{\bot}_x.
    \end{align*}
    We can use this to define the full compression function $C_F \in \mathrm{SU}(\mathfrak D)$ as \begin{align*}
        C_F := \bigotimes_{x \in [M]} C^x.
    \end{align*}
\end{definition}
It is easy to verify that $C^2 = I$, and the initial purification compresses to the constant $\bot$ function, meaning $C \ket{\mathfrak F}_F = \bigotimes_{x \in [M]} \ket{\bot}_x$. Going forward, it will be helpful to denote basis vectors in the compressed basis by the set of non-$\bot$ input-output pairs (i.e. the input values where the partial function is defined, and the corresponding outputs). We call such a set $D$ as a ``database''. We write $x \in D$ to mean that there exists a pair $(x, y) \in D$ (with $y \neq \bot$), and let $|D|$ denote the total number of such pairs in $D$.

In this notation, the compression of the uniform superposition over all functions compresses to the empty database, $C \ket{\mathfrak F}_F = \ket{\emptyset}_F$. The action of querying can be understood in the compressed basis by undoing the compression operator (which is self-inverse), applying the purified oracle $\algo P$, and then re-compressing. It turns out that one only needs to (de)compress the register being queried. To define this, we use the notation
\begin{align}\label{eq:local-compression}
	L_{XF} \ket{x}_X \ket{f}_F \coloneqq  \ket{x}_XC^x_F\ket{f}_F,
\end{align}
in other words $L$ is the ``local compression'', compressing the $x$-th register of the $F$ register conditioned on the value $x$ in the first register. Note that this is formally distinct from $C^x$, which compresses a fixed register; the operators are related by the identity $L_{XF} \ket{x} = C^x_F$.

\begin{lemma}
    We have \begin{align*}
        C_F \algo P_{XYF} C_F^\dagger = L_{XF} \algo P_{XYF} L_{XF}.
    \end{align*}
    \label{lem:only-one-comp}
\end{lemma}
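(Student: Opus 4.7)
The plan is to expand both sides by decomposing the purified oracle according to the input register, and to exploit the tensor product structure of $C_F$.

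First I would write $\algo P$ in its natural controlled form
\begin{equation*}
	\algo P_{XYF} = \sum_{x \in [M]} \proj{x}_X \otimes \algo P^{(x)}_{YF_x} \otimes I_{F_{\neq x}},
\end{equation*}
where $F_x$ is the $x$-th slot of the function register and $\algo P^{(x)}_{YF_x}\ket{y}_Y\ket{v}_{F_x} = \ket{y \oplus v}_Y \ket{v}_{F_x}$. By definition, $C_F = \bigotimes_{x' \in [M]} C^{x'}_{F_{x'}}$, and $L_{XF} = \sum_{x} \proj{x}_X \otimes C^x_{F_x} \otimes I_{F_{\neq x}}$.

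For the left-hand side, I would substitute and note that for each term in the sum indexed by $x$, the operators $C^{x'}_{F_{x'}}$ for $x' \neq x$ commute with the nontrivial action of $\algo P^{(x)}$ (which is identity on $F_{x'}$), so they pair up with the corresponding factors in $C_F^\dagger$ and cancel via $C^{x'}(C^{x'})^\dagger = I$. What remains is
\begin{equation*}
	C_F \algo P_{XYF} C_F^\dagger = \sum_{x} \proj{x}_X \otimes C^x_{F_x}\, \algo P^{(x)}_{YF_x}\, (C^x_{F_x})^\dagger \otimes I_{F_{\neq x}}.
\end{equation*}
For the right-hand side, the orthogonality of the $X$-projections in $L_{XF}$ collapses the three sums to a single index $x$, yielding the same expression except with $C^x_{F_x}$ appearing (rather than its adjoint) on both sides. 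Using $C^2 = I$, which gives $C^x = (C^x)^\dagger$, the two expressions agree.

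The argument is almost entirely bookkeeping; there is no real obstacle beyond keeping the tensor factor labels straight. The single conceptual point worth highlighting is that it is precisely the self-adjointness of $C^x$ (together with its locality on $F_x$) that makes the collapse from a global $C_F$ to the local $L_{XF}$ work.
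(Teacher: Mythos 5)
Your proof is correct and takes essentially the same approach as the paper's: both exploit the controlled structure of $\algo P$ over the $X$ register, the fact that $C^{x'}$ for $x'\neq x$ commutes with the nontrivial action of $\algo P$ on that branch, and the self-adjointness/involution property of $C^x$. The paper verifies the identity on an arbitrary computational basis state while you write the operator-level tensor decomposition, but these are the same argument in two presentations.
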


\begin{proof}
    Observe that $[C^x, C^{x'}] = 0$, and for $x' \neq x$ we have $\algo P C^{x'} \ket{x}_X \ket{y}_Y \ket{f}_F = C^{x'} \algo P \ket{x}_X \ket{y}_Y \ket{f}_F$. Working out the action on an arbitrary basis state, have \begin{align*}
        C_F \algo P_{XYF} C_F^\dagger \ket{x}_X \ket{y}_Y\ket{f}_F &= C_F \algo P_{XYF} C_F \ket{x}_X \ket{y}_Y\ket{f}_F \\
        &= \left(\bigotimes_{x' \in [M]} C^{x'}\right) \algo P_{XYF} \left(\bigotimes_{x'' \in [M]} C^{x''}\right) \ket{x}_X \ket{y}_Y\ket{f}_F\\
        &= \left(\bigotimes_{x' \in [M]} C^{x'}\right) \left(\bigotimes_{x'' \in [M], x'' \neq x} C^{x''}\right) \algo P_{XYF} C_{x} \ket{x}_X \ket{y}_Y\ket{f}_F \\
        &= L_{XF} \algo P_{XYF} L_{XF}\ket{x}_X \ket{y}_Y\ket{f}_F
    \end{align*}
\end{proof}

\begin{definition}
    We will sometimes use the notation $\cO_{XYF} \coloneqq L_{XF} \algo P_{XYF} L_{XF}$ to denote compressed oracle calls.
\end{definition}

Consider an experiment where a $t$ query adversary has workspace $A$, query register $X$ and output register $Y$, interacting with a random function oracle initialized as the empty database in register $F$. Parameterizing the adversary by unitaries $U^0, U^1, \dots, U^t$, we can write the final state \begin{align*}
    \ket{\psi^t}_{AXYF} = U^t_{AXY} \dots U^1_{AXY} \cO_{XYF} U^0_{AXY} \ket{0}_A \ket{0}_X \ket{0}_Y \ket{\emptyset}_F.
\end{align*}

\begin{corollary}[of \Cref{lem:only-one-comp}]
    In the state $\ket{\psi^t}$, the function register is supported fully on databases having at most $t$ non-$\bot$ output values.
    \label{cor:at-most-t}
\end{corollary}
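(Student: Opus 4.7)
The plan is to prove the claim by induction on the number of queries $t$, tracking how the size of the support (measured in non-$\bot$ entries of the database) evolves under both the adversary's unitaries and the compressed oracle call. For each $k \geq 0$, let $\mathcal{S}_k \subseteq \mathcal{H}_{AXYF}$ denote the subspace spanned by computational basis vectors $\ket{a}_A\ket{x}_X\ket{y}_Y\ket{D}_F$ such that $D$ has at most $k$ non-$\bot$ entries. The claim is then equivalent to $\ket{\psi^t} \in \mathcal{S}_t$.

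For the base case $t=0$, the initial state is $\ket{0}_A\ket{0}_X\ket{0}_Y\ket{\emptyset}_F$, and $U^0_{AXY}$ acts as the identity on $F$, so the resulting state lies in $\mathcal{S}_0$. For the inductive step, suppose the state after $t-1$ queries lies in $\mathcal{S}_{t-1}$. Since $U^t_{AXY}$ trivially preserves the $F$ register, it suffices to show that the compressed query $\cO_{XYF} = L_{XF}\,\algo P_{XYF}\,L_{XF}$ maps $\mathcal{S}_{t-1}$ into $\mathcal{S}_t$. By linearity, this reduces to analyzing the action on a single basis vector $\ket{x}_X\ket{y}_Y\ket{D}_F$ with $|D| \leq t-1$.

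The key observation is that by the definition of local compression in \eqref{eq:local-compression}, the operator $L_{XF}$ conditioned on $X = x$ acts as $C^x_F$, which by construction only modifies the $x$-th sub-register of $F$ (corresponding to the output at input $x$). Hence $L_{XF}\ket{x}_X\ket{y}_Y\ket{D}_F$ is a superposition of basis states of the form $\ket{x}_X\ket{y}_Y\ket{D'}_F$ where $D'$ agrees with $D$ on all inputs except possibly at $x$; in particular $|D'| \leq |D| + 1 \leq t$. The purified oracle $\algo P_{XYF}$ leaves the $F$ register unchanged aside from the output at $x$ (and only XORs into $Y$), and the second application of $L_{XF}$ again only touches the $x$-th sub-register, so the final state is still a superposition of databases differing from $D$ only at input $x$ and thus of size at most $t$.

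I do not expect any real obstacle here: the entire argument is structural, resting solely on the locality of $L_{XF}$ (the $x$-th sub-register is the only one touched conditioned on $X=x$) together with the fact that neither $\algo P$ nor the adversary's unitaries enlarge the support on $F$. The only thing that needs care is the passage from single basis states to arbitrary superpositions, which is immediate by linearity since $\mathcal{S}_k$ is a linear subspace.
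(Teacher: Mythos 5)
Your proof is correct, and it is essentially the argument the paper intends but leaves implicit (the corollary is stated without proof, with the attribution ``of Lemma~\ref{lem:only-one-comp}'' signaling exactly this locality reasoning). The induction on the number of queries, combined with the observation that $\cO_{XYF} = L_{XF}\,\algo P_{XYF}\,L_{XF}$ touches only the $x$-th sub-register of $F$ conditioned on $X=x$, is the right argument. One small imprecision: you write that $\algo P$ ``leaves the $F$ register unchanged aside from the output at $x$''; in fact $\algo P$ leaves the $F$ register \emph{entirely} unchanged (it only reads $f(x)$ and XORs it into $Y$) — the only operators that modify $F$ are the two applications of $L_{XF}$. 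This slip does not affect the correctness of the argument, since the conclusion you draw (nothing outside the $x$-th sub-register is modified) still holds, but it is worth stating cleanly.
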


Note that the function register needn't have exactly $t$ non-$\bot$ values; indeed, quantum algorithms can forget already learned outputs by uncomputing them, in which case they will not appear in the database.

Another important Corollary is that there is always an image of $x$ after decompressing the $x$-th register, so long as the databases are queried using $\cO$. We call the span of such states valid. To define them, we use the notation $\Pi^{x \in \mathsf{db}}_F \coloneqq I_x - \proj{\bot}_x$.

\begin{definition}
    We say that a database state $\ket{\psi}_F$ is ``valid'' if for all $x \in [M]$, it satisfies \begin{align*}
        \Pi^{x \in \mathsf{db}} C^{x} \ket{\psi} = C^x \ket{\psi}.
    \end{align*}
    Observe that such states form a subspace, and we use $\Pi^v$ to denote the projector onto this subspace.
\end{definition}

\begin{corollary}[of \Cref{lem:only-one-comp}]
    The operation $\cO$ preserves membership in the valid subspace.
    \label{cor:always-valid}
\end{corollary}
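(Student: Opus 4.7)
The plan is to reformulate validity as a simple ``no-$\mu$'' condition and then track the effect of each factor of $\cO = L\, \algo P\, L$ branch-by-branch in the $X$-basis. First I would establish the equivalence that $\ket{\psi}$ is valid iff $\proj{\mu}_{F_x}\ket{\psi}=0$ for every $x \in [M]$. This reduces to the identity $\bra{\bot}C^x = \bra{\mu}$ (and its partner $\bra{\mu}C^x=\bra{\bot}$), both immediate from the definition of $C^x$ as the swap of $\ket{\bot}$ and $\ket{\mu}$; hence $\Pi^{x\in\mathsf{db}}_F C^x\ket{\psi}=C^x\ket{\psi}$ collapses to $\bra{\mu}_{F_x}\ket{\psi}=0$.

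Next, since $L$ is controlled by $X$ and $\algo P$ acts only on $X$, $Y$, and the single queried register of $F$, I would decompose an input valid state in the $X$-basis and, by linearity, reduce to showing that $\cO\ket{x}_X\ket{y}_Y\ket{\phi}_F$ satisfies the no-$\mu$ condition on every register whenever $\ket{\phi}$ does. For any $x' \neq x$, no factor of $\cO$ touches register $x'$ on this branch, so validity is trivially inherited from $\ket{\phi}$. The substantive case is $x'=x$: the first $L$ sends the $F$ part to $C^x\ket{\phi}$, which by $\bra{\bot}C^x=\bra{\mu}$ combined with validity of $\ket{\phi}$ has no $\ket{\bot}$ amplitude on register $x$, so register $x$ is supported on $\mathrm{span}\{\ket{u}:u\in[N]\}$. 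The middle $\algo P$ acts as $\ket{u}\mapsto\ket{u}$ on register $x$ (while XORing $u$ into $Y$), so register $x$ remains free of $\ket{\bot}$ amplitude. The final $L$ reapplies $C^x$, and by $\bra{\mu}C^x=\bra{\bot}$ the absence of $\ket{\bot}$ before translates to the absence of $\ket{\mu}$ after, giving validity on register $x$.

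The argument is essentially bookkeeping once the no-$\mu$ characterization is in place. The only real content is the design of $C^x$ as a two-dimensional swap between $\ket{\bot}$ and the specific uniform vector $\ket{\mu}$, which lets the sandwich structure $L\,\algo P\,L$ convert absence of $\mu$ into absence of $\bot$ (via the first $L$) and then back into absence of $\mu$ (via the second $L$), with the middle $\algo P$ harmless because it preserves the computational basis on register $x$. I do not anticipate any serious obstacle.
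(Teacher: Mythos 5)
Your proof is correct and matches the paper's intended reasoning: you reformulate validity as absence of $\ket{\mu}$ amplitude on each register and track the sandwich $L\,\algo P\,L$, while the paper's attribution to \Cref{lem:only-one-comp} points to the equivalent global form $C_F\algo P C_F$ (valid $\Leftrightarrow$ the decompressed state is supported on total functions, which $\algo P$ preserves). The register-by-register bookkeeping you do and the global argument exploit the same identity $\bra{\bot}C^x = \bra{\mu}$, so this is essentially the same proof.
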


We will sometimes also need a projector onto databases containing a certain value, controlled on an external register. We use the notation $\Pi^{\in \mathsf {db}}_{XF}$ for this, such that \begin{align*}
    \inD_{XF} \ket{x}_X = \Pi^{x \in \mathsf{db}}_F.
\end{align*}
We have seen that the initial state $\ket{\emptyset}_F$ in the compressed basis evolves to a superposition of input-output pairs as queries are made. These states are sometimes called compressed database states. We will now analyze how these database states evolve under the query operator.

\subsection{Transition capacities}\label{sec:transitions}

Here we recall the framework of \cite{chung2021compressed} for proving query bounds using the compressed oracle. Our exposition of the compressed oracle is slightly different, and so in some places we include alternative proofs. Note also that certain aspects of the framework described here are simplified from that of \cite{chung2021compressed}, as we are not concerned with parallel queries.

Recall that $\mathfrak D$ denotes the set of functions from $[M]$ to $[N] \cup \{\bot\}$. We let $\mathfrak D_t$ denote the subset of $\mathfrak D$ to functions with at most $t$ non-$\bot$ values.

Let us define a predicate $\P$ as a subset of databases. A simple example of a predicate is the collision predicate $\mathrm{C}$ over $\mathfrak D$, which is the set of all databases which contain points $x, x' \in D$ with $x \neq x'$ and $D(x) = D(x')$. Recall that in our notation $x \in D \Leftrightarrow D(x) \neq \bot$, so collisions on $\bot$ do not count. 

\begin{definition}
    A predicate $\P$ is a subset of databases $P \subset \mathfrak D$. We sometimes also refer to a database $D \in \mathfrak D$ as ``satisfying'' $\P$ if $D \in \P$, or say $\P(D)=\top$ in this case (and otherwise $\P(D)=\bot$).
    
    A predicate has a negation, and a restriction to databases of size at most $t$, which we denote as follows. \begin{align*}
        \neg \P = \mathfrak D \setminus \P, && \P_t = \mathfrak D_t \cap \P, && \neg \P_t = \mathfrak D_t \setminus \P_t.
    \end{align*}
\end{definition}

The quantum transition capacity between predicate $\neg \P$ and predicate $\Q$ measures the amplitude transfer from databases in $\neg \P$ to ones in $\Q$ under a single query.

\begin{definition} \label{def:transition-capacity}
    The single-query quantum transition capacity from predicate $\neg \P$ to predicate $\Q$ is defined as follows. 
    \begin{align*}
        [\![\neg \P \rightarrow \Q]\!] = \norm{\Pi^{\Q}_F C_F \algo P_{XYF} C_F^\dagger \Pi^{\neg \P}_F}.
    \end{align*} Similarly, the quantum transition capacity for $q$ sequential queries is defined as $$[\![ \neg \P \xrightarrow{q} \Q]\!]=\sup_{U_1,\ldots,U_{q-1}} \| \Pi_F^{\Q} C_F \mathcal{P}_{XYF} \,U_{q-1} \, \mathcal{P}_{X,Y,F} \cdots \mathcal{P}_{XYF} \, U_1 \, \mathcal{P}_{XYF} \, C_F^\dagger \Pi_F^{\neg \P} \|  \, $$
where the supremum is over all positive $d \in \Z$ and all unitaries $U_1,\ldots,U_{q-1}$ acting on $\C[\mathcal{X}] \otimes \C[\mathcal{Y}] \otimes \C^d$. 
\end{definition}

Note that the definition of the quantum transition capacity is the same as in \cite{chung2021compressed}, Definition 5.5. We will state some of its properties that we will use while deriving the query lower bounds in the quantum adversary setting. We list them under the same lemma.

\begin{lemma}[\cite{chung2021compressed}, Lemma 5.6]  \label{quant-trans-cap-splitting-queries}
For any sequence of predicates $\P_0,\P_1,\ldots,\P_q$, 
$$ [\![\neg\P_0 \xrightarrow{q} \P_q]\!]
 \leq \sum_{s=1}^q [\![\neg\P_{s-1} \rightarrow {\P_{s}}]\!]\, .
$$    
\end{lemma}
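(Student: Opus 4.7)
The plan is to prove the bound by induction on $q$, with the inductive step
\begin{equation*}
  [\![\neg \P_0 \xrightarrow{q} \P_q]\!] \;\leq\; [\![\neg \P_0 \xrightarrow{q-1} \P_{q-1}]\!] \;+\; [\![\neg \P_{q-1} \to \P_q]\!]
\end{equation*}
doing all the work; iterating this inequality down to $s = 1$ telescopes into the claimed sum, and the base case $q=1$ matches the single-query definition.

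As a preparation, I first rewrite the $q$-query operator in a form where each compressed oracle call $\cO = C_F \algo P C_F^\dagger$ (from \Cref{lem:only-one-comp}) is explicit between adversary unitaries. Because each $U_s$ acts only on $XY$ (and a workspace), it commutes with $C_F$ and $C_F^\dagger$; inserting $C_F^\dagger C_F = I$ between each query and each unitary and collecting neighboring copies with the $\algo P$'s yields
\begin{equation*}
  C_F \algo P_{XYF} U_{q-1} \algo P_{XYF} \cdots U_1 \algo P_{XYF} C_F^\dagger \;=\; \cO_{XYF} U_{q-1} \cO_{XYF} \cdots U_1 \cO_{XYF}.
\end{equation*}
In this form, every single-query capacity $[\![\neg \P \to \Q]\!] = \norm{\Pi^{\Q}_F \cO_{XYF} \Pi^{\neg \P}_F}$ can be read off directly from any consecutive triple $\Pi^{\Q}_F \cO \Pi^{\neg \P}_F$ that appears in the product.

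For the inductive step, I fix $U_1, \ldots, U_{q-1}$ and insert the resolution $I_F = \Pi^{\P_{q-1}}_F + \Pi^{\neg \P_{q-1}}_F$ between the $(q-1)$-th compressed query and the unitary $U_{q-1}$ in $\Pi^{\P_q}_F \cO U_{q-1} \cO U_{q-2} \cdots U_1 \cO \Pi^{\neg \P_0}_F$. The triangle inequality splits its norm into two summands. In the summand containing $\Pi^{\P_{q-1}}_F$, the leading factor $\Pi^{\P_q}_F \cO U_{q-1}$ has operator norm at most $1$, so the summand is bounded by $\norm{\Pi^{\P_{q-1}}_F \cO U_{q-2} \cdots U_1 \cO \Pi^{\neg \P_0}_F} \leq [\![\neg \P_0 \xrightarrow{q-1} \P_{q-1}]\!]$. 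In the summand containing $\Pi^{\neg \P_{q-1}}_F$, I commute this projector past $U_{q-1}$ (disjoint registers), isolating $\Pi^{\P_q}_F \cO \Pi^{\neg \P_{q-1}}_F$ up front, whose norm is exactly $[\![\neg \P_{q-1} \to \P_q]\!]$; everything to the right of it is a composition of unitaries and projectors and hence has operator norm at most $1$. Taking the supremum over $U_1, \ldots, U_{q-1}$ on both sides completes the inductive step.

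I do not expect a genuine obstacle here: the argument is a telescoping triangle inequality whose two technical ingredients, the identity $\cO = C_F \algo P C_F^\dagger$ and the commutation of adversary unitaries with $F$-register projectors, are already in hand. The only point to be careful about is \emph{where} the resolution of identity is inserted, namely on the database register immediately after the $(q-1)$-th compressed query, so that $\Pi^{\P_{q-1}}_F$ records the state of the database just after the $(q-1)$-th query (equivalently, just before $U_{q-1}$, since $[\Pi^{\P_{q-1}}_F, U_{q-1}] = 0$).
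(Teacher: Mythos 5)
Your proof is correct and is essentially the standard argument used in the cited source \cite{chung2021compressed}; the paper itself does not reproduce a proof but imports the lemma verbatim. The three ingredients you identify---the rewriting $C_F \algo P C_F^\dagger = \cO$ from \Cref{lem:only-one-comp} applied $q$ times (using that the adversary unitaries commute with $C_F$), inserting the resolution $\Pi^{\P_{q-1}}_F + \Pi^{\neg \P_{q-1}}_F = I$ just after the $(q-1)$-th compressed query, and bounding the two resulting pieces via the triangle inequality and submultiplicativity of operator norm---are exactly the steps of the original proof, and the induction telescopes as you describe. One small remark: the supremum in the definition also ranges over the workspace dimension $d$; this causes no trouble since the first summand only involves $U_1,\ldots,U_{q-2}$ (at a fixed $d$), which is a particular term in the supremum defining $[\![\neg\P_0 \xrightarrow{q-1} \P_{q-1}]\!]$, and the second summand $[\![\neg\P_{q-1}\to\P_q]\!]$ is independent of all $U_s$ and $d$.
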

\begin{lemma}[\cite{chung2021compressed}, Lemma 5.31] \label{quant-trans-cap-union}
 For any predicates $\P,\P'$ and $\Q$, we have
 $$[\![\Q\rightarrow {\P}]\!]\, \leq [\![\Q\rightarrow {\P} \cup \P']\!],$$
 $$[\![\Q\rightarrow {\P} \cup \P']\!] \leq [\![\Q\rightarrow {\P} ]\!]+[\![\Q\rightarrow \P']\!] , \text{ and}$$
 $$[\![{\P} \cap \P' \rightarrow \Q]\!] \leq  [\![{\P}  \rightarrow \Q]\!].$$
\end{lemma}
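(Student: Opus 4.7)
The plan is to observe that the three inequalities all follow from elementary manipulations with projectors once we note that the predicate projectors $\Pi^{\P}_F$, $\Pi^{\P'}_F$, $\Pi^{\Q}_F$ are all diagonal in the compressed database basis of $F$ and therefore commute pairwise. Consequently, the set-theoretic identities yield the operator identities $\Pi^{\P \cap \P'} = \Pi^{\P}\Pi^{\P'}$ and $\Pi^{\P \cup \P'} = \Pi^{\P} + \Pi^{\P'} - \Pi^{\P}\Pi^{\P'}$, and the inclusion $\P \subseteq \P \cup \P'$ gives the positive-semidefinite inequality $\Pi^{\P} \le \Pi^{\P \cup \P'}$. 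To shorten notation I will write $M \coloneqq C_F \mathcal{P}_{XYF} C_F^\dagger$, so that $[\![S \rightarrow T]\!] = \|\Pi^{T}_F M \Pi^{S}_F\|$ for any predicates $S, T$.

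For the first inequality, since $\Pi^{\P} \le \Pi^{\P\cup\P'}$ as positive semidefinite operators, for any unit vector $\ket{\psi}$ we have
\begin{align*}
    \|\Pi^{\P} M \Pi^{\Q} \ket{\psi}\|^2 &= \bra{\psi}\Pi^{\Q} M^\dagger \Pi^{\P} M \Pi^{\Q} \ket{\psi} \le \bra{\psi} \Pi^{\Q} M^\dagger \Pi^{\P\cup\P'} M \Pi^{\Q}\ket{\psi} = \|\Pi^{\P\cup\P'} M\Pi^{\Q} \ket{\psi}\|^2 ,
\end{align*}
and taking the supremum over unit $\ket{\psi}$ followed by square roots yields $[\![\Q \rightarrow \P]\!] \le [\![\Q \rightarrow \P\cup\P']\!]$. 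For the second inequality, I use the decomposition $\Pi^{\P\cup\P'} = \Pi^{\P} + (I - \Pi^{\P})\Pi^{\P'}$, which is valid because the two predicate projectors commute. The triangle inequality then gives
\begin{align*}
\|\Pi^{\P\cup\P'} M \Pi^{\Q}\ket{\psi}\| &\le \|\Pi^{\P} M\Pi^{\Q}\ket{\psi}\| + \|(I-\Pi^{\P})\Pi^{\P'}M\Pi^{\Q}\ket{\psi}\| ,
\end{align*}
and the second term is at most $\|\Pi^{\P'}M\Pi^{\Q}\ket{\psi}\|$ since $\|I-\Pi^{\P}\|\le 1$; taking suprema yields $[\![\Q \rightarrow \P\cup\P']\!]\le [\![\Q\rightarrow\P]\!]+[\![\Q\rightarrow\P']\!]$.

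For the third inequality, I apply submultiplicativity of the operator norm together with the factorization $\Pi^{\P\cap\P'} = \Pi^{\P}\Pi^{\P'}$:
\begin{align*}
    [\![\P\cap\P' \rightarrow \Q]\!] &= \|\Pi^{\Q} M \Pi^{\P}\Pi^{\P'}\| \le \|\Pi^{\Q} M \Pi^{\P}\| \cdot \|\Pi^{\P'}\| = [\![\P\rightarrow\Q]\!] ,
\end{align*}
using $\|\Pi^{\P'}\| \le 1$. The main (and only) subtlety in the whole argument is the initial observation that the predicate projectors share a common diagonalizing basis, so that set-theoretic identities for predicates translate cleanly to operator identities and PSD inequalities; once this is noted, no real obstacle remains, and the bounds are obtained by direct linear-algebra bookkeeping that extends to the $q$-query capacities by inserting the same manipulations at the outermost projectors in Definition~\ref{def:transition-capacity}.
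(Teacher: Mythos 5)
Your proof is correct. The paper itself imports this lemma from \cite{chung2021compressed} without reproving it, so there is no in-paper proof to compare against; but your argument via commuting database-basis projectors, the PSD ordering $\Pi^{\P} \le \Pi^{\P\cup\P'}$, the decomposition $\Pi^{\P\cup\P'} = \Pi^\P + (I-\Pi^\P)\Pi^{\P'}$, and submultiplicativity with $\Pi^{\P\cap\P'}=\Pi^\P\Pi^{\P'}$ is exactly the standard elementary route taken in the cited source. One small remark: the statement as quoted in the paper concerns only single-query capacities (no $\xrightarrow{q}$), so your closing comment about the $q$-query extension is not needed here, though it is correct as stated since the projector manipulations sit outside the query chain and $\sup$ over the intermediate unitaries commutes with all three steps.
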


Let us now introduce the concept of a local property. Informally, a local property is a predicate which is determined only by the value of a database on a single special input point. Further, local properties are required to be monotone w.r.t. the special input. We will here use the notation $D|_x$ to denote the set of databases which agree with $D$ on all points $x' \neq x$. The following definition is adapted from \cite{chung2021compressed}, removing an ambiguity over which domain the local property is to be considered.

\begin{definition}
     Let $\L^{x, D} \subseteq D|_x$ be a property parameterized by an $x \in \mathcal{X}$ and $D \in \mathfrak{D}$. Consider $\L^{x,D}: D|_x \rightarrow \bit $ as a function to map $D'$ to $1$ if $D' \in \L^{x,D}$ and $0$ otherwise. As a database property $\L^{x,D}$ is local if
     \begin{enumerate}
         \item $\exists \hat{\L}^x:\mathcal{\bar Y} \rightarrow \bit$ with $\hat{\L}^x(D'(x))=\L^{x,D}(D') \, \text{ for all } D' \in D|_x$,
         \item $D' \in \L^{x, D} \wedge D'(x)= \bot$, then $D'[x \mapsto r] \in \L^{x,D} \text{ }$ for all $ r \in \mathcal{Y}$.
     \end{enumerate}
\end{definition}

We will usually use the symbol $\L$ to refer to a collection $\L \coloneqq \{\L^{x, D} \, : \, x \in \mathcal X, D \in \mathfrak D\}$ of local properties. By an abuse of notation, we will also sometimes use $\L$ to denote the union of these sets $\L \coloneqq \bigcup_{x, D} \{\L^{x, D} \, : \, x \in \mathcal X, D \in \mathfrak D\}$, depending on context. We say that $\L$ interpolates the transition $\neg \P \rightarrow \Q$ if it satisfies $\Q \subset \L \subset \P$ (here using the latter interpretation). We can define the distance $\Delta$ of a local property as the maximum number of images of the special point which change the truth value of the property.

\begin{definition} \label{def:distance}
    The distance $\Delta$ of a local property $\L^{x, D}$ is defined as \begin{align*}
        \Delta(\L^{x, D}) \coloneqq |\{y \, : \, \L^{x, D}(D[x \rightarrow y]) \neq \L^{x, D}(D[x \rightarrow \bot])\}|.
    \end{align*}
    The distance $\Delta$ of a collection of local properties $\L$ is defined as \begin{align*}
        \Delta(\L) \coloneqq \max_{\L^{x, D} \in \L} \Delta(\L^{x, D}).
    \end{align*}
\end{definition}

We can bound quantum transition capacities using local properties. In particular, if we can construct a family of local properties which interpolate the transition, the distance of this local property bounds the transition capacity.

\begin{lemma} \label{lem:transition-from-classical}
    Let $\neg \P$ and $\Q$ be predicates, and suppose the local property family $\L$ interpolates the transition $\neg \P \rightarrow \Q$ (meaning we have $\Q \subseteq \L \subseteq \P$). The quantum transition capacity then satisfies \begin{align*}
        [\![\neg \P \rightarrow \Q]\!] \leq 4\sqrt{ \Delta(\L) / N}.
    \end{align*}
    \label{lem:bound-framework}
\end{lemma}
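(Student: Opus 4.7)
The plan is to combine monotonicity of transition capacities with a register-by-register decomposition, reducing the operator-norm bound to a single-register computation that exploits the local property's distance directly.

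First, I would apply monotonicity. Since $\Q \subseteq \L \subseteq \P$, the projectors satisfy $\Pi^\Q \leq \Pi^\L$ and $\Pi^{\neg \P} \leq \Pi^{\neg \L}$, so by \Cref{quant-trans-cap-union} the transition capacity $[\![\neg \P \rightarrow \Q]\!]$ is bounded above by $[\![\neg \L \rightarrow \L]\!]$, which by \Cref{lem:only-one-comp} equals $\norm{\Pi^\L_F \, L_{XF} \, \algo P_{XYF} \, L_{XF} \, \Pi^{\neg \L}_F}$.

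Next, I would decompose over the input register $X$. Writing the outer $L_{XF}$ factors as $\sum_x \proj{x}_X \otimes C^x_F$ makes the operator block-diagonal in $x$, so by \Cref{lem:ortho-subspaces-norm} its norm is the supremum over $x \in \Xcal$ of the restriction to $\{|x\rangle_X\} \otimes \mathcal{H}_{YF}$. For each fixed $x$, the operator acts on $F$ only at coordinate $F_x$. I would then further split $F = F_x \otimes F_{\bar x}$: by property (1) of a local property, $\hat{\L}^{x}$ depends only on $D(x)$ once $D|_x$ is fixed, so $\Pi^\L_F$ factors as $\sum_{D|_x} \proj{D|_x}_{F_{\bar x}} \otimes \hat\Pi^{L^{x,D|_x}}_{F_x}$, and analogously for $\Pi^{\neg \L}_F$. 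A second application of \Cref{lem:ortho-subspaces-norm} to the $F_{\bar x}$ register reduces the task, for each fixed $(x, D|_x)$, to bounding the single-register operator
\begin{align*}
    \hat\Pi^{L^{x,D|_x}}_{F_x} \, C^x \, \algo P^x_{YF_x} \, C^x \, \hat\Pi^{\neg L^{x,D|_x}}_{F_x}
\end{align*}
acting on $\mathcal{H}_Y \otimes \mathcal{H}_{F_x}$, where $\algo P^x : \ket{y}_Y \ket{y'}_{F_x} \mapsto \ket{y \oplus y'}_Y \ket{y'}_{F_x}$ and $\ket{\bot}$ is identified with a fixed symbol that $\algo P^x$ leaves unchanged on $Y$.

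Finally, I would bound this single-register operator directly. Expanding $C^x = I + (\ket{\bot} - \ket{\mu})(\bra{\mu} - \bra{\bot})$ and multiplying out $C^x \algo P^x C^x$ produces a small number of cross-terms, each of which either acts as identity (hence killed by $\hat\Pi^L \hat\Pi^{\neg L}$) or involves a projection onto $\ket{\mu}$ somewhere. Each nontrivial cross-term contributes a factor of $1/\sqrt{N}$ from this overlap times a sum of at most $\Delta(\L)$ amplitudes, namely the outputs $y \in [N]$ on which the truth of $\hat{L}^{x,D|_x}$ differs between $D[x \mapsto y]$ and $D[x \mapsto \bot]$, per \Cref{def:distance}. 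Property (2) of a local property ensures that a $\ket{\bot}$-component of the $\neg\L$-restricted state cannot leak outside this $\Delta$-sized set once decompressed to $\ket{\mu}$. Summing the (at most four) cross-term contributions yields the claimed bound $4\sqrt{\Delta(\L)/N}$. The main obstacle is this final step: the non-commutativity of $C^x$ with the projectors onto $\hat{L}^{x,D|_x}$ forces careful bookkeeping of each cross-term, and one must verify that only the $\Delta$ ``flipping'' outputs ever contribute amplitude transfer between the $\neg\L$ and $\L$ subspaces. This computation is essentially Proposition 5.23 of~\cite{chung2021compressed}, and the constant 4 is an artifact of the four cross-terms produced by expanding $C^x \algo P^x C^x$.
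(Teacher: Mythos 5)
Your high-level structure is correct and matches the paper up through the monotonicity step: both replace $\P,\Q$ by $\L$ and reduce to bounding $\norm{\Pi^\L_F L_{XF}\algo P_{XYF}L_{XF}\Pi^{\neg\L}_F}$. Where you diverge is the key reduction. The paper does not expand $C^x\algo P^x C^x$ into cross-terms; instead it uses the identity $L\algo P L - \algo P = (L-I)\algo P L + \algo P(L-I)$ together with the observation that $\algo P$ preserves the computational basis and hence commutes with the diagonal projectors $\Pi^\L,\Pi^{\neg\L}$ (so $\Pi^\L\algo P\Pi^{\neg\L}=0$). This eliminates $\algo P$ and the $Y$ register entirely, reducing the problem to bounding the much simpler operator $\norm{\Pi^\L L_{XF}\Pi^{\neg\L}}$, which is then handled directly via the two cases $x\in D$ and $x\notin D$. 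Your route keeps $\algo P^x$ inside the expansion and must track its interaction with $\ket{\mu}$, $\ket{\bot}$, and the $Y$ register across three nontrivial cross-terms. That is workable, but it is more cumbersome and also the place where your argument is underspecified: the claim that the constant $4$ ``is an artifact of the four cross-terms'' is not right as stated. The pure $\algo P^x$ term vanishes outright (same commutativity observation the paper uses), leaving three terms, and the double-$R$ term $R\algo P^x R$ does not obviously contribute $\sqrt{\Delta(\L)/N}$ under a naive operator-norm split since $\norm{R}=2$ (a direct bound gives something like $2\sqrt{2}\cdot\sqrt{\Delta/N}$ for that term alone, unless one exploits the $Y$-register entanglement to tighten it). So while the architecture of your proof is sound, the accounting that yields the specific constant $4$ is hand-waved and deferred to \cite{chung2021compressed}; the paper instead gets there by first killing $\algo P$ and then computing $\norm{\Pi^\L L\Pi^{\neg\L}}\le\sqrt{\Delta(\L)/N}$ in closed form.
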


\begin{proof}
    We have $\Q \subseteq \L$ and $\neg \P \subseteq \neg \L$ by assumption, which implies that $\Pi^\Q \leq \Pi^\L$ and $\Pi^{\neg \P} \leq \Pi^{\neg \L}$ in the semi-definite order. It follows that \begin{align*}
        [\![\neg \P \rightarrow \Q]\!] =& \norm{\Pi^{\Q}_F C_F \algo P_{XYF} C_F^\dagger \Pi^{\neg \P}_F} \\
        \leq& \norm{\Pi^{\L}_F C_F \algo P_{XYF} C_F^\dagger \Pi^{\neg \L}_F} \\
        =& [\![\neg \L \rightarrow \L]\!].
    \end{align*}
    We will bound the transition capacity $\neg \L \rightarrow \L$. Let us define $\mathcal Y^{x, D}=\{y \, : \, \L^{x, D}(D[x \rightarrow y]) \neq \L^{x, D}(D[x \rightarrow \bot])\}$ as the set of assignments to $x$ that cause a transition in $\L^{x, D}$. Now we have \begin{align*}
        \norm{\Pi^{\L}_F C_F \algo P_{XYF} C_F^\dagger \Pi^{\neg \L}_F} =& \norm{\Pi^{\L}_F L_{XF} \algo P_{XYF} L_{XF} \Pi^{\neg \L}_F} \\
        \leq& 2\norm{\Pi^{\L}_F L_{XF} \Pi^{\neg \L}_F} + \norm{\Pi^{\L}_F \algo P_{XYF} \Pi^{\neg \L}_F} &\text{(Orthogonality of $\Pi^{\neg \L}, \Pi^{\L}$)}
    \end{align*}
    Observe that $\algo P$ preserves the computational basis, so the second term is $0$. We will focus on bounding the first term, and drop the $Y$ register as it is not acted on. Let us define $\mathcal Y^{x, D}=\{y \, : \, \L^{x, D}(D[x \rightarrow y]) \neq \L^{x, D}(D[x \rightarrow \bot])\}$ as the set of assignments to $x$ that cause a transition in $\L^{x, D}$. We have $\inD_{XF} + \ninD_{XF} = I$, so we can analyze two subspaces separately. Noting that $\Pi^{\L}, \Pi^{\neg \L}$ are diagonal in the computational basis and $L_{XF}$ preserves the $X$ register and the database values outside of $X$. By \Cref{lem:ortho-subspaces-norm} we can WLOG consider a state of the form \begin{align*}
        \ket{\psi_{x, D}} = \sum_{z \in [N] \cup \{\bot\} \setminus \mathcal Y^{x, D}} \alpha_{z} \ket{x}_X \ket{D[x \rightarrow z]}_F,
    \end{align*}
    where $D \not\in \L$. 
    \begin{enumerate}[label=(\arabic*)]
        \item Case $x \not\in D$. We have \begin{align*}
            \norm{\Pi^{\L}_F L_{XF} \Pi^{\neg \L}_F \ninD_{XF} \ket{\psi}} =& \abs{\alpha_\bot}\norm{\Pi^{\L}_F \frac{1}{\sqrt{N}} \sum_{u \in [N]} \ket{x}_X \ket{D[x\rightarrow u]}_F} \\
            =& \abs{\alpha_{\bot}} \sqrt{\frac{\Delta(\L^{x, D})}{N}} \\
            \leq& \sqrt{\frac{\Delta(\L)}{N}}.
        \end{align*}
        \item Case $x \in D$. We have \begin{align*}
            \norm{\Pi^{\L}_F L_{XF} \Pi^{\neg \L}_F \inD_{XF} \ket{\psi}} =& \Bigg\Vert\Pi_F^{\L}\sum_{z \in [N] \setminus \mathcal Y^{x, D}}\alpha_z \ket{x}_X \otimes \\& \left(\ket{D[x\rightarrow z]}_F  - \frac{1}{N} \sum_{u \in [N]} \ket{D[x\rightarrow u]} + \frac{1}{\sqrt{N}} \ket{D[x\rightarrow \bot]} \right)\Bigg\Vert \\
            =& \norm{\frac{1}{N}\sum_{z \in [N] \setminus \mathcal Y^{x, D}, u \in \mathcal Y^{x, D}} \alpha_z \ket{x}_X \ket{D[x\rightarrow u]}} \\
            \leq& \frac{1}{N} \sqrt{\sum_{u \in \mathcal Y^{x, D}} \left(\sum_{z \in [N] \setminus \mathcal Y^{x, D}} \abs{\alpha_z}\right)^2} \\
            \leq& \sqrt{\frac{\Delta(\L^{x, D})}{N}} \\
            \leq& \sqrt{\frac{\Delta(\L)}{N}}
        \end{align*}
    \end{enumerate}
    Putting the cases together, we obtain the bound \begin{align*}
        [\![\neg \P \rightarrow \Q]\!] \leq& 4\sqrt{\frac{\Delta(\L)}{N}}
    \end{align*}
\end{proof}

We will now see how to connect the recorded set of input-output pairs to the adversary's knowledge, and how to use this to prove query lower bounds.

\subsection{Query lower bounds}
A key fact about compressed oracles is that the input-output pairs in the compressed database almost entirely capture the adversary's knowledge about the function. This is formally captured by the fundamental lemma, which says that any set of input-output pairs which is not in the compressed database has exponentially small probability to be in the uncompressed database.

\begin{lemma}[Fundamental lemma]
    Let $\mathbf x = \{(x_1, y_1), \dots (x_l, y_l)\}$ be a set of $l$ input-output pairs with distinct inputs and no $\bot$ output. Let $\Pi^{\mathbf x}$ act on a database state $\ket{D_f}$ as the projection onto databases that are consistent with $\mathbf x$. Suppose further that the database state is valid, $\Pi^v \ket{D_f} = \ket{D_f}$. Then we have \begin{align*}
        \norm{\Pi^{\mathbf x} C \ket{D_f}} \leq \norm{\Pi^{\mathbf x} \ket{D_f}} + \sqrt{\frac{l}{N}}.
    \end{align*}
\end{lemma}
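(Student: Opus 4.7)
The plan is to bound the operator norm $\norm{\Pi^{\mathbf{x}}(C-I)\Pi^v}$ directly and then apply the triangle inequality $\norm{\Pi^{\mathbf{x}}C\ket{D_f}} \leq \norm{\Pi^{\mathbf{x}}\ket{D_f}} + \norm{\Pi^{\mathbf{x}}(C-I)\ket{D_f}}$. First I would observe that since $C = \bigotimes_x C^x$, every factor $C^x$ with $x \notin \{x_1,\dots,x_l\}$ is a unitary that commutes with $\Pi^{\mathbf{x}}$ and preserves the valid subspace, so it can be stripped off without affecting any norm. This reduces the problem to analyzing $C' = \prod_{i=1}^{l} C^{x_i}$ on the $l$ registers indexed by $\mathbf{x}$, using validity (which is equivalent to $\bra{\mu}_{x_i}\ket{D_f} = 0$) on just these registers.

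The key structural identity is that each single-register compression is a rank-one correction to the identity: setting $\ket{v}_i := \ket{\mu}_{x_i} - \ket{\bot}_{x_i}$, one checks $\braket{v_i|v_i}=2$ and $C^{x_i} = I - \ket{v}_i\bra{v}_i$. Expanding the product gives
\begin{equation*}
    C' - I = \sum_{\emptyset \neq S \subseteq [l]} (-1)^{|S|}\, \bigotimes_{i \in S} \ket{v}_i\bra{v}_i.
\end{equation*}
Validity yields $\bra{v}_i \Pi^v_{x_i} = -\bra{\bot}_{x_i}$ (the minus sign will cancel the $(-1)^{|S|}$), while on the ket side $\proj{y_i}\ket{v}_i = \ket{y_i}/\sqrt{N}$ and $\proj{y_i}\Pi^v_{x_i} = \ket{y_i}\bra{\tilde y_i}$ with $\ket{\tilde y_i} := \ket{y_i} - N^{-1/2}\ket{\mu}$. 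After pulling $\ket{y_1\cdots y_l}_{\mathbf{x}}$ out of every term, the operator $\Pi^{\mathbf{x}}(C'-I)\Pi^v$ collapses to the rank-one map $\ket{y_1\cdots y_l}\bra{\Psi}$ with
\begin{equation*}
    \ket{\Psi} = \bigotimes_{i=1}^l \bigl(\ket{\tilde y_i} + N^{-1/2}\ket{\bot}\bigr) - \bigotimes_{i=1}^l \ket{\tilde y_i}.
\end{equation*}

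To finish, I would use the orthogonality $\braket{\tilde y_i|\bot}=0$ and $\norm{\tilde y_i}^2 = 1-1/N$ to observe that each factor $\ket{\tilde y_i} + N^{-1/2}\ket{\bot}$ is a unit vector, and that its inner product with $\ket{\tilde y_i}$ equals $1-1/N$. Expanding the squared norm then telescopes to
\begin{equation*}
    \norm{\Psi}^2 = 1 - 2(1-1/N)^l + (1-1/N)^l = 1 - (1-1/N)^l \leq l/N,
\end{equation*}
by Bernoulli's inequality. Hence $\norm{\Pi^{\mathbf{x}}(C-I)\Pi^v} \leq \sqrt{l/N}$, and since $\Pi^v\ket{D_f} = \ket{D_f}$ by hypothesis, the triangle inequality gives the claimed bound. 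The main obstacle is essentially bookkeeping: identifying the right rank-one form of $C^{x_i}$, tracking the two sign conventions so that the $(-1)^{|S|}$ from the expansion cancels the minus from $\bra{v}_i\Pi^v_{x_i}$, and noticing the remarkable coincidence that $\ket{\tilde y_i}+N^{-1/2}\ket{\bot}$ is a unit vector, which is what makes the final squared-norm formula telescope cleanly. No induction or case analysis is needed.
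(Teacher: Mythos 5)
Your approach is genuinely different from the paper's and, in its core computation, both correct and sharper. The paper works in the Fourier basis and bounds the operator norm of $\Pi^{\mathbf x}(\mathbb{1}-C)\Pi^v$ by its Frobenius norm via a count of nonzero matrix elements, getting $\leq l/N$. You instead exploit the rank-one structure $C^{x_i}=\mathbb{1}-\ket{v}_i\bra{v}_i$ to show the $l$-register piece of the operator collapses exactly to a rank-one map $\ket{y_1\cdots y_l}\bra{\Psi}$ with $\norm{\Psi}^2 = 1-(1-1/N)^l$, which is the exact operator norm (Frobenius $=$ operator norm for rank one). Bernoulli then gives $\leq l/N$. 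This is a clean and elementary route, and it makes the exact value transparent where the paper only gives an upper bound.

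However, your opening reduction has a genuine gap, and it is one worth understanding because the paper's own proof glosses over the same point. You claim that each $C^x$ with $x\notin\mathbf x$ ``preserves the valid subspace'' and hence can be stripped from $\norm{\Pi^{\mathbf x}(C-\mathbb{1})\Pi^v}$ without changing the norm. Neither half of this is true. The compression $C^x$ does \emph{not} preserve validity at register $x$: validity at $x$ is $\bra{\mu}_x\ket{\psi}=0$, and $\bra{\mu}_x C^x = \bra{\bot}_x$, so $C^x$ maps a valid state to one that is valid at $x$ only if the original had no $\bot$ at $x$. More decisively, the norm $\norm{\Pi^{\mathbf x}(C-\mathbb{1})\Pi^v}$ is simply \emph{not} bounded by $\sqrt{l/N}$. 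Take $M=2$, $l=1$, and the valid state $\ket{D_f}=\ket{\tilde y_1}\otimes\ket{\bot}$ with $\ket{\tilde y_1}=(\ket{y_1}-N^{-1/2}\ket{\mu})/\sqrt{1-1/N}$. Then $C\ket{D_f}=\ket{\tilde y_1}\otimes\ket{\mu}$ while $\ket{D_f}$ itself has $\ket{\bot}$ in the second slot, so
\begin{equation*}
    \Pi^{\mathbf x}(C-\mathbb{1})\ket{D_f} \;=\; \sqrt{1-1/N}\,\ket{y_1}\otimes\bigl(\ket{\mu}-\ket{\bot}\bigr),
\end{equation*}
whose norm is $\sqrt{2(1-1/N)}\approx\sqrt 2$, nowhere near $\sqrt{1/N}$. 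So the plan ``bound $\norm{\Pi^{\mathbf x}(C-\mathbb{1})\Pi^v}$ by $\sqrt{l/N}$ and apply triangle inequality'' cannot succeed.

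The repair is small but must be made before the triangle inequality, not after. Write $C=C'C''$ with $C'=\bigotimes_{i}C^{x_i}$ and $C''=\bigotimes_{x\notin\mathbf x}C^x$. Since $C''$ is unitary and commutes with $\Pi^{\mathbf x}$ and $C'$, one has $\norm{\Pi^{\mathbf x}C\ket{D_f}}=\norm{C''\Pi^{\mathbf x}C'\ket{D_f}}=\norm{\Pi^{\mathbf x}C'\ket{D_f}}$. Only then apply triangle inequality to $C'$, and bound $\norm{\Pi^{\mathbf x}(C'-\mathbb{1})\ket{D_f}}\leq\norm{\Pi^{\mathbf x}(C'-\mathbb{1})\Pi^{v_{\mathbf x}}}$, where $\Pi^{v_{\mathbf x}}$ enforces validity only at the $\mathbf x$ registers (which $C''$ does preserve, and which $\ket{D_f}$ satisfies). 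Your rank-one computation then applies verbatim to $\Pi^{\mathbf x}(C'-\mathbb{1})\Pi^{v_{\mathbf x}}$, which genuinely acts only on the $l$ registers. With that reordering, the argument is complete and correct, and as noted it is tighter than what the paper writes down.
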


\begin{proof}[Proof, adapted from \cite{chung2021compressed}]
    We have \begin{align*}
        \norm{\Pi^{\mathbf x} C \ket{D_f}} &\leq \norm{\Pi^{\mathbf x} \ket{D_f}} + \norm{(\Pi^{\mathbf x} - \Pi^{\mathbf x} C) \ket{D_f}} & \text{(Triangle Inequality)} \\
        &\leq \norm{\Pi^{\mathbf x} \ket{D_f}} + \norm{(\Pi^{\mathbf x} - \Pi^{\mathbf x} C)\Pi^v}.
    \end{align*}
    To bound the second term, we can bound the Fourier matrix elements\footnote{we take the Fourier transform register-wise over $\mathbb Z_N$, ignoring the $\bot$ symbol}, restricted to the $l$ database entries which are acted on. \begin{align*}
        \abs{\bra{\tilde {\mathbf p}} (\Pi^{\mathbf x} - \Pi^{\mathbf x} C) \ket{ \tilde{\mathbf q}}} \leq& \begin{cases}
            0 & \text{(If $\tilde{\mathbf p}$ contains $\bot$)} \\
            0 & \text{(If $\tilde{\mathbf q}$ contains no $\bot$ or $\tilde 0$)} \\
            N^{-l} & \text{(If $\tilde{\mathbf q}$ contains a $\bot$ or $\tilde 0$)}
        \end{cases}
    \end{align*}
    Finally observe that \begin{align*}
        \Pi^v \ket{\tilde q} =& \begin{cases}
            0 & \text{(If $\tilde q$ contains $\tilde 0$)} \\
            \ket{\tilde q} & \text{(Otherwise)}
        \end{cases}
    \end{align*}
    We can then bound the operator norm by the Frobenius norm to obtain \begin{align*}
        \norm{(\Pi^{\mathbf x} - \Pi^{\mathbf x} C)\Pi^v}^2 \leq& \sum_{\text{$\tilde{\mathbf p}$ has no $\bot$, $\tilde{\mathbf q}$ contains $\bot$ and no $\tilde 0$}} \abs{\bra{\tilde {\mathbf p}} (\Pi^{\mathbf x} - \Pi^{\mathbf x} C) \ket{ \tilde{\mathbf q}}}^2 \\ 
        \leq& \underbrace{N^l}_{{\text{($\tilde p$ choices)}}} \cdot \underbrace{l N^{l-1}}_{\text{($\geq \tilde q$ choices)}} \cdot N^{-2l} \\
        \leq& \frac{l}{N}
    \end{align*}
\end{proof}

This translates to the following algorithmic statement. 

\begin{corollary}[\cite{chung2021compressed}, Corollary 4.2]\label{cor:fund-lemma-with-R}
Let $R \subseteq {\cal X}^l \times {\cal Y}^l$ be a relation. 
Let $\algo A$ be an oracle quantum algorithm that outputs $(x_1,\dots,x_l) \in\mathcal{X}^l$ and $(y_1,\dots,y_l) \in {\cal Y}^l$. Let $p$ be the probability that ${y_i} = H({x_i})$ for all $i=1,\dots,l$ and $((x_1,\dots,x_l),(y_1,\dots,y_l)) \in R$ when $\algo A$ has interacted with the standard random oracle, initialized with a uniformly random function $H$. Similarly, let $p'$ be the probability that ${y_i} = D({x_i})$ and $((x_1,\dots,x_l),(y_1,\dots,y_l)) \in R$ when $\algo A$ has interacted with the compressed oracle instead and $D$ is obtained by measuring its internal state (in the computational basis). Then
$$
\sqrt{p} \leq \sqrt{p'} + \sqrt\frac{l}{N} \, .
$$
\end{corollary}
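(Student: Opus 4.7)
The plan is to lift the state-level fundamental lemma to a probability-level statement by branching on the adversary's output. First, using the faithfulness of the purified oracle, I replace the standard random oracle with its purification on a function register $F$ initialized to $\ket{\mathfrak F}$; sampling $H$ up front and then running $\algo A$ versus running $\algo A$ against the purified oracle and measuring $F$ at the end yield the same joint distribution on $(H, \text{output})$. Switching to the compressed basis, the final state equals $C_F \ket{\psi}$, where $\ket{\psi}$ is the state produced by running $\algo A$ against the compressed oracle $\cO$; by \Cref{cor:always-valid}, $\ket{\psi}$ lies in the valid subspace of $F$.

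Next, for each candidate output $\mathbf x = \bigl((x_1,\ldots,x_l),(y_1,\ldots,y_l)\bigr)$, let $P_{\mathbf x}$ denote the projector onto $\mathbf x$ in the adversary's output registers, and let $\Pi^{\mathbf x}_F$ be the projector onto databases that agree with $\mathbf x$ on every $x_i$ (in either representation: ``$H(x_i)=y_i$ for all $i$'' for total functions, or ``$(x_i,y_i)\in D$ for all $i$'' for compressed databases — these coincide on the image of $C_F$ restricted to the valid subspace). Writing $\ket{\phi_{\mathbf x}} := P_{\mathbf x}\ket{\psi}$, the definitions give
\begin{align*}
p \;=\; \sum_{\mathbf x \in R} \norm{\Pi^{\mathbf x}_F\, C_F\, \ket{\phi_{\mathbf x}}}^2, \qquad p' \;=\; \sum_{\mathbf x \in R} \norm{\Pi^{\mathbf x}_F\, \ket{\phi_{\mathbf x}}}^2.
\end{align*}
Since $P_{\mathbf x}$ acts on registers disjoint from $F$, each $\ket{\phi_{\mathbf x}}$ remains valid in $F$, and applying the fundamental lemma to $\ket{\phi_{\mathbf x}}/\norm{\ket{\phi_{\mathbf x}}}$ and then rescaling gives
\begin{align*}
\norm{\Pi^{\mathbf x}_F\, C_F\, \ket{\phi_{\mathbf x}}} \;\leq\; \norm{\Pi^{\mathbf x}_F\, \ket{\phi_{\mathbf x}}} \;+\; \sqrt{l/N}\, \norm{\ket{\phi_{\mathbf x}}}.
\end{align*}

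Finally, I combine the per-branch bounds via Minkowski's inequality on the $\ell^2$ sums over $\mathbf x \in R$, obtaining
\begin{align*}
\sqrt{p} \;\leq\; \sqrt{p'} \;+\; \sqrt{l/N}\cdot \sqrt{\sum_{\mathbf x \in R} \norm{\ket{\phi_{\mathbf x}}}^2} \;\leq\; \sqrt{p'} + \sqrt{l/N},
\end{align*}
where the last step uses that the $P_{\mathbf x}$ are mutually orthogonal projectors applied to a unit state, so $\sum_{\mathbf x}\norm{\ket{\phi_{\mathbf x}}}^2 \leq 1$. The only real subtlety is verifying that $\Pi^{\mathbf x}_F$ encodes the same event on both sides of the basis change — i.e., that ``$H(x_i) = y_i$ for all $i$'' in the standard basis is exactly the projector that, on valid compressed states, measures ``$D$ contains each $(x_i,y_i)$'' after decompression. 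This is automatic from validity, since every $x_i$ is guaranteed to have a non-$\bot$ image after decompression; beyond this, the proof is just a branch-wise invocation of the state-level fundamental lemma followed by Minkowski.
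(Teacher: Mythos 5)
Your proof is correct, and the paper does not reprove this corollary—it imports it directly from \cite{chung2021compressed}. Your derivation, which conditions on the adversary's output $\mathbf{x}$, notes that the post-measurement states $\ket{\phi_{\mathbf x}} = P_{\mathbf x}\ket\psi$ remain valid on $F$, applies the state-level fundamental lemma branch by branch (after normalizing), and recombines via Minkowski's inequality using $\sum_{\mathbf x}\norm{\ket{\phi_{\mathbf x}}}^2 \leq 1$, is exactly the standard argument behind that cited Corollary 4.2.
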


\subsection{Efficient representation}

Observe that a compressed database with at most $t$ input-output pairs can straightforwardly be stored using $O(tn+tm)$ space, by storing (say) a list of input-output pairs with non-$\bot$ outputs sorted by input, potentially with padding if there are fewer than $t$ such pairs. It turns out that one can also efficiently implement the query operator in this representation. In this section, we will describe a canonical simulator which efficiently simulates a random oracle for a quantum algorithm, based on the description of \cite{Zhandry2018}.

By \Cref{lem:only-one-comp}, it suffices to be able to implement $\cO = L \algo P L$ as a circuit, controlled on the first register, on the efficient database representation. Note that this operation adds at most a single input/output pair to the database. In this representation, the databases are ``padded'', such that the list of defined input output pairs appear at the beginning of the list, followed by dummy $(\infty, \bot)$ pairs such that the stored state is always the same size.

\begin{algorithm}
    \label{alg:compress-circ}
    \caption{Compression operator circuit}
    \textbf{Input:} $\ket{\psi}=\sum \alpha_{x,y,D} \ket{x}_X\ket{y}_Y\underbrace{\ket{(x_1,y_1)}_{D_1} \dots \ket{(x_t, y_t)}_{D_t}}_{\text{$D$}}$
    
    \textbf{Require:} $x_i \in [M] \cup \{\infty\}, y_i \in [N] \cup \{\bot\}$, $D$ sorted by $x_i$ with no duplicate non-$\infty$ inputs, $y_i = \bot$ if and only if $x_i = \infty$.
    
    \textbf{Output:} $\ket{\psi'}=\sum \alpha_{x,y,D'} \ket{x}\ket{y}\underbrace{\ket{(x_1',y_1')}_{D'_1} \dots \ket{(x_{t+1}', y_{t+1}')}_{D'_{t+1}}}_{D'}$ s.t. $\ket{\psi'}= C^x\ket{\psi}$ 
    \begin{enumerate}[label=(\arabic*)]
        \item Append $(\infty, \bot)$ to the end of register $D$ \Comment{Upper case for registers}
        \item For every $j$ from $1$ up to $t$: \begin{enumerate}[label=(\arabic*')]
            \item Compute flag $f \leftarrow (D_j[0]=X\text{ and }D_{j+1}[0] > X)$ \Comment{Zero based indexing}
            \item If $f$, swap registers $D_j \leftrightarrow D_{j+1}$
            \item Uncompute $f \leftarrow f \oplus (D_{j+1}[0]=X\text{ and }D_{j}[0] > X)$ \Comment{$\oplus$ is bitwise addition}
        \end{enumerate}
        \item If $D_{t+1}[1]=\bot$, perform $D_{t+1}[0] \leftarrow D_{t+1}[0] \oplus X \oplus \infty$ \Comment{Exchanges values $x \leftrightarrow \infty$}
        \item Perform $C^{D_{t+1}[0]}$ on register $D_{t+1}[1]$
        \item If $D_{t+1}[1]=\bot$, perform $D_{t+1}[0] \leftarrow D_{t+1}[0] \oplus X \oplus \infty$
        \item For every $j$ from $t$ down to $1$: \begin{enumerate}[label=(\arabic*')]
            \item Compute flag $f \leftarrow (D_j[0]=X\text{ and }D_{j+1}[0] > X)$
            \item If $f$, swap registers $D_j \leftrightarrow D_{j+1}$
            \item Uncompute $f \leftarrow f \oplus (D_{j+1}[0]=X\text{ and }D_{j}[0] > X)$
        \end{enumerate}
    \end{enumerate}
    
\end{algorithm}

To simulate the full query operator, we can: \begin{enumerate}[label=(\arabic*)]
    \item Uncompress the database on input $x$ using Algorithm \ref{alg:compress-circ}
    \item Answer the query using this database
    \item Run Algorithm \ref{alg:compress-circ} once again, omitting step (1), to recompress the database.
\end{enumerate}
Note that we may omit Step (1) on re-compression because the $\algo P$ operator preserves the computational basis on the input register, and we already created space for the $x$ input pair in the first step.

\newpage
\section{Developing the model}
\label{sec:oracle-proofs}

Here we formally develop the view of permutation oracles which we apply to the sponge.

\subsection{An alternative picture of permutation oracles}\label{sec:alt-oracles}

Let $\varphi : \{0,1\}^n \rightarrow \{0,1\}^n$ be a random injective function, i.e. a permutation on $n$ bit strings. Let $r, c$ be integers such that $r + c=n$, and let $s|_a^b$ denote the substring of $s$ starting from the $a$-th (inclusive) through the $b$-th (exclusive) bit. 

Let $\mathfrak H$ and $\mathfrak K$ be subgroups of the symmetric group $S_{2^n}$ defined by the following sets. 

\begin{align*}
    \mathfrak H := \{\pi \, : \, \pi(x \Vert y) =& (x \oplus h(y))\Vert y \text{ for some } h:\bit^c \rightarrow \bit^r\} \\
    \mathfrak K := \{\pi \, : \, \pi(x \Vert y) =& x\Vert (y \oplus k(x)) \text{ for some } k:\bit^r \rightarrow \bit^c\}
\end{align*}
   
    Note that a random $\varphi \sim S_{2^{n}}$ can equivalently be sampled by choosing a random $\pi \sim S_{2^n}$, and sampling random $\omega_h \sim \mathfrak H$, and $\sigma_k, \tau_{k'} \sim \mathfrak K$ (defined by functions $h, k, k'$ respectively), and defining 
    \begin{equation}\label{eq:decomposed-permutation}
        \varphi := \omega_h \circ \tau_{k'} \circ \pi \circ \sigma_k.
    \end{equation}
When there is no risk of confusion, we will sometimes write the above permutation simply as $\varphi = hk'\pi k$. We will now consider two access models for $\varphi$.

\begin{world}
    An adversary $\algo A$ receives access to quantum oracles for both $\varphi$ and $\varphi^{-1}$.
    \label{game:real-permutation-world}
\end{world}

\begin{world}
    An adversary $\algo A$ receives access to quantum oracles for random functions $h:\bit^c \rightarrow \bit^r$ and $k, k':\bit^r \rightarrow \bit^c$ (each independently uniform random), which correspond to permutations $\omega_h \in \mathfrak H$ and $\sigma_k, \tau_{k'} \in \mathfrak K$ as described above. The adversary is also given quantum query access to a permutation $\pi:\bit^n \rightarrow \bit^n$ and its inverse. Define
    \begin{equation}
        \varphi := \omega_h \circ \tau_{k'} \circ \pi \circ \sigma_k.
    \end{equation}
    \label{game:ideal-permutation-world}
\end{world}

It is intuitive that the interface for $\varphi$ in \Cref{game:ideal-permutation-world} is stronger than the interface for $\varphi$ in \Cref{game:real-permutation-world}. This is because the adversary can use $h, k, k'$ and $\pi$ to implement $\varphi$ and $\varphi^{-1}$, but is also free to use the oracles in whatever ways it wishes. Indeed, we can formalize this with an indifferentiability proof. Consider the trivial construction $C^\varphi = \varphi$.

	\begin{lemma} \label{lem:alternate-perm}
		Let World 1' be like \cref{game:real-permutation-world}, but $\mathcal A$ additionally has oracles for $h,k,k'$ as in \cref{game:ideal-permutation-world}, and for 
		    \begin{equation}
			\pi := \tau_{k'} \circ \omega_h \circ\varphi \circ \sigma_k.
		\end{equation}
		Then
		\begin{enumerate}
			\item For all $\mathcal A$ in \cref{game:real-permutation-world} playing game $G^\varphi$ there exists $\mathcal A'$ in World 1' such that
			\begin{equation*}
				\Pr[\mathcal A\mathrm{\ wins\ }G^\varphi]=	\Pr[\mathcal A'\mathrm{\ wins\ }G[\varphi]].
			\end{equation*}
			Here the notation $G^\varphi$ means that $G$ depends on $\varphi$ but not on $h,k,k'$ or $\pi$.
			\item World 1' is indistinguishable from \cref{game:ideal-permutation-world}.
		\end{enumerate}
\end{lemma}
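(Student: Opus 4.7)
The plan is to prove the two parts of the lemma independently. Part (1) is essentially a syntactic observation, while part (2) reduces to a change-of-variables argument on the distribution of a uniformly random permutation.

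For part (1), I would construct $\mathcal{A}'$ as the adversary that simply runs $\mathcal{A}$ as a black box, forwarding its $\varphi$ and $\varphi^{-1}$ queries to its own $\varphi,\varphi^{-1}$ oracles, and ignoring the additional oracles $h,k,k',\pi,\pi^{-1}$ provided in World 1'. Because the game $G^\varphi$ depends only on $\varphi$ and is evaluated on $\mathcal{A}$'s output, and because $\mathcal{A}$'s view in this embedding is identical to its view in \cref{game:real-permutation-world}, the two winning probabilities are equal with no slack. This part is essentially definitional.

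For part (2), I would argue that the joint distribution of the tuple $(\varphi,\pi,h,k,k')$ is identical in World 1' and in \cref{game:ideal-permutation-world}, which immediately implies that the oracles given to $\mathcal{A}$ are perfectly indistinguishable (information-theoretically, with zero advantage). First I would observe that $\sigma_k$, $\tau_{k'}$, and $\omega_h$ are involutions on $\bit^n$, so the relation $\pi = \tau_{k'}\circ\omega_h\circ\varphi\circ\sigma_k$ used to define $\pi$ in World 1' is equivalent, by applying $\omega_h$ on the left, $\tau_{k'}$ on the left, and $\sigma_k$ on the right, to the relation $\varphi = \omega_h\circ\tau_{k'}\circ\pi\circ\sigma_k$ used to define $\varphi$ in \cref{game:ideal-permutation-world}. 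Then, conditioning on any fixed choice of $h,k,k'$, the map $\varphi \mapsto \tau_{k'}\circ\omega_h\circ\varphi\circ\sigma_k$ is a bijection on $S_{2^n}$ (left/right multiplication by invertible group elements), so pushing forward the uniform distribution on $\varphi$ yields the uniform distribution on $\pi$, and vice versa. Together with the fact that $h,k,k'$ are sampled identically in both worlds, this shows the full joint distributions agree.

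The main technical content is really just the algebraic verification that the two defining equations are equivalent under the involution property, and then the measure-theoretic observation that left/right multiplication preserves the uniform measure on $S_{2^n}$; I don't anticipate any genuine obstacle. The one thing to take care with is making sure the oracles exposed in the two worlds are in correspondence — in particular that both worlds implicitly grant $\pi^{-1}$ access (in World 1', $\pi^{-1}$ is implementable from $\varphi^{-1},h,k,k'$ since all the structured maps are involutions), so that the adversary's view in each world is a deterministic function of the same random tuple, making the statistical distance exactly zero.
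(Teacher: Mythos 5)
Your proof is correct and takes essentially the same approach as the paper, whose proof states only that Part~1 follows because $\mathcal{A}'$ ignores its extra oracles and Part~2 follows because the two worlds induce the same joint distribution of oracles. You simply fill in the details the paper leaves implicit: the involution property of $\sigma_k,\tau_{k'},\omega_h$, the resulting equivalence of the two defining equations, the fact that left/right translation preserves the uniform measure on $S_{2^n}$, and the implementability of $\pi^{-1}$ from the given oracles.
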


\begin{proof}
  Statement 1 is clear, $A'$ just ignores its additional oracles.
   For statement 2, it suffices to observe that the oracles in the two worlds have the same joint distribution.
\end{proof}

In particular, the content of \Cref{lem:alternate-perm} is that for any security game $G^\varphi$, the maximum winning probability in  \Cref{game:ideal-permutation-world} upper-bounds the maximum winning probability in \Cref{game:real-permutation-world}. Therefore, it suffices for us to show all lower bounds in the latter model. By \Cref{lem:indiff-trans}, and the fact that indistinguishability is a special case of indifferentiability, it suffices to show indifferentiability in this model as well. In certain cases we will consider adversaries which have the full truth table of $\pi$, which is clearly a stronger model still.

\subsection{An alternative picture of the sponge: the Msponge}
\label{subsec:alt-sponge}

We now define an alternative construction that is more suited to our indifferentiability proof. We call this construction ``the Msponge''. The Msponge is identical to the sponge as defined in \Cref{sec:sponge} in all aspects except how each block of the input is absorbed. Specifically, in each absorbing round, instead of XORing the next block $x_i \in \bit^r$ of the input into the rate wire, we discard the contents of the rate wire and replace them with $x_i$; we then apply the permutation $\varphi$ to the rate and capacity wires as normal. We emphasize that all other aspects of the Msponge are identical to the sponge. In this section, we show that the sponge and the Msponge are equivalent in an appropriate sense, up to an overhead of the maximum block length.

Let $f, g : (\bit^r)^* \rightarrow \bit^r$ be random functions. The function $f$ will represent the standard sponge construction, and the function $g$ will represent the Msponge. To define the two constructions $C_1^f, C_2^g : (\bit^r)^* \rightarrow \bit^r$, we will need to introduce the notion of ``fixing'' an input. Intuitively, this map describes the correspondence between inputs to $f$ and inputs to $g$, which are in bijection with one another.

\begin{definition}
    Let $f: (\bit^r)^* \rightarrow \bit^r$ be a function. For an input of the form $x = x_1 \Vert \dots \Vert x_m \in (\bit^r)^*$, we define $\mathsf{fix}(x)$ recursively as follows. \begin{enumerate}[label=(\arabic*)]
        \item The fix of the empty string $\epsilon$ is \begin{align*}
            \mathsf{fix}(\epsilon)=\epsilon
        \end{align*}
        \item The fix of a one-block string $x_1$ is\begin{align*}
            \mathsf{fix}(x_1)=x_1
        \end{align*}
        \item The fix of a multi-block string $t \Vert x_m$, where $t$ is a non-empty string and $x_m \in \bit^r$, is \begin{align*}
            \mathsf{fix}(t \Vert x_m) = \mathsf{fix}(t) \Vert x_m \oplus f(\mathsf{fix}(t)).
        \end{align*}
    \end{enumerate}
    \label{defn:tail-fixing}
\end{definition}
It is worth observing that $\mathsf{fix}$ is a bijective function, for any function $f$. Computing the inverse is straightforward, by reversing the above recursion. We can now define our constructions.

\begin{definition}
    Let us define $C_1^f$ as $f \circ \mathsf{fix}$. Let us define $C_2^g$ as $g \circ \mathsf{fix}^{-1}$.
\end{definition}

The key point is that these constructions are indifferentiable from $g$ and $f$ respectively, as formalized in the following lemma. This justifies our choice to prove that the Msponge, where we replace the top wire with the next input rather than XORing it in, is indifferentiable will indeed be sufficient to prove that the standard sponge is indifferentiable from a random oracle.

\begin{lemma}
    The construction $C_1^f$ is perfectly indifferentiable from $g$, and the construction $C_2^g$ is perfectly indifferentiable from $f$. 
    \label{lem:tail-fix-ok}
\end{lemma}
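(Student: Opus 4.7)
The plan is to exhibit explicit simulators for the two cases and show that in each experiment the real-world and ideal-world oracle pairs have identical joint distributions, so the averaged quantum state of the adversary's view is the same mixed state in both worlds and the distinguishing advantage is exactly zero.

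First I would formalize the mapping implicit in the definitions. For each function $h:(\bit^r)^\ast \to \bit^r$, write $\mathsf{fix}_h$ for the operation of \Cref{defn:tail-fixing} computed with $h$ in place of $f$; it is a bijection on $(\bit^r)^\ast$ regardless of $h$. Define $T(h) := h \circ \mathsf{fix}_h$. By construction $C_1^f = T(f)$, while $C_2^g$ is characterized as the unique function $h$ satisfying the fixed-point equation $h = g \circ \mathsf{fix}_h^{-1}$, equivalently $T(h)=g$, so $C_2^g = T^{-1}(g)$. A short induction on block length shows that $T$ is a bijection of the set $\mathfrak F$ of all functions $(\bit^r)^\ast \to \bit^r$: computing $T^{-1}(g)(z)$ on a length-$m$ input only requires the values of $T^{-1}(g)$ on strictly shorter inputs together with one query to $g$, so the recursion bottoms out at the empty string and is clearly invertible.

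Second I would define the simulators and verify quantum implementability. For the first claim let $\mathcal S_1^g := T^{-1}(g)$, realized by running the explicit length-recursion reversibly, using at most $m$ queries to $g$ on a block-length-$m$ input and polynomially many ancilla qubits. For the second claim let $\mathcal S_2^f := T(f)$, realized by computing $\mathsf{fix}_f(x)$ into ancillas with $m-1$ queries to $f$ followed by one final query. Both circuits are unitary and introduce no randomness of their own, so the simulators are deterministic functionals of the oracle they access.

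Third I would establish the distributional identity. In the first experiment, the real-world pair is $(f, C_1^f) = (f, T(f))$ with $f$ uniform on $\mathfrak F$, while the ideal-world pair is $(\mathcal S_1^g, g) = (T^{-1}(g), g)$ with $g$ uniform on $\mathfrak F$. Since $T$ is a bijection that preserves the uniform measure, both parameterizations induce the same distribution: the uniform distribution on the graph $\{(h_1, h_2) \in \mathfrak F \times \mathfrak F : h_2 = T(h_1)\}$. The second claim is symmetric, swapping the roles of $f,g$ and of $T, T^{-1}$. Identical joint oracle distributions imply identical averaged output states for any quantum algorithm making any number of queries, so perfect indifferentiability (with $\epsilon=0$) follows.

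The main obstacle is purely notational bookkeeping: one must pin down which function indexes each occurrence of $\mathsf{fix}$ (in $C_1^f$, $C_2^g$, and in each simulator) so that the bijection $T$ truly intertwines the real and ideal oracles on both sides, and then handle the implicit definition of $T^{-1}$ via the shrinking-block-length induction. Once this is in place, nothing else in the argument depends on quantum access, since the joint distribution of the two oracles fully determines the adversary's mixed state.
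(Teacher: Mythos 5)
Your proof is correct and takes essentially the same route as the paper: the paper likewise sets the simulators to be $C_2^g$ and $C_1^f$, relies on the fact that $f \circ \mathsf{fix}$ is uniformly distributed when $f$ is, and notes that the simulator answers a block-length-$l$ query using $O(l)$ queries to its oracle via the length recursion. You make the underlying bijection $T$ of the function space (and the observation that both experiments parameterize the uniform distribution on its graph) explicit where the paper leaves it implicit, but the substance is the same.
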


\begin{proof}
    In the first case, we set the simulator as $C_2^g$. In the second case, we set the simulator as $C_1^f$. By \Cref{lem:tail-fixing-eq} (below), both worlds are perfectly indistinguishable, even given unlimited queries. Moreover, the simulator can be implemented such that it answers a query in $O(l)$ time, where $l$ is an upper bound on the block length of a query, by \Cref{lem:tail-fixing-implementation}.
\end{proof}

\begin{lemma}
    Let $f : (\bit^r)^* \rightarrow \bit^r$ be a random function, which defines a corresponding $\mathsf{fix}$ function. Then, the truth table of $f \circ \mathsf{fix}$ cannot be distinguished from $f$ with any advantage. In other words, $f \circ \mathsf{fix}$ is also a uniform random function, though one dependent on $f$.
    \label{lem:tail-fixing-eq}
\end{lemma}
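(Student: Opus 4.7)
The plan is to show that, for every finite length $L$, the restriction of $g := f \circ \mathsf{fix}_f$ to inputs of length at most $L$ has the same distribution as the corresponding restriction of $f$, namely uniform iid values in $\bit^r$. Since this holds for every $L$, $g$ and $f$ have identical finite-dimensional distributions, so no procedure looking at their truth tables can distinguish them with any advantage.

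The key preliminary step is to observe that, for any fixed $f$, the map $\mathsf{fix}_f$ is length-preserving, and for each $L$ its restriction to $(\bit^r)^L$ is a bijection of $(\bit^r)^L$ whose definition depends only on the values of $f$ on inputs of length strictly less than $L$. This follows by induction on $L$ directly from the recursion
\[
\mathsf{fix}_f(y_1 \Vert \cdots \Vert y_L) = \mathsf{fix}_f(y_1 \Vert \cdots \Vert y_{L-1}) \Vert \bigl(y_L \oplus f(\mathsf{fix}_f(y_1 \Vert \cdots \Vert y_{L-1}))\bigr),
\]
since once $\mathsf{fix}_f$ is a bijection on length $L-1$ and $f$ is known on inputs of length $L-1$, the final block of an element of $(\bit^r)^L$ is recovered by XORing off a known offset.

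With that in hand, I would prove the distributional claim by a second induction on $L$. The base $L \leq 1$ is immediate because $\mathsf{fix}_f$ acts as the identity there, so $g$ literally equals $f$ on those inputs. For the inductive step, I would condition on an arbitrary value of $f|_{(\bit^r)^{<L}}$; under this conditioning, $\mathsf{fix}_f$ restricted to $(\bit^r)^L$ becomes a deterministic bijection $\beta$ of $(\bit^r)^L$, while the slice $\{f(z) : |z|=L\}$ remains iid uniform on $\bit^r$ and independent of the conditioning. Consequently the reindexed family $\{g(x)\}_{|x|=L} = \{f(\beta(x))\}_{|x|=L}$ is also iid uniform on $\bit^r$ and independent of $f|_{<L}$. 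Since $g|_{<L}$ is a deterministic function of $f|_{<L}$, this yields both the independence of $g|_{=L}$ from $g|_{<L}$ and, together with the inductive hypothesis, the uniformity of $g|_{\leq L}$.

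I do not anticipate any substantive obstacle. The only point to handle carefully is the ``depends only on strictly shorter inputs'' observation about $\mathsf{fix}_f$: this is what decouples the bijection $\beta$ from the length-$L$ block of $f$ and makes the reindexing argument preserve the distribution. Conceptually, the lemma is just the general fact that precomposing a uniform random function with a bijection measurable with respect to an independent part of the underlying randomness does not change the distribution.
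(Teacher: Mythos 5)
Your proposal is correct and follows essentially the same approach as the paper: sample $f$ in order of increasing input length, observe that $\mathsf{fix}_f$ restricted to length-$L$ inputs is a bijection determined entirely by $f$ on shorter inputs, and conclude that precomposing the (independent) length-$L$ slice of $f$ with that bijection preserves uniformity. You spell out the conditioning and independence structure more explicitly than the paper does, which is a welcome clarification but not a different argument.
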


\begin{proof}
    To see this, consider sampling the output values of $f$ in order of increasing input length. The fix of the empty string and $r$-length strings is the identity, so we clearly can sample i.i.d. strings from $\bit^r$ for images of such strings. Now inductively consider selecting the images of $\bit^{kr}$, where $f$ has been defined for all $\bit^{<kr}$. The value of $f$ on such strings suffices to define the $\mathsf{fix}$ of strings $\bit^{kr}$. Further, $\mathsf{fix}$ is a bijective function. The composition of a random function with a bijection on it's domain is still a random function, so we may sample each image of $\bit^{kr}$ as an i.i.d. string in $\bit^r$.
\end{proof}

\begin{lemma}
    Let $f : (\bit^r)^* \rightarrow \bit^r$ be a random function, which defines a corresponding $\mathsf{fix}$ function. Then, given query access to $g = f \circ \mathsf{fix}$, one can implement an $l$ block query to $f$ using $O(l)$ queries to $g$.
    \label{lem:tail-fixing-implementation}
\end{lemma}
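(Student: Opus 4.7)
The plan is to show constructively that $f$ can be simulated by $g$ by iteratively reconstructing $\mathsf{fix}^{-1}(y)$ one block at a time using only $g$ queries. Since $g = f \circ \mathsf{fix}$ and $\mathsf{fix}$ is a bijection, we have $f(y) = g(\mathsf{fix}^{-1}(y))$, so it suffices to compute $x = \mathsf{fix}^{-1}(y)$ using $O(l)$ queries to $g$ and then make one final query to $g$.

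First I would derive the key recurrence. Write $y = y_1 \Vert \dots \Vert y_l$ and $x = \mathsf{fix}^{-1}(y) = x_1 \Vert \dots \Vert x_l$. Unwinding \Cref{defn:tail-fixing}, the condition $\mathsf{fix}(x_1 \Vert \dots \Vert x_i) = y_1 \Vert \dots \Vert y_i$ is equivalent to $\mathsf{fix}(x_1 \Vert \dots \Vert x_{i-1}) = y_1 \Vert \dots \Vert y_{i-1}$ together with $x_i = y_i \oplus f(y_1 \Vert \dots \Vert y_{i-1})$. Substituting $f = g \circ \mathsf{fix}^{-1}$ and using the inductive hypothesis $\mathsf{fix}^{-1}(y_1 \Vert \dots \Vert y_{i-1}) = x_1 \Vert \dots \Vert x_{i-1}$, the recursion becomes
\begin{align*}
    x_1 = y_1, \qquad x_i = y_i \oplus g(x_1 \Vert \dots \Vert x_{i-1}) \quad \text{for } 2 \leq i \leq l.
\end{align*}
This lets us compute each successive block $x_i$ using a single $g$ query, reusing the blocks already computed. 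After obtaining $x_l$, one final query returns $g(x_1 \Vert \dots \Vert x_l) = f(y)$.

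For quantum queries we need a reversible implementation. I would describe the following unitary circuit: on input $\ket{y_1, \dots, y_l}_Y \ket{z}_Z$ together with ancilla registers $\ket{0}_{X_1}, \dots, \ket{0}_{X_l}$, first CNOT $y_1$ into $X_1$; for $i = 2, \dots, l$ CNOT $y_i$ into $X_i$ and then apply the $g$-oracle with input $(X_1, \dots, X_{i-1})$ and output register $X_i$, leaving $X_i = y_i \oplus g(x_1 \Vert \dots \Vert x_{i-1})$. This uses $l - 1$ queries to compute all $x_i$ into the ancillas. One more $g$ query with input $(X_1, \dots, X_l)$ and output register $Z$ completes the simulated $f$ query. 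Finally, reversing the first phase uncomputes the ancillas using another $l - 1$ queries, for a total of $2l - 1 = O(l)$ queries. Since every step is either a CNOT or an oracle call, the overall operation is unitary and acts on the input and output registers exactly as $U_f$.

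The only subtle point is ensuring that the ancilla uncomputation is indeed correct: since each $x_i$ is defined deterministically from $y_1, \dots, y_i$ and the truth table of $g$, and because the forward phase writes $x_i$ into a register that started at $\ket{0}$, running the forward circuit in reverse cleanly restores the ancillas to $\ket{0}$ on every basis state, hence on every superposition. No estimate of query amplitudes is needed; the bound $O(l)$ is simply counted from the circuit. I expect no real obstacle here — the argument is a standard tail-recursion-with-uncomputation technique once the recurrence is in hand.
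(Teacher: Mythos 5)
Your proposal is correct and follows essentially the same route as the paper's proof: both observe that $f(y) = g(\mathsf{fix}^{-1}(y))$, derive the recurrence $x_i = y_i \oplus g(x_1 \Vert \dots \Vert x_{i-1})$ (equivalently, the paper's recursive characterization of $\mathsf{fix}^{-1}$), compute the blocks of $\mathsf{fix}^{-1}(y)$ iteratively with one $g$ query each, issue a final $g$ query, and uncompute with a reversed pass. The only cosmetic difference is that you write $\mathsf{fix}^{-1}(y)$ into fresh ancillas while the paper overwrites the input registers in place; both give an $O(l)$ query count.
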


\begin{proof}
    First note that, for single block inputs the $\mathsf{fix}$ operator is the identity, so a call to $f$ is a call to $g$. Furthermore, the input to $g$ is then $\mathsf{fix}^{-1}$ on the original input (trivially, because the fix is the identity). This forms our base case. We will also need the following characterization of $\mathsf{fix}^{-1}$.
    
    \begin{claim}
        We can characterize $\mathsf{fix}^{-1}$ recursively as follows. \begin{enumerate}[label=(\arabic*)]
            \item The inverse fix of the empty string $\epsilon$ is \begin{align*}
                \mathsf{fix}^{-1}(\epsilon)=\epsilon
            \end{align*}
            \item The inverse fix of a one-block string $x_1$ is\begin{align*}
                \mathsf{fix}^{-1}(x_1)=x_1
            \end{align*}
            \item The inverse fix of a multi-block string $t \Vert x_m$, where $t$ is a non-empty string and $x_m \in \bit^r$, is \begin{align*}
                \mathsf{fix}^{-1}(t \Vert x_m) = \mathsf{fix}^{-1}(t) \Vert x_m \oplus f(t).
            \end{align*}
        \end{enumerate}
        \label{claim:tail-fixing-inv}
    \end{claim}
    
    Inductively, let $u_1 \Vert \dots \Vert u_l = \mathsf{fix}^{-1}(x_1 \Vert \dots \Vert x_l)$, and consider an input $x_1 \Vert \dots \Vert x_{l}$. Let $y$ be the output register. By assumption, we can use $l-1$ queries to $g$ to compute the value of $f \circ \mathsf{fix}$ on input $x_1 \Vert \dots \Vert x_{l-1}$ into the register $x_l$, while simultaneously computing the $\mathsf{fix}$ of $x_1 \Vert \dots \Vert x_{l-1}$. We are then left with the strings $u_1 \Vert \dots \Vert u_{l-1} \Vert x_l \oplus f(x_1 \Vert \dots \Vert x_{l-1}) \Vert y = \mathsf{fix}^{-1}(x_1 \Vert \dots \Vert x_l) \Vert y$. We can then call the oracle for $g = f \circ \mathsf{fix}$ a single time on the whole input register to obtain the string $\mathsf{fix}^{-1}(x_1 \Vert \dots \Vert x_l) \Vert y \oplus f(x)$. This procedure uses $l$ calls exactly, and we can uncompute the change to the input with $2l$ calls.
\end{proof}

With this in place, by \Cref{lem:indiff-trans} it will suffice to prove that the Msponge is indifferentiable from a random oracle.

\subsection{Msponge terminology and combinatorics}

Consider the Msponge construction $\MSp^\varphi$, where we view the underlying permutation $\varphi$ in the alternative form described in Section \ref{sec:alt-oracles}. In this view, $\varphi$ is constructed from three random functions $h, k, k'$ and a random permutation $\pi$ as given in \Cref{eq:decomposed-permutation}.

We will now define various properties of compressed oracle databases $D_k, D_{k'}$ corresponding to the functions $k, k'$ above. These properties will depend on the permutation $\pi$, and will implicitly refer to the structure of the Msponge construction $\MSp^\varphi$. The properties we define will not depend on $h$.

We will always assume that $\pi$ is satisfies the following desirable property.
\begin{definition}
    We say that permutation $\pi$ is ``good'' if for any $x_1, x_2 \in \bit^r$, the number of suffixes $z \in \bit^c$ such that $\pi(x_1 \Vert z)$ begins with $x_2$ is at most $O(n + 2^{c-r})$.
    \label{def:perm-good}
\end{definition}
It follows from \Cref{lem:X-pairs-uniform-tail-bound} that a fraction $1-O(2^{-n})$ of permutations are good, validating this assumption.

\begin{definition}\label{def:tail}
    Let $z \in \bit^c$, and fix $D_k, D_{k'}$ and $\pi$. We say that $z$ has a ``tail'' under the following recursive conditions. 
    \begin{enumerate}[label=(\arabic*)]
        \item $z=0^c$ has a tail.
        \item $z$ has a tail if it can be reached as an internal state by a single round of absorption from a state $z_p$ which has a tail. That is, there exist $x_p \in D_k$, $x_i \in D_{k'}$, a $z_p \in \bit^c$ with a tail, and a $z_i \in \bit^c$ such that 
        \begin{align*}
            x_i \Vert z_i =& \pi(x_p \Vert z_p \oplus D_k(x_p)) \\
            z =& z_i \oplus D_{k'}(x_i).
        \end{align*}
    \end{enumerate}
    A tail of $z$ is a string of inputs required to reach $z$ according to the above conditions. Specifically, in case (1) the empty string is the unique tail of $z = 0^c$, and in case (2), any tail of $z_p$ concatenated with $x_p$ is a tail of $z$. We denote by $\mathsf{tail}(z)$ the set of tails of $z$.
\end{definition}

One can think of the set of internal states $z$ with a tail as those that an adversary ``knows how to reach''. We define $\mathsf{tail}_1(z)$ to be the lexicographically first tail of $z$, and will often deal with databases where tails are unique. We define the ``head'' of a tail of $z$ to be empty in case (1) above, and $x_i$ in case (2) above. We set $\mathsf{head}(z) = \mathsf{head}(\mathsf{tail}_1(z))$. Observe that both of these notions have an implicit dependence on $D_k, D_{k'}$ which we suppress here for notational convenience.

The definition here does not depend on the database for $h$. The reason for this is clearer to see in the Msponge: there, every call to $h$ during an internal round is immediately discarded, as the next input takes its place, and the final call to $h$ does not affect the state wire. In this way, the reachable states are decoupled from $h$.

We will also sometimes use this definition in the standard sponge, e.g. in \Cref{sec:bounds-proofs}. This definition still makes sense in that context, because an adversary that knows an input reaching a $z$ value could easily recover the tail of $z$, by evaluating the $\mathsf{fix}^{-1}$ function using the sponge, and in the other direction an adversary knowing a tail of $z$ could evaluate $\mathsf{fix}$ to recover an input reaching $z$.

We also require the notion of an intermediate pair. These correspond to the states an adversary ``knows how to reach'' in the intermediate stage of applying the block function, after applying $\pi$ and before applying $k'$.

\begin{definition}
\label{def:IP}
    Fix a database pair $(D_k, D_{k'})$. We say a pair $(x, z)$ is an intermediate pair if there exists $x_p \in D_k$ and a $z_p \in \bit^c$ with a tail such that 
    \begin{align*}
        x \Vert z = \pi(x_p \Vert z_p \oplus k(x_p))\,.
    \end{align*}
    We let $\mathsf{IP}(D_k, D_{k'})$ denote the set of all intermediate pairs.
\end{definition}

Note that because $\pi$ is a permutation, there is no multiplicity for intermediate pairs.  We can now define the set of good databases.

\begin{definition}\label{def:good-1}
    A database pair $(D_k, D_{k'})$ is \textbf{good} if every $z \in \bit^c$ has at most one tail, and $\mathsf{IP}(D_k, D_{k'})$ has unique $x$-values. Formally, we require both of the following conditions:
    \begin{align*}
        \forall z \in \bit^c&, \,|\mathsf{tail}(z)| \leq 1 \\
        \forall (x_1, z_1), (x_2, z_2) \in \mathsf{IP}(D_k, D_{k'})&, \,(x_1, z_1) \neq (x_2, z_2) \Leftrightarrow x_1 \neq x_2
    \end{align*}
\end{definition}

Note that the empty databases are clearly good, because the only state value with a tail is $0^c$ and the set of intermediate states is empty. A useful fact about good databases is that the number of state values with a tail is bounded by the number of queries, and the number of intermediate pairs is bounded by the number of queries squared.

\begin{lemma}
    Fix databases $(D_k, D_{k'})$ which are good and have at most $t$ non-$\bot$ inputs each. Then there are $O(t)$ values with a tail and $O(t^2)$ intermediate pairs. Formally, we have $|\{z \, : \, |\mathsf{tail}(z)| \geq 1\}| \leq O(t)$, and $|\mathsf{IP}(D_k, D_{k'})| = O(t^2)$.
    \label{lem:few-tails-ips}
\end{lemma}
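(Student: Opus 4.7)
The plan is to organize the tails into a tree structure and use the two goodness conditions to bound the size and branching of that tree. Let $T = \{z \in \bit^c : |\mathsf{tail}(z)| \geq 1\}$ and $I = \mathsf{IP}(D_k, D_{k'})$. I will build a directed graph $G$ on vertex set $T$ by placing an edge $z_p \to z$ whenever the transition rule of \Cref{def:tail} witnesses $z$ as a child of $z_p$, i.e.\ whenever there exist $x_p \in D_k$ and $x_i \in D_{k'}$ with $x_i \Vert z_i = \pi(x_p \Vert z_p \oplus D_k(x_p))$ and $z = z_i \oplus D_{k'}(x_i)$.

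First I would observe that goodness condition (1) forces $G$ to be a tree rooted at $0^c$: any incoming edge into a vertex $z$ gives a tail of $z$ obtained by appending the relevant $x_p$ to the unique tail of its parent $z_p$, so two distinct incoming edges would produce two distinct tails of $z$, contradicting uniqueness. Hence $|T| = 1 + |E(G)|$, and it suffices to bound $|E(G)|$ by $|D_{k'}|$. Next I would associate to each edge the intermediate pair $(x_i, z_i) = \pi(x_p \Vert z_p \oplus D_k(x_p))$ it produces. Because $\pi$ is a permutation and because the map $(x_p, z_p) \mapsto x_p \Vert (z_p \oplus D_k(x_p))$ is injective on $D_k \times T$ (the rate part recovers $x_p$, and given $x_p$ the value $D_k(x_p)$ is fixed), distinct edges yield distinct intermediate pairs in $I$. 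Goodness condition (2) then says that these intermediate pairs have pairwise distinct $x_i$ components, and each such $x_i$ lies in $D_{k'}$ (otherwise the transition would not produce a valid tail). Therefore $|E(G)| \leq |D_{k'}| \leq t$, which gives $|T| \leq t + 1 = O(t)$.

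For the second bound, I would use the same injection argument globally: $I$ is exactly the image of $D_k \times T$ under the injective map $(x_p, z_p) \mapsto \pi(x_p \Vert (z_p \oplus D_k(x_p)))$, so $|I| = |D_k| \cdot |T| \leq t \cdot (t+1) = O(t^2)$. I do not think any step is subtle; the only mild care needed is verifying the injection on $D_k \times T$ and checking that the witness $x_i$ for each edge must lie in $D_{k'}$ (which follows directly from \Cref{def:tail}(2)). The argument does not use the permutation-goodness property of \Cref{def:perm-good} at all, so no separate case analysis on $\pi$ is required.
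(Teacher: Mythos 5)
Your proof is correct and takes essentially the same approach as the paper: both arguments hinge on associating each nonzero $z$ with a tail to a distinct intermediate pair whose first coordinate $x_i$ lies in $D_{k'}$ (the two goodness conditions give injectivity), so $|T|\leq|D_{k'}|+1$, and then $\mathsf{IP}$ is the image of the injective map $(x_p,z_p)\mapsto\pi(x_p\Vert(z_p\oplus D_k(x_p)))$ on $D_k\times T$. The explicit tree on $T$ is a tidy way to package the counting but adds no new ingredient beyond what the paper's shorter argument already uses.
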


\begin{proof}
    Each $z \neq 0^c$ value with a tail has a non-empty one, which means there exists a unique $(x_i, z_i) \in \mathsf{IP}(D_k^t, D_{k'}^t)$ such that $z = z_i \oplus k'(x_i)$. In this way, each $z \neq 0^c$ with a tail corresponds to a unique intermediate pair, for which $x_i \in D_{k'}$. In a good database the $x_i$'s are distinct for all distinct intermediate pairs, and we have $|D_{k'}| \leq t$, so there are at most $t$ such pairs. Note that we excluded $0^c$ from this argument, so we have at most $t+1$ values of $z$ with a tail.
    
    The set of intermediate pairs is constructed directly from $z$ values with a tail, and values of $x$ such that $x \in D_k$. There are at most $O(t)$ of each, so there are $O(t^2)$ pairs of each, and hence $|\mathsf{IP}(D_k^t, D_{k'}^t)| = O(t^2)$.
\end{proof}

Towards bounding the transitions from good to bad, we will need the notion of a bad attach. This is the ``bad event'' for a query to $k$.

\begin{definition}\label{def:badattach}
    Fix good databases $(D_k, D_{k'})$, and fix input $x \in \bit^r$ s.t. $x \not\in D_k$. We say that image $y$ causes a bad attach if either there is a collision in intermediate pairs, or if the newly created intermediate pairs attach to a value in $D_{k'}$. Formally, for an output value $y$, define the intermediate pairs of $(x, y)$ as \begin{align*}
        \mathsf{IP}_{(x, y)}(D_k, D_{k'}) =& \{ (x_i, z_i) \, : \, \exists z \in \bit^c \text{ s.t. } |\mathsf{tail}(z)| \geq 1, x_i \Vert z_i = \pi(x \Vert z \oplus y)\}.
    \end{align*}
    We say that $y$ causes a bad attach (on preimage $x$) if either of the following hold: \begin{enumerate}[label=(\arabic*)]
        \item There exists $(x_1, z_1) \in \mathsf{IP}_{(x, y)}(D_k, D_{k'})$ and $(x_2, z_2) \in \mathsf{IP}(D_k, D_{k'})$ such that $x_1 = x_2$.
        \item There exists $(x_1, z_1) \in \mathsf{IP}_{(x, y)}(D_k, D_{k'})$ such that $x_1 \in D_{k'}$.
    \end{enumerate}
\end{definition}

\begin{lemma}
    Fix databases $(D_k, D_{k'})$ which are good and have at most $t$ non-$\bot$ inputs each, and fix input $x \in \bit^r$ s.t. $x \not\in D_k$. Let $B$ be the set of images which causes a bad attach on preimage $x$. Then we have $|B| \leq O(t^3 n + t^3 2^{c-r})$ when $\pi$ is good.
    \label{lem:few-bad-attach}
\end{lemma}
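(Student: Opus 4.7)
The plan is to bound the contributions to $|B|$ from the two conditions in Definition \ref{def:badattach} separately and take their union. Both estimates use two ingredients: \cref{lem:few-tails-ips}, which controls the number of state values with a tail ($O(t)$) and the number of intermediate pairs ($O(t^2)$) in a good database of size $t$; and the goodness of $\pi$ (Definition \ref{def:perm-good}), which guarantees that for any fixed $x_1, x_2 \in \bit^r$ there are at most $O(n + 2^{c-r})$ suffixes $w \in \bit^c$ with $\pi(x_1 \Vert w)|_0^r = x_2$.

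For condition (2), a bad $y$ must produce some newly created intermediate pair $\pi(x \Vert z \oplus y)$ whose $r$-bit prefix lies in $D_{k'}$. I would iterate over the pair $(z, x')$, where $z$ is a tail-carrying state and $x' \in D_{k'}$; there are $O(t) \cdot t = O(t^2)$ such pairs. For each fixed $(z, x')$, the change of variable $w = z \oplus y$ is a bijection $y \leftrightarrow w$, and the bad condition becomes $\pi(x \Vert w)|_0^r = x'$, which by the goodness of $\pi$ holds for at most $O(n + 2^{c-r})$ values of $w$. The total contribution from this case is $O(t^2 n + t^2 2^{c-r})$.

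For condition (1), a bad $y$ must produce a new intermediate pair $\pi(x \Vert z \oplus y)$ whose prefix equals the prefix $x_2$ of some existing $(x_2, z_2) \in \mathsf{IP}(D_k, D_{k'})$. I would iterate over triples $(z, (x_2, z_2))$: there are $O(t) \cdot O(t^2) = O(t^3)$ such triples. The same change of variable reduces the count for each triple to the goodness bound $O(n + 2^{c-r})$, giving a total contribution of $O(t^3 n + t^3 2^{c-r})$.

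Summing the two contributions and absorbing the lower-order $t^2$ terms yields the claimed $|B| \leq O(t^3 n + t^3 2^{c-r})$. The main subtlety, and the step I would be most careful about, is verifying that the goodness of $\pi$ is being invoked with the fixed prefix $x$ from the lemma statement --- that is, once $z$ and $x$ are fixed, counting $y$ values satisfying the bad condition is exactly counting suffixes $w$ with $\pi(x \Vert w)|_0^r$ equal to a prescribed $r$-bit target, which is precisely the quantity Definition \ref{def:perm-good} controls. The rest is routine enumeration on top of \cref{lem:few-tails-ips}.
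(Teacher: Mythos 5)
Your proposal is correct and follows essentially the same argument as the paper: both enumerate over pairs (tail-carrying state, bad prefix) and invoke the goodness of $\pi$ for the per-pair count of images $y$, on top of the $O(t)$ / $O(t^2)$ counts from \cref{lem:few-tails-ips}. The only cosmetic difference is that the paper bundles the two bad conditions into a single set $X_{bad}$ of size $O(t^2)$ before summing, whereas you handle them separately (noticing en route that condition~(2) alone contributes only $O(t^2(n + 2^{c-r}))$).
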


\begin{proof}
    Let us denote by $X_{bad} \subset \bit^r$ the set of $x \in \bit^r$ such that either $x$ is the first half of an intermediate pair in $\mathsf{IP}(D_k, D_{k'})$, or $x \in D_{k'}$, or both. Note that $|X_{bad}| = O(t^2)$ by \Cref{lem:few-tails-ips}. Let us consider some $z \in \bit^c$ which has a tail, and $x_i \in X_{bad}$. We will use $N_{x_i, z}$ to count the number of images $y$ such that the value $$\pi(x \Vert y \oplus z)|_0^{r} = x_i,$$ i.e. we obtain a bad attach corresponding to $(x_i, z)$. If $y$ causes a bad attach, then it causes it for some $(x_i, z)$. We then have
    \begin{align*}
        |B| \leq& \sum_{z, \mathsf{tail}(z) \neq \emptyset} \sum_{x_i \in X_{bad}} N_{x_i, z}\\
        \leq& O(t^3 n + t^3 2^{c-r}) & \text{($\pi$ good)}
    \end{align*}
\end{proof}

The other case we must consider is a query to $k'$. Here, we will refer to a ``bad completion'' event.

\begin{definition}\label{def:badcomp}
    Fix  good databases $(D_k, D_{k'})$ which have at most $t$ non-$\bot$ inputs each, and fix input $x' \in \bit^r$ s.t. $x' \not\in D_{k'}$. We say that image $y'$ causes a bad completion if $x'$ appears in an intermediate pair $(x', y_i')$, and further there is either a created collision in intermediate pairs or state values with a tail, or if the newly created intermediate pairs attach to a value in $D_{k'}$. Formally, for an output value $y'$, define the completed pairs of $(x', y')$ as \begin{align*}
        \mathsf{CP}_{(x', y')}(D_k, D_{k'}) =& \begin{cases} \{ (x_i, z_i) \, : \, \exists x \in D_k \text{ s.t. } x_i \Vert z_i = \pi(x \Vert y_i' \oplus y' \oplus D_k(x))\} & \text{(If $\exists (x', y_i') \in \mathsf{IP}(D_k, D_{k'})$)} \\
        \emptyset & \text{(Otherwise)}
        \end{cases}
    \end{align*}
    We say that $y$ causes a bad attach (on preimage $x$) if either of the following hold: \begin{enumerate}[label=(\arabic*)]
        \item There exists $(x_1, z_1) \in \mathsf{CP}_{(x', y')}(D_k, D_{k'})$ and $(x_2, z_2) \in \mathsf{IP}(D_k, D_{k'})$ such that $x_1 = x_2$.
        \item There exists $(x_1, z_1) \in \mathsf{CP}_{(x', y')}(D_k, D_{k'})$ such that $x_1 \in D_{k'}$.
        \item There exists $(x', y_i') \in \mathsf{IP}(D_k, D_{k'})$ s.t. $\mathsf{tail}(y_i' \oplus y') \neq \emptyset$ (where here the tail is computed before assigning $[x'\rightarrow y']$).
    \end{enumerate}
\end{definition}

\begin{lemma}
    Fix databases $(D_k, D_{k'})$ which are good and have at most $t$ non-$\bot$ inputs each, and fix input $x' \in \bit^r$ s.t. $x' \not\in D_{k'}$. Let $B$ be the set of images which causes a bad completion on preimage $x'$. Then we have $|B| \leq O(t^3 n + t^3 2^{c-r})$ when $\pi$ is good.
    \label{lem:few-bad-comps}
\end{lemma}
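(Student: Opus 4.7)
The plan is to mirror the proof of \Cref{lem:few-bad-attach} with an extra bookkeeping step accounting for the third failure mode. First observe that if $x'$ does not appear as the first coordinate of any pair in $\mathsf{IP}(D_k, D_{k'})$, then $\mathsf{CP}_{(x', y')}(D_k, D_{k'}) = \emptyset$ for every $y'$ by definition, and condition (3) is also vacuous, so $|B|=0$. Thus we may assume there is some $(x', y_i') \in \mathsf{IP}(D_k, D_{k'})$; moreover goodness of $(D_k, D_{k'})$ forces the $x$-values of intermediate pairs to be distinct, so this $y_i'$ is uniquely determined by $x'$.

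With $y_i'$ fixed, bound the three failure modes separately. For (1) and (2), introduce the set $X_{\mathrm{bad}} \subseteq \bit^r$ consisting of those $x$ that are either the first coordinate of some pair in $\mathsf{IP}(D_k, D_{k'})$ or lie in $D_{k'}$; by \Cref{lem:few-tails-ips} we have $|X_{\mathrm{bad}}| = O(t^2)$. For each $x_p \in D_k$ and each $x_{\mathrm{bad}} \in X_{\mathrm{bad}}$, count the number of $y'$ such that $\pi(x_p \Vert (y_i' \oplus y' \oplus D_k(x_p)))|_0^{r} = x_{\mathrm{bad}}$. Substituting $w = y_i' \oplus y' \oplus D_k(x_p)$, this is the same as counting suffixes $w \in \bit^c$ with $\pi(x_p \Vert w)|_0^{r} = x_{\mathrm{bad}}$, which is at most $O(n + 2^{c-r})$ because $\pi$ is good (\Cref{def:perm-good}). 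Summing over the $O(t)$ choices of $x_p$ and the $O(t^2)$ choices of $x_{\mathrm{bad}}$ contributes $O(t^3 n + t^3 2^{c-r})$ to $|B|$.

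For condition (3), the set of $z$-values with a tail has size $O(t)$ by \Cref{lem:few-tails-ips}, and we need $y_i' \oplus y'$ to lie in this set. Since $y_i'$ is fixed, there are at most $O(t)$ such $y'$, which is dwarfed by the previous contribution. Taking a union bound across the three cases yields $|B| = O(t^3 n + t^3 2^{c-r})$ as required.

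I do not expect any serious obstacle: the only subtlety is recognizing that goodness pins down $y_i'$ uniquely (so that the $\pi$-tail bound can be applied against a fixed offset), and that condition (3) is a lower-order term.
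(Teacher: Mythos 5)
Your proof is correct and follows essentially the same route as the paper's: both identify the trivial case when $x'$ does not appear in any intermediate pair, both use goodness to pin down the unique $y_i'$ associated with $x'$, both invoke the same $X_{\mathrm{bad}}$ set of size $O(t^2)$ together with the $\pi$-goodness tail bound to handle conditions (1) and (2), and both dispose of condition (3) with the $O(t)$ bound on state values with a tail. Your explicit change of variables $w = y_i' \oplus y' \oplus D_k(x_p)$ is a nice clarification of why the $\pi$-goodness bound applies, but it is the same counting argument the paper carries out via the notation $N_{x_i, x}$.
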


\begin{proof}
    If there is not a $z_i' \in \bit^c$ such that $(x', z_i') \in \mathsf{IP}(D_k^t, D_{k'}^t)$, then the claim trivially holds as the completed pairs set is empty and no new value has a tail. If there is such a $z_i' \in \bit^c$, then the value $z_i' \oplus y'$ will now have a tail on image $y'$. This may create a bad completion if this value already has a tail, of which there are $O(t)$ possible values by \Cref{lem:few-tails-ips}. This gives $O(t)$ values of $y'$ that are in $B$. For the values of $y'$ where this does not happen, we then analyze the completed set.
    
    As in the proof of \Cref{lem:few-bad-attach}, let us denote by $X_{bad} \subset \bit^r$ the set of $x \in \bit^r$ such that either $x$ is the first half of an intermediate pair in $\mathsf{IP}(D_k, D_{k'})$, or $x \in D_{k'}$, or both. Note that $|X_{bad}| = O(t^2)$ by \Cref{lem:few-tails-ips}. Let us consider some $x \in D_k$ and $x_i \in X_{bad}$. We will use $N_{x_i, x}$ to count the number of images $y'$ such that the value $$\pi(x \Vert z_i' \oplus y' \oplus D_k(x))|_0^{r} = x_i,$$ i.e. we obtain a bad completion. If $y'$ causes a bad completion, then it causes it for some $(x_i, x)$. We then have
    \begin{align*}
        |B| \leq& \sum_{x \in D_k} \sum_{x_i \in X_{bad}} N_{x_i, x}\\
        \leq& O(t^3 n + t^3 2^{c-r}) & \text{($\pi$ good)}
    \end{align*}
\end{proof}

Putting these results together, we can construct a local property which will enable us to bound the transition to a bad database from a good one.

\begin{definition}
    Fix databases $D = (D_k, D_{k'})$, and let $x$ be an input to either $k$ s.t. $x \not\in D_k$, or an input to $k'$ s.t. $x \not\in D_{k'}$. Define the local property \begin{align*}
        \L^{x, D}_g =& \begin{cases}
            \{(D'_k[x \rightarrow y], D'_{k'}) \, : \, (D'_k, D'_{k'}) \in (D_k|_x, D_{k'}), &y \in \bit^c \cup \{\bot\} \text{ and } D'_k[x\rightarrow y], D'_{k'} \text{ is good}\} \\ & \text{(if $x$ input to $k$)} \\
            \{(D'_k, D'_{k'}[x \rightarrow y]) \, : \, (D'_k, D'_{k'}) \in (D_k|_x, D_{k'}), &y \in \bit^c \cup \{\bot\} \text{ and } D'_k, D'_{k'}[x\rightarrow y] \text{ is good}\}\\ & \text{(if $x$ input to $k'$)} 
        \end{cases}
    \end{align*}
    and let $\L_g$ be the family of such local properties; in other words $\L_g$ is the family of local properties which recognize good databases. We use the notation $\L_g^t$ to denote the restriction of $\L_g$ to databases with at most $t$ inputs each.
\end{definition}

\begin{lemma}\label{lem:local-bad}
    The local property $\L_g^t$ satisfies 
    $\Delta(\L_g^t) \leq O(nt^3 + 2^{c-r} t^3).$
    \label{lem:bad-transition-bound}
\end{lemma}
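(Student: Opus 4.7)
The plan is to exploit the fact that badness is a \emph{monotone} property of the database pair $(D_k,D_{k'})$ and reduce the distance calculation to counting the bad-attach (resp.\ bad-completion) outputs already bounded in \Cref{lem:few-bad-attach} and \Cref{lem:few-bad-comps}. The distance $\Delta(\L_g^{x,D})$ counts the number of $y\in\bit^c$ whose insertion at the free coordinate $x$ flips the truth value of ``$(D_k,D_{k'})$ is good''; this count is unchanged if we view the property as ``is bad'' instead. First observe that extending $D_k$ (or $D_{k'}$) at a fresh input can only add tails and intermediate pairs, never destroy them, so once the $\bot$-extension $D[x\mapsto\bot]$ violates one of the two conditions of \Cref{def:good-1}, every $D[x\mapsto y]$ also does. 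Hence a $y$ can contribute to $\Delta$ only when $D[x\mapsto\bot]$ is already good; this is the only regime to analyze.

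In that regime, I split on whether $x$ is an input to $k$ or $k'$. For an input to $k$: the only way to introduce a second tail at some $z$ is to create a fresh tail through the new pair $(x,y)$, which requires the resulting intermediate pair $(x_i,z_i)\in\mathsf{IP}_{(x,y)}(D_k,D_{k'})$ to have $x_i\in D_{k'}$ (so that $z=z_i\oplus D_{k'}(x_i)$ actually appears as a new tailed value) -- and this is precisely condition~(2) of \Cref{def:badattach}. The only way to violate unique-$x$ among intermediate pairs is condition~(1). Thus the truth-flipping $y$'s are contained in the bad-attach set, and \Cref{lem:few-bad-attach} gives $O(nt^3 + 2^{c-r}t^3)$ such $y$. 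For an input to $k'$: the symmetric analysis uses \Cref{def:badcomp}, whose three conditions are designed to enumerate exactly the ways the completion of an existing intermediate pair $(x',y_i')$ can produce a new tail at an already-tailed $z$, a new IP colliding with an existing one, or attach the newly tailed state back through $D_{k'}$; \Cref{lem:few-bad-comps} bounds this set by $O(nt^3+2^{c-r}t^3)$ as well.

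Taking the maximum over all $(x,D)$ with $|D_k|,|D_{k'}|\le t$ yields the claimed $\Delta(\L_g^t) \le O(nt^3 + 2^{c-r}t^3)$. The delicate step -- and the main obstacle -- is the $k'$ case: one must carefully trace through the recursive \Cref{def:tail} to see that any newly created tail or IP collision is mediated by the completed pair $(x', y_i')\in\mathsf{IP}(D_k,D_{k'})$, rather than arising from some subtler chain that bypasses it. Once one checks that modifying $k'(x')$ only changes the tail status of states reachable via an intermediate pair of the form $(x',\cdot)$, the three-part enumeration in \Cref{def:badcomp} is seen to be exhaustive, and the bound follows directly from the counting already performed in \Cref{lem:few-bad-comps}.
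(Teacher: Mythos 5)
Your proof is correct and follows essentially the same route as the paper's: monotonicity of badness reduces to the regime where $D[x\mapsto\bot]$ is good, and then a case split on whether $x$ is a $k$- or $k'$-input identifies the distance-contributing images $y$ with bad attaches (resp.\ bad completions), bounded by \Cref{lem:few-bad-attach,lem:few-bad-comps}. Your extra commentary on why the conditions of \Cref{def:badattach,def:badcomp} are exhaustive is useful exposition beyond the paper's terser treatment, but it does not change the argument.
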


\begin{proof}
    It suffices to bound the distance for any of the constituent $\L_g^{x, D, t}$. Observe that if $D[x\rightarrow \bot]$ (where $x$ is either an input to $k$ or $k'$) is bad, then the property becomes trivial as the bad predicate is monotone, and $\Delta(\L_g^{x, D, t}) = 0$. We therefore assume $D[x\rightarrow \bot]$ is good. Let us split into cases. \begin{enumerate}[label=(\arabic*)]
        \item Suppose that $x$ is an input to $D_k$. In order for the database to become bad by assigning a value to $x$, the image of $x$ must cause a collision in the intermediate pair prefixes, or result in a new intermediate pair whose prefix already lies in $D_{k'}$. In other words, the image $y$ must cause a bad attach. By \Cref{lem:few-bad-attach}, there are $O(nt^3 + 2^{c-r} t^3)$, which implies \begin{align*}
            \Delta(\L_g^{x, D, t}) \leq& O(nt^3 + 2^{c-r} t^3).
        \end{align*}
        \item Suppose that $x$ is an input to $D_{k'}$. In order for the database to become bad by assigning a value to $x$, the image of $x$ must cause a collision in the intermediate pair prefixes or tail values, or result in a new intermediate pair whose prefix already lies in $D_{k'}$. In other words, the image $y$ must cause a bad completion. By \Cref{lem:few-bad-comps}, there are $O(nt^3 + 2^{c-r} t^3)$, which implies \begin{align*}
            \Delta(\L_g^{x, D, t}) \leq& O(nt^3 + 2^{c-r} t^3).
        \end{align*}
    \end{enumerate}
\end{proof}

Finally, the following lemma will be helpful in our indifferentiability proof, and is based on a similar analysis. It essentially states that, for a fixed input $x$ and state value $z$, the number of images of $x$ which cause the tail of $z$ to be different from when the image of $x$ is $\bot$ is small.

\begin{lemma}
    Fix databases $D = (D_k, D_{k'})$ which are good, let $x$ be an input to either $k$ s.t. $x \not\in D_k$, or an input to $k'$ s.t. $x \not\in D_{k'}$, and let $z \in \bit^c$ be some possible state. Denote by $\mathsf{tail}_1(z)$ the lexicographically first tail of $z$ in $D$ (or a failure symbol $\bot$ if $z$ has no tail), and $\mathsf{tail}_{1,[x\rightarrow y]}(z)$ as the lexicographically first tail of $z$ in $D[x\rightarrow y]$ (or again a failure symbol $\bot$ if $z$ has no tail). Then we have \begin{align*}
        | \{ y \, : \, \mathsf{tail}_{1,[x\rightarrow y]}(z) \neq \mathsf{tail}_1(z)\} | \leq& O(nt^3 + 2^{c-r} t^3).
    \end{align*}
    \label{lem:tail-insensitive}
\end{lemma}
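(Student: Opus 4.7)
The plan is to partition the $y$'s for which $\mathsf{tail}_{1,[x\rightarrow y]}(z) \neq \mathsf{tail}_1(z)$ into two disjoint subsets based on whether $D[x\to y]$ is good or bad, bound each, and combine. First I would observe that since $D \subseteq D[x\to y]$ as partial functions (when $y \neq \bot$), every tail of $z$ in $D$ remains a tail of $z$ in $D[x\to y]$. Consequently, if $\mathsf{tail}_1(z) \neq \bot$ and $D[x\to y]$ is good, the uniqueness of tails in good databases (\Cref{def:good-1}) implies that $\mathsf{tail}_1(z)$ is the sole tail of $z$ in $D[x\to y]$, so $\mathsf{tail}_{1,[x\to y]}(z) = \mathsf{tail}_1(z)$. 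Hence the lex-first tail of $z$ can change only when either (i) $D[x\to y]$ is bad, or (ii) $D[x\to y]$ is good, $\mathsf{tail}_1(z) = \bot$, and $z$ acquires a new tail in $D[x\to y]$.

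For case (i), the number of such $y$ is directly at most $O(nt^3 + 2^{c-r}t^3)$ by \Cref{lem:few-bad-attach} (if $x$ is an input to $k$) or \Cref{lem:few-bad-comps} (if $x$ is an input to $k'$).

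For case (ii), any newly-arising tail of $z$ must use $x$ somewhere. I would consider the first position $j^*$ at which $x$ is used in the new tail; the prefix up to position $j^*-1$ consists entirely of $D$-entries and reaches a state $z_{j^*-1}$ which has a tail already in $D$, so lies in the $O(t)$-size set of states with tails by \Cref{lem:few-tails-ips}. In Case 1 ($x$ input to $k$, so $x_{j^*} = x$), the step at position $j^*$ requires $\pi(x \| z_{j^*-1} \oplus y) = x_{j^*}^i \| z_{j^*}^i$ with $x_{j^*}^i \in D_{k'}$ ($O(t)$ choices); by the goodness of $\pi$ (\Cref{def:perm-good}) at most $O(n+2^{c-r})$ values of $y$ satisfy this for each $(z_{j^*-1}, x_{j^*}^i)$, giving $O(t^2(n+2^{c-r}))$ bad $y$'s overall. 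In Case 2 ($x$ input to $k'$, so $x_{j^*}^i = x$), the equation $\pi(x_{j^*} \| z_{j^*-1} \oplus D_k(x_{j^*})) = x \| z_{j^*}^i$ with $x_{j^*} \in D_k$ gives $O(t(n+2^{c-r}))$ admissible configurations by the goodness of $\pi$; then $z_{j^*} = z_{j^*}^i \oplus y$ must be one of the $O(t)$ states with a tail in the good $D[x\to y]$ (applying \Cref{lem:few-tails-ips} to $D[x\to y]$), yielding at most $O(t)$ compatible $y$'s per configuration and $O(t^2(n+2^{c-r}))$ in total. In both cases this is dominated by $O(nt^3 + 2^{c-r}t^3)$, which yields the claimed bound after combining with case (i).

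The main obstacle is Case 2, where adding $x$ to $D_{k'}$ can create several new intermediate pairs and, via cascading, several new states with tails; the key insight that tames this is that goodness of $D[x\to y]$ bounds the total number of states with tails there by $O(t)$, so the value of $z_{j^*}$ that can appear on a valid new tail lies in a small set, preventing an extra factor of $(n+2^{c-r})$ from appearing in the final count.
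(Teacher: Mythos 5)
Your decomposition — split by whether $D[x\to y]$ is bad or good, and observe that on good $D[x\to y]$ with $\mathsf{tail}_1(z)\neq\bot$ the unique tail persists — mirrors the paper's structure, and your case (ii) argument for $x$ an input to $k$ is sound (it is in fact a slightly tighter direct count than the paper's appeal to the bad-attach bound). But your case (ii) argument for $x$ an input to $k'$ has a genuine gap. You claim that for each of the $O(t(n+2^{c-r}))$ configurations $(x_{j^*},z_{j^*-1})$, the constraint ``$z_{j^*}=z_{j^*}^i\oplus y$ lies in the $O(t)$-size set of tailed states of $D[x\to y]$'' leaves only $O(t)$ admissible $y$. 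That set, however, depends on $y$: it is $S_0 \cup \{z'\oplus y\}$ where $S_0$ is the ($y$-independent) set of tailed states of $D$ and $(x,z')$ is the unique intermediate pair with prefix $x$. For the configuration $(x_p,z_p)$ with $\pi(x_p\Vert z_p\oplus D_k(x_p)) = x\Vert z'$ — which is actually the \emph{only} configuration with $z_{j^*-1}$ tailed, by uniqueness of IP-prefixes in a good database — one has $z_{j^*}^i=z'$, so $z_{j^*}=z'\oplus y$ trivially lies in $\{z'\oplus y\}$ for \emph{every} $y$, and your per-configuration bound of $O(t)$ collapses to $2^c$.

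What closes this hole (and is what the paper does in its Case 2) is a further observation: in good $D[x\to y]$ with bad completions excluded, $z'\oplus y$ is the \emph{only} newly tailed state, and the new tail cannot be extended past it — any continuing step would produce a completed pair whose prefix either lands in $D_{k'}$ (a bad completion) or equals $x$ (an IP-prefix collision), both contradicting goodness. Hence $j^*$ must be the last position, forcing $z=z'\oplus y$ and hence $y=z'\oplus z$, a single value. Without this step your per-configuration count is not valid even as an upper bound. Separately, a minor point: for case (i) the clean citation is \Cref{lem:bad-transition-bound} (or its proof), which connects ``$D[x\to y]$ becomes bad'' to bad attaches/completions; \Cref{lem:few-bad-attach,lem:few-bad-comps} bound the latter, not the former directly.
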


\begin{proof}
    The case where $\mathsf{tail}_1(z) \neq \emptyset$ is trivial, as it means that $z$ has a tail in $D_k$ and $D_{k'}$. This tail will remain valid if a new tail of $z$ involving $x$ is created, which must happen if $\mathsf{tail}_{1,[x\rightarrow y]}(z) \neq \mathsf{tail}_1(z)$. There is then a $z$ value with multiple tails, and so the database is bad; the claim now follows from \Cref{lem:bad-transition-bound}. Let us instead focus on the case where $\mathsf{tail}_{1}(z) = t \neq \emptyset$. We have two cases. \begin{enumerate}[label=(Case \arabic*)]
        \item $x$ is an input to $k$. The only image assignments to $x$ which could possibly involve $x$ in a tail are ones where the prefix of one of the resultant intermediate pairs appears in $D_{k'}$, which is a bad attach. From \Cref{lem:few-bad-attach} there are $O(nt^3 + t^3 2^{c-r})$ values.
        \item $x$ is an input to $k'$. In the case where there is no $z'$ such that $(x, z') \in \mathsf{IP}(D_k, D_{k'})$, no new tail is created for any image assignment, so the claim holds trivially. Suppose the opposite, and let us exclude the values of $y$ which lead to a bad completion; by \Cref{lem:few-bad-comps} there are $O(nt^3 + t^3 2^{c-r})$ values. There is then at most one new value of $z$ with a tail (recalling that intermediate pair prefixes are unique in a good database), which in particular has value $z' \oplus y$. This can only change the tail of $z$ if $y = z' \oplus z$, which is a single value.
    \end{enumerate}
\end{proof}

\newpage
\section{Query lower bound proofs}
\label{sec:bounds-proofs}

In this section, we derive bounds for collision resistance and preimage resistance of the sponge construction. We start by re-deriving classical lower bounds for these problems within our framework, using lazy sampling arguments. We then use the tools developed along the way to give quantum query lower bounds (in \cref{subsec:QQB}) via the framework of \cite{Zhandry2018,chung2021compressed}. We will show lower bounds for preimage and collision resistance in the Msponge, which implies via black-box reduction preimage and collision resistance of the standard sponge.

\subsection{Classical lower bounds}

Consider a classical adversary with query access to $\varphi$ and $\varphi^{-1}$. The probability that this adversary creates a bad database with fewer than $q$ queries is upper bounded by \Cref{lem:bad-transition-bound}. Now consider the probability that this same adversary outputs a collision in $\Sp^\varphi$, and separate this event into two disjoint events: one where the database is good, and one where it is bad. We derive a bound for the probability of outputting a collision conditioned on the database being good, as this considerably simplifies the analysis. We take a similar approach to preimage finding. One downside of this strategy is that it does not yield a tight bound for either of the considered problems. 

We begin by defining a notion of reachable outputs given some database $D = \{D_{k},D_{k'},D_{h}\}$ and permutation $\pi$. Under this notion, an output $y \in \bit^r$ is reachable if there's a collection of input-output pairs in the databases $D$ that, when put together with $\pi$ in the manner defined by the Msponge, would yield $y$ as an output. In particular, this implies that $D$ determines an input $m$ such that $\MSp(m) = y$. Note that, if an output $y$ is reachable in this sense, then one can use $D$ and block-length-many queries to $h$ to construct another input $m'$ such that $y = \Sp^\varphi(m')$.
\begin{definition}[reachable output] \label{def:reachable-output}
Let $y\in \bit ^r$. We say that $y$ is a \textit{reachable output} of $\Sp^\varphi$ in $D$ if there exists $z\in \bit^c$ such that \begin{enumerate}[label=(\arabic*)]
    \item $(z,h(z))\in D_{h},$
    \item \label{cond:reachable-2} There exists a tail $x$ of $z$\text{ such that } $y=\mathsf{head}(x) \oplus h(z)$.
\end{enumerate}
We refer to the tail-capacity pair $(x,z)$ as a path. We say that the path $(x,z)$ reaches $y$.
\end{definition}
Condition \ref{cond:reachable-2} implicitly means that every $k$ and $k'$ query appearing when we feed $\Sp^\varphi$ with input $\mathsf{fix}(x)$ in the framework of \Cref{game:ideal-permutation-world} must be in the databases $D_k$ and $D_{k'}$, respectively. A path is an analog of input for $\Sp^\varphi$. Also, note that the tail $x$ can appear in the path of one output at most, but an output might be reachable via several paths each having a distinct tail $x$.
\begin{lemma} \label{lem:input-path}
    Let $\mathcal{A}^{\varphi,\varphi^{-1}}$ be a (classical or quantum) $q$-query algorithm outputting colliding inputs $m,m'\in (\bit^r)^{\leq l}$ for $\Sp^\varphi$ with probability $\epsilon$. Then there exists a $(3q+6l)$-query algorithm $\mathcal{B}^{h,k,k'}(\pi)$ outputting colliding paths $s,s'$ for $\Sp^{h k' \pi k}$ with the same probability, i.e., 
    \begin{equation}\label{eq:input-path} 
        \underset{ m,m'\leftarrow \mathcal{A}^{\varphi,\varphi^{-1}}}{\Pr}[\Sp^{\varphi}(m) = \Sp^\varphi(m')]
        =\underset{s,s'\leftarrow \mathcal{B}^{h,k,k'}(\pi)}{\Pr}[s,s' \text{ reach the same output}].
    \end{equation} 
\end{lemma}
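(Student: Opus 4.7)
The plan is to build $\mathcal{B}$ as a wrapper around $\mathcal{A}$ that uses the decomposition $\varphi = \omega_h \circ \tau_{k'} \circ \pi \circ \sigma_k$ to simulate $\mathcal{A}$'s $\varphi, \varphi^{-1}$ oracles, and that then post-processes $\mathcal{A}$'s output $(m, m')$ into a matching pair of paths. The key enabling fact is \Cref{lem:alternate-perm}(2): the random $\varphi$ assembled from independent random $h, k, k', \pi$ has the same joint distribution with $\mathcal{A}$'s view as a uniformly random $\varphi$, so the distribution of $(m, m')$ is identical in both settings.

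For each query $\mathcal{A}$ makes to $\varphi$ on $\ket{x}_X \ket{y}_Y$, I would have $\mathcal{B}$ apply $\sigma_k$ (one query to $k$), then $\pi$ (free, since $\mathcal{B}$ has its truth table), then $\tau_{k'}$ (one query to $k'$), and finally $\omega_h$ (one query to $h$)---three oracle queries per $\varphi$ call. Because $\sigma_k, \tau_{k'}, \omega_h$ are involutions, $\varphi^{-1}$ admits the mirror-image decomposition with $\pi^{-1}$ in place of $\pi$ and costs three queries per call as well. Consequently all $q$ of $\mathcal{A}$'s queries can be simulated with $3q$ oracle queries. Since every step is a standard-form oracle call on a register, the simulation is reversible and works verbatim whether $\mathcal{A}$ is classical or quantum.

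Once $\mathcal{A}$ outputs $(m, m')$, $\mathcal{B}$ evaluates $\Sp^\varphi$ on both strings using its simulated $\varphi$. Each of the at most $l$ absorption rounds per input is one $\varphi$ call, hence three oracle queries, so this phase adds at most $6l$ queries for a total of $3q + 6l$. During the evaluation, $\mathcal{B}$ reads off the path $(x, z)$ directly: the queries it makes to $k$ round-by-round are exactly the entries of a tail $x$ of the final capacity $z$ in the sense of \Cref{def:tail}, and the very last $\varphi$ call naturally performs the query $h(z)$, so $(z, h(z)) \in D_h$ and the two conditions of \Cref{def:reachable-output} are both satisfied. By construction $s$ reaches $\Sp^\varphi(m)$ and $s'$ reaches $\Sp^\varphi(m')$, so the events $\{s, s' \text{ reach a common output}\}$ and $\{\Sp^\varphi(m) = \Sp^\varphi(m')\}$ coincide sample-by-sample and \Cref{eq:input-path} follows. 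The only mildly delicate point, and the step most worth checking carefully, is the query accounting: that each simulated $\varphi$ or $\varphi^{-1}$ call really consumes only three oracle queries rather than more, and that every $k, k', h$ query needed to witness the two paths is actually made while $\mathcal{B}$ evaluates the sponge on $m$ and $m'$.
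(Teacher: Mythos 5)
Your proof is correct and follows essentially the same route as the paper: simulate each $\varphi$ / $\varphi^{-1}$ query via the decomposition $\varphi = \omega_h \circ \tau_{k'} \circ \pi \circ \sigma_k$ at cost three oracle queries, have $\mathcal{B}$ run the sponge on $\mathcal{A}$'s outputs $m, m'$ to populate the database with the two paths at cost at most $6l$ more queries, and invoke \Cref{lem:alternate-perm} for the distributional equality. The paper writes out the round-by-round query formulas $(x_i, x_i', z_i)$ rather than phrasing the post-processing as ``evaluate $\Sp^\varphi$ and read off the $k, k', h$ queries,'' but these are the same computation.
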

The probabilities in \Cref{eq:input-path} are taken over the appropriate sampling of the oracles, i.e., uniformly random $\varphi$ for the left-hand side and uniformly random $h, k, k', \pi$ for the right-hand side.
\begin{proof}
     The algorithm $\mathcal{B}$ will begin by running $\mathcal{A}$ as a subroutine, simulating the oracles $\varphi, \varphi^{-1}$ using $h, k', k, \pi$ until $\mathcal{A}$ terminates and outputs $m, m'$. Let $y := \Sp^{h k' \pi  k}(m)$ and $y' := \Sp^{h k' \pi  k}(m')$. By the indifferentiability result in \Cref{lem:alternate-perm}, the probability that $y=y'$ after this simulated run of $\algo A$ is exactly equal to 
     \begin{equation}\label{eq:sim-coll-success}
         \underset{ m,m'\leftarrow \mathcal{A}^{\varphi,\varphi^{-1}}}{\Pr}[\Sp^{\varphi}(m) = \Sp^\varphi(m')]
     \end{equation}
     where $\varphi$ is sampled uniformly at random.
     
     The above simulation of $\algo A$ by $\algo B$ works as follows. When $\mathcal{A}$ queries on $x$, $\mathcal{B}$ queries on
     \begin{enumerate}
         \item $x|_0^r$ to $k$,
         \item $x'=\pi(x|_0^r||(x|_r^n\oplus k(x|_0^r)))|_0^r$ to $k'$, and
         \item $z=\pi(x|_0^r||(x|_r^n\oplus k(x|_0^r)))_r^n \oplus k'(x')$ to $h$,
     \end{enumerate}
    then returns $(x'\oplus h(z))||z$. Each query of $\mathcal{A}$ costs 3 queries for $\mathcal{B}$. After $q$ queries, $\mathcal{A}$ terminates and outputs $m,m'$. 
    
    Next, algorithm $\mathcal{B}$ makes the necessary queries to form the paths $s, s'$ that reach $y,y'$ (respectively) with the following procedure. We will WLOG write the set of necessary queries for the case of $m$ explicitly, assuming the block size of $m$ is bounded by $l$. The relevant query input-output set is then
    \begin{align} \label{query-set-for-message-m}
        \{(x_i,k(x_i)),(x'_i,k'(x'_i)),(z_i,h(z_i))\}_{i\in \{1,\dots,l\}}
    \end{align}
    such that 
    \begin{align*} 
        x_1:= m|_0^r &&x'_1:=\pi(x_1||k(x_1))|_0^r && z_1:=\pi(x_1||k(x_1))|_r^n \oplus k'(x'_1) \\
        x_2:= m|_r^{2r} \oplus x'_1 \oplus h(z_1)&&x'_2:=\pi(x_2||k(x_2)\oplus z_1)|_0^r && z_2:=\pi(x_2||k(x_2))|_r^n \oplus k'(x'_2) \\
       \vdots &&\vdots&&\vdots\\
       x_l:= m|_{(l-1)r}^{lr} \oplus x_{l-1}'
 \oplus h(z_{l-1}') &&x'_l:=\pi(x_l||k(x_l)\oplus z_{l-1})|_0^r &&z_l:=\pi(x_l||k(x_l)|_r^n \oplus k'(x'_l) .
    \end{align*} 
    This procedure costs at most $6l$ many queries for $\mathcal{B}$. Note that these $6l$ queries made by $\mathcal{B}$ are classical even if the queries of $\algo A$ to its oracle are quantum. The path that reaches $y$ becomes $s=(x_1||\cdots||x_l,z_l)$. The path $s'$ that reaches $y'$ is defined similarly. $\algo B$ completes by outputting $s, s'$ and terminating. 
    
    By construction, $s$ and $s'$ reach the same output if and only if $y=y'$. As observed above, the latter occurs with probability exactly equal to \Cref{eq:sim-coll-success}.
\end{proof}

\begin{remark} \label{remark:outputting-col-containing-col}
The proof of \cref{lem:input-path} shows that the success probability of any classical  $q$-query adversary $\mathcal{A}^{\varphi,\varphi^{-1}}$ outputting a collision is equal to the probability that, after the classical $3q+6l$-query adversary $\mathcal{B}^{h,k,k'}(\pi)$ finishes, the database contains a collision. 
\end{remark}

It's straightforward to adapt the construction of $\algo B$ in the proof of \cref{lem:input-path} so that, if $\algo A$ is some algorithm that outputs only a single $m$, then the result is a single path $s$ that reaches the output of the sponge evaluated on $m$. This yields the following.

\begin{lemma}\label{lem:input-path-for-OW}
    Let $\mathcal{A}^{\varphi,\varphi^{-1}}$ be a (classical or quantum) $q$-query algorithm that, given $y\sim \bit^r$, outputs $m\in (\bit^r)^{\leq l}$ such that $y=\Sp^\varphi(m)$ with probability $\epsilon$. Then there exists a $(3q+3l)$-query algorithm $\mathcal{B}^{h,k,k'}(\pi)$ that, given $y\sim \bit^r$, outputs a path $s$ that reaches $y$ for $\Sp^{h k'  \pi  k}$ with the same probability, i.e., 
    \begin{align*} 
    \Pr_{\substack{y \sim \bit^r \\ m'\leftarrow \mathcal{A}^{\varphi,\varphi^{-1}}}}
       [y = \Sp^\varphi(m')] 
       = \Pr_{\substack{y \sim \bit^r \\ s\leftarrow \mathcal{B}^{h,k,k'}(\pi)}}[s \text{ reaches y}]
    \end{align*}  
\end{lemma}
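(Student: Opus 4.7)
The plan is to essentially copy the construction from the proof of \Cref{lem:input-path}, with two modifications: we work with a single input instead of a colliding pair, and we pass $\mathcal{A}$'s challenge $y$ through. Given the challenge $y\sim\bit^r$, the algorithm $\mathcal{B}^{h,k,k'}(\pi)$ will forward $y$ to an internal copy of $\mathcal{A}$, simulate $\mathcal{A}$'s oracles $\varphi, \varphi^{-1}$ using $h, k, k', \pi$ exactly as in the previous proof (each query of $\mathcal{A}$ costing $3$ queries of $\mathcal{B}$, one each to $k$, $k'$, and $h$), and wait until $\mathcal{A}$ terminates with a candidate preimage $m \in (\bit^r)^{\leq l}$. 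By the indifferentiability statement of \Cref{lem:alternate-perm} (which applies equally well in the game where $y$ is sampled first and handed to the adversary), the probability that $\Sp^{hk'\pi k}(m) = y$ after this simulation equals $\Pr_{y,m\from\mathcal{A}^{\varphi,\varphi^{-1}}}[y=\Sp^\varphi(m)]$.

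After $\mathcal{A}$ terminates, $\mathcal{B}$ must convert the preimage $m = m_1\Vert\cdots\Vert m_l$ into a path $s$ that reaches $y$. To do this, $\mathcal{B}$ recomputes the Msponge-style evaluation of $\Sp^{hk'\pi k}(m)$ step by step, making exactly the query set
\begin{align*}
\{(x_i,k(x_i)),(x'_i,k'(x'_i)),(z_i,h(z_i))\}_{i\in\{1,\dots,l\}}
\end{align*}
defined by the same recursion as in the previous proof: $x_1 := m|_0^r$, and for $i\geq 2$, $x_i := m|_{(i-1)r}^{ir} \oplus x'_{i-1} \oplus h(z_{i-1})$, with $x'_i := \pi(x_i\Vert k(x_i)\oplus z_{i-1})|_0^r$ and $z_i := \pi(x_i\Vert k(x_i)\oplus z_{i-1})|_r^n \oplus k'(x'_i)$ (taking $z_0 = 0^c$). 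This costs exactly $3l$ classical queries. The algorithm then outputs the path $s := (x_1\Vert\cdots\Vert x_l, z_l)$ and halts. The total query count is $3q+3l$, matching the statement.

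It remains to check that $s$ is a valid path reaching $y$ in the sense of \Cref{def:reachable-output}. By construction, $(z_l, h(z_l))\in D_h$, and the recursion above is precisely the condition that $x_1\Vert\cdots\Vert x_l$ is a tail of $z_l$ according to \Cref{def:tail} (with each $(x_i,k(x_i))\in D_k$ and $(x'_i,k'(x'_i))\in D_{k'}$ actually queried). Moreover, unrolling the definition of the Msponge we have $\mathsf{head}(x_1\Vert\cdots\Vert x_l) = x'_l$, and $\Sp^{hk'\pi k}(m)$ outputs exactly $x'_l \oplus h(z_l)$. Thus $s$ reaches the output $x'_l\oplus h(z_l)$, which equals $y$ whenever $\Sp^{hk'\pi k}(m)=y$. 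Combining with the probability equality from the indifferentiability step yields the claim.

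There is no real obstacle here; the argument is a direct single-input specialization of \Cref{lem:input-path}. The only thing to be careful about is the query count: since only one path (not two) must be reconstructed, the post-processing cost drops from $6l$ to $3l$, giving the stated bound $3q+3l$.
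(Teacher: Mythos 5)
Your proof is correct and follows exactly the approach the paper intends: the paper does not give a formal proof of this lemma, instead remarking that it is "straightforward to adapt the construction of $\mathcal B$ in the proof of \Cref{lem:input-path}" for a single message, and your write-up supplies precisely that single-path specialization, including the correct query count $3q+3l$, the correct identification of $\mathsf{head}(x_1\Vert\cdots\Vert x_l)=x'_l$, and the observation that the path reaches $x'_l\oplus h(z_l)=\Sp^{hk'\pi k}(m)$. The only cosmetic nitpick is that you write $m=m_1\Vert\cdots\Vert m_l$ and "exactly $3l$ queries" where "at most $l$ blocks" and "at most $3l$ queries" would be more precise, but this does not affect the result.
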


When we consider the framework of \Cref{game:ideal-permutation-world}, we will refer to the existence of paths reaching the same output as a collision. We will derive the collision resistance bound for $\Sp^\varphi$ in \Cref{thm:classical-CR} by combining \Cref{lem:input-path} and the bound for the database containing a collision as in \Cref{lem:collision-in-db}.

\begin{definition}
    We say that there is a collision in the database (in the framework of \Cref{game:ideal-permutation-world}) if at least two paths reach the same output.
    \end{definition}

\begin{lemma}[collision in the database] \label{lem:collision-in-db} Given classical query access to $h,k,k'$, the probability of the database containing a collision with bounded block length $l$ after $q$ queries is bounded by $O(q^4n2^{-\min(r,c)}).$ 
    Consequently, for constant success probability, the number of queries required is $\Omega(\sqrt[4]{2^{\min(r,c)}/n})$.
\end{lemma}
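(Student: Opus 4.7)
The plan is to split the event according to whether the database stays good:
\[
\Pr[\text{coll.}] \le \Pr[\text{coll.}\wedge\text{good}] + \Pr[\text{bad at some point}],
\]
bound each summand separately, and observe that the bad-database piece dominates and yields the full $q^4 n\cdot 2^{-\min(r,c)}$ factor. The key external input is the distance bound \Cref{lem:local-bad}; everything else is classical lazy-sampling bookkeeping.

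For the bad-database probability, I would first observe that goodness of $(D_k,D_{k'})$ depends only on $D_k,D_{k'}$ (not on $D_h$), so only queries to $k$ or $k'$ can flip the database from good to bad. Under classical lazy sampling, each such fresh query returns a uniform value in $\bit^c$, and by \Cref{lem:local-bad} at most $O(nt^3+t^3 2^{c-r})$ such images cause the transition after $t$ queries. Summing the resulting per-step bounds $O(nt^3/2^c + t^3/2^r)$ over $t=1,\dots,q$ gives $O(q^4 n/2^c + q^4/2^r) = O(q^4 n\cdot 2^{-\min(r,c)})$.

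For the good-database term, I would use the defining property of a good database that each $z\in\bit^c$ has at most one tail, so distinct paths $(x,z),(x',z')$ must have $z\neq z'$. By \Cref{def:reachable-output} they reach the same output iff $h(z)\oplus h(z') = \mathsf{head}(x)\oplus\mathsf{head}(x')$. Since $h$ is lazily sampled and its outputs are mutually independent and independent of $k,k',\pi$, the principle of deferred decisions applied to the later of the two relevant $h$-outputs shows that any fixed pair of paths collides with probability exactly $2^{-r}$. On a good database the number of distinct paths is at most $|D_h|\le q$, so a union bound over pairs gives $\Pr[\text{coll.}\wedge\text{good}] \le q^2/2^{r+1}$. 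Adding the two bounds yields the claimed $O(q^4 n\cdot 2^{-\min(r,c)})$, and the query lower bound $\Omega(\sqrt[4]{2^{\min(r,c)}/n})$ follows by inversion.

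The only delicate step is the deferred-decisions argument under adaptive interleaving of $h$, $k$, and $k'$ queries, but the independence of $h$-outputs from everything else and from one another makes this a standard classical maneuver. I do not expect a more serious obstacle once the local-property framework from the previous section is in hand; the heavy lifting has already been done in \Cref{lem:local-bad}.
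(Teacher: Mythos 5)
Your high-level decomposition into "good" and "bad" database events, and your treatment of the bad-database term via \Cref{lem:local-bad} with per-step distance $O(nt^3 + t^3 2^{c-r})$ over output range $2^c$, matches the paper's proof exactly and is the term that dominates. However, your good-case argument has a genuine gap.

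The claim that any fixed pair of paths collides with probability exactly $2^{-r}$ by deferring the later of the two $h$-outputs does not survive the adaptive interleaving you flagged as "delicate." The heads $\mathsf{head}(x)$, $\mathsf{head}(x')$ are the $k'$-inputs $x'_1, x'_2$, which the adversary \emph{chooses}, and $h$ is independent of $k, k', \pi$. So an adversary can query $h$ on many $z$-values up front, learn $h(z_1), h(z_2)$, and then pick $x'_2 = x'_1 \oplus h(z_1) \oplus h(z_2)$ so the algebraic condition $h(z)\oplus h(z')=\mathsf{head}(x)\oplus\mathsf{head}(x')$ holds \emph{for free}. The residual randomness then comes from whether a $k'$-query lands a newly completed tail onto one of the pre-queried $z$-values in $D_h$, which is a $2^{-c}$-type event, not a $2^{-r}$-type event. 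The paper handles this by a case analysis on which of $k$, $k'$, $h$ is queried last (its $k'$-query case carries an explicit $O(t\cdot 2^{-c})$ term), whereas your union bound over realized paths with weight $2^{-r}$ would be too small when $c \ll r$. Your final answer survives because both $O(q^2 2^{-r})$ and the correct $O(q^2 2^{-c} + q^3 n 2^{-\min(r,c)})$ good-case bound are subsumed by the $O(q^4 n 2^{-\min(r,c)})$ bad-database term, but the good-case step as written is not correct and you would need the per-query-type case split to make it rigorous.

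Also, be careful with the claim "the number of distinct paths is at most $|D_h|\le q$": this is correct (uniqueness of tails per $z$ plus the requirement $z\in D_h$), but it is not what caps the good-case collision probability, because a single $k'$ query can \emph{create} a colliding path in one step without any new $h$ query, and that is exactly the event your per-pair $2^{-r}$ accounting misses.
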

\begin{proof}
Let $D_{k}, D_{k'}, D_{h}$, denoted by $D$, be a set of databases recording queries to functions $k,k',h$, respectively. Define the event \begin{align*}
        \CC_t :=& \exists \text{ a collision in } D^t
    \end{align*} where $D^t$ is a database such that the total number of defined input/output points is at most $t$. Then, $\Pr[\CC_q]$ denotes the probability of a database containing a collision after $q$ queries.
 We separate the event into the cases that database is \textit{good}, as in \Cref{def:good-1}, and \textit{bad} as follows \begin{align}
        \Pr[\CC_q] &= \Pr[\CC_q \cap D^{q-1} \text{ good}] + \Pr[\CC_q\cap D^{q-1} \text{ bad}] \label{good-bad}.
    \end{align}
We will work with good databases after bounding the second term in \Cref{good-bad} by \begin{align*}
       \Pr[\CC_q\cap D^{q-1} \text{ bad}]
       &\leq \Pr[D^{q} \text{ bad}] \\ 
       &\leq \sum_{t=1}^q \Pr[D^t \text{ bad}|D^{t-1} \text{ good}]+\Pr[D^t \text{ bad}|D^{t-1} \text{ good}]\\
       &\leq O(nq^4 2^{-\min(r,c)})
     \end{align*}
     where the last inequality follows from Lemma \ref{lem:bad-transition-bound}.
Observe that $D^t$ good implies $D^{t-1}$ good and $\CC_t$ implies $\CC_{t-1}$. Using these, we can bound the first term in \Cref{good-bad} by
    \begin{align*}
    \Pr[\CC_q \cap D^{q-1} \text{ good}] &=\Pr[\CC_{q-1} \cap \CC_q \cap D^{q-1} \text{ good} ] + \Pr[\neg \CC_{q-1} \cap \CC_q \cap D^{q-1} \text{ good}]\\
    &\leq \Pr[\CC_{q-1} \cap D^{q-2} \text{ good}] + \Pr[\CC_q|  \neg \CC_{q-1}, D^{q-1} \text{ good}]\\
     &\leq \sum_{t=1}^q \Pr[\CC_{t} | \neg \CC_{t-1},D^{t-1} \text{ good}]\end{align*} 
    where the second inequality followed from the observation and the third from repeatedly applying the procedure in the first two lines.
    
Notice that adding a collision to the database with a single query, assuming there is none, requires adding a new path. This can be done by either
\begin{itemize}
    \item querying to $k$ or $k'$ to create a new tail $x\in(\bit^r)^*$ for some $z\in \bit^c$, or
    \item querying to $h$ on $z\in \bit^c$ with $\mathsf{tail}(z)\not=\emptyset$.
\end{itemize}
These are just necessary, but not sufficient, conditions. In the $t$-th query, we have three options: the adversary can query to the function $k,k'$, or $h$. We consider these events separately; the event "$f$ query" means the queried function is $f\in\{k,k',h \}$.
\begin{itemize}
    \item [(1)] A query to $k$ on input $x$. If a collision occurs after this query, a new path, in particular a new tail, must have been created in a round starting with the state $x||z$ where $z\in\bit^c$ and $\mathsf{tail}(z)\not=\emptyset$. This means the $k'$ query in this round must be in the database, i.e., $\pi(x||k(x)\oplus z)|_0^r\in D_{k'}^{t-1}$. Then, with the union bound and the fact that $|\mathsf{tail}(z)|\leq O(t)$ in a good database, we get 
    \begin{align*}
    \Pr[\CC_{t} |k \text{ query}, \neg E_{t-1},D^{t-1}
    \text{ good}] & \leq \sum_{z:\mathsf{tail}(z)\not=\emptyset}  \sum_{x'\in D_{k'}^{t-1}} \underset{k(x)}{\Pr}[\pi(x||k(x)\oplus z)|_0^r=x'] \\
    & \leq \sum_{z:\mathsf{tail}(z)\not=\emptyset}  \sum_{x'\in D_{k'}^{t-1}} N_{x_i,z} 2^{-c} \\
    & \leq O(t^2 2^{-r} + t^2n2^{-c}) 
    \end{align*} where $N_{x_i,z}$ is the number of values $k(x)$ such that $\pi(x||k(x)\oplus z)|_0^r=x'$.

    \item [(2)] A query to $k'$ on input $x'$. Similar to the previous one, creating a collision requires creating a new tail, which must have been created by querying the first entry of an element in the set of intermediate pairs $\mathsf{IP}$. Since all the first entries in $\mathsf{IP}$ are distinct, we can extend only one tail. Say the input to be queried is $\pi(x||k(x)\oplus z)|_0^r=x'$ for some $z\in\bit^c$ with $\mathsf{tail}(z)\not=\emptyset$ and $(x,k(x))\in D_k^{t-1}$. After this query, the value $\bar{z}=\pi(x||k(x)\oplus z)|_r^n \oplus k'(x')$ will have a tail. We can separate the collision event into disjoint events while skipping writing the conditions on the database for simplicity:
    \begin{align*}
        \Pr[\CC_t]= \Pr[\CC_t \cap \mathsf{tail}(\bar{z})\not = \emptyset \text{ in } D^{t-1}] + \Pr[\CC_t \cap \mathsf{tail}(\bar{z})= \emptyset \text{ in } D^{t-1}]
    \end{align*} If $\bar{z}$ has a tail before this query, then collision is inevitable. So, \begin{align*}
     \Pr[\CC_t \cap \mathsf{tail}(\bar{z})\not = \emptyset \text{ in } D^{t-1}] =\Pr[ \mathsf{tail}(\bar{z})\not = \emptyset \text{ in } D^{t-1}]\leq O(t2^{-c})
    \end{align*} since $\bar{z}$ is uniformly random in 
$\bit^c$ and number of $z$ with a tail is $O(t)$ in $D^{t-1}$. If $\bar{z}$ does not have a tail before this query, then having a collision implies \begin{enumerate}
    \item \label{k'-col-case1} $\bar{z}\in D_h^{t-1}$ and $x'\oplus h(\bar{z})$ collides with a reachable output via $D^{t-1}$, or
    \item \label{k'-col-case2} $k'$-query following a round ending with $\bar{z}$ is in $D_{k'}^{t-1}$.
\end{enumerate}
Case \ref{k'-col-case1} is upper bounded by $O(t2^{-c})$ due to the uniformity of $\bar{z}$. Case \ref{k'-col-case2} can be upper bounded as \begin{align*}
     \sum_{\bar{x}\in D_k^{t-1}} \sum_{\bar{x}'\in D_{k'}^{t-1}} \underset{k'(x')}{\Pr}[\pi(\bar{x}||k(\bar{x}) \oplus \bar{z})|_0^r=\bar{x}'] &\leq \sum_{\bar{x}\in D_k^{t-1}} \sum_{\bar{x}'\in D_{k'}^{t-1}} N_{\bar{x}',\bar{z}}2^{-c}\\
     &\leq O(t^2 2^{-r} + t^2n2^{-c}) 
\end{align*} 
where $N_{\bar{x}',\bar{z}}$ is the number of values $k'(x')$ such that $\pi(\bar{x}||k(\bar{x})\oplus \bar{z})|_0^r=\bar{x}'$.
\item [(3)] A query to $h$ on input $z$. Let $R$ denote the set of reachable outputs via $D^{t-1}$, then $|R|\leq O(t)$ since $D^{t-1}$ is good. A collision can be formed if the query $z$ has a tail $x$ in $D^{t-1}$ and $x,z$ constructs a path for an element in $R$. So,
    \begin{align*}
    \Pr[\CC_{t} |h \text{ query}, \neg E_{t-1},D^{t-1} \text{ good}] & \leq \sum_{y\in R} \underset{h(z)}{\Pr}[y=x'_z \oplus h(z)] \\
    & \leq O(t 2^{-c})
    \end{align*} where $x'_z$ head of the tail of $z$.    
\end{itemize}
\end{proof}
This is our main theorem for the classical collision resistance of the sponge construction. This will be the main ingredient while proving the quantum collision resistance using quantum transition capacities.

The following result is obtained immediately from the proof of \Cref{lem:collision-in-db} since the bound for $k,k'$ query cases are argued by the bound for creating a new tail, which is also required to create a path reaching some output conditioned on there is no path for this image. Also, the bound for the $h$ query case is argued by querying a capacity value with a tail, which is again necessary to create a path reaching some output. The success probability of the $h$ query case is less than the one in \Cref{lem:collision-in-db}. 
\begin{corollary}\label{cor:preimage-in-db} Given $y\sim \bit^r$ and classical query access to $h,k,k'$, the probability of the database containing $m$ with bounded block length $l$ such that $y=\Sp^\varphi(m)$ after $q$ queries is bounded by $O(q^4n2^{-\min(r,c)}).$ 
    Consequently, for constant success probability, the number of queries required is $\Omega(\sqrt[4]{2^{\min(r,c)}/n})$.
\end{corollary}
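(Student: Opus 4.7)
The plan is to adapt the proof of \Cref{lem:collision-in-db} essentially line by line, replacing the collision event $\CC_t$ by the event $\mathcal R_t$ that the database after $t$ queries contains a path reaching the fixed target $y$. The decomposition into good and bad database cases is identical: write
\begin{align*}
    \Pr[\mathcal R_q] = \Pr[\mathcal R_q \cap D^{q-1}\text{ good}] + \Pr[\mathcal R_q \cap D^{q-1}\text{ bad}],
\end{align*}
and bound the bad-database term by $\sum_{t=1}^q \Pr[D^t\text{ bad}\mid D^{t-1}\text{ good}] \leq O(nq^4 2^{-\min(r,c)})$ using \Cref{lem:bad-transition-bound}. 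For the good-database term, use the same telescoping argument as in \Cref{lem:collision-in-db} to reduce to bounding $\sum_{t=1}^q \Pr[\mathcal R_t \mid \neg \mathcal R_{t-1}, D^{t-1}\text{ good}]$.

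Next I would split by the type of the $t$-th query. For queries to $k$ or $k'$, observe that on a good database the set of reachable outputs can only grow if the query creates a new tail for some $z \in \bit^c$; this is precisely the condition analyzed in cases (1) and (2) of the proof of \Cref{lem:collision-in-db}. Therefore those exact same bounds $O(t^2 2^{-r} + t^2 n 2^{-c})$ apply here, contributing $O(q^3 2^{-r} + q^3 n 2^{-c})$ after summing over $t$. For a query to $h$ on input $z$, a new reachable output can only be created if $\mathsf{tail}(z) \neq \emptyset$, and the created output is $\mathsf{head}(\mathsf{tail}_1(z)) \oplus h(z)$. Since the target $y$ is fixed and $h(z)$ is uniformly random in $\bit^r$, the probability that this particular output equals $y$ is at most $2^{-r}$, regardless of how many other $z$'s have tails. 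Summing over queries gives $O(q 2^{-r})$, which is strictly better than the analogous $h$-query bound in \Cref{lem:collision-in-db}.

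Combining all contributions, the good-database term is bounded by $O(q^3 2^{-r} + q^3 n 2^{-c} + q 2^{-r})$, and the total is dominated by the bad-database term, giving the claimed $O(q^4 n 2^{-\min(r,c)})$. Converting this to a query lower bound gives $\Omega(\sqrt[4]{2^{\min(r,c)}/n})$ queries for constant success probability. To obtain the preimage-resistance statement for $\Sp^\varphi$ itself (rather than for the reformulated oracle game), I would invoke \Cref{lem:input-path-for-OW}: any classical $q$-query preimage-finding adversary against $\Sp^\varphi$ induces a $(3q+3l)$-query adversary in the reformulated model whose output corresponds to a path in its database reaching $y$, which is exactly the event just bounded.

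The main obstacle is simply bookkeeping: one must verify that the ``necessary condition'' arguments from the collision proof go through unchanged, i.e., that in a good database, a single $k$ or $k'$ query cannot add any new reachable output to the database without creating a new tail, and that an $h$ query can only add a new reachable output by hitting a $z$ with a preexisting tail. Both facts follow directly from \Cref{def:reachable-output} together with the uniqueness of tails and intermediate-pair prefixes on good databases (\Cref{def:good-1}), so no genuinely new combinatorial content is required beyond what already appears in \Cref{sec:oracle-proofs}.
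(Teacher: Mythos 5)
Your proposal is correct and mirrors the paper's reasoning: the paper presents this corollary as following directly from the proof of \Cref{lem:collision-in-db}, observing that the $k$/$k'$-query cases are bounded by the probability of creating a new tail (unchanged from the collision case), while the $h$-query case only requires hitting the single fixed target $y$ rather than one of $O(t)$ reachable outputs (hence the $O(2^{-r})$ per-query bound you derive, which is at least as good), so the bad-database term dominates either way. The only small thing worth flagging is that your final paragraph invoking \Cref{lem:input-path-for-OW} is not needed for this particular statement, since \Cref{cor:preimage-in-db} is already phrased in the $h,k,k'$-oracle (database) model — that reduction step belongs to \Cref{cor:classical-OW} instead.
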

\begin{theorem}[classical collision resistance]\label{thm:classical-CR}
Given classical query access to $\varphi$ and $\varphi^{-1}$, the probability of $q$-query adversary outputting a collision, i.e., an $m, m' \in ({\bit}^r)^{\leq l}$ for some $l\le q$ such that $m \neq m'$ and $\Sp^{\varphi}(m) = \Sp^\varphi(m')$  is bounded by
 \begin{align*}
        \underset{ m,m'\leftarrow \mathcal{A}^{\varphi,\varphi^{-1}}}{\Pr}[\Sp^{\varphi}(m) = \Sp^\varphi(m')]\le O(q^4n2^{-\min(r,c)}).
   \end{align*}
    Consequently, for constant success probability, the number of queries required is $\Omega(\sqrt[4]{2^{\min(r,c)}/n})$.
\end{theorem}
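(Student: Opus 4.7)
The plan is to chain \cref{lem:input-path} with \cref{lem:collision-in-db} in an essentially direct reduction. First I would apply \cref{lem:input-path} to convert the given $q$-query adversary $\mathcal{A}^{\varphi,\varphi^{-1}}$ into a $(3q+6l)$-query adversary $\mathcal{B}^{h,k,k'}(\pi)$ operating in the framework of \cref{game:ideal-permutation-world}. The lemma guarantees that the probability $\mathcal{B}$ outputs two paths reaching the same sponge output is exactly equal to the probability $\mathcal{A}$ outputs a sponge collision. Then, by \cref{remark:outputting-col-containing-col}, this probability also equals the probability that $\mathcal{B}$'s joint database $(D_h, D_k, D_{k'})$ contains a collision at the moment of termination.

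Next, I would invoke \cref{lem:collision-in-db}, which bounds the database-collision probability by $O(q^4 n 2^{-\min(r,c)})$ in the number of classical queries made. Since $l \leq q$ by hypothesis, $\mathcal{B}$ makes at most $3q + 6l \leq 9q$ queries to $h, k, k'$, so substitution gives
\begin{align*}
    \Pr_{m, m' \leftarrow \mathcal{A}^{\varphi, \varphi^{-1}}}\!\!\left[\Sp^\varphi(m) = \Sp^\varphi(m')\right] \leq O\!\left((9q)^4 n\, 2^{-\min(r,c)}\right) = O\!\left(q^4 n\, 2^{-\min(r,c)}\right),
\end{align*}
which is the desired bound. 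The query-complexity consequence then follows by setting this quantity to a constant and solving for $q$, yielding $q = \Omega(\sqrt[4]{2^{\min(r,c)}/n})$.

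Honestly there is no significant obstacle remaining; the theorem is essentially a packaging step that glues together the two preceding lemmas. The only minor point of care is that \cref{lem:collision-in-db} is derived under the assumption that the permutation $\pi$ appearing in the decomposition $\varphi = \omega_h \circ \tau_{k'} \circ \pi \circ \sigma_k$ is good (in the sense of \cref{def:perm-good}), whereas the theorem averages over uniformly random $\varphi$. This is handled by \cref{lem:X-pairs-uniform-tail-bound}, which says that the fraction of bad $\pi$ is $O(2^{-n})$; their contribution is dominated by the main term and absorbed into the big-$O$.
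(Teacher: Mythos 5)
Your proposal is correct and follows essentially the same route as the paper's own proof: invoke \cref{lem:alternate-perm} to pass to \cref{game:ideal-permutation-world}, apply \cref{lem:input-path} to obtain $\mathcal{B}^{h,k,k'}(\pi)$ with $3q+6l\le 9q$ queries, then bound the database-collision probability via \cref{lem:collision-in-db}. Your explicit handling of the good/bad $\pi$ contribution (absorbed into the big-$O$ via \cref{lem:X-pairs-uniform-tail-bound}) and the explicit appeal to \cref{remark:outputting-col-containing-col} are details the paper's terse proof leaves implicit, but they match the paper's intent exactly.
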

\begin{proof}
We will consider adversaries in the framework of \Cref{game:ideal-permutation-world}, which by \Cref{lem:alternate-perm} suffices. By \Cref{lem:input-path}, for each adversary $\mathcal{A}^{\varphi,\varphi^{-1}}$, there exists an adversary $\mathcal{B}^{h,k,k'}$ such that the probabilities of $\mathcal{A}$ outputting a colliding inputs with $q$ queries and $\mathcal{B}$ outputting colliding paths with $3q+6l$ queries are the same. Moreover, as $l\leq q $ by assumption, and using \Cref{lem:collision-in-db}, we get the desired bound. 
\end{proof}
\begin{corollary}\label{cor:classical-OW}
Given $y\sim\bit^r$ and classical query access to $\varphi$ and $\varphi^{-1}$, the probability of $q$-query adversary outputting $m \in ({\bit}^r)^{\leq l}$ such that $y=\Sp^\varphi(m)$ and some $l\leq q$, namely \begin{align*}
         \Pr_{\substack{y \sim \bit^r \\ m'\leftarrow \mathcal{A}^{\varphi,\varphi^{-1}}}}[y =Sp^\varphi(m')] 
    \end{align*}
    is bounded by $O(q^4n2^{-\min(r,c)}).$ 
    Consequently, for constant success probability, the number of queries required is $\Omega(\sqrt[4]{2^{\min(r,c)}/n})$.
\end{corollary}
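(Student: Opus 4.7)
The plan is to follow the blueprint of the classical collision resistance proof (\Cref{thm:classical-CR}) almost verbatim, swapping in the preimage-flavored ingredients already established in the paper. First, invoke \Cref{lem:alternate-perm} to replace the real permutation oracle $\varphi, \varphi^{-1}$ by the alternative model of \Cref{game:ideal-permutation-world}, where the adversary has oracles for the random functions $h, k, k'$ and for a permutation $\pi$ (the target $y \sim \bit^r$ is sampled independently and is passed as input to the adversary, so it commutes with the indifferentiability reduction). This costs nothing: the joint distributions of the oracles are identical in the two worlds.

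Second, apply \Cref{lem:input-path-for-OW} to the adversary $\mathcal{A}^{\varphi, \varphi^{-1}}$. That lemma supplies a $(3q + 3l)$-query algorithm $\mathcal{B}^{h,k,k'}(\pi)$ which, given the same $y$, outputs a path $s$ that reaches $y$ for $\Sp^{h k' \pi k}$ with exactly the success probability of $\mathcal{A}$. Since $s$ is constructed by $\mathcal{B}$ via explicit classical queries to $h, k, k'$, the relevant input--output pairs making up the path are \emph{present in the database} $D = (D_k, D_{k'}, D_h)$ when $\mathcal{B}$ halts. Hence the event ``$s$ reaches $y$'' implies that after the run of $\mathcal{B}$ the database contains a preimage of $y$ for $\Sp^\varphi$ in the sense of \Cref{cor:preimage-in-db}.

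Third, invoke \Cref{cor:preimage-in-db} on the $(3q + 3l)$-query classical execution of $\mathcal{B}$. Using the hypothesis $l \leq q$, the total number of queries is $O(q)$, so the bound becomes $O(q^4 n \, 2^{-\min(r,c)})$. Chaining the three steps yields
\begin{align*}
    \Pr_{\substack{y \sim \bit^r \\ m \leftarrow \mathcal{A}^{\varphi,\varphi^{-1}}}}[y = \Sp^\varphi(m)]
    \;=\; \Pr_{\substack{y \sim \bit^r \\ s \leftarrow \mathcal{B}^{h,k,k'}(\pi)}}[s \text{ reaches } y]
    \;\leq\; O(q^4 n\, 2^{-\min(r,c)}),
\end{align*}
and the query lower bound $\Omega(\sqrt[4]{2^{\min(r,c)}/n})$ for constant success probability follows by rearranging.

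There is no real obstacle here; the corollary is a straightforward assembly of lemmas already proved. The only minor care-points are verifying that $\mathcal{B}$'s extra $3l$ path-reconstruction queries are indeed classical (they are, by construction in \Cref{lem:input-path-for-OW}) and that the random $y$ does not interact with the indifferentiability reduction (it doesn't, since $y$ is independent of the oracle and given to both $\mathcal{A}$ and $\mathcal{B}$ as external input). The reliance on the lazy-sampling/database bound of \Cref{cor:preimage-in-db} is what makes the argument classical; the same blueprint will be upgraded to the quantum setting in \Cref{subsec:QQB} using transition capacities.
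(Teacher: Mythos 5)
Your proposal is correct and matches the paper's intended argument exactly: the paper proves \Cref{thm:classical-CR} by combining \Cref{lem:alternate-perm}, \Cref{lem:input-path}, and \Cref{lem:collision-in-db}, and \Cref{cor:classical-OW} is obtained by substituting the preimage analogues \Cref{lem:input-path-for-OW} and \Cref{cor:preimage-in-db} in the same way, just as you do. Your care-points about $\mathcal B$'s path-reconstruction queries being classical and about the independence of $y$ from the oracle are both correct and worth noting, but there is no gap.
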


\subsection{Quantum lower bounds}\label{subsec:QQB}

With the result in \Cref{thm:classical-CR} and the notion of quantum transition capacities, we can prove the quantum collision resistance of the sponge construction.

By definition, the value $[\![ \neg \P \xrightarrow{q} \Q]\!]$ is equal to the square-root of the maximal probability that the internal state of the compressed oracle, when supported only on databases $D \in \neg \P$, is measured to be in a database $D' \in \Q$ after a quantum query algorithm performs $q$ sequential queries. The special case $[\![ \emptyset \xrightarrow{q} \Q]\!]$ is the square-root of the maximal probability of $D$ satisfying $\Q$ when $D$ is obtained by measuring the internal state of the compressed oracle after the interaction with $\mathcal{A}$, maximized over all $q$-query quantum algorithms $\mathcal{A}$, i.e., 
\begin{align} \label{transition-to-P}
[\![ \emptyset \xrightarrow{q} \Q]\!]:= \underset{\mathcal{A}}{\max} \sqrt{\Pr[D\in \Q]}.     
\end{align}
Before deriving the quantum collision resistance proof, we need a bound for having a bad database after the oracle calls since we always separate the good and bad databases. 
\begin{theorem}\label{thm:quantum-bad-bound}
    Define the following predicate 
    \begin{align*}
   \mathrm{BAD}= &\Big\{\left(D\in\mathfrak{D}|\exists z \in \bit^c: |\mathsf{tail}^D(z)| \geq 2 \right)\\
   &\quad \vee\left(\exists (x_1,z_1),(x_2,z_2)\in \mathsf{IP}^D:x_1=x_2,z_1\neq z_2 \right)\Big\} 
   \end{align*}
    where $D$ in the superscripts indicates the database the notion is defined for. Then, the probability $p'$ that algorithm $\mathcal{A}$ creates a bad database (i.e., applying the binary measurement defined by $\Pi^\mathrm{BAD}$ returns $\mathrm{BAD}$) after $q$ queries to the compressed oracle is bounded as
    \begin{align*}
    p' \leq [\![ \emptyset \xrightarrow{q} \mathrm{BAD}]\!]^2 \leq O(q^5n2^{-\min(r,c)}).
    \end{align*}
\end{theorem}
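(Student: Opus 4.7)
The plan is to translate the classical combinatorial bound of Lemma \ref{lem:bad-transition-bound} into a quantum bound using the compressed-oracle transition-capacity machinery from Section \ref{sec:comp-o}. Since the compressed simulation of the three random functions $k, k', h$ is a perfect simulation of their standard oracles, \Cref{transition-to-P} immediately yields $p' \leq [\![\emptyset \xrightarrow{q} \mathrm{BAD}]\!]^2$. Noting that $\mathrm{BAD}$ is exactly the complement of the ``good'' predicate of Definition \ref{def:good-1}, the task reduces to bounding this single quantum transition capacity.

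To bound $[\![\emptyset \xrightarrow{q} \mathrm{BAD}]\!]$, apply the chain rule (Lemma \ref{quant-trans-cap-splitting-queries}) with the size-indexed predicate sequence $\P_t = \mathrm{BAD} \cap \mathfrak{D}_t$, starting from $\P_0 = \emptyset$, to obtain
\[
[\![\emptyset \xrightarrow{q} \mathrm{BAD}]\!] \leq \sum_{t=1}^q [\![\neg \P_{t-1} \rightarrow \P_t]\!].
\]
For each summand, the $t$-th query is classically directed to one of the three oracles $\{k, k', h\}$. Because $\mathrm{BAD}$ depends only on $(D_k, D_{k'})$, the compressed-oracle unitary handling an $h$-query acts on registers disjoint from those used to define $\Pi^{\mathrm{BAD}}$; it therefore commutes with $\Pi^{\mathrm{BAD}}$ and contributes zero to the single-step transition capacity. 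Hence only queries to $k$ or $k'$ need to be analyzed.

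For those, I plan to invoke the local-property bound (Lemma \ref{lem:transition-from-classical}) with the family $\L_g^t$ developed in Section \ref{sec:oracle-proofs}. Lemma \ref{lem:bad-transition-bound} gives $\Delta(\L_g^t) \leq O(nt^3 + 2^{c-r} t^3)$, and since the range of both $k$ and $k'$ is $\bit^c$, this produces
\[
[\![\neg \P_{t-1} \rightarrow \P_t]\!] \leq 4\sqrt{\Delta(\L_g^t)/2^c} = O\!\left(t^{3/2}\sqrt{n \cdot 2^{-\min(r,c)}}\right),
\]
using $n/2^c + 2^{-r} = O(n \cdot 2^{-\min(r,c)})$. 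Summing with $\sum_{t=1}^q t^{3/2} = O(q^{5/2})$ yields $[\![\emptyset \xrightarrow{q} \mathrm{BAD}]\!] = O\!\left(q^{5/2}\sqrt{n \cdot 2^{-\min(r,c)}}\right)$, and squaring gives the claimed $p' \leq O(q^5 n \cdot 2^{-\min(r,c)})$.

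The main technical obstacle I anticipate is the careful bookkeeping for the multi-oracle setting: Section \ref{sec:comp-o} is stated for a single random function, so one must lift the framework to the joint compressed-oracle state of $(D_k, D_{k'}, D_h)$, verify that the $h$-query unitary genuinely commutes with $\Pi^{\mathrm{BAD}}$, and check that the interpolation hypothesis $\P_t \subseteq \L_g^t \subseteq \P_{t-1}$ required by Lemma \ref{lem:transition-from-classical} can be arranged (using the implicit size restriction enforced by the compressed oracle at step $t$, since $\mathrm{BAD}_{t-1} \subseteq \mathrm{BAD}_t$ in the naive sense). Once those formal points are settled, the computation is a direct quantum repackaging of the classical estimates in Section \ref{sec:oracle-proofs}.
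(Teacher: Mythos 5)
Your high-level strategy is right — chain rule (\Cref{quant-trans-cap-splitting-queries}) followed by the local-property bound (\Cref{lem:transition-from-classical}) applied with $\Delta(\L_g^t)$ from \Cref{lem:local-bad}, and you correctly anticipate both the final bound and the need to treat $h$-queries separately — but the specific predicate sequence you propose breaks the interpolation hypothesis, and your parenthetical hedge (``can be arranged'') glosses over an obstruction that is not removable with your choice.

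You set $\P_t = \mathrm{BAD}\cap \mathfrak{D}_t$. This sequence is \emph{increasing} in $t$: $\P_{t-1}\subsetneq\P_t$. But \Cref{lem:transition-from-classical} needs a local family $\L$ with $\P_t\subseteq\L\subseteq\P_{t-1}$; with an increasing sequence these containments can never hold simultaneously (you even note $\mathrm{BAD}_{t-1}\subseteq\mathrm{BAD}_t$ ``in the naive sense,'' which is exactly the wrong direction). The ``implicit size restriction enforced by the compressed oracle'' cannot rescue this: \Cref{lem:transition-from-classical} is an operator-norm statement about a single query step, and the size information must be encoded in the predicates themselves, not imported from outside. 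The paper's fix is to define the sequence so that it is \emph{decreasing}: take $\P_t = \mathrm{BAD}\cup\neg D^t$, where $D^t$ is the predicate ``at most $t$ non-$\bot$ entries.'' Since $\neg D^t$ shrinks as $t$ grows, $\P_{t-1}\supseteq\P_t$, and moreover $\neg\P_{t-1}=\mathrm{GOOD}\cap D^{t-1}$ is precisely ``good databases of size $\le t-1$'' — so the size restriction you wanted appears in the starting condition for free. Each summand is then $[\![\mathrm{GOOD}\cap D^{t-1}\to\mathrm{BAD}\cup\neg D^t]\!]$, and the $\neg D^t$ part can be dropped because a single query adds at most one entry; what remains is $[\![\mathrm{GOOD}\cap D^{t-1}\to\mathrm{BAD}]\!]$, which \emph{does} admit an interpolating local property $\L^{x,D}=\mathrm{BAD}|_{D|_x}$ (the restriction of $\mathrm{BAD}$ to $D|_x$), and $\Delta$ of this is bounded by \Cref{lem:local-bad}. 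With this corrected sequence the rest of your computation — the $\Delta/N$ estimate, the $h$-query observation, the $\sum_t t^{3/2}=O(q^{5/2})$ sum, the final square — goes through and matches the paper.

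Two secondary notational points worth tightening: with your $\P_0=\emptyset$ you get $\neg\P_0=\mathfrak D$, so the chain rule bounds the supremum over \emph{all} initial database states, not the desired $[\![\emptyset\xrightarrow{q}\mathrm{BAD}]\!]$ started from the empty database (the paper's $\P_0$ gives $\neg\P_0=\{\text{empty database}\}$); and for the interpolating family you want the \emph{negation} of $\L_g^t$ (the BAD-recognizing family), not $\L_g^t$ itself, though the distance is the same since $\Delta$ is symmetric under negation.
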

\begin{proof}
By definition of the quantum transition capacity, we have $\sqrt{p'} \leq [\![ \emptyset \xrightarrow{q} \mathrm{BAD}]\!]$. Then, using its properties, we get
\begin{align*}
[\![ \emptyset \xrightarrow{q} \mathrm{BAD}]\!]&\leq [\![ \emptyset \xrightarrow{q} \mathrm{BAD} \cup \neg D^q]\!] &\text{(\Cref{quant-trans-cap-union})}\\
&\leq \sum_{t=1}^q  [\![  \mathrm{GOOD} \cap \D^{t-1}\xrightarrow{} \mathrm{BAD} \cup \neg D^t]\!] &\text{(\Cref{quant-trans-cap-splitting-queries})}\\
&\leq \sum_{t=1}^q  [\![  \mathrm{GOOD} \cap \D^{t-1}\xrightarrow{} \mathrm{BAD} ]\!] &
\end{align*}where the last inequality follows from the fact that a single query cannot increase the size of the database by more than one.

We can bound $[\![ \mathrm{GOOD} \cap D^{t-1}\xrightarrow{} \mathrm{BAD} ]\!]$ using the classical reasoning. By setting $\L^{x,D}:=\mathrm{BAD}|_{D|^x}$ and applying \Cref{lem:transition-from-classical}, we get
 \begin{align*}
 [\![ \mathrm{GOOD} \cap D^{t-1}\xrightarrow{} \mathrm{BAD} ]\!] &\leq 4\sqrt{ \Delta(\L) / N} 
\end{align*} where $\L \coloneqq \bigcup_{x, D} \{\L^{x, D} \, : \, x \in \mathcal X, D \in \mathfrak D\}$.

Notice that  $\L$ corresponds to $\L_g^t$ in \Cref{lem:local-bad}, hence $ \Delta(\L) \leq O(t^3n + t^3 2^{c-r})$ and  \begin{align*}
   \Delta(\L) / N \leq O((t^3n + t^3 2^{c-r})/2^c ) \leq O\left(t^3n2^{-\min(r,c)}\right)
\end{align*} where $2^c$ in denominator appears because the non-trivial queries $x$ for $\L$ are either $k$ or $k'$ queries and their output length is $c$.
\end{proof}

\begin{theorem}\label{thm:quantum-collision}
    Define the following predicate
    $$
    \mathrm{C}= \{D\in\mathfrak{D}|\exists s,s' \in (\bit^{\leq rl}, \bit^c): s\neq s' \text{ and } s,s'\text{ reaches the same output} \},
     $$
     i.e., it is the collection of databases that contains the necessary queries that can construct at least two distinct paths reaching the same output. Then, the probability $p'$ that algorithm $\mathcal{B}$ creates a database in $\mathrm{C}$ (i.e., applying the binary measurement defined by $\Pi^\mathrm{C}$ returns $\mathrm{C}$) after $q$ queries to the compressed oracle is bounded as
    \begin{align*}
    p' \leq [\![ \emptyset \xrightarrow{q} \mathrm{C}]\!]^2 \leq O\left(q^5n2^{-\min(r,c)}\right).
    \end{align*}
\end{theorem}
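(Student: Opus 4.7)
The plan is to mirror the proof of \Cref{thm:quantum-bad-bound}, with one additional bifurcation between good and bad databases inside the summation. Using \Cref{quant-trans-cap-union,quant-trans-cap-splitting-queries} I would first obtain
\begin{align*}
    [\![\emptyset \xrightarrow{q} \mathrm{C}]\!]
    &\leq [\![\emptyset \xrightarrow{q} \mathrm{C}\cup \mathrm{BAD}]\!]\\
    &\leq \sum_{t=1}^{q}\Bigl([\![\neg \mathrm{C}_{t-1} \cap \mathrm{GOOD} \cap D^{t-1} \xrightarrow{} \mathrm{C}_t]\!] + [\![\mathrm{GOOD} \cap D^{t-1} \xrightarrow{} \mathrm{BAD}_t]\!]\Bigr),
\end{align*}
using that both $\mathrm{C}$ and $\mathrm{BAD}$ are monotone under database inclusion. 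The second summand has already been controlled in the proof of \Cref{thm:quantum-bad-bound}: the per-query bound $[\![\mathrm{GOOD} \cap D^{t-1} \xrightarrow{} \mathrm{BAD}_t]\!] \le O(\sqrt{t^3 n 2^{-\min(r,c)}})$ sums to $O(q^{5/2}\sqrt{n 2^{-\min(r,c)}})$.

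For the first summand I would apply \Cref{lem:transition-from-classical} with a new local family $\L_{\mathrm{C}}^{x,D,t}$ defined as follows: for each input $x$ and each good collision-free database $D$ with $|D|<t$, let $\L_{\mathrm{C}}^{x,D,t}$ be the set of extensions $D'\in D|_x$ that already lie in $\mathrm{C}$. This is easily seen to be a local family that interpolates the transition $\neg \mathrm{C}_{t-1} \cap \mathrm{GOOD} \cap D^{t-1} \to \mathrm{C}_t$, and its distance at $(x,D)$ is precisely the number of images $y$ such that $D[x\mapsto y]$ contains two distinct paths reaching the same output. The classical case analysis in \Cref{lem:collision-in-db} enumerates exactly this quantity: $k$- and $k'$-queries produce a fresh collision only via bad-attach- or bad-completion-type events (controlled by \Cref{lem:few-bad-attach,lem:few-bad-comps}), giving $O(t^2 n + t^2 2^{c-r})$ flipping images in each case; $h$-queries produce a collision only when the query value already has a tail and $h(z)$ is chosen to hit one of the $O(t)$ reachable outputs.

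Dividing by $N=2^c$ for $k, k'$-queries and $N=2^r$ for $h$-queries and taking the maximum, \Cref{lem:transition-from-classical} yields $[\![\neg \mathrm{C}_{t-1} \cap \mathrm{GOOD} \cap D^{t-1} \xrightarrow{} \mathrm{C}_t]\!] \le O(t\sqrt{n 2^{-\min(r,c)}})$. Summing over $t$ gives $O(q^2\sqrt{n 2^{-\min(r,c)}})$, which is dominated by the bad-transition contribution $O(q^{5/2}\sqrt{n 2^{-\min(r,c)}})$. Squaring the total produces $[\![\emptyset \xrightarrow{q} \mathrm{C}]\!]^2 \le O(q^5 n 2^{-\min(r,c)})$ as required.

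The main obstacle is the bookkeeping for the $k'$-query case of the local distance. The classical argument in \Cref{lem:collision-in-db} splits on whether the newly induced intermediate capacity $\bar z$ already has a tail in the pre-query database and then further splits into several disjoint subcases, each of which must be re-interpreted as an image count and unioned together without double counting, while verifying the monotonicity condition on $\bot$ required by a local property. Once this bookkeeping is set up, the rest is a direct instance of the \cite{chung2021compressed} framework and requires no new ideas beyond those already used in \Cref{thm:quantum-bad-bound}.
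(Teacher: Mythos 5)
Your proposal follows the paper's own proof almost exactly: the same union-bound split of $[\![\emptyset\xrightarrow{q}\mathrm{C}]\!]$ into a $\mathrm{C}$-transition term and a $\mathrm{BAD}$-transition term, the same invocation of \Cref{lem:transition-from-classical} with $\L^{x,D}=\mathrm{C}|_{D|_x}$ (relying on monotonicity of $\mathrm{C}$ under database extension), the same reuse of \Cref{thm:quantum-bad-bound} for the $\mathrm{BAD}$ part, and the same final $O(q^{5/2}\sqrt{n2^{-\min(r,c)}})$ sum before squaring. The only cosmetic differences are that you get a marginally tighter $O(t^2)$ (rather than $O(t^3)$) distance estimate for the $\mathrm{C}$-transition by going through the $\lvert\{z:\mathsf{tail}(z)\neq\emptyset\}\rvert\cdot\lvert D_{k'}\rvert$ count directly rather than the looser \Cref{lem:few-bad-attach,lem:few-bad-comps} bounds you cite — a slight mis-attribution, but the bound it yields is dominated by the $\mathrm{BAD}$ contribution in either case, so the final $O(q^5 n 2^{-\min(r,c)})$ is unchanged.
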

\begin{proof}
The maximality in the definition of the quantum transition capacity implies $$\sqrt{p'}\leq [\![ \emptyset \xrightarrow{q} \mathrm{C}]\!].$$ Next, we will use the properties of the quantum transition capacity to upper bound this as follows 
\begin{align*}
[\![ \emptyset \xrightarrow{q} \mathrm{C}]\!]&\leq [\![ \emptyset\xrightarrow{q} \mathrm{C} \cup \mathrm{BAD} \cup \neg D^{q}]\!]  &\text{(\Cref{quant-trans-cap-union})}\\
    &\leq \sum_{t=1}^q  [\![ \neg \P_{t-1} \xrightarrow{} \P_t]\!] &\text{(\Cref{quant-trans-cap-splitting-queries})}   
\end{align*} where $\P_0=\neg \bot$ and $\P_t=  \neg D^{t} \cup\mathrm{C} \cup \mathrm{BAD} $ for $t \in \{1, \dots, q\}$. Moreover, to make it more similar to the classical proofs for finding a collision in the sponge and having a bad database, we will do 
 \begin{align*}
  [\![ \neg \P_{t-1} \xrightarrow{} \P_t]\!] &=[\![ \neg \P_{t-1} \xrightarrow{} \mathrm{C} \cup \mathrm{BAD} ]\!]\\   
  &\leq [\![ \neg \P_{t-1} \xrightarrow{} \mathrm{C} ]\!] + [\![ \neg \P_{t-1} \xrightarrow{} \mathrm{BAD}]\!]\\
  &\leq [\![ \neg \P_{t-1} \xrightarrow{} \mathrm{C} ]\!] + [\![ D^{t-1} \cap \mathrm{GOOD} \xrightarrow{} \mathrm{BAD}]\!]
 \end{align*} where the first line follows from the fact that a single query cannot increase the size of the database by more than one and the rest follows from \Cref{quant-trans-cap-union}. 
 
 The latter term is already bounded as in \Cref{thm:quantum-bad-bound}. The former one, namely $[\![ D^{t-1}\cap \neg \mathrm{C} \cap \mathrm{GOOD} \xrightarrow{} \mathrm{C} ]\!] $,  will be bounded using the classical reasoning. Setting $\L^{x,D}:= \mathrm{C}|_{D|^x}$, applying \Cref{lem:transition-from-classical} and \Cref{thm:classical-CR}, we get
  \begin{align*}
[\![ D^{t-1}\cap \neg \mathrm{C} \cap \mathrm{GOOD} \xrightarrow{} \mathrm{C} ]\!] &\leq \sqrt{O\left(t^3n2^{-\min(r,c)}\right)}. 
 \end{align*}    
Combining all these gives \begin{align*}
    \sqrt{p'} &\leq [\![ \emptyset \xrightarrow{q} \mathrm{C}]\\
    &\leq \sum_{t=1}^q [\![ D^{t-1}\cap \neg \mathrm{C} \cap \mathrm{GOOD} \xrightarrow{} \mathrm{C} ]\!] + [\![ D^{t-1} \cap \mathrm{GOOD} \xrightarrow{} \mathrm{BAD}]\!]\\
    & \leq \sum_{t=1}^q \sqrt{O\left(t^3n2^{-\min(r,c)}\right)}\\
    &\leq q\sqrt{O\left(q^3n2^{-\min(r,c)}\right)}.
\end{align*}
\end{proof}
\begin{corollary}\label{cor:quantum-preimage}
    For $y\sim \bit^r$, define the following predicate 
    $$
    \P_y= \{D\in\mathfrak{D}|\exists s \in (\bit^{\leq rl}, \bit^c): s\text{ reaches the output } y \},
     $$
     i.e., the set of databases that contains the necessary queries that can construct a path that reaches $y$. Then, the probability $p'$ that algorithm $\mathcal{B}$ creates a database in $\P_y$ (i.e., applying the binary measurement defined by $\Pi^\mathrm{P}_y$ returns $\P_y$) after $q$ queries to the compressed oracle is bounded as
    \begin{align*}
    p' \leq [\![ \emptyset \xrightarrow{q} \P_y]\!]^2 \leq O\left(q^5n2^{-\min(r,c)}\right).
    \end{align*}
\end{corollary}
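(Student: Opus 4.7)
The plan is to mirror the structure of Theorem 5.9 essentially verbatim, replacing the collision predicate $\mathrm{C}$ by the preimage predicate $\P_y$ and invoking \Cref{cor:preimage-in-db} in place of \Cref{thm:classical-CR} as the classical ingredient. By definition of the quantum transition capacity, $\sqrt{p'} \leq [\![ \emptyset \xrightarrow{q} \P_y ]\!]$. The first step is to enlarge the target predicate by $\mathrm{BAD} \cup \neg \mathfrak{D}^q$ via \Cref{quant-trans-cap-union}, and then decompose into $q$ sequential single-query transitions via \Cref{quant-trans-cap-splitting-queries}. This yields a telescoping sum over intermediate predicates $\P_t := \neg \mathfrak{D}^t \cup \P_y \cup \mathrm{BAD}$. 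Each single-query capacity then splits, via sub-additivity, into a transition into $\P_y$ (from $\mathfrak{D}^{t-1}\cap \neg \P_y\cap \mathrm{GOOD}$) and a transition into $\mathrm{BAD}$ (from $\mathfrak{D}^{t-1}\cap \mathrm{GOOD}$). The latter term is already bounded in \Cref{thm:quantum-bad-bound}.

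For the preimage transition, I would apply the local-property bound of \Cref{lem:transition-from-classical} with $\L^{x, D} := \P_y|_{D|^x}$. The distance analysis proceeds by query type. For a $k$ or $k'$ query, producing a new path rooted at the all-zero capacity that terminates at the target $y$ requires creating a new tail, which in turn requires the resulting intermediate pair's prefix to appear in $D_{k'}$ (i.e., the bad-attach / bad-completion events of \Cref{lem:few-bad-attach,lem:few-bad-comps}), contributing at most $O(t^3 n + t^3 2^{c-r})$ offending images. For an $h$-query on $z$, by \Cref{def:reachable-output} the query contributes to a path reaching $y$ only if $z$ already has a tail in the good database (tails are independent of $D_h$), and then the image must equal the single specific value $y \oplus \mathsf{head}(z)$, so the distance is $O(1)$. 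Dividing by $N \in \{2^c, 2^r\}$ yields $\Delta(\L)/N \leq O(t^3 n 2^{-\min(r,c)})$ in every case.

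Summing the resulting per-step bound $4\sqrt{t^3 n 2^{-\min(r,c)}}$ over $t \in \{1,\ldots,q\}$ for the $\P_y$ transitions, and adding the $\mathrm{BAD}$ contribution from \Cref{thm:quantum-bad-bound}, gives
\begin{align*}
\sqrt{p'} \leq [\![ \emptyset \xrightarrow{q} \P_y ]\!] \leq q\sqrt{O\bigl(q^3 n 2^{-\min(r,c)}\bigr)},
\end{align*}
and squaring delivers the claimed $p' \leq O(q^5 n 2^{-\min(r,c)})$ uniformly over $y \in \bit^r$.

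I do not expect any real obstacle, since every ingredient is already available: the framework machinery (\Cref{quant-trans-cap-splitting-queries,quant-trans-cap-union,lem:transition-from-classical}), the combinatorial bounds (\Cref{lem:few-bad-attach,lem:few-bad-comps,lem:bad-transition-bound}), and the bad-database capacity bound (\Cref{thm:quantum-bad-bound}) all plug in directly. The only new element relative to the collision proof is the $h$-query case, which was invisible when bounding transitions into $\mathrm{BAD}$ (since $h$ is not referenced in tails or intermediate pairs) but appears here because reachability of $y$ is determined by the final $h$-layer. As observed above, this case is trivial: a single $r$-bit image completes a given tail to the target, so it cannot dominate the $k,k'$ contributions. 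Hence the bottleneck is purely the same bad-attach analysis that governs Theorem 5.9, and the final bound is of the same order.
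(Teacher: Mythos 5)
Your proposal is correct and mirrors the paper's approach exactly; the paper in fact omits the detailed derivation, stating only that it is ``very similar to collision finding'' with a slight difference in the $h$-query case, and your writeup fills in precisely that sketch (transition-capacity decomposition into $\P_y$ and $\mathrm{BAD}$ contributions, local-property distance analysis per query type, and the observation that the $\mathrm{BAD}$ term dominates, yielding $O(q^5 n 2^{-\min(r,c)})$). One minor imprecision worth flagging: for a $k'$-query on $x'$ with $(x',\cdot)\in\mathsf{IP}$, \emph{every} image $y'$ creates a new tail $\bar z=z_i\oplus y'$---this is not by itself a bad completion. What is rare is that the new tail yields a path reaching $y$, which requires either $\bar z\in D_h$ with $h(\bar z)=x'\oplus y$ (at most $O(t)$ bad images) or a chaining bad-completion event ($O(t^3 n + t^3 2^{c-r})$ bad images); the combined count is still $O(t^3 n + t^3 2^{c-r})$, so your bound stands unchanged.
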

Now, we can give the corollary that bounds the success probability of an actual sponge collision finding adversary, i.e., one that outputs $m,m'\in\left(\{0,1\}^r\right)^{\leq l}$ such that $\Sp^\varphi(m)=\Sp^\varphi(m')$.
\begin{corollary}[quantum collision resistance] \label{cor:main-sponge-quantum CR}
The probability that a quantum algorithm $\mathcal{A}$ with quantum query access to a random permutation $\varphi \in S_{\{0,1\}^{r+c}}$ and its inverse, making at most a total of $q$ queries, returns $m,m'\in\left(\{0,1\}^r\right)^{\le l}$ for $l\le q$ 
such that $m\neq m'$ and  $\Sp^\varphi(m)=\Sp^\varphi(m')$, can be upper bounded as
\begin{align*}
\Pr_{\substack{\varphi \sim S_{\{0,1\}^{r+c}} \\ m,m'\leftarrow \mathcal{A}^{\varphi,\varphi^{-1}}}}
    [\Sp^\varphi(m)=\Sp^\varphi(m')]\leq O\left(q^5n2^{-\min(r,c)}\right).
\end{align*}
\end{corollary}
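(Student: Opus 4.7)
The plan is to derive this corollary by chaining together three results established earlier in the paper: the oracle-view reduction of \Cref{lem:alternate-perm}, the collision-to-path conversion of \Cref{lem:input-path}, and the database-level quantum bound of \Cref{thm:quantum-collision}. First, given any $q$-query quantum adversary $\mathcal{A}$ against $\Sp^{\varphi}$ in \Cref{game:real-permutation-world}, I would invoke \Cref{lem:alternate-perm} so that it suffices to analyze $\mathcal{A}$ in the world of \Cref{game:ideal-permutation-world}, where $\varphi$ is presented as $\omega_h \circ \tau_{k'} \circ \pi \circ \sigma_k$ and the adversary has oracle access to $h, k, k'$ and $\pi$. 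I would also condition on $\pi$ being good in the sense of \Cref{def:perm-good}; by \Cref{lem:X-pairs-uniform-tail-bound} this fails only with probability $O(2^{-n})$, which is dominated by the target bound.

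Next, I would apply \Cref{lem:input-path} to convert $\mathcal{A}$ into a $(3q+6l)$-query adversary $\mathcal{B}^{h,k,k'}(\pi)$ that outputs two distinct paths $s, s'$ reaching a common sponge output with probability exactly $\Pr[\Sp^{\varphi}(m) = \Sp^{\varphi}(m')]$. Since $l \leq q$ by assumption, $\mathcal{B}$ uses only $O(q)$ total queries to $h, k, k'$. The crucial bridge is then to connect the event that $\mathcal{B}$ outputs colliding paths to the event that $\mathcal{B}$'s compressed-oracle joint database for $(h, k, k')$ lies in the predicate $\mathrm{C}$ defined in \Cref{thm:quantum-collision}. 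By construction in the proof of \Cref{lem:input-path}, the final $6l$ queries of $\mathcal{B}$ are classical and exactly encode the input-output pairs witnessing $s$ and $s'$; so by \Cref{cor:fund-lemma-with-R} applied to the relation ``$s, s'$ reach a common sponge output'', the output-success probability is bounded by the probability that the compressed database lies in $\mathrm{C}$, up to an additive correction of order $\sqrt{l/2^c}$ that is again dominated by the target bound.

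Finally, I would invoke \Cref{thm:quantum-collision} with query count $O(q)$ in place of $q$, giving $\Pr[D \in \mathrm{C}] \leq O(q^5 n 2^{-\min(r,c)})$, and combine this with the small error terms from the previous two paragraphs to conclude the proof. The only genuinely nontrivial step, and the one I would write most carefully, is the output-to-database translation: I need to check that the explicit path-populating queries of $\mathcal{B}$ do not blow up the effective parameter $l$ in \Cref{cor:fund-lemma-with-R} beyond $O(q)$, and that the fundamental-lemma correction $\sqrt{l/2^c}$ is indeed dominated by $\sqrt{q^5 n\, 2^{-\min(r,c)}}$ for the parameter regime of interest. Everything else is a direct invocation of results already in place.
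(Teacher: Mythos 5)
Your proposal is correct and follows essentially the same chain of reasoning as the paper's proof: invoke \Cref{lem:input-path} to convert $\algo A$ into a path-outputting $\algo B$ with $O(q)$ queries, pass from outputs to database contents via \Cref{cor:fund-lemma-with-R} with the padded output set of size $6l$, and bound the database predicate $\mathrm{C}$ via \Cref{thm:quantum-collision}. The one step you flag as deserving care --- the output-to-database translation and the check that the fundamental-lemma correction $\sqrt{6l/N}$ is dominated --- is exactly the step the paper handles explicitly by defining $\algo B'$ (which outputs the full padded query set) and the relation $R$ on $\mathcal X^{6l}\times\mathcal Y^{6l}$; your instinct to be careful there is well placed.
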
 
\begin{proof}
    For each such algorithm $\mathcal{A}$ construct an algorithm $\mathcal{B}$ with query complexity $O(q)$ that simulates $\varphi,\varphi^{-1}$ to $\mathcal{A}$ as in \Cref{lem:input-path} and outputs paths $s,s'$ corresponding to the outputs $m,m'$ of $\mathcal{A}$, respectively. Suppose $m$ and $m'$ are $\tilde l\le l$ and $\tilde l'\le l$ blocks, respectively.
    More explicitly, a path $s$ is derived from the set of queries corresponding to $m$, namely 
    $$
    \{(x_i,k(x_i)),(x'_i,k'(x'_i)),(z_i,h(z_i))\}_{i\in \{1,\dots,\tilde l\}}
    $$
     as described in \Cref{lem:input-path}. Similarly, one can form another set of queries 
     $$
     \{(x_i,k(x_i)),(x'_i,k'(x'_i)),(z_i,h(z_i))\}_{i\in \{\tilde l+1,\dots,\tilde l+\tilde l'\}}
     $$
      that defines a path $s'$ corresponding to $m'$.  By \Cref{lem:input-path}, we get 
      \begin{align} \label{prob-p-quantum collision}
        \underset{ m,m'\leftarrow \mathcal{A}^{\varphi,\varphi^{-1}}}{\Pr}[Sp^{\varphi}(m) = Sp^\varphi(m')]=\underset{s,s'\leftarrow \mathcal{B}^{h,k,k'}}{\Pr}[s,s' \text{ reaches the same output}].
    \end{align}

By taking the union of the two sets of queries forming the paths $s$ and $s'$, we form $$\mathbf{x}=\{(x_i,k(x_i)),(x'_i,k'(x'_i)),(z_i,h(z_i))\}_{i\in \{1,\dots,\tilde l+\tilde l'\}}.$$ If the set $\mathbf{x}$ has a size smaller than $6l$ we can complete it set to have size $6l$ by adding random input-output pairs. Let $\mathcal B'$ be the slightly modified algorithm that outputs that $\mathbf x$.
    We define the following relation\footnote{We slightly abused the notation here. In \Cref{cor:fund-lemma-with-R} elements of the relation are pairs of input and output tuples of size $6l$. However, in the relation described here, for each element of $R$, the element obtained by applying the same permutation on the indices of both input and output tuples is also in $R$. Moreover, any element obtained this way satisfies the conditions in the description of the probabilities of $p$ and $p'$, we consider elements of $R$ modulo the permutation action on the indices.}
    \begin{align*}
        R=\{\mathbf{x} \in \mathcal{X}^{6l}  \times \mathcal{Y}^{6l}: \mathbf{x} \text{ contains the queries constructing at least two colliding paths} \}.
    \end{align*} 
    Note that
    \[
    \Pr_{\mathbf x\leftarrow{\mathcal{B}'}^{h,k,k'} }[\mathbf x\in R]=\underset{s,s'\leftarrow \mathcal{B}^{h,k,k'}}{\Pr}[s,s' \text{ reaches the same output}]\eqqcolon p.
    \]
    Applying \Cref{cor:fund-lemma-with-R}, we get    
    \begin{align*}
        \sqrt{p}\leq \sqrt{p'}+\sqrt{\frac{6l}{N}}
    \end{align*}
    with $p'$ as defined in \cref{cor:fund-lemma-with-R}. 
    We have the bound for $p'$ as $O(q\sqrt{q^3n2^{-\min(r,c)}})$ by \Cref{thm:quantum-collision}. Hence, \begin{align*}
        \underset{m,m'\leftarrow \mathcal{A}^{\varphi,\varphi^{-1}}}{\Pr}[\Sp^\varphi(m)=\Sp^\varphi(m')] \leq O(q^5n2^{-\min(r,c)}).
    \end{align*}
    \end{proof}
\begin{corollary}\label{cor:main-sponge-quantum-P}
Given $y\sim \bit^r$, the probability that a quantum algorithm $\mathcal{A}$ with quantum query access to a random permutation $\varphi \in S_{\{0,1\}^{r+c}}$ and its inverse, making at most a total of $q$ queries, returns $m\in\left(\{0,1\}^r\right)^{\leq l}$ for $l\leq q$ such that $y=\Sp^\varphi(m)$, can be upper bounded as
\begin{align*}
    \Pr_{\substack{y \sim \bit^r \\ m'\leftarrow \mathcal{A}^{\varphi,\varphi^{-1}}}}[y=\Sp^\varphi(m)]\leq O\left(q^5n2^{-\min(r,c)}\right).
\end{align*}
\end{corollary}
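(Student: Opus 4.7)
The plan is to mirror the proof of the quantum collision-resistance corollary (\Cref{cor:main-sponge-quantum CR}) almost verbatim, making only the minimal adjustments needed to accommodate a single path reaching a target $y$ rather than two colliding paths. First, given a quantum algorithm $\mathcal{A}^{\varphi,\varphi^{-1}}$ that, on input a uniformly random $y$, returns $m\in(\bit^r)^{\le l}$ with $y=\Sp^\varphi(m)$, invoke \Cref{lem:input-path-for-OW} to obtain a $(3q+3l)$-query algorithm $\mathcal{B}^{h,k,k'}(\pi)$ that, with exactly the same success probability, outputs a path $s$ reaching $y$ in the alternative oracle model of \Cref{game:ideal-permutation-world}. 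This reduces the task to bounding the probability that $\mathcal{B}$ succeeds when its oracles $h,k,k'$ are implemented by compressed oracles.

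Next, form a slightly modified algorithm $\mathcal{B}'$ that, after $\mathcal{B}$ outputs $s=(x_1\Vert\cdots\Vert x_{\tilde l},z)$, issues the classical query list
\[
\mathbf{x}=\{(x_i,k(x_i)),(x'_i,k'(x'_i)),(z_i,h(z_i))\}_{i\in\{1,\dots,\tilde l\}}
\]
necessary to verify that $s$ reaches $y$ (as in the construction used in the proof of \Cref{lem:input-path}), padding $\mathbf{x}$ to fixed size $3l$ with dummy input-output pairs so that $\mathcal{B}'$ always outputs a tuple of the same length. The overall query complexity of $\mathcal{B}'$ remains $O(q)$. Define the relation
\[
R=\bigl\{\mathbf{x}\in\mathcal{X}^{3l}\times\mathcal{Y}^{3l}\ :\ \mathbf{x}\text{ contains queries forming a path that reaches }y\bigr\}
\]
(understood modulo permutations of indices, as in the collision case). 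Then the probability that $\mathcal{B}'$ outputs $\mathbf{x}\in R$ whose entries agree with the true values of $h,k,k'$ equals the success probability of $\mathcal{B}$, and hence of $\mathcal{A}$.

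Now invoke the fundamental lemma in the form of \Cref{cor:fund-lemma-with-R}: the above probability $p$ is bounded by $\bigl(\sqrt{p'}+\sqrt{3l/N}\bigr)^2$, where $p'$ is the probability that, after the same interaction but with the compressed oracles simulating $h,k,k'$, a computational-basis measurement of the internal compressed database $D$ yields a database satisfying the predicate $\P_y$ of \Cref{cor:quantum-preimage} and containing $\mathbf{x}$. In particular $p'$ is upper-bounded by the probability that $D\in\P_y$. By \Cref{cor:quantum-preimage}, this is at most $O(q^5n2^{-\min(r,c)})$. Averaging over the uniformly random $y$ does not hurt since the bound is $y$-independent, and the additive $\sqrt{3l/N}$ term is absorbed into the final $O(\cdot)$ since $l\le q$ and $N=2^c$ (or $2^r$ in the relevant case), giving the claimed bound.

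The only step that requires more than cosmetic adaptation of the collision proof is checking that the preimage analog of the key ingredient, \Cref{cor:quantum-preimage}, is actually available. Fortunately, that corollary was already stated immediately after \Cref{thm:quantum-collision} and follows by the same transition-capacity decomposition into a ``bad-database'' term (bounded by \Cref{thm:quantum-bad-bound}) and a ``new preimage'' term (bounded using the classical reasoning of \Cref{cor:preimage-in-db} via \Cref{lem:transition-from-classical}). Thus I do not expect any genuine obstacle; the main care is just in getting the parameters right (using $3l$ padded queries rather than $6l$, and observing that $\P_y$ is $y$-measurable so the fundamental lemma applies pointwise in $y$).
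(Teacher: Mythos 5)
Your proposal is correct and takes essentially the same route as the paper, which itself only sketches this corollary (stating that the collision argument adapts with one fewer output, a $3l$-element padded transcript in place of $6l$, and \Cref{cor:quantum-preimage} replacing \Cref{thm:quantum-collision}). You have filled in exactly that adaptation: \Cref{lem:input-path-for-OW} for the reduction, \Cref{cor:fund-lemma-with-R} with a $3l$-sized relation for the path reaching $y$, and \Cref{cor:quantum-preimage} (noting its bound is uniform in $y$) for the final transition-capacity estimate.
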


We now briefly sketch how to adapt the above approach to the case of preimage finding. The detailed derivation is omitted since it is very similar to collision finding, and the actual bound for both is dominated by the bad database predicate. The proof of \Cref{lem:input-path-for-OW} is only technical and almost the same with the one derived for the collision resistance, namely \Cref{lem:input-path}, except that in the former one fewer output is queried. Also, the proof of \Cref{cor:preimage-in-db} uses the same reasoning with its analogous statement for collision resistance, which is \Cref{lem:collision-in-db}. If we are in a good database, the proof argues that---conditioned on no collisions in the database---creating a collision requires creating a new path. This requires creating a new tail via a $k$-query or $k'$-query or reaching one of the already reachable outputs in the database via an $h$-query. In a good database, the same argument works for finding a preimage of some value, with a slight difference in the $h$ query case (now, one has to reach a specific output, which is slightly more difficult). If we are in a bad database, without considering the preimage or collision events, we can use the bound for database being bad. Eventually, the bound for having a database dominates the ones for having collision or preimage in the good database. Hence, the preimage and collision finding bounds become equal. 

These lemmas are sufficient to derive the bound for classical collision resistance in our framework. Consequently, they are sufficient to derive the same bound for classical preimage resistance. Moreover, the quantum bound for collision resistance is a direct consequence of the classical bound due to quantum transition capacity and employment of \Cref{cor:fund-lemma-with-R}, which are technical tools and apply similarly to preimage predicate. 

\newpage
\section{Sponge indifferentiability proofs}
\label{sec:indiff-proofs}

Let us consider here the Msponge construction with functions $h, k, k'$ and permutation $\pi$ determining the sponge permutation $\varphi$, as described in \Cref{subsec:alt-sponge}. When using compressed oracles here, we will consider the efficient representation. We consider two worlds.

\textbf{Real world:} The experiment proceeds as: \begin{enumerate}[label=(\arabic*)]
    \item A permutation $\pi :\bit^n\rightarrow\bit^n$ is selected at uniform random.
    \item A database $D_h$ and $D_k, D_{k'}$ are initialized as empty
    \item A $q$ query quantum algorithm $\algo A$ receives oracle access to $\pi$, $\pi^{-1}$ and (compressed) oracle access to $h, k, k'$ using $\cO$ on databases $D_h,D_k,D_{k'}$, and an oracle for $\Sp^{hk'\pi k}$ implemented using the prior oracles.
    \item The algorithm $\algo A$ outputs a bit $b$.
\end{enumerate}

We can write the real world via a set of quantum registers: \begin{align}
    \underbrace{\ket{A}_A}_{\text{distinguisher}} \ket{D_h}_H \ket{D_k}_K \ket{D_{k'}}_{K'}.
\end{align}

\textbf{Ideal world:} The experiment proceeds as: \begin{enumerate}[label=(\arabic*)]
    \item A permutation $\pi :\bit^n\rightarrow\bit^n$ is selected at uniform random.
    \item A database $D_f$ is initialized as empty
    \item A simulator $\algo S$ is given query access to $\pi, \pi^{-1}$ and $f$ using $\cO$ on $D_f$.
    \item A $q$ query quantum algorithm $\algo A$ receives oracle access to $\pi, \pi^{-1}$, access to simulated oracles for $h, k,k'$ by querying the simulator, and an oracle for $f$ implemented using the compressed oracle on $D_f$.
    \item The algorithm $\algo A$ outputs a bit $b$.
\end{enumerate}

We can write the ideal world via a set of quantum registers: \begin{align}
    \underbrace{\ket{A}_A}_{\text{distinguisher}} \underbrace{\ket{D_h}_H \ket{D_k}_K \ket{D_{k'}}_{K'}}_{\text{simulator}} \ket{D_f}_F,
\end{align}

where the underbraces denotes the quantum registers maintained by the distinguisher and the simulator. We will in fact consider indistinguishability of the two worlds given a fixed permutation $\pi$, for any good $\pi$ (as in \Cref{def:perm-good}). We know $1-O(2^{-n})$ permutations are good, so this incurs a negligible loss which we will ignore. Note that our simulator will in fact be secure even against adversaries that see the whole truth table of $\pi$, so long as $\pi$ is good.

\subsection{Defining the simulator}\label{subsec:sim}

To define the action of the simulator, we will need to define a ``find tail'' operation, $\mathsf{fT}$. This operation takes a $z \in \bit^c$, and examines $D_k, D_{k'}$ to find the $\mathsf{tail}(z)$, finding the first such tail if many exist, or a fail flag if one cannot be found. If a tail is found, then this operation also returns the corresponding head $x_i$. Note that the tail uniquely determines the corresponding head.

\begin{definition}
    The operation $\mathsf{find\mbox{-}tail}$, which we write as $\mathsf{fT}$, is a unitary acting on a tail input register $Z$, databases $K, K'$ of functions $\bit^r \rightarrow \bit^c$ with at most $t$ input points, and output registers $T$ for the tail and associated head and $S$ for the success flag. We let $\mathsf{tail}(z)_1$ denote the lexicographically first tail of $z$ in $K, K'$. The operation then acts like \begin{align*}
        \mathsf{fT} \ket{z}_Z \ket{tl}_T \ket{s}_S \ket{D_k}_{K} \ket{D_{k'}}_{K'} \coloneqq& \begin{cases}
            \ket{z}_Z \ket{tl}_T \ket{s}_S \ket{D_k}_{K} \ket{D_{k'}}_{K'} & \text{($\mathsf{tail}(z) = \emptyset$)} \\
            \ket{z}_Z \ket{tl \oplus \mathsf{tail}_1(z)}_T \ket{s \oplus 1}_S \ket{D_k}_{K} \ket{D_{k'}}_{K'} & \text{(Otherwise)}
        \end{cases}
    \end{align*}
    \label{def:find-tail}
\end{definition}

We can now define the action of the simulator. Queries to $k, k'$ are answered using the compressed oracle on $D_k, D_{k'}$. Queries to $h$ on input $z$ are answered using the following procedure.
\begin{enumerate}[label=(\arabic*)]
    \item First, compute $\mathsf{fT}$, to find the tail $tl$ of $z$ if one exists
    \item If a tail does not exist, answer using the compressed oracle and $D_h$.
    \item If a tail does exist, then determine it's corresponding head $x_i$.
    \item Then, query $f(tl)$, and return $x_i \oplus f(tl)$.
    \item Finally, uncompute intermediate variables used above.
\end{enumerate}

\subsection{Good databases}

For this section, we will require a refinement of good databases. In the real world, good databases satisfy the definition we have been using all along. In the ideal world, we additionally stipulate that no $z$ value with a tail appears in the $H$ database. Intuitively, this will be satisfied because the simulator will never query a $z$ value to $h$ which has a tail, and it is unlikely a query to $k'$ attaches to a $z$ value already queried. We will follow the strategy outlined by \Cref{lem:cons-and-ind}, prooving separately indistinguishability and consistency of our simulator. Our indisitinguishability proof requires both notions, and our consistency proof requires only the ideal notion.

\begin{definition}
    We say that databases $D_h, D_k, D_{k'}$ are ``ideal good'' if they satisfy \Cref{def:good-1} and $D_h$ is undefined on every $z$ value that has a tail. We use $\Pi^{\ig}_{KK'H}$ as a projector onto ideal good databases, and $\mathsf{IG}$ as the predicate for ideal good databases.
\end{definition}

\begin{definition}
    We say that database $D_k, D_{k'}$ are ``real good'' if they satisfy \Cref{def:good-1}. We use $\Pi^{\rg}_{KK'}$ as a projector onto real good databases, and $\mathsf{RG}$ as the predicate for real good databases.
\end{definition}

It will be an important fact that the states in our experiment are close to ideal good and real good, in the corresponding worlds. After $q$ queries to the above simulator, the norm of the state on bad databases satisfies $\norm{\Pi^{\ig\perp}_{HKK'} \ket{\psi_I^q}} = \Tilde O(\sqrt{q^5 2^{-\min(r, c)}})$. After $q$ queries to any oracle in the real world, the norm of the state on bad databases satisfies $\norm{\Pi^{\rg\perp}_{KK'} \ket{\psi_R^q}} = \Tilde O(\sqrt{q^5 2^{-\min(r, c)}})$. This can be expressed formally as follows.

\begin{remark}
    We have transition capacities under queries to $k, k',$ and $h$\footnote{Observe that in the ideal world queries to $h$ are not implemented using just $\cO$, and we technically did not define the transition capacity in this case. It will turn out that the simulators action on $h$ queries perfectly preserves ideal goodness, so this point is moot.} given by
    \begin{align*}
        [\![\mathsf{IG}^t \rightarrow \neg \mathsf{IG}]\!] \leq \tilde O(\sqrt{t^3 2^{-\min(r, c)}}) && [\![\mathsf{RG}^t \rightarrow \neg \mathsf{RG}]\!] \leq \tilde O(\sqrt{t^3 2^{-\min(r, c)}}).
    \end{align*}
    These further imply
    \begin{align*}
        [\![\emptyset \xrightarrow[]{q} \neg \mathsf{IG}]\!] \leq \tilde O(\sqrt{q^5 2^{-\min(r, c)}}) && [\![\emptyset \xrightarrow[]{q} \neg \mathsf{RG}]\!] \leq \tilde O(\sqrt{q^5 2^{-\min(r, c)}}).
    \end{align*}
    \label{rem:good-trans-cap}
\end{remark}

\begin{proof}
    The statements concerning $\mathsf{RG}$ follow from combining \Cref{lem:bad-transition-bound} and \Cref{lem:bound-framework}. 
    
    For the statements involving $\mathsf{IG}$, two observations are in order. First, observe that the simulator's action $\algo S^h$ does not affect the database register corresponding to any input $z \in D_h$ where $\mathsf{tail}(z) \neq \emptyset$; this is by definition, since in this case the simulator answers using a compressed query to database register $F$. Thus, a query to $h$ cannot cause a bad event in our simulator, and $\algo S^h$ preserves the ideal good subspace.

    It now only remains to analyze queries to $k$ and $k'$. A transition from ideal good to ideal bad can happen on such queries in the case of a bad attach or a bad completion, whose distances are bounded by \Cref{lem:few-bad-attach,lem:few-bad-comps}. It may also happen in the case where a query causes a value $z$ to have a tail where $(z, y) \in D_h$; note that there are at most $t$ such values by assumption, so using an argument similar to \Cref{lem:few-bad-attach,lem:few-bad-comps}, one can construct a local property of similarly bounded distance governing this translition.
    The statements then follow from \Cref{lem:bad-transition-bound} and \Cref{lem:bound-framework}.
\end{proof}

\subsection{Indistingushability}
\label{subsec:indist}

We first define an isometry $V : \mathcal H_{HKK'} \rightarrow \mathcal H_{HKK'F}$ which will map three function databases to four function databases. $V$ will map computational basis vectors to computational basis vectors, so it is easiest to define $V$ in terms of an injective function $V_c$ on such databases (we consider $\pi$ here to be fixed between both experiments):

\begin{align*}
    V_c(D_h^R, D_k^R, D_{k'}^R) = D_h^I, D_k^I, D_{k'}^I, D_f^I \text{ such that:} \\
     D_k^I(x) &= D_k^R(x) \\
    D_{k'}^I(x) &= D_{k'}^R(x) \\
    D_h^I(z) &= \begin{cases}
        D_h^R(z) & \text{(If $z$ has no tail in $D_k^R, D_{k'}^R$)} \\
        \bot & \text{(Otherwise)}
    \end{cases} \\
    D_f^I(x_1 \Vert \dots \Vert x_m) &= \begin{cases}
        D_h^R(z) \oplus \mathsf{head}(z) & \text{(If $z$'s first tail is $x_1 \Vert \dots \Vert x_m$)} \\
        \bot & \text{(Otherwise)}
    \end{cases}
\end{align*}
When we say the first tail of $z$, we mean the first tail under lexicographic ordering. Observe that $V_c$ is a bijection from real good databases to ideal good databases. We define the full isometry $V$ as the linear continuation of $V_c$, noting that $V$ is an isometry because $V_c$ is an injective mapping on basis states.

\paragraph{Commutation relations of $V$.}

It turns out that the isometry $V$ nearly commutes with the compression operator on both $K$ and $K'$, notated as $L$ in \Cref{eq:local-compression}, at least on the good subspace. Here and going forward, we use the projector $\Pi^t$ to denote the projection onto databases with at most $t$ non-$\bot$ outputs.  \Cref{def:gen-comm} defines the commutator notion of a unitary and an isometry used here.

\begin{lemma}
    The commutator between the isometry $V$ and the local compression operators on $K$ almost commute on the good subspace: \begin{align*}
        \norm{[V_{KK'H}, L_{XK}]\Pi^t_{KK'H}\Pi^{\rg}_{KK'}} \leq \tilde O(\left(\sqrt{t^3 2^{-\min(r, c)}}\right).
    \end{align*}
    \label{lem:comp-V-comm-K}
\end{lemma}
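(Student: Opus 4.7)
The plan is to apply \Cref{lem:ortho-subspaces-norm} to reduce the bound to action on computational basis states, compute the commutator explicitly on those basis states, and then use the combinatorial results of \Cref{sec:oracle-proofs} to count the non-vanishing terms.

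First, I would decompose $\mathcal H_{XKK'H}$ into orthogonal subspaces indexed by computational basis values for $X$, for $K'$, for $H$, and for the $K$ register outside position $x$. Neither $V_{KK'H}$ nor $L_{XK}$ alters the $X$ register, the $K'$ register, or the $K$ register outside position $x$, and $V$ deterministically extends the $H$ register to an $F$ register using only those values together with $K_x$. Consequently $[V_{KK'H}, L_{XK}]\Pi^t_{KK'H}\Pi^{\rg}_{KK'}$ sends each such subspace to pairwise orthogonal images, so by \Cref{lem:ortho-subspaces-norm} it suffices to bound the commutator on basis states $\ket{x}_X\ket{y}_{K_x}\ket{D_k|_x, D_{k'}, D_h}$ where $(D_k[x\to y], D_{k'})$ is real good, the combined database has at most $t$ entries, and $y \in \{\bot\}\cup\bit^c$.

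Second, the commutator vanishes trivially when $y\in\bit^c$: here $C^x\ket{y} = \ket{y}$, so $L_{XK}$ acts as the identity and commutes with $V$. When $y=\bot$, using $C^x\ket{\bot} = \ket{\mu} = \tfrac{1}{\sqrt{N}}\sum_{y'\in\bit^c}\ket{y'}$ with $N=2^c$, a direct computation gives
\begin{equation*}
    [V, L_{XK}]\ket{x}\ket{\bot}\ket{D_k|_x, D_{k'}, D_h} = \tfrac{1}{\sqrt{N}}\sum_{y'\in\bit^c}\ket{x}\ket{y'}\bigl(V\ket{D_k^{\bot}, D_{k'}, D_h} - V\ket{D_k^{y'}, D_{k'}, D_h}\bigr),
\end{equation*}
where $D_k^u$ denotes $D_k|_x$ with position $x$ set to $u$. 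The summands are pairwise orthogonal across $y'$, so the squared norm equals $\tfrac{1}{N}$ times the number $B$ of $y'$ for which the parenthesized vector is non-zero, times at most $2$ (the squared norm of the difference of two orthogonal unit vectors).

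Third, I would bound $B$ as follows. A summand is non-zero only if either (a) $(D_k^{y'}, D_{k'})$ is not real good, or (b) there exists $z\in D_h$ whose lexicographically first tail differs between $D_k^\bot$ and $D_k^{y'}$, since these are the only two ways $V$'s action on the $H$ and $F$ registers can change. Case (a) is bounded by $\Delta(\L_g^t) = O(nt^3 + 2^{c-r}t^3)$ via \Cref{lem:bad-transition-bound}. For case (b), \Cref{lem:tail-insensitive} gives the same asymptotic bound per $z\in D_h$; inspecting its proof reveals that the dominant set of problematic $y'$ consists of bad attaches or bad completions (cf. \Cref{def:badattach,def:badcomp}), both of which are common across all $z$, with only a single $z$-dependent $y'$ value per $z$. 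The union over $z\in D_h$ thus has size at most $O(nt^3+2^{c-r}t^3)+|D_h| = O(nt^3+2^{c-r}t^3)$. Hence $B = O(nt^3 + 2^{c-r}t^3)$, and the commutator norm is bounded by $\sqrt{2B/N} = \tilde O(\sqrt{t^3\, 2^{-\min(r,c)}})$, as claimed. The main subtlety is precisely this observation in case (b): using \Cref{lem:tail-insensitive} as a black box and union-bounding over $z \in D_h$ would cost an extra factor of $t$ and yield only the weaker bound $\tilde O(\sqrt{t^4 2^{-\min(r,c)}})$; one has to look inside the proof of \Cref{lem:tail-insensitive} to see that the dominant problematic $y'$ values coincide across different $z$.
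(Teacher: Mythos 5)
Your reduction to computational basis states via \Cref{lem:ortho-subspaces-norm}, and your explicit computation of the commutator on the $y=\bot$ branch, broadly match the paper's treatment of its Case~1 ($x\not\in D_k$, i.e.\ $\ninD_{XK}$). However, your second step contains a fatal error. You claim the commutator vanishes when $y\in\bit^c$ because $C^x\ket y=\ket y$; this is false. By definition,
\begin{align*}
C^x\ket y \;=\; \ket y - \proj\mu\ket y + \outerprod{\bot}{\mu}\ket y \;=\; \ket y - \tfrac{1}{N}\textstyle\sum_{u\in[N]}\ket u + \tfrac{1}{\sqrt N}\ket\bot
\end{align*}
for $y\in[N]$, since $\braket{\mu|y}=1/\sqrt N$. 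So $L_{XK}$ does not act as the identity on $\ket x\ket{D_k[x\to y]}$, and a direct computation shows that $[V,L_{XK}]$ on such a basis state is a superposition of terms of the form $\ket{D_k[x\to u]}_K\bigl(\ket{D_h^{I,u},D_f^{I,u}}-\ket{D_h^{I,y},D_f^{I,y}}\bigr)$ together with a $\ket{D_k}_K$ term, which is generically non-zero.

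This error also breaks your reduction to basis states: since $[V,L_{XK}]$ applied to basis states $\ket{\psi_y}$ and $\ket{\psi_{y'}}$ for distinct $y,y'\in\bit^c\cup\{\bot\}$ both produce $\ket{D_k[x\to u]}$-type components over all $u$, their images are not orthogonal, and \Cref{lem:ortho-subspaces-norm} cannot be used to split the $K_x$ register down to individual basis states. The paper instead splits only into the two coarse cases $\ninD_{XK}$ (where a single basis state suffices, because $y=\bot$ is the unique preimage of the $\ket\mu$-rank) and $\inD_{XK}$ (where one must keep the full superposition $\sum_z\alpha_z\ket x\ket{D_k[x\to z]}\ket{D_{k'}}\ket{D_h}$). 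The $\inD_{XK}$ case is the technical heart of the lemma: the initial superposition may be entirely supported on bad-attach images (e.g.\ when $x$ already belongs to a tail), so the projection strategy $\Pi^{B\perp}$ used in Case~1 is unavailable; one has to write out $\ket{\psi^I}-V\ket{\psi^R}$, collapse terms, and bound the resulting quantities $T_1, T_{11}, T_{12}, T_2$ via \Cref{lem:tail-insensitive,lem:few-bad-comps} and $L_1$--$L_2$ inequalities. Your proposal omits this case entirely, so roughly half the lemma remains unproved. Your $y=\bot$ branch is essentially right in spirit (though your displayed formula mislabels the $K_x$ register, and the paper bounds the relevant set $B$ directly via \Cref{lem:few-bad-attach} without needing the detour through \Cref{lem:tail-insensitive} that your step~(b) worries about).
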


\begin{proof}
    The key observation is essentially that $V$ is a bijective function on good databases which leaves the $k, k'$ databases invariant, and a query to either $k$ or $k'$ remains mostly within the set of good databases. In more detail, consider projectors $\inD_{XK}, \ninD_{XK}$ which sum to the identity. Using the triangle inequality, we split into two cases. \begin{enumerate}[label=(Case \arabic*)]
        \item $x \not\in D_k$, or $\norm{[V_{KK'H}, L_{XK}]\Pi^t_{KK'H}\Pi^{\rg}_{KK'}\ninD_{XK}}$. Note that $\L_{XK}$ preserves computational basis states, i.e. is a controlled operator, on everything except the $x$-th register of $D_k$, and on distinct databases the images of $V$ are orthogonal. Additionally, $\ninD_{XK}\ket x_X=\proj\bot_{D_x}$.  By \Cref{lem:ortho-subspaces-norm}, it suffices to fix $x, D_k, D_{k'}, D_h$ as computational basis states with good databases of size at most $t$ and $x \not\in D_k$\footnote{the cleanest argument is obtained by applying \cref{lem:ortho-subspaces-norm} first wrt. register $X$, and then for the remaining registers}, and consider the action on the state \begin{align*}
            \ket{\psi} =& \ket{x}_X \ket{D_k}_K \ket{D_{k'}}_{K'} \ket{D_h^R}_{H}.
        \end{align*}
        
        Let us define $D_{h}^I$ (in the ideal world) as the database of input/output pairs in ${D_h^R}$ (in the real world) without a tail under $D_k, D_{k'}$, and ${D_{f}^I}$ (in the ideal world) as the database for $f$ constructed from the set of input/output pairs in ${D_h^R}$ with a tail. These are such that \begin{align*}
            V \ket{D_k}_K\ket{D_{k'}}_{K'}\ket{D_h^R}_H = \ket{D_k}_K\ket{D_{k'}}_{K'}\ket{D_h^I}_H \ket{D_f^I}_F.
        \end{align*}
        Recall that databases without a superscript do not change between the ideal and real world.
        We may now compute \begin{align*}
                \ket{\psi^I} =& L_{XK} V \ket{\psi} \\
                =& \ket{A}_A \ket{x}_X \ket{D_h^I}_H \ket{D_{k'}}_{K'}\ket{D_f^I}_F \left( \sum_{y \in \bit^c} 2^{-c/2} \ket{D_k[x\rightarrow y]}_K\right)
            \end{align*}
        as well as \begin{align*}
            \ket{\psi^{R}} =& L_{XK} \ket{\psi} \\
            =& \ket{A}_A \ket{x}_X \ket{D_h^R}_H \ket{D_{k'}}_{K'} \left( \sum_{y \in \bit^c} 2^{-c/2} \ket{D_k[x\rightarrow y]}_K\right).
        \end{align*}
        
        Let $B$ be the set of images $y$ of $x$ such that assigning $x$ to $y$ will cause a bad attach. By \Cref{lem:few-bad-attach}, we have $|B| \leq O(t^3 n + t^3 2^{c-r})$. For any value $y\not\in B$, we have the identity
       \begin{align*}
            V \ket{D_k[x\rightarrow y]}_K \ket{D_{k'}}_{K'} \ket{D_h^I}_H =& \ket{D_k[x\rightarrow y]}_K \ket{D_{k'}}_{K'} \ket{D_h^R}_H \ket{D_f^R}_F,
        \end{align*}
        because in such cases no new $z$ values will have a tail under the assignment $[x\rightarrow y]$. It follows that \begin{align}
            V \Pi^{B\perp} \ket{\psi^R} = \Pi^{B\perp} \ket{\psi^I}. \label{eqn:k-xnin-nobad}
        \end{align}
        We can write \begin{align*}
            \norm{\ket{\psi^R} - \Pi^{B\perp}\ket{\psi^R}} =& \norm{\ket{A}_A \ket{x}_X \ket{D_h^R}_H \ket{D_{k'}}_{K'} \left( \sum_{y \in B} 2^{-c/2} \ket{D_k[x\rightarrow y]}_K\right)} \\
            =& \sqrt{|B| 2^{-c}} \numberthis \label{eqn:k-xnin-real} \\
            \leq& \tilde O(\sqrt{t^3 2^{-\min(r, c)}}), \\
            \norm{\ket{\psi^I} - \Pi^{B\perp}\ket{\psi^I}} =& \norm{\ket{A}_A \ket{x}_X \ket{D_h^I}_H \ket{D_{k'}}_{K'} \ket{D_f^I}_F \left( \sum_{y \in B} 2^{-c/2} \ket{D_k[x\rightarrow y]}_K\right)} \\
            =& \sqrt{|B| 2^{-c}} \\
            \leq& \tilde O(\sqrt{t^3 2^{-\min(r, c)}}). \numberthis \label{eqn:k-xnin-ideal}
        \end{align*}
        Putting everything together, we have \begin{align*}
            \norm{\ket{\psi^I} - V \ket{\psi^{R}}} \leq& \norm{\ket{\psi^I} - \Pi^{B\perp} \ket{\psi^{I}}} + \norm{\Pi^{B\perp} \ket{\psi^I} - V\Pi^{B\perp} \ket{\psi^R}}\\& + \norm{V\ket{\psi^R} - V \Pi^{B \perp}\ket{\psi^{R}}} & \text{(Triangle inequality)} \\
            =& \norm{\ket{\psi^I} - \Pi^{B\perp} \ket{\psi^{I}}} + \norm{V\ket{\psi^R} - V \Pi^{B \perp}\ket{\psi^{R}}} & \text{(\Cref{eqn:k-xnin-nobad})} \\ 
            =& \norm{\ket{\psi^I} - \Pi^{B\perp} \ket{\psi^{I}}} + \norm{\ket{\psi^R} - \Pi^{B \perp}\ket{\psi^{R}}} & \text{($V$ an isometry)} \\
            \leq& \tilde O(\sqrt{t^3 2^{-\min(r, c)}}) & \text{(\Cref{eqn:k-xnin-real,eqn:k-xnin-ideal})}
        \end{align*}
        \item $x \in D_k$, or $\norm{[V_{KK'H}, L_{XK}]\Pi^t_{KK'}\Pi^{\rg}_{KK'}\inD_{XK}}$. Once again, note that $\L_{XK}$ preserves the computational basis everywhere except the $x$-th register of $D_k$, and on distinct databases the images of $V$ are orthogonal. By \Cref{lem:ortho-subspaces-norm}, it suffices to fix $x, D_k, D_{k'}, D_h$ as computational basis states with good databases of size at most $t$ and where $x \not\in D_k$, and consider the action on a state of the form \begin{align*}
            \ket{\psi} =& \sum_{z \text{ s.t. } D_k[x\rightarrow z], D_{k'} \text{ is good}} \alpha_z \ket{x}_X \ket{D_k[x\rightarrow z]}_K \ket{D_{k'}}_{K'} \ket{D_h}_H.
        \end{align*}
        Let us similarly define ${D_{h}^{I,y}}$ (in the ideal world) as the database of input/output pairs in ${D_h^R}$ (in the real world) without a tail under $D_k[x\rightarrow y], D_{k'}$, and ${D_{f}^{I, y}}$ (in the ideal world) as the database for $f$ constructed from the set of input/output pairs in ${D_h^R}$ with a tail in $D_k[x\rightarrow y], D_{k'}$. These are such that \begin{align*}
            V \ket{D_k[x\rightarrow y]}_K\ket{D_{k'}}_{K'}\ket{D_h^R}_H = \ket{D_k[x\rightarrow y]}_K\ket{D_{k'}}_{K'}\ket{D_h^{I, y}}_H \ket{D_f^{I, y}}_F.
        \end{align*}
        Recall that databases without a superscript do not change between the ideal and real world. We may now compute \begin{align*}
                \ket{\psi^I} =& L_{XK} V \ket{\psi} \\
                =& \sum_{z \text{ s.t. } D_k[x\rightarrow z], D_{k'} \text{ is good}} \alpha_z \ket{A}_A \ket{x}_X \ket{D_h^{I, z}}_H \ket{D_{k'}}_{K'}\ket{D_f^{I, z}}_F \otimes \\
                &\left(\ket{D_k[x\rightarrow z]}_K - 2^{-c} \sum_{u \in \bit^c} \ket{D_k[x\rightarrow u]}_K + 2^{-c/2} \ket{D_k}_K\right)
            \end{align*}
        as well as \begin{align*}
            \ket{\psi^{R}} =& L_{XK} \ket{\psi} \\
                =& \sum_{z \text{ s.t. } D_k[x\rightarrow z], D_{k'} \text{ is good}} \alpha_z \ket{A}_A \ket{x}_X \ket{D_h^R}_H \ket{D_{k'}}_{K'}\otimes \\
                &\left(\ket{D_k[x\rightarrow z]}_K - 2^{-c} \sum_{u \in \bit^c} \ket{D_k[x\rightarrow u]}_K + 2^{-c/2} \ket{D_k}_K\right)
        \end{align*}
        Let $B$ be the set of images $y$ of $x$ such that assigning $x$ to $y$ will cause a bad attach. Observe that, from the analysis of \Cref{lem:few-bad-attach}, we have $|B| \leq O(t^3 n + t^3 2^{c-r})$. For any value $y\not\in B$, we have the identity \begin{align*}
            V \ket{D_k[x\rightarrow y]}_X \ket{D_{k'}}_{K'} \ket{D_h^I}_H =& \ket{D_k[x\rightarrow y]}_X \ket{D_{k'}}_{K'} \ket{D_h^R}_H \ket{D_f^R},
        \end{align*}
        because in such cases no new state values will have a tail under the assignment $[x\rightarrow y]$. Observe here that the initial state $\ket{\psi}$ may be entirely supported on images in $B$ that lead to a ``bad attach'', for instance if $x$ is part of a tail in $D_k[x\rightarrow z], D_{k'}$. This prevents applying the strategy from the previous case where we just project out those values, so instead we have to explicitly write down the difference. Let us analyze the difference \begin{align*}
            \ket{\psi^I} - V\ket{\psi^R} =& \sum_{z \text{ s.t. } D_k[x\rightarrow z], D_{k'} \text{ is good}} \alpha_z \ket{A}_A \ket{x}_X \ket{D_h^{I, z}}_H \ket{D_{k'}}_{K'}\ket{D_f^{I, z}}_F \otimes \\
                &\left(\ket{D_k[x\rightarrow z]}_K - 2^{-c} \sum_{u \in \bit^c} \ket{D_k[x\rightarrow u]}_K + 2^{-c/2} \ket{D_k}_K\right)- \\
                &\sum_{z \text{ s.t. } D_k[x\rightarrow z], D_{k'} \text{ is good}} \alpha_z \ket{A}_A \ket{x}_X \ket{D_{k'}}_{K'}\otimes \\
                &\Bigg(\ket{D_k[x\rightarrow z]}_K \ket{D_h^{I, z}}_H \ket{D_f^{I, z}}_F  - 2^{-c} \sum_{u \in \bit^c} \ket{D_k[x\rightarrow u]}_K\ket{D_h^{I, u}}_H \ket{D_f^{I, u}}_F +\\& 2^{-c/2} \ket{D_k}_K \ket{D_h^{I, \bot}}_H \ket{D_f^{I, \bot}}_F\Bigg)
        \end{align*}
        Which, after collapsing terms, can be written as \begin{align*}
            \ket{\psi^I} - V\ket{\psi^R} =& \sum_{z \text{ s.t. } D_k[x\rightarrow z], D_{k'} \text{ is good}} \alpha_z \ket{A}_A \ket{x}_X \ket{D_{k'}}_{K'} \otimes \\
                &\left(2^{-c} \sum_{u \in \bit^c} \ket{D_k[x\rightarrow u]}_K(\ket{D_h^{I, u}}_H \ket{D_f^{I, u}}_F - \ket{D_h^{I, z}}_H\ket{D_f^{I, z}}_F)\right) + \\
                &\sum_{z \text{ s.t. } D_k[x\rightarrow z], D_{k'} \text{ is good}} \alpha_z \ket{A}_A \ket{x}_X \ket{D_{k'}}_{K'}\otimes \\
                &\left(\ket{D_k}_K 2^{-c/2}(\ket{D_h^{I, z}}_H \ket{D_f^{I, z}}_F  - \ket{D_h^{I, \bot}}_H \ket{D_f^{I, \bot}}_F)\right).
        \end{align*}
        We can then write \begin{align*}
            \norm{VL\ket{\psi} - LV \ket{\psi}} =& \norm{V\ket{\psi^R} - \ket{\psi^I}} \\
            \leq& 2\underbrace{\norm{\sum_{z \in \bit^c} \alpha_z \sum_{u \in \bit^c, (D^{I, u}_h D^{I, u}_f) \neq (D^{I, z}_h D^{I, z}_f)} 2^{-c} \ket{D_k[x\rightarrow u]}}}_{T_1} +\\& \underbrace{2\norm{\sum_{z \in \bit^c, (D^{I, z}_h D^{I, z}_f) \neq (D^{I, \bot}_h D^{I, \bot}_f)} \alpha_z 2^{-c/2}}}_{T_2} & \text{(Triangle Inequality)}
        \end{align*}
        Let us focus on bounding each term individually. We begin with \begin{align*}
            T_1 =& \norm{\sum_{z \in \bit^c \quad} \sum_{u \in \bit^c, (D^{I, u}_h D^{I, u}_f) \neq (D^{I, z}_h D^{I, z}_f)} \alpha_z 2^{-c} \ket{D_k[x\rightarrow u]}} \\
            \leq& \underbrace{\norm{\sum_{z \in \bit^c, (D^{I, z}_h D^{I, z}_f) \neq (D^{I, \bot}_h D^{I, \bot}_f) \quad} \sum_{u \in \bit^c, (D^{I, u}_h D^{I, u}_f) \neq (D^{I, z}_h D^{I, z}_f)} \alpha_z 2^{-c} \ket{D_k[x\rightarrow u]}}}_{T_{11}} + \\& \underbrace{\norm{\sum_{z \in \bit^c, (D^{I, z}_h D^{I, z}_f) = (D^{I, \bot}_h D^{I, \bot}_f)\quad} \sum_{u \in \bit^c, (D^{I, u}_h D^{I, u}_f) \neq (D^{I, z}_h D^{I, z}_f)} \alpha_z 2^{-c} \ket{D_k[x\rightarrow u]}}}_{T_{12}} & \text{(Triangle inequality)}
        \end{align*}
        Observe that the second sum in $T_{11}$ is over at most $2^c$ terms, which gives an upper bound of $\abs{\alpha_z}2^{-c/2}$ for it's norm. The first sum in $T_{11}$ is over a set of size $O(t^3 n + t^3 2^{c-r})$ by \Cref{lem:tail-insensitive,lem:few-bad-comps}, so by a standard inequality between $L_1$ and $L_2$ norm we have \begin{align*}
            \sum_{z \in \bit^c, (D^{I, z}_h D^{I, z}_f) \neq (D^{I, \bot}_h D^{I, \bot}_f)} \abs{\alpha_z} \leq O(\sqrt{t^3 n + t^3 2^{c-r}}). \numberthis \label{eqn:sum-alphs}
        \end{align*}
        It follows that $T_{11} \leq \tilde O(\sqrt{t^3 2^{-\min(r, c)}})$.
        
        Observe that the second sum in $T_{12}$ is over a set of size $O(t^3 n + t^3 2^{c-r})$ by \Cref{lem:tail-insensitive,lem:few-bad-comps}, which gives an upper bound of $\abs{\alpha_z}\sqrt{O(t^3n + t^3 2^{c-r})} \cdot 2^{-c}$ for it's norm. The first sum in $T_{12}$ is over a set of size at most $2^{-c}$, so by the relation between $L_1$ and $L_2$ norm we have \begin{align*}
            \sum_{z \in \bit^c, (D^{I, z}_h D^{I, z}_f) = (D^{I, \bot}_h D^{I, \bot}_f)} \abs{\alpha_z} \leq 2^{c/2}.
        \end{align*}
        It follows that $T_{12} \leq \tilde O(\sqrt{t^3 2^{-\min(r, c)}})$.

        Finally, it follows from \Cref{eqn:sum-alphs} that $T_2 \leq \tilde O(\sqrt{t^3 2^{-\min(r, c)}})$.
    \end{enumerate}
\end{proof}

\begin{lemma}
    The commutator between the isometry $V$ and the local compression operators on $K'$ almost commutes on the good subspace: \begin{align*}
        \norm{[V_{KK'H}, L_{XK'}]\Pi^t_{KK'H}\Pi^{\rg}_{KK'}} \leq \tilde O(\left(\sqrt{t^3 2^{-\min(r, c)}}\right).
    \end{align*}
    \label{lem:comp-V-comm-K'}
\end{lemma}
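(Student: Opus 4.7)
The plan is to mirror the proof of \Cref{lem:comp-V-comm-K} almost verbatim, replacing every reference to queries on $k$ with queries on $k'$, and replacing ``bad attach'' events (bounded by \Cref{lem:few-bad-attach}) with ``bad completion'' events (bounded by \Cref{lem:few-bad-comps}). Both lemmas yield the same $O(t^3 n + t^3 2^{c-r})$ bound on the number of bad images, and \Cref{lem:tail-insensitive} already treats both $k$ and $k'$ queries symmetrically, so the overall bound $\tilde O(\sqrt{t^3 2^{-\min(r,c)}})$ comes out identically. As before, I would split according to orthogonal subspaces using $\inD_{XK'} + \ninD_{XK'} = I$ and invoke \Cref{lem:ortho-subspaces-norm} to reduce to fixing $x, D_k, D_{k'}, D_h$ as computational basis states with good databases of size at most $t$.

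In Case 1 ($x \notin D_{k'}$), I fix a basis state $\ket{\psi} = \ket{x}_X \ket{D_k}_K \ket{D_{k'}}_{K'} \ket{D_h^R}_H$, define $D_h^I, D_f^I$ from $D_h^R$ using the partition by tail-existence under $D_k, D_{k'}$, and compute $\ket{\psi^I} = L_{XK'}V\ket\psi$ and $\ket{\psi^R} = L_{XK'}\ket\psi$. Letting $B$ be the set of images $y$ of $x$ that cause a bad completion, \Cref{lem:few-bad-comps} gives $|B| \leq O(t^3 n + t^3 2^{c-r})$, and for $y \notin B$ the relation $V\ket{D_k}_K \ket{D_{k'}[x\rightarrow y]}_{K'}\ket{D_h^I}_H = \ket{D_k}_K \ket{D_{k'}[x\rightarrow y]}_{K'}\ket{D_h^R}_H \ket{D_f^R}_F$ holds because neither the $H$ database partition nor the tail structure is affected beyond appending a known tail. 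The projected-out components have norm at most $\sqrt{|B|2^{-c}} \leq \tilde O(\sqrt{t^3 2^{-\min(r,c)}})$ in both worlds, and a triangle inequality concludes the case exactly as in \Cref{lem:comp-V-comm-K}.

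In Case 2 ($x \in D_{k'}$), I follow the explicit-expansion strategy, since the initial state may be entirely supported on images that cause bad completions and cannot be simply projected out. I consider states of the form $\ket\psi = \sum_{z} \alpha_z \ket{x}_X \ket{D_k}_K \ket{D_{k'}[x\rightarrow z]}_{K'}\ket{D_h}_H$ restricted to assignments keeping $D_{k'}[x \rightarrow z], D_k$ good, introduce $D_h^{I,y}, D_f^{I,y}$ for each candidate image $y$, and expand $\ket{\psi^I} - V\ket{\psi^R}$ via the local compression formula. The same rearrangement as in the $K$ case yields two terms $T_1, T_2$; the former further splits (by triangle inequality) into $T_{11}$ and $T_{12}$ according to whether $(D_h^{I,z}, D_f^{I,z}) = (D_h^{I,\bot}, D_f^{I,\bot})$. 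The sets over which $(D_h^{I,u}, D_f^{I,u}) \neq (D_h^{I,z}, D_f^{I,z})$ or $(D_h^{I,z}, D_f^{I,z}) \neq (D_h^{I,\bot}, D_f^{I,\bot})$ are bounded by \Cref{lem:tail-insensitive} and \Cref{lem:few-bad-comps}, and applying the standard $L^1$-$L^2$ inequality to $\sum |\alpha_z|$ over each small set delivers the desired bound on each piece.

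The main obstacle, as in the previous lemma, is Case 2: because the initial state amplitudes $\alpha_z$ may be concentrated entirely on ``bad'' assignments, one cannot project those out at the start, and instead must extract cancellations term by term from the expansion of $L_{XK'} V - V L_{XK'}$. The extra subtlety specific to $k'$ queries is that a new image $y'$ for $x' \in D_{k'}$ appearing in an intermediate pair $(x', y_i')$ creates a fresh tail for the state $y_i' \oplus y'$; one must verify that altering this tail induces changes in $(D_h^{I,\cdot}, D_f^{I,\cdot})$ only on the small set of images controlled by \Cref{lem:tail-insensitive}, and that the genuinely bad completions (those producing tail collisions or attachments to $D_{k'}$) are separately counted by \Cref{lem:few-bad-comps}. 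These two lemmas together cover precisely the images over which the $H$ and $F$ components of $V$ differ, so every sum that appeared in \Cref{lem:comp-V-comm-K} is bounded by the same $\tilde O(\sqrt{t^3 2^{-\min(r,c)}})$ expression.
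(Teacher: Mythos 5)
Your proposal is correct and follows essentially the same approach as the paper's deferred proof: the same case split on $x\notin D_{k'}$ versus $x\in D_{k'}$ via $\ninD_{XK'}$ and $\inD_{XK'}$, the same projection strategy in Case 1 and explicit expansion into $T_1, T_2, T_{11}, T_{12}$ in Case 2, with \Cref{lem:few-bad-comps} and \Cref{lem:tail-insensitive} substituted where \Cref{lem:few-bad-attach} appeared before. (Incidentally, the paper's Case 1 writes $|B|\le O(t^3 n + t^3 2^{r-c})$, which is a typo; your $2^{c-r}$ is what \Cref{lem:few-bad-comps} actually gives.)
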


The proof for the $k'$ case is similar to the one for $k$. For completeness, we give it in \Cref{sec:def-proofs}.

\paragraph{Query closeness.}

We can apply the previous commutator bounds to give a bound on how much the distinguisher's view can diverge as a result of queries to $k, k',$ and $h$. We do this by showing that $V$ acts as an approximate \emph{intertwiner} between real world and ideal world queries. Consider, e.g., the query unitary $\cO_{AK}$ and the operator $\algo S^k$ the simulator applies upon a $k$ query. Then $S^kV\approx V\cO_{AK}$ for an appropriate notion of $\approx$. Note that in this case $\algo S^k_{AHKK'F} = \cO_{AK}$, as the simulator simply answers by a compressed database call on database $K$, so here we have, in fact, a certain approximate commutation relation.

\begin{lemma}\label{lem:k-query-close}
    Let $\algo S^k$ denote the action of the simulator on a $k$ query. Consider the following two operators, \begin{align*}
        O^I =&  \algo S^k_{AHKK'F} V_{HKK'} \Pi^{\rg}_{HKK'}\\
        O^R =& V_{HKK'} \Pi^{\rg}_{HKK'} \cO_{AK} \Pi^{\rg}_{KK'}
    \end{align*}
    and let $\Pi^{t}_{HKK'}$ be the projector onto databases with at most $t$ query points in each. Then we have \begin{align*}
        \norm{(O^I - O^R)\Pi^{t}} \leq O(\sqrt{t^3 2^{-\min(r, c)}})
    \end{align*}
    \label{lem:k-queries-close}
\end{lemma}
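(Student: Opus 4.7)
The plan is to exploit approximate commutation of $V$ with the compressed-oracle operators. Since $V_{HKK'}$ reads from (but does not modify) $K$ and $K'$, acts only on $H$ and the fresh $F$ register, while $\algo P_{AK}$ reads $K$ and acts on $A$, the two commute exactly: $[\algo P_{AK}, V_{HKK'}] = 0$. Together with $\cO_{AK} = L_{AK} \algo P_{AK} L_{AK}$ and $\algo S^k = \cO_{AK}$, inserting $L \algo P V L$ as an intermediate term gives
\[
\cO V - V \cO = L \algo P [L, V] + [L, V] \algo P L.
\]
Inserting $V \cO \Pi^{\rg} \Pi^t$ in the target and applying the triangle inequality yields
\[
(O^I - O^R) \Pi^t = (\cO V - V \cO) \Pi^{\rg} \Pi^t + V (I - \Pi^{\rg}) \cO \Pi^{\rg} \Pi^t.
\]

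The $V (I - \Pi^{\rg}) \cO \Pi^{\rg} \Pi^t$ term has norm equal to $\norm{(I - \Pi^{\rg}) \cO \Pi^{\rg} \Pi^t}$ since $V$ is an isometry, and this is by definition the transition capacity $[\![\mathsf{RG}^t \to \neg \mathsf{RG}]\!] \leq \tilde O(\sqrt{t^3 2^{-\min(r, c)}})$ from \Cref{rem:good-trans-cap}. The piece $L \algo P [L, V] \Pi^{\rg} \Pi^t$ has norm at most $\norm{[L, V] \Pi^{\rg} \Pi^t}$, which by \Cref{lem:comp-V-comm-K} is of the same order.

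The main obstacle is the remaining piece $[L, V] \algo P L \Pi^{\rg} \Pi^t$, since $L$ does not commute with $\Pi^{\rg}$: a local decompression can send a good computational-basis database into a superposition with small non-good support. To handle it, I insert $I = \Pi^{\rg} \Pi^{t+1} + (I - \Pi^{\rg} \Pi^{t+1})$ between $[L, V]$ and $\algo P L$. The first piece is bounded by $\norm{[L, V] \Pi^{\rg} \Pi^{t+1}} \leq \tilde O(\sqrt{(t+1)^3 2^{-\min(r, c)}})$ via \Cref{lem:comp-V-comm-K}. For the second, $\algo P$ commutes with both $\Pi^{\rg}_{KK'}$ (it does not touch $K'$ and only reads $K$) and the size projector (it preserves the computational basis on $K$), and $L \Pi^t$ has image in $\Pi^{t+1}$ since a single decompression grows the database size by at most one, so the task reduces to bounding $\norm{(I - \Pi^{\rg}) L \Pi^{\rg} \Pi^t}$. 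This is precisely the $L$-factor appearing in the transition-capacity decomposition in the proof of \Cref{lem:bound-framework}, and the local-property argument with the bad-attach distance bound from \Cref{lem:bad-transition-bound} yields the same $\tilde O(\sqrt{t^3 2^{-\min(r, c)}})$ bound. Summing the three contributions establishes the claim, up to polylogarithmic factors absorbed into the stated $O$.
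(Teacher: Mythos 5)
Your proof is correct and arrives at the same three-part bound as the paper, using the same two key lemmas (\cref{lem:comp-V-comm-K} for the $[L,V]$ commutator and \cref{rem:good-trans-cap} for the transition capacity). The presentation differs somewhat: you split off $V(I-\Pi^{\rg})\cO\Pi^{\rg}\Pi^t$ first and then expand $[\cO, V] = L\algo P[L,V] + [L,V]\algo P L$ using the exact commutation $[\algo P, V]=0$, whereas the paper works through the intertwiner identity $V\Pi^{\rg}=\Pi^{\ig}V$ and a chain of intermediate operators. Your treatment of the $[L,V]\algo P L\Pi^{\rg}\Pi^t$ term is in fact cleaner than the paper's: by inserting $\Pi^{\rg}\Pi^{t+1} + (I-\Pi^{\rg}\Pi^{t+1})$ and then commuting $\algo P$ past the good and size projectors (both diagonal on $K$), you isolate $\norm{(I-\Pi^{\rg})L\Pi^{\rg}\Pi^t}$ explicitly and bound it via the $\norm{\Pi^{\L}L\Pi^{\neg\L}}$ estimate from inside the proof of \cref{lem:bound-framework}. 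The paper's corresponding step is compressed into one line of inequalities (some with apparent typos), and implicitly also produces the two commutator terms at $\Pi^t$ and $\Pi^{t+1}$ plus a transition-capacity term; your version makes the source of the $\Pi^{t+1}$ term and the residual goodness-loss transparent. In short: same approach in substance, but your decomposition is the more legible one.
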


\begin{proof}
    Since the simulator replies to $k$-queries simply by using the corresponding compressed oracle, we can write
    \begin{align}
        O^I =& \algo S^k_{AHKK'F} V_{HKK'} \Pi^{\rg}_{HKK'}\\
        =& L_{XK} \algo P_{XYK} L_{XK} V_{HKK'} \Pi^{\rg}_{HKK'},
    \end{align}
    observing that \begin{align*}
        V_{HKK'}\Pi^{\rg}_{KK'} = \Pi^{\ig}_{HKK'} V_{HKK'} \numberthis\label{eqn:V-good-int}
    \end{align*}
    we further have \begin{align*}
        O^R =& \Pi^{\ig}_{HKK'} V_{HKK'} L_{XK} \algo P_{XYK} L_{XK} \Pi^{\rg}_{KK'}.
    \end{align*}
    Making heavy use of \Cref{eqn:V-good-int}, and the fact that $\algo P_{XYK}$ commutes with good projectors and the $V$ operation, we have \begin{align*}
        \norm{(O^I - O^R) \Pi^t} =& \norm{(L \algo P L V \Pi^{\rg} - \Pi^{\ig} V L \algo P L \Pi^{\rg}) \Pi^t} \\
        \leq& \norm{(L \algo P L V \Pi^{\rg} - L \algo P \Pi^{\ig} V  L \Pi^{\rg})\Pi^t} + \underbrace{\norm{(L \algo P V \Pi^{\rg}  L \Pi^{\rg} - L V \Pi^{\rg} \algo P L \Pi^{\rg})\Pi^t}}_{=0} + \\
        &\norm{(L V \Pi^{\rg} \algo P L \Pi^{\rg} - \Pi^{\ig} V L \Pi^{\rg} \algo P L \Pi^{\rg})\Pi^t} + \\& \norm{V L \Pi^{\rg} \algo P L \Pi^{\rg})\Pi^t - \Pi^{\ig} V L \Pi^{\rg} \algo P L \Pi^{\rg})\Pi^t}\,\,\,\,\,\,\, \text{(Triangle inequality)} \\
        \leq& \norm{[L_{XK}, V] \Pi^{\rg} \Pi^t} + \norm{[L_{XK}, V] \Pi^{\rg} \Pi^{t+1}} + \\& \tilde O(\sqrt{t^3 2^{-\min(r, c)}}) \,\,\,\,\,\,\,\,\,\,\,\,\,\,\,\,\,\,\,\,\,\,\,\,\,\,\,\,\,\,\,\,\,\,\,\,\,\,\,\,\,\,\,\,\,\,\,\,\,\,\,\,\,\,\,\,\,\,\,\,\,\,\,\,\,\,\,\,\,\,\,\,\,\,\,\, \text{(\Cref{rem:good-trans-cap})} \\
        \leq& \tilde O(\sqrt{t^3 2^{-\min(r, c)}}) \,\,\,\,\,\,\,\,\,\,\,\,\,\,\,\,\,\,\,\,\,\,\,\,\,\,\,\,\,\,\,\,\,\,\,\,\,\,\,\,\,\,\,\,\,\,\,\,\,\,\,\,\,\,\,\,\,\,\,\,\,\,\,\,\,\,\,\,\,\,\,\,\,\,\,\, \text{(\Cref{lem:comp-V-comm-K})}
    \end{align*}
\end{proof}

Once again, note that $\algo S^{k'}_{AHKK'F} = \cO_{AK'}$, as the simulator simply answers by a compressed database call on database $K$.

\begin{lemma}
    Let $\algo S^{k'}$ denote the action of the simulator on a $k'$ query. Consider the following two operators, \begin{align*}
        O^I =&  \algo S^{k'}_{AHKK'F} V_{HKK'} \Pi^{\rg}_{HKK'}\\
        O^R =& V_{HKK'} \Pi^{\rg}_{KK'} \cO_{AK'} \Pi^{\rg}_{KK'},
    \end{align*}
    and let $\Pi^{t}_{HKK'}$ be the projector onto databases with at most $t$ query points. Then we have \begin{align*}
        \norm{(O^I - O^R)\Pi^{t}} \leq O(\sqrt{t^3 2^{-\min(r, c)}})
    \end{align*}
    \label{lem:k'-queries-close}
\end{lemma}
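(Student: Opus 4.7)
The plan is to mirror the proof of \Cref{lem:k-queries-close} essentially verbatim, with the $K$ register replaced by the $K'$ register and \Cref{lem:comp-V-comm-K} replaced by \Cref{lem:comp-V-comm-K'}. The enabling observation is that, just like $k$-queries, $k'$-queries are answered by the simulator via a plain compressed oracle call on the $K'$ register: $\algo S^{k'}_{AHKK'F} = \cO_{AK'} = L_{XK'}\algo P_{XYK'}L_{XK'}$. No interaction with $F$ or with the tail-finding machinery occurs, so the only analytic task is to control how the isometry $V$ interacts with $L_{XK'}$.

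Concretely, I would use the intertwining identity $V_{HKK'}\Pi^{\rg}_{KK'} = \Pi^{\ig}_{HKK'}V_{HKK'}$ to rewrite $O^R = \Pi^{\ig}_{HKK'} V_{HKK'} L_{XK'} \algo P_{XYK'} L_{XK'} \Pi^{\rg}_{KK'}$, and then telescope $(O^I - O^R)\Pi^t$ by the triangle inequality through exactly the same chain of hybrids used in the $k$-case: push $V$ past one copy of $L_{XK'}$ at a time, using that $\algo P_{XYK'}$ commutes with $V_{HKK'}$ and with $\Pi^{\rg}_{KK'}$ (all three fix all entries of $K'$ other than the queried one, and $V$ depends on those entries only through the combinatorial structure of tails and intermediate pairs). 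Each intermediate difference then becomes either a commutator term $\bigl\|[L_{XK'}, V_{HKK'}]\Pi^{\rg}\Pi^{t+O(1)}\bigr\|$, bounded by $\tilde O\bigl(\sqrt{t^3\,2^{-\min(r,c)}}\bigr)$ via \Cref{lem:comp-V-comm-K'}, or a good-projector-insertion error $\bigl\|(I-\Pi^{\ig})V\cdots\bigr\|$ or $\bigl\|(I-\Pi^{\rg})\cdots\bigr\|$, bounded by the same quantity via \Cref{rem:good-trans-cap}.

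The only (very mild) obstacle is to verify that the commutation claims involving $\algo P_{XYK'}$ still hold: unlike the $k$-case, $K'$ is queried after $\pi$ in each sponge round, so one might initially worry that $\algo P_{XYK'}$ interacts with the tail structure recorded by $V$ differently than $\algo P_{XYK}$ does. However, $\algo P$ only XORs into the output register $Y$ controlled by $X$ and the value of $K'(x)$, leaving $H$, $K$, and every entry of $K'$ other than the queried one invariant, and $V$ is a basis-state-to-basis-state map that never reads $Y$. Hence $[\algo P_{XYK'}, V_{HKK'}] = 0$ on computational basis states (and thus in general), so the ``$=0$'' step from the prior proof goes through identically. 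Assembling the constantly many intermediate terms then yields the stated bound $\tilde O\bigl(\sqrt{t^3\,2^{-\min(r,c)}}\bigr)$.
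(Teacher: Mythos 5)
Your proposal matches the paper's own proof exactly: the paper states that \Cref{lem:k'-queries-close} "follows similarly to that of \Cref{lem:k-queries-close}, except with \Cref{lem:comp-V-comm-K'} taking the place of \Cref{lem:comp-V-comm-K}", which is precisely the substitution you carry out. Your added check that $\algo P_{XYK'}$ commutes with $V_{HKK'}$ (since $\algo P$ only reads $K'$ and writes to $Y$, while $V$ is a basis-preserving map on $HKK'$ that ignores $Y$) correctly justifies the step the paper leaves implicit.
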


\begin{proof}
    The proof follows similarly to that of \Cref{lem:k-queries-close}, except with \Cref{lem:comp-V-comm-K'} taking the place of \Cref{lem:comp-V-comm-K}.
\end{proof}

\begin{lemma}
    Let $\algo S^{h}$ denote the action of the simulator on a $h$ query. Consider the following two operators, \begin{align*}
        O^I =&  \algo S^{h}_{AHKK'F} V_{HKK'} \Pi^{\rg}_{HKK'}\\
        O^R =& V_{HKK'} \Pi^{\rg}_{KK'} \cO_{AH} \Pi^{\rg}_{KK'}.
    \end{align*}
    Then we have \begin{align*}
        O^I = O^R.
    \end{align*}
    \label{lem:h-queries-close}
\end{lemma}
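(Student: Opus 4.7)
The plan is to verify $O^I = O^R$ by direct computation on computational basis states of the real good subspace, via case analysis on whether the query input $z$ has a tail in $D_k, D_{k'}$. Because this condition depends only on the $K, K'$ registers, which are untouched by $V$, by $\cO_{AH}$, and by the simulator's action on an $h$-query, the case distinction is well-defined and preserved throughout. I will unfold the simulator's three steps (compute $\mathsf{fT}$; conditionally on the success flag, apply either $\cO_{AH}$ on the $H$-register or $\cO_F$ on the $F$-register together with an XOR of the recovered head into $Y$; then uncompute $\mathsf{fT}$) and track the evolution of each database register in each case.

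First I would handle the trivial case where $z$ has no tail. Then $\mathsf{fT}$ returns success flag $0$, the simulator applies $\cO_{AH}$ on the $H$-register, and $V$ acts as the identity on the $z$-component of $H$ because $D_h^I(z) = D_h^R(z)$ by definition of $V$ in this case. Since $K, K'$ are unchanged, the updated database still has no tail at $z$, so $V$ again treats this component trivially, and the uncomputation of $\mathsf{fT}$ is clean. Hence $O^I$ and $O^R$ visibly agree on this component.

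The substantive case is when $z$ has a tail $tl$ with head $x_i = \mathsf{head}(z)$. Here $V$ transports $D_h^R(z)$ out of the $z$-register of $H$ and into the $tl$-register of $F$ as $D_f^I(tl) = D_h^R(z) \oplus x_i$, and leaves the $z$-register of $H$ in the $\ket\bot$ state. The simulator runs $\cO_F$ on input $tl$ and then XORs $x_i$ into $Y$. I expect this step to be the main, though essentially routine, obstacle: I need to verify that the compressed oracle $\cO = L \algo P L$ intertwines a constant XOR shift on the stored register with a corresponding XOR on $Y$. This reduces to two observations. First, the uniform superposition $\ket\mu$ and the symbol $\ket\bot$ are fixed by any XOR shift (with the convention $\bot \oplus x_i = \bot$ implicit in the definition of $V$), so $L$ commutes with the shift. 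Second, $\algo P$ conjugated by a shift of $x_i$ on the stored register equals $\algo P$ composed with a shift of $x_i$ on $Y$, and the simulator's explicit XOR of $x_i$ into $Y$ cancels exactly this extra shift. Consequently, the net effect on $Y$ matches $\cO_{AH}$ in the real world, and the updated $tl$-register of $F$ equals $(\text{new } D_h^R(z)) \oplus x_i$, which is precisely what $V$ would produce from the post-query real-world state.

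Finally, the uncomputation of $\mathsf{fT}$ is clean because $K, K'$ are unchanged by either branch, so the computed tail and head are still consistent with $z$ and uncompute back to $\ket{0}$. Combining the two cases gives $O^I = O^R$ on the real good subspace, completing the argument.
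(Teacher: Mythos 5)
Your proposal is correct and follows essentially the same case analysis as the paper: it splits on whether the queried $z$ has a tail, uses that $V$ is a relabeling-plus-XOR-shift on good databases, and tracks the simulator's unfolded action. The only cosmetic difference is that you explicitly verify the intertwining of the XOR shift with $L$ (via $\ket\mu,\ket\bot$ being shift-invariant) and with $\algo P$ (shift on the stored register becoming a shift on $Y$), whereas the paper simply states the resulting commutation relations $V L_{XH}\ket{x}_X = C^{\tail(x)}_F V\ket{x}_X$ and $V\algo P_{XYH}\ket{x}_X\ket{y}_Y = \algo P_{\tail(x),\,y\oplus\head(x),\,F}V\ket{x}_X\ket{y}_Y$.
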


\begin{proof}
    Recall that $V$ is a bijection from good real databases to good ideal databases, and an injection from real databases to ideal. Observe that the construction of $D_{f}^I$ and $D_{h}^I$ is simply a re-labeling of the input/output pairs contained in $D_{h}^R$, depending on which have a tail. Further, the simulator answers according to a random function which depends only on the input value of such pairs. Let us work out the action of $O^I$ and $O^R$ on good databases. Let \begin{align*}
        \ket{\psi^R} = \ket{A}_A \ket{x}_X \ket{y}_Y \ket{D_k}_K \ket{D_{k'}}_{K'} \ket{D_h}_h
    \end{align*}
    denote a computational basis state in the real world which is good.
    \begin{itemize}
        \item[$(O^I)$.] We will consider the action on $\ket{\psi^R}$. We have $\mathsf{tail}(z)$ (with respect to $D_k$, $D_{k'}$) is either an empty or a singleton set, and so is $\mathsf{head}(z)$. Let us define the sets \begin{align*}
            Z_t &= \{z | z \in D_h, \mathsf{tail}(z) \neq \emptyset\} \\
            Z_{\neg t} &= \{z | z \in D_h, \mathsf{tail}(z) = \emptyset\}.
        \end{align*}
        We can then work out \begin{align*}
            V\Pi^{\rg} \ket{\psi_R} =& \ket{A}_A \ket{x}_x \ket{y}_y \ket{D_k}_k \ket{D_{k'}}_{k'} \ket{\{(z, h(z)) | z \in Z_{\neg t}\}}_h \ket{\{(\mathsf{tail}(z), h(z) \oplus \mathsf{head}(z)) | z \in Z_{t}}_f 
        \end{align*}
        The action of the simulator can then be split into two cases. \begin{enumerate}[label=(\arabic*)]
            \item $x \in Z_t$. In this case, the query is answered using a compressed oracle call on input $\mathsf{tail}_1(x)$ to $D_f^I$, and XORed with $\mathsf{head}(x)$.
            \item $x \in Z_{\neg t}$. In this case, the query is answered using a compressed oracle call on input $x$ to $D_h$.
        \end{enumerate}
        \item[$(O^R)$.] We will consider the action on $\ket{\psi^R}$. We have $\mathsf{tail}(z)$ (with respect to $D_k$, $D_{k'}$) is of size at most one. Let us again consider the sets \begin{align*}
            Z_t &= \{z | z \in D_h, \mathsf{tail}(z) \neq \emptyset\} \\
            Z_{\neg t} &= \{z | z \in D_h, \mathsf{tail}(z) = \emptyset\}.
        \end{align*}
        The query to $h$ will be answered using $\cO_{XYH}$, which is a controlled operation on $x$ that targets the $x$-th output in the database for $h$. Note that (because $D_h$ is good) we have that $V$ will simply move and permute the labels of input output pairs. 
        It follows that 
         \begin{align*}
            V L_{XH} \ket{x}_X =& \begin{cases}
                C^{\mathsf{tail}(x)}_f V \ket{x}_X & \text{(If $\mathsf{tail}(x) \neq \emptyset$)} \\
                C^{x}_H V \ket{x}_X & \text{(Otherwise)}
            \end{cases}
        \end{align*}
        Using the notation $\algo P_{x,y,H}$ to denote $\algo P$ with input value $x$, output value $y$, and database register $H$, we then have for any $x,y$ that
        \begin{align*}
            V \algo P_{XYH}\ket x_X\ket y_Y =& \begin{cases}
                \algo P_{\mathsf{tail}(x), y \oplus \mathsf{head}(x), F} V \ket x_X\ket y_Y& \text{(If $\mathsf{tail}(x) \neq \emptyset$)} \\
                \algo P_{x,y,H} V\ket x_X\ket y_Y & \text{(Otherwise)}
            \end{cases}
        \end{align*}
        It follows that queries are answered using the same procedure as by the simulator.
    \end{itemize}
    The action of $O^I$ and $O^R$ is the same on all good computational basis states, so it is the same on the subspace spanned by good computational basis states. Note that $\Pi^{\rg}$ does not depend on the $h$ database, and so clearly commutes with $\cO_{AH}$; this justifies the final $\Pi^{\rg}$ in $O^R$.
\end{proof}

\paragraph{Putting the pieces together.}
\begin{theorem}
    The simulator defined in \cref{subsec:sim} is indistinguishable. In particular, for a $q$-query distinguisher we have \begin{align*}
        |\Pr[\algo A^{\pi,\pi^{-1},h,k,k'}() = 1] - \Pr[\algo A^{\pi, \pi^{-1}, \algo S^{f}}() = 1]| = \tilde O\left(\sqrt{q^5 2^{-\min(r, c)}}\right)
    \end{align*}
    \label{thm:sim-indist}
\end{theorem}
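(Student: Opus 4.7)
The plan is to show that after $q$ queries, the real-world state $\ket{\psi_R^q}$ on registers $AHKK'$ and the ideal-world state $\ket{\psi_I^q}$ on registers $AHKK'F$ satisfy $\norm{\ket{\psi_I^q} - (I_A \otimes V_{HKK'})\ket{\psi_R^q}} = \tilde O(\sqrt{q^5 2^{-\min(r,c)}})$, where $V$ is the isometry defined in \cref{subsec:indist}. Once this is in hand, \Cref{lem:approx-uhlmann}, applied with $A$ as the subsystem being measured by the distinguisher, immediately yields the claimed indistinguishability bound.

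The proof proceeds by induction over queries. The distinguisher's workspace unitaries act on $A$ together with the query input/output registers $X, Y$ and thus trivially commute with $V_{HKK'}$. Queries to $\pi$ and $\pi^{-1}$ are implemented identically in both worlds and contribute no error. For each query to $h$, $k$, or $k'$ at step $t$, I invoke the corresponding closeness lemma (\Cref{lem:h-queries-close}, \Cref{lem:k-queries-close}, or \Cref{lem:k'-queries-close}), each of which bounds the discrepancy between the simulator's query action composed with $V \Pi^{\rg}$ and the real-world compressed query composed with $V \Pi^{\rg}$ by $\tilde O(\sqrt{t^3 2^{-\min(r,c)}})$ on databases of size at most $t$. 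The size restriction $\Pi^{t}$ is automatically present via \Cref{cor:at-most-t}. After each step, I re-expand the good-subspace projection back into the full space, using \Cref{rem:good-trans-cap} to bound the leakage $\norm{\Pi^{\rg\perp}_{KK'}\ket{\psi_R^t}}$ and $\norm{\Pi^{\ig\perp}_{HKK'}\ket{\psi_I^t}}$.

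The inductive invariant to carry is $\norm{\ket{\psi_I^t} - V\ket{\psi_R^t}} \leq \sum_{s=1}^t \tilde O(\sqrt{s^3 2^{-\min(r,c)}})$, with the step from $t-1$ to $t$ split by triangle inequality into (i) the closeness error of the $t$-th query on the good subspace and (ii) the leakage error from \Cref{rem:good-trans-cap}. Since $\sum_{s=1}^q \sqrt{s^3} = O(q^{5/2})$, this yields the overall bound of $\tilde O(\sqrt{q^5 2^{-\min(r,c)}})$, matching the claim.

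The principal obstacle is the clean interleaving of the good-subspace projectors $\Pi^{\rg}$ and $\Pi^{\ig}$ across the hybrid. The closeness lemmas give intertwining identities only conditioned on good databases, so amplitude on bad databases must be tracked separately at each step. The key observation that makes the bookkeeping work is the identity $V \Pi^{\rg}_{KK'} = \Pi^{\ig}_{HKK'} V$, which ensures that amplitude on the real good subspace maps exactly to amplitude on the ideal good subspace, so the alternation between real and ideal accounting is consistent. Both leakages into the bad subspaces are themselves of total order $\tilde O(\sqrt{q^5 2^{-\min(r,c)}})$ by \Cref{rem:good-trans-cap}, matching the sum of intertwining errors, so the overall budget is preserved provided the two kinds of error are combined via triangle inequality rather than multiplied.
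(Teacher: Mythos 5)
Your high-level blueprint --- $V$ as an approximate intertwiner, the three closeness lemmas per query type, the transition-capacity control of bad databases, and the final appeal to \Cref{lem:approx-uhlmann} --- matches the paper exactly. But the specific inductive bookkeeping you describe does not yield the claimed $\tilde O(\sqrt{q^5 2^{-\min(r,c)}})$, and the reason is exactly the issue you gesture at in your last paragraph without resolving.

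If you carry the invariant $\delta_t := \norm{\ket{\psi_I^t} - V\ket{\psi_R^t}}$ with no projectors, the step from $t-1$ to $t$ reduces to bounding $\norm{(\algo S V - V\cO)\ket{\psi_R^{t-1}}}$. Decomposing $\ket{\psi_R^{t-1}}$ along $\Pi^{\rg}$ and $\Pi^{\rg\perp}$, the piece on $\Pi^{\rg}$ is handled by the closeness lemmas and the per-step transition capacity, both $\tilde O(\sqrt{t^3 2^{-\min(r,c)}})$. But the piece on $\Pi^{\rg\perp}$ costs you $2\norm{\Pi^{\rg\perp}\ket{\psi_R^{t-1}}}$, which is the \emph{accumulated} leakage, of order $\tilde O(\sqrt{t^5 2^{-\min(r,c)}})$ by \Cref{rem:good-trans-cap}. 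Since you ``re-expand the good-subspace projection back into the full space'' after every step, you re-encounter this accumulated bad amplitude at each of the $q$ steps, and your invariant's per-step increment becomes $\tilde O(\sqrt{t^5 2^{-\min(r,c)}})$, not $\tilde O(\sqrt{t^3 2^{-\min(r,c)}})$. Summed over $t$, that gives $\tilde O(\sqrt{q^7 2^{-\min(r,c)}})$, losing a factor of $q$ relative to the theorem.

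The paper avoids this by \emph{never re-expanding}: it introduces the auxiliary state $\ket{\psi^{R,g}}$ with $\Pi^{\rg}$ inserted after every real-world query, pays $\norm{\ket{\psi^R} - \ket{\psi^{R,g}}} \leq \tilde O(\sqrt{q^5 2^{-\min(r,c)}})$ exactly once, and then forms the hybrid chain $\ket{\psi^I_0}, \dots, \ket{\psi^I_q}$ in which every state to the right of the swap point remains projected onto good. Each hybrid step is then a single application of a closeness lemma with cost $\tilde O(\sqrt{t^3 2^{-\min(r,c)}})$ and no re-entry of bad amplitude, so the two $\tilde O(\sqrt{q^5 2^{-\min(r,c)}})$ contributions really do combine additively as you want. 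Your intuition that ``the two kinds of error are combined via triangle inequality rather than multiplied'' is correct; the missing step is the structural choice to keep the $\Pi^{\rg}$ projectors in place throughout the chain rather than remove them after each query. Also note that the intertwining identity you invoke, $V\Pi^{\rg} = \Pi^{\ig}V$ (the paper's \Cref{eqn:V-good-int}), is what lets the paper switch freely between real-good and ideal-good accounting within the hybrid --- you cite it, but your induction does not actually make use of it, which is a symptom of the same problem.
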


\begin{proof}
    Let us consider some initial state for the real experiment, \begin{align*}
        \ket{\psi^0} = \ket{\alpha^0}_A \ket{\emptyset}_H \ket{\emptyset}_K \ket{\emptyset}_{K'},
    \end{align*}
    where $\ket{\alpha^0}$ is an arbitrary initial state of the adversary. Let us consider in fact a strengthened adversary which is time unbounded, and which holds the entire truth table of $\pi$. The only constraint we require is that $\pi$ is good, which happens with probability $1 - O(2^{-n})$. Our proof shows that for any fixed, good $\pi$, the real and ideal worlds are indistinguishable. We will do this by showing that the reduced density matrix of an adversary which queries compressed random oracles for $h, k, k'$, and one which queries the simulator are close, which we do by showing an isometry mapping the purification of the real state to an approximate purification of the ideal state.
    
    We can WLOG take our adversary to alternate queries to the various oracles (at a loss of an $O(1)$ factor), suppose in the order $k, k', h$. We can parameterize a $q$-query adversary of this form by unitaries $U^1, \dots, U^q$ acting on the internal register $A$. The final state in the real experiment is given by \begin{align*}
        \ket{\psi^R} = U^{q}_A \dots U^3_A \cO_{AH} U^2_A\cO_{AK'} U^1_A \cO_{AK} \ket{\psi^0}.
    \end{align*}
    We can write the state at the end of the ideal experiment as \begin{align*}
        \ket{\psi^I} =& U^{q}_A \dots U^3_A \algo S^h_{AHKK'F} U^2_A\algo S^{k'}_{AHKK'F} U^1_A \algo S^k_{AHKK'F} V_{KK'H}\ket{\psi^0}.
    \end{align*}
    Recall that we would like show that $V_{KK'H} \ket{\psi^R}$ is close to $\ket{\psi^I}$. Let us define $\ket{\psi^{R, g}}$ similar to $\ket{\psi^R}$, but with projectors onto good placed after each query: \begin{align*}
        \ket{\psi^{R, g}} \coloneqq U^{q}_A  \Pi^{\rg} \dots U^3_A \Pi^{\rg} \cO_{AH} U^2_A \Pi^{\rg} \cO_{AK'} U^1_A  \Pi^{\rg} \cO_{AK}  \Pi^{\rg} \ket{\psi^0}.
    \end{align*}
    Note that in the real world, we have \begin{align*}
        \norm{V_{KK'H} \ket{\psi^R} - V_{KK'H} \ket{\psi^{R, g}}} =& \norm{\ket{\psi^R} - \ket{\psi^{R, g}}} & \text{($V$ an isometry)} \\
        \leq& \tilde O(\sqrt{q^5 2^{-\min(r, c)}}) & \text{(\Cref{rem:good-trans-cap})}
    \end{align*}
    and in the ideal, \begin{align*}
        \ket{\psi^I} =& U^{q}_A \dots U^3_A \algo S^h_{AHKK'F} U^2_A\algo S^{k'}_{AHKK'F} U^1_A \algo S^k_{AHKK'F} V_{KK'H}\ket{\psi^0} \\
        =& U^{q}_A \dots U^3_A \algo S^h_{AHKK'F} U^2_A\algo S^{k'}_{AHKK'F} U^1_A \algo S^k_{AHKK'F} V_{KK'H}  \Pi^{\rg} \ket{\psi^0}.
    \end{align*}
    Let us now define \begin{align*}
        \ket{\psi^I_0} \coloneqq& U^{q}_A \dots U^3_A \algo S^h_{AHKK'F} U^2_A\algo S^{k'}_{AHKK'F} U^1_A \algo S^k_{AHKK'F} V_{KK'H}  \Pi^{\rg} \ket{\psi^0} \\
        \ket{\psi^I_1} \coloneqq& U^{q}_A \dots U^3_A \algo S^h_{AHKK'F} U^2_A\algo S^{k'}_{AHKK'F} V_{KK'H} \Pi^{\rg} U^1_A \cO_{AK} \Pi^{\rg}  \ket{\psi^0} \\
        \ket{\psi^I_2} \coloneqq& U^{q}_A \dots U^3_A \algo S^h_{AHKK'F} V_{KK'H}  \Pi^{\rg} U^2_A  \cO_{AK'} \Pi^{\rg} U^1_A \cO_{AK} \Pi^{\rg} \ket{\psi^0} \\ 
        \vdots& \\
        \ket{\psi^I_q} \coloneqq& V_{KK'H} \Pi^{\rg} U^{q}_A \dots \Pi^{\rg} U^3_A \algo \cO_{AH}  \Pi^{\rg} U^2_A  \Pi^{\rg} \cO_{AK'} U^1_A \cO_{AK} \Pi^{\rg} \ket{\psi^0},
    \end{align*}
    where we have $\ket{\psi^I_0} = \ket{\psi^I}$ and $\ket{\psi^I_q} = V \ket{\psi^{R, g}}$ (since the $U$'s acting on $A$ commute with $\Pi^{\rg}_{KK'}$), meaning  $\norm{\ket{\psi^I_q} - V\ket{\psi^R}} \leq \tilde O(\sqrt{q^5 2^{-\min(r, c)}})$ by the argument prior. To complete the proof, observe that we have \begin{align*}
        \norm{\ket{\psi^I_1} - \ket{\psi^I_0}} \leq& \tilde O(\sqrt{1^3 2^{-\min(r, c)}}) & \text{(By \Cref{lem:k-queries-close})} \\
        \norm{\ket{\psi^I_2} - \ket{\psi^I_1}} \leq& \tilde O(\sqrt{2^3 2^{-\min(r, c)}}) & \text{(By \Cref{lem:k'-queries-close})} \\ 
        \vdots& \\
        \norm{\ket{\psi^I_{t+1}} - \ket{\psi^I_t}} \leq& \tilde O(\sqrt{t^3 2^{-\min(r, c)}}) & \text{(By \Cref{lem:k-queries-close,lem:k'-queries-close,lem:h-queries-close})}
    \end{align*}
    and by triangle inequality, \begin{align*}
        \norm{\ket{\psi^I} - V_{KK'H}\ket{\psi^R}} \leq& \norm{\ket{\psi^I_q} - V_{KK'H}\ket{\psi^R}} + \sum_{t=1}^q \norm{\ket{\psi^I_{t}} - \ket{\psi^I_{t-1}}} \\
        \leq& \tilde O(\sqrt{q^5 2^{-\min(r, c)}}).
    \end{align*}
    The claim now follows from \Cref{lem:approx-uhlmann}.
\end{proof}

\subsection{Consistency}
\label{subsec:cons}

We would like to argue that the procedure for answering Msponge queries using the simulator are close to the ideal functionality. We will compute the Msponge using an out-of-place circuit in each round, as depicted in \Cref{fig:sponge-round-comp,fig:sponge-state-comp,fig:sponge-value-comp}. Let $l$ be an upper bound on the block length of an Msponge input, and we will use $t$ to represent the number of queries made so far to the oracles. Let us first define some operations which can be seen as building blocks of the full Msponge.

Throughout this section, let $\ket{x}_X  \ket{D_k}_K \ket{D_{k'}}_{K'} \ket{D_h}_H$ denote input registers and function databases respectively. We split $x=x_1 \Vert \dots \Vert x_l$ into $r$ bit blocks.

\begin{definition}
    Define isometry $A$ as the isometry which simply appends registers $\ket{0}_Z \ket{0}_{Hd} \ket{0}_W$ for the state, head, and intermediate outputs registers, as well as $\ket{0}_T \ket{0}_S$ for the tail, and success flag.
    \label{def:iso-A}
\end{definition}

\begin{definition}
    Define unitary $U$ as the unitary which out-of-place computes the sponge state in input $x$, depicted in \Cref{fig:sponge-round-comp,fig:sponge-state-comp}. In particular, compute the sponge state from $X$ into $Z$ and ${Hd}$ registers, as in \Cref{fig:sponge-state-comp}, using the $W$ register for intermediate computations.
    \label{def:sponge-U}
\end{definition}

\begin{definition}
    We define isometries $O^R, O^I$ as follows. Each operation will make use of $A$ and $U$, and we refer to the $Z, Hd, W$ registers created by $A$ using the same labels. We will refer to the tail $T$ as both a tail portion and a recovered head (potentially distinct from the actual head $Hd$) portion. The operators both begin by applying $A$ to create the commensurate registers, and continue as follows.
    \begin{itemize}
        \item[$O^R$.] Continue with the following operations: \begin{enumerate}[label=(\arabic*)]
            \item Use $U$ to compute the sponge state from $X$ into $Z$ and ${Hd}$ registers, as in \Cref{fig:sponge-state-comp}, using the $W$ register for intermediate computations
            \item Call $\mathsf{find\mbox{-}tail}$ with input register $Z$, on databases $K, K'$, with target register $T$ (both portions) and success flag target $S$.
        \end{enumerate}
        
        \item[$O^I$.] Continue with the following operations: \begin{enumerate}[label=(\arabic*)]
            \item Flip on the success flag $S$ 
            \item XOR the $X$ register into the tail portion of $T$
            \item Use $U$ to compute the sponge state from $X$ into $Z$ and ${Hd}$, as in \Cref{fig:sponge-state-comp}, using the $W$ register for intermediate computations
            \item XOR the head ${Hd}$ into the recovered head portion of $T$
        \end{enumerate} 
    \end{itemize}
    \label{defn:real-ideal-tail-isoms}
\end{definition}

To start, we will show that the databases for $K, K'$ almost always contain all of the input-output points computed by $O^R$, so long as the input state is valid. Noting that this is true for the uncompressed databases, this essentially follows from the fundamental lemma. Let us begin by defining a projector onto states where $K, K'$ indeed contain all of the input-output points computed by $O^R$. Here $g$ denotes a placeholder data for workspace values which do not correspond to input-output pairs.

\begin{definition}
    Define the recorded projector $\Pi_{X, W, K, K'}^{r}$ that checks whether every $(x, k(x))$ and $(x', k(x'))$ pair in the $X, W$ registers appears in the $K, K'$ database. In particular, we have \begin{align*}
        \ket{\psi} =& \ket{\mathbf x}_X \ket{\mathbf w \Vert \mathbf x' \Vert \mathbf w' \Vert \mathbf g}_W \ket{D_k}_K \ket{D_{k'}}_K \\
        \Pi^r \ket{\psi} \coloneqq& \begin{cases}
            0 & \text{(If $\exists i$ s.t. $(x_i, w_i) \not\in D_k$)} \\
            0 & \text{(If $\exists i$ s.t. $(x_i', w_i') \not\in D_{k'}$)} \\
            \ket{\psi} & \text{(Otherwise)}
        \end{cases}
    \end{align*}
\end{definition}

\begin{lemma}
    On a valid initial state $\ket{\psi}$, the input to the $\MSp$ is almost always recorded in the databases after the out-of-place circuit. Formally, \begin{align*}
        \norm{\Pi^r O^R \ket{\psi}} \geq 1 - O(\sqrt{l2^{-c}}),
    \end{align*}
    \label{lem:always-tail}
\end{lemma}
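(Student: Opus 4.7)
The plan is to reduce the claim to the fundamental lemma, specifically its algorithmic version \Cref{cor:fund-lemma-with-R}. Intuitively, in the uncompressed (purified) random oracle picture each query is faithful: if the workspace ends up containing $(x_i,w_i)$ it is because some query on $x_i$ returned the oracle's true value $w_i$, so $D_k(x_i)=w_i$ holds automatically. The compressed basis differs from the uncompressed one only by the unitary $C$, and the fundamental lemma controls exactly the quantity that $\Pi^r$ asks about: the amplitude on compressed databases that are consistent with specified input-output pairs. A useful first observation is that $O^R$ makes exactly $l$ compressed-oracle queries to each of $K$ and $K'$ and none to $H$, because the Msponge replaces the rate with each new input block, so the top wire produced by $\omega_h$ is overwritten at the start of the next round and the circuit $U$ computing $(Z,Hd)$ only needs $\sigma_k$, $\pi$, and $\tau_{k'}$. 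Since $O^R$ is out of place, the $(x_i,w_i)$ and $(x_i',w_i')$ pairs stay recorded in $W$; and by \Cref{cor:always-valid} the full state $O^R\ket{\psi}$ remains valid.

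The second step is to set up an auxiliary oracle algorithm $\mathcal B$ that prepares $\ket{\psi}$, applies $O^R$, and then measures $W$ in the computational basis, outputting the $2l$ pairs $\{(x_i,w_i)\}_i \cup \{(x_i',w_i')\}_i$. Run against a standard purified oracle for $k$ and $k'$, all these pairs agree with the true oracle values by faithfulness, so $p=1$ in the notation of \Cref{cor:fund-lemma-with-R}. In the compressed world the corresponding probability $p'$ equals $\norm{\Pi^r O^R\ket{\psi}}^2$ by deferred measurement, since $\Pi^r$ decomposes as $\sum_{\mathbf x} \outerprod{\mathbf x}{\mathbf x}_W \otimes \Pi^{\mathbf x}_{KK'}$, which is exactly the predicate ``workspace equals $\mathbf x$ and the compressed databases contain $\mathbf x$''. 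Applying \Cref{cor:fund-lemma-with-R} with the trivial relation and output size $N=2^c$ then gives $1=\sqrt{p}\le \sqrt{p'}+\sqrt{2l/2^c}$, which rearranges to $\norm{\Pi^r O^R\ket{\psi}}=\sqrt{p'}\ge 1-O(\sqrt{l\,2^{-c}})$, as claimed.

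The main obstacle I expect is that \Cref{cor:fund-lemma-with-R} is stated for an oracle algorithm starting from empty compressed databases, whereas the initial $\ket{\psi}$ here may already carry arbitrary prior query history in $K$ and $K'$. The way around this is to invoke the underlying fundamental lemma in its operator-norm form $\norm{(\Pi^{\mathbf x}-\Pi^{\mathbf x}C)\Pi^v}\le \sqrt{l/N}$, which is valid on arbitrary states in the image of $\Pi^v$ and not only on states produced from the empty database by oracle queries. Concretely, decomposing $\Pi^r$ over computational-basis workspace values $\mathbf x$, applying this operator inequality to each piece of the (valid) state $O^R\ket{\psi}$, and then summing the resulting squared bounds produces the same $O(\sqrt{l\,2^{-c}})$ deviation regardless of the initial content of the databases, completing the argument.
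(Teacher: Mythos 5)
Your proposal is correct and ends up at essentially the paper's own argument. The first two paragraphs route through the algorithmic version \Cref{cor:fund-lemma-with-R}, but you correctly identify in the final paragraph that this corollary assumes an empty initial database, and your fix --- decomposing $\Pi^r$ over computational-basis workspace values and applying the operator-norm form of the fundamental lemma, $\norm{(\Pi^{\mathbf x}-\Pi^{\mathbf x}C)\Pi^v}\le\sqrt{l/N}$, componentwise to the valid state $O^R\ket{\psi}$ --- is exactly what the paper does. The paper's proof simply skips the detour and writes the decomposition directly, establishing $\Pi^r C_K C_{K'} O^R\ket{\psi}=C_K C_{K'} O^R\ket{\psi}$ from faithfulness of the uncompressed oracle (your ``$p=1$'' observation) and then lower-bounding $\norm{\Pi^r O^R\ket{\psi}}$ via a convexity/minimum argument followed by the operator fundamental lemma. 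Your supporting observations (that $O^R$ issues $l$ queries to each of $K,K'$ and none to $H$, that the out-of-place circuit keeps the $(x_i,w_i),(x_i',w_i')$ pairs in $W$, and that $\Pi^r$ factors as $\sum_{\mathbf x}\outerprod{\mathbf x}{\mathbf x}_W\otimes\Pi^{\mathbf x}_{KK'}$) all match the paper's setup.
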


\begin{proof}
    Let $\ket{\psi}_{XKK'H}$ be a state such that $\Pi^v\ket{\psi} = \ket{\psi}$. Observe that the fully uncompressed databases of $K, K'$ would be guaranteed to satisfy $\Pi^r$ at the end; this is because the $X$ and $W$ registers contain an exact record of the queries made to $k, k'$. In other words, we can write \begin{align*}
        \Pi^r C_K C_{K'} O^R \ket{\psi} = C_K C_{K'} O^R \ket{\psi}, \numberthis\label{eqn:valid-has-tail-assump}
    \end{align*}
    where here the $C$ operator denotes full decompression. Let us define \begin{align*}
        \ket{\psi'} =& O^R \ket{\psi} 
        =& \sum_{\mathbf x, \mathbf w, \mathbf x', \mathbf w'} \alpha_{\mathbf x, \mathbf w, \mathbf x', \mathbf w'} \ket{\mathbf x}_X \ket{\mathbf w \Vert \mathbf x' \Vert \mathbf w' \Vert \mathbf g}_W \ket{D_{k,\mathbf x, \mathbf w, \mathbf x', \mathbf w'}}_K \ket{D_{k', \mathbf x, \mathbf w, \mathbf x', \mathbf w'}}_{K'}
    \end{align*}
    where $\mathbf x, \mathbf w, \mathbf x', \mathbf w'$ are the recorded input-output pairs at the end of $O^R$, and $\mathbf g$ captures all other non-database wires. We know that $\ket{\psi'}$ satisfies \Cref{eqn:valid-has-tail-assump}, which implies \begin{align*}
        \forall \mathbf x, \mathbf w, \mathbf x', \mathbf w':&& \\
        &&\Pi^{(\mathbf x, \mathbf w)}C_K\ket{D_{k, \mathbf x, \mathbf w, \mathbf x', \mathbf w'}}_K =& C_K\ket{D_{k, \mathbf x, \mathbf w, \mathbf x', \mathbf w'}}_K\\
        && \Pi^{(\mathbf x', \mathbf w')}C_{K'}\ket{D_{k', \mathbf x, \mathbf w, \mathbf x', \mathbf w'}}_{K'} =&  C_{K'}\ket{D_{k', \mathbf x, \mathbf w, \mathbf x', \mathbf w'}}_{K'}.
    \end{align*}
    We can now write \begin{align*}
        \norm{\Pi^r O^R \ket{\psi}} =& \Bigg\Vert\Pi^r \sum_{\mathbf x, \mathbf w, \mathbf x', \mathbf w'} \alpha_{\mathbf x, \mathbf w, \mathbf x', \mathbf w'} \ket{\mathbf x}_X \ket{\mathbf w \Vert \mathbf x \Vert \mathbf w' \Vert \mathbf g}_W \otimes \\& \ket{D_{k,\mathbf x, \mathbf w, \mathbf x', \mathbf w'}}_K \ket{D_{k', \mathbf x, \mathbf w, \mathbf x', \mathbf w'}}_{K'}\Bigg \Vert \\
        =& \Bigg\Vert\sum_{\mathbf x, \mathbf w, \mathbf x', \mathbf w'} \alpha_{\mathbf x, \mathbf w, \mathbf x', \mathbf w'} \ket{\mathbf x}_X \ket{\mathbf w \Vert \mathbf x \Vert \mathbf w' \Vert \mathbf g}_W \otimes\\& \left(\Pi^{(\mathbf x, \mathbf w)}\ket{D_{k, \mathbf x, \mathbf w, \mathbf x', \mathbf w'}}_K\right) \otimes \left(\Pi^{(\mathbf x', \mathbf w')}\ket{D_{k', \mathbf x, \mathbf w, \mathbf x', \mathbf w'}}_{K'}\right)\Bigg\Vert \\
        =& \Bigg(\sum_{\mathbf x, \mathbf w, \mathbf x', \mathbf w'} \abs{\alpha_{\mathbf x, \mathbf w, \mathbf x', \mathbf w'}}^2 \cdot \\& \norm{\left(\Pi^{(\mathbf x, \mathbf w)}\ket{D_{k, \mathbf x, \mathbf w, \mathbf x', \mathbf w'}}_K\right) \otimes \left(\Pi^{(\mathbf x', \mathbf w')}\ket{D_{k', \mathbf x, \mathbf w, \mathbf x', \mathbf w'}}_{K'}\right)}^2\Bigg)^{-1/2} & \text{(Orthogonal states)} \\
        \geq& \min_{\mathbf x, \mathbf w, \mathbf x', \mathbf w'} \norm{\left(\Pi^{(\mathbf x, \mathbf w)}\ket{D_{k, \mathbf x, \mathbf w, \mathbf x', \mathbf w'}}_K\right) \otimes \left(\Pi^{(\mathbf x', \mathbf w')}\ket{D_{k', \mathbf x, \mathbf w, \mathbf x', \mathbf w'}}_{K'}\right)} & \text{(Convex combination)} \\
        \geq& \min_{\mathbf x, \mathbf w, \mathbf x', \mathbf w'} \norm{\left(\Pi^{(\mathbf x, \mathbf w)}C_K\ket{D_{k, \mathbf x, \mathbf w, \mathbf x', \mathbf w'}}_K\right) \otimes \left(\Pi^{(\mathbf x', \mathbf w')}C_{K'}\ket{D_{k', \mathbf x, \mathbf w, \mathbf x', \mathbf w'}}_{K'}\right)}\\& - \sqrt{l2^{-c+1}} &\text{(Fundamental lemma)} \\
        \geq& 1 - O(\sqrt{l2^{-c}}) & \text{(\Cref{eqn:valid-has-tail-assump})}
    \end{align*}
\end{proof}

\begin{corollary}
    From \Cref{lem:always-tail} and \Cref{lem:proj-no-close}, we have \begin{align*}
        \norm{\Pi^{\neg r} O^R \Pi^v} \leq O(\sqrt[4]{l2^{-c}}).
    \end{align*}
    \label{cor:always-tail}
\end{corollary}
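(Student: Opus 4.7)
The plan is to bound the operator norm by considering the worst-case action on a unit vector, reducing to the setting of Lemma \ref{lem:always-tail}. Fix any unit vector $\ket{\psi}$ in the domain of $O^R$. Since $\Pi^v$ is a projector, if $\Pi^v \ket{\psi} = 0$ the bound is trivial; otherwise set $\ket{\phi} \coloneqq \Pi^v \ket{\psi}/\norm{\Pi^v \ket{\psi}}$, which is a unit vector in the valid subspace satisfying $\Pi^v \ket{\phi} = \ket{\phi}$.

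Next, I would apply Lemma \ref{lem:always-tail} to $\ket{\phi}$ to obtain
\begin{equation*}
    \norm{\Pi^r O^R \ket{\phi}} \geq 1 - O\bigl(\sqrt{l 2^{-c}}\bigr).
\end{equation*}
Since $O^R$ is an isometry, $O^R \ket{\phi}$ is itself a unit vector, so Lemma \ref{lem:proj-no-close} with $\Pi = \Pi^r$ and $\epsilon = O(\sqrt{l 2^{-c}})$ yields
\begin{equation*}
    \norm{\Pi^{\neg r} O^R \ket{\phi}} \leq \sqrt{2\,O\bigl(\sqrt{l 2^{-c}}\bigr)} = O\bigl(\sqrt[4]{l 2^{-c}}\bigr).
\end{equation*}

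Finally, since $\norm{\Pi^v \ket{\psi}} \leq 1$, we conclude
\begin{equation*}
    \norm{\Pi^{\neg r} O^R \Pi^v \ket{\psi}} = \norm{\Pi^v \ket{\psi}} \cdot \norm{\Pi^{\neg r} O^R \ket{\phi}} \leq O\bigl(\sqrt[4]{l 2^{-c}}\bigr).
\end{equation*}
Taking the supremum over all unit $\ket{\psi}$ gives the stated operator-norm bound. There is no real obstacle here: the corollary is essentially a mechanical combination of the two prior lemmas, with the only subtlety being to normalize correctly after projecting onto the valid subspace so that Lemma \ref{lem:proj-no-close}, which assumes a unit input state, applies cleanly to $O^R \ket{\phi}$.
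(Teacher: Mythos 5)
Your proof is correct and is essentially the argument the paper leaves implicit (the corollary is stated without a written proof). You correctly identify the one small subtlety—normalizing $\Pi^v\ket{\psi}$ so that Lemma \ref{lem:always-tail} and then Lemma \ref{lem:proj-no-close} apply to a unit vector, and invoking the isometry property of $O^R$ to keep $O^R\ket{\phi}$ a unit vector—before scaling back by $\norm{\Pi^v\ket{\psi}}\le 1$ and taking the supremum.
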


We can continue with the main lemma, which essentially states that feeding an input into the Msponge and then calling $\mathsf{find\mbox{-}tail}$ on the state will nearly always return the original input and the correct head. This will allow us to show that the action of the simulator is essentially to compute $f$ on the original input, and is the technical core. Note that intermediate wires which are later uncomputed are not mentioned in this statement, to prevent clutter.

Note that the formulation of this lemma is slightly different from the formulation in \Cref{subsec:indist}, in that we do not project onto good databases. This is because we require the input state to be valid, and the projector $\Pi^\ig$ does not commute with the projector $\Pi^v$. We instead explicitly write the form of the state on which this lemma will apply, which we will then show are the kinds of states that arise in our experiment. 

\begin{lemma}
    Consider a (possibly sub-normalized) state $\ket{\gamma}$ over registers $XKK'H$, which is valid s.t. $\Pi^v \ket{\gamma} = \ket{\gamma}$, and whose databases have size at most $t$ s.t. $\Pi^t \ket{\gamma} = \ket{\gamma}$. Let $\beta = \norm{\Pi^{\neg \ig} \ket{\gamma}}$. Then we have \begin{align*}
        \norm{(O^I-O^R)\ket{\gamma}} \leq& \tilde O\left(\beta + l\sqrt{t^32^{-\min(r, c)}} + \sqrt[4]{l2^{-c}}\right)
    \end{align*}

    \label{lem:pull-out-tail}
\end{lemma}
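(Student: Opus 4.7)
The plan is to decompose $\ket{\gamma} = \Pi^{\ig}_{HKK'}\ket{\gamma} + \Pi^{\neg \ig}_{HKK'}\ket{\gamma} =: \ket{\gamma_{\ig}} + \ket{\gamma_{\neg \ig}}$, where $\norm{\ket{\gamma_{\neg \ig}}} = \beta$. Since $O^I$ and $O^R$ are isometries, the triangle inequality gives $\norm{(O^I - O^R)\ket{\gamma}} \leq \norm{(O^I - O^R)\ket{\gamma_{\ig}}} + 2\beta$, so it suffices to control the first term.

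Next, I would rearrange $O^I$. Steps (1) and (2) of \Cref{defn:real-ideal-tail-isoms} act only on registers $T$ and $S$ (reading $X$ in place), and therefore commute with $U$, which acts on $A, X, W, H, K, K'$. Hence I can rewrite $O^I = P^I \circ U \circ A$ and $O^R = P^R \circ U \circ A$, where $P^R$ is the find-tail unitary and $P^I$ is the composition of steps (1), (2), (4). Writing $\ket{\phi} = U A \ket{\gamma_{\ig}}$, the task reduces to bounding $\norm{(P^I - P^R)\ket{\phi}}$.

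The key observation is that $P^I$ and $P^R$ agree on every computational basis state in the subspace where both $\Pi^r$ and $\Pi^{\ig}$ hold. Indeed, on such a basis state with input $x = x_1 \Vert \cdots \Vert x_l$: the projector $\Pi^r$ guarantees that all $(x_i, k(x_i))$ and $(x'_i, k'(x'_i))$ pairs appearing in the out-of-place sponge computation are recorded in $D_k, D_{k'}$, so by \Cref{def:tail} the resulting state value $Z$ has $x$ as a tail; ideal goodness then forces $|\mathsf{tail}(Z)| \leq 1$, so $\mathsf{tail}_1(Z) = x$ and $\mathsf{head}(Z)$ equals the value $x'_l$ that $U$ already stored in register $Hd$. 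Consequently, $\mathsf{fT}$ writes $x$ into the tail portion of $T$, $x'_l$ into the head portion, and sets $S = 1$, matching $P^I$. Thus $(P^I - P^R)\Pi^r \Pi^{\ig} = 0$, giving $\norm{(P^I - P^R)\ket{\phi}} \leq 2\norm{\Pi^{\neg r}\ket{\phi}} + 2\norm{\Pi^{\neg \ig}\ket{\phi}}$ via the triangle inequality for $(I - \Pi^r\Pi^\ig)$.

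It remains to bound the two weights. For $\norm{\Pi^{\neg r}\ket{\phi}}$, note that $\mathsf{fT}$ commutes with $\Pi^{\neg r}$ (acting on disjoint registers), so $\Pi^{\neg r}\ket{\phi} = \Pi^{\neg r} O^R \ket{\gamma_{\ig}}$; validity of $\ket{\gamma}$ combined with \Cref{cor:always-tail} then yields $\norm{\Pi^{\neg r}\ket{\phi}} = O(\sqrt[4]{l 2^{-c}})$. For $\norm{\Pi^{\neg \ig}\ket{\phi}}$, $U$ consists of $O(l)$ compressed oracle queries interspersed with fixed unitaries on non-database registers, so \Cref{rem:good-trans-cap} combined with \Cref{quant-trans-cap-splitting-queries} bounds the transition from ideal good to ideal bad by $\tilde O(l\sqrt{t^3 2^{-\min(r,c)}})$. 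The main obstacle in the write-up will be this last step: I need to describe $U$ explicitly as a sequence of compressed oracle calls so that the transition-capacity framework applies, and to track that the database size remains $t + O(l)$ throughout, which is absorbed into $\tilde O$. Combining everything yields $\norm{(O^I - O^R)\ket{\gamma}} \leq \tilde O\bigl(\beta + l\sqrt{t^3 2^{-\min(r,c)}} + \sqrt[4]{l 2^{-c}}\bigr)$ as required.
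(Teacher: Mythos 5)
Your proof follows essentially the same strategy as the paper: rearrange $O^I$ so the $U$ computation comes first (exploiting commutation of steps acting on disjoint registers), show that the remaining parts $P^I$ and $P^R$ agree on the subspace where $\Pi^r$ and $\Pi^{\ig}$ both hold, and then bound the weight on $\Pi^{\neg r}$ (via Lemma~\ref{lem:always-tail}) and on $\Pi^{\neg\ig}$ (via the transition capacity bound of Remark~\ref{rem:good-trans-cap}). The only structural difference is ordering: you peel off $\Pi^{\neg\ig}\ket{\gamma}$ \emph{before} applying $U$ and pay $2\beta$ up front, whereas the paper applies $F = U\circ A$ to the full $\ket{\gamma}$ and only then decomposes into the pieces $\Pi^{r\cap\ig}$, $\Pi^{\neg r}$, $\Pi^{r\cap\neg\ig}$. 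Both routes land on the same bound.

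There is, however, one genuine gap, and it is exactly the pitfall that Section~\ref{subsec:MD-gap} warns about. You invoke validity of $\ket{\gamma}$ and Corollary~\ref{cor:always-tail} to conclude $\norm{\Pi^{\neg r}\ket{\phi}} = O(\sqrt[4]{l2^{-c}})$, where $\ket{\phi} = UA\ket{\gamma_{\ig}}$. But Corollary~\ref{cor:always-tail} is an operator-norm bound of the form $\norm{\Pi^{\neg r}O^R\Pi^v} \le O(\sqrt[4]{l2^{-c}})$, and $\ket{\gamma_{\ig}} = \Pi^{\ig}\ket{\gamma}$ is \emph{not} in the valid subspace, because $\Pi^v$ and $\Pi^{\ig}$ do not commute (one is diagonal in the Fourier basis, the other in the computational basis). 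The fix is to write $\norm{\Pi^{\neg r}UA\ket{\gamma_{\ig}}} \le \norm{\Pi^{\neg r}UA\ket{\gamma}} + \norm{\Pi^{\neg r}UA\ket{\gamma_{\neg\ig}}}$, bound the first term by Corollary~\ref{cor:always-tail} using validity of $\ket{\gamma}$ itself, and bound the second by $\beta$. This produces an extra additive $\beta$, which is harmless since $\beta$ already appears in the target. You should make this explicit, because as written the application of Corollary~\ref{cor:always-tail} to $\ket{\gamma_{\ig}}$ is unjustified — and avoiding exactly this kind of silent commutation of $\Pi^v$ with good-projectors is one of the central technical points of this paper's approach.
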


\begin{proof}
    First observe that steps (3) and (4) of $O^R$ preserve the computational basis of the $X$ and $S$ registers. They therefore commute with steps (1) and (2), so we can move steps (1) and (2) to be at the end. Let us call this new procedure $O^{R'}$ with steps (1'), \dots, (4'). Let us use the letter $F$ to refer to the operator implementing the first step of $O^I$ (which is the same first step as $(O^{R'})$.

    Now observe that $\Pi^r$ and $\Pi^{\ig}$ are diagonal in the computational basis, meaning they commute. This implies that $\Pi^{r \cap \ig} = \Pi^r \Pi^\ig$, i.e. the projector onto recorded and good databases is the product of the projector onto good with the projector onto recorded. We can write \begin{align*}
        F \ket{\gamma} &= \Pi^{r \cap \ig} F \ket{\gamma} + \Pi^{\neg r \cup \neg \ig} F \ket{\gamma} \\
        &= \Pi^{r \cap \ig} F \ket{\gamma} + \underbrace{\Pi^{\neg r} F \ket{\gamma}}_{\ket{\xi_1}} + \underbrace{\Pi^{r \cap \neg \ig} F \ket{\gamma}}_{\ket{\xi_2}}.
    \end{align*}
    We have that $\norm{\ket{\xi_1}} \leq \tilde O(\sqrt[4]{l2^{-c}})$ from \Cref{lem:always-tail}, and $\norm{\ket{\xi_2}} \leq \tilde O(\beta + l\sqrt{t^3 2^{-\min(r, c)}})$ from \Cref{lem:bad-transition-bound}. It follows that there exists a state $\ket{\xi}$ of norm $O(\beta + l\sqrt{t^3 2^{-\min(r, c)}} + \sqrt[4]{l2^{-c}})$ s.t. \begin{align*}
        F \ket{\gamma} &= \Pi^{r \cap \ig} F \ket{\gamma} + \ket{\xi}.
    \end{align*}
    Observe that the action of steps (2', 3', 4') of $O^{R'}$ and step (2) of $O^I$ act identically on states within the $\Pi^{r \cap \ig}$ subspace; this is because such states have $\mathbf x$ as the unique recorded tail of the $z$ value appearing as the final state. In other words, $O^IF^\dagger$ and $O^RF^\dagger$ act identically on such states.
    
    From the fact that unitaries preserve distance, we then have \begin{align*}
        \norm{O^I \ket{\gamma} - O^R \ket{\gamma}} \leq& \norm{O^I F^\dagger \Pi^{r \cap \ig} F \ket{\gamma} - O^R F^\dagger \Pi^{r \cap \ig} F \ket{\gamma}} + \norm{O^I F^\dagger \Pi^{\neg r \cup \neg \ig} F \ket{\gamma} - O^R F^\dagger \Pi^{\neg r \cup \neg \ig} F \ket{\gamma}} \\
        \leq& \underbrace{\norm{\Pi^{r \cap \ig} F \ket{\gamma} - \Pi^{r \cap \ig} F \ket{\gamma}}}_{=0} + 2\norm{\ket{\xi}} \\
        \leq& \tilde O\left(\beta + \sqrt{t^32^{-\min(r, c)}} + \sqrt[4]{l2^{-c}}\right).
    \end{align*}
\end{proof}

\paragraph{Preserving the valid subspace.}
\label{subsec:usually-valid}
We will often use the fact that the state in the consistency experiment is close to valid. Here we justify that assumption. The first and easy case is interactions via compressed oracle calls $\cO$. As was shown in \Cref{cor:always-valid}, these preserve the valid subspace. The only other way in which the simulator and adversary interact with the compressed databases is via the $\mathsf{find\mbox{-}tail}$ operation. When we say a state is ``valid'', here we mean that all the compressed databases for $K, K', H, F$ are valid.

Showing that the $\mathsf{find\mbox{-}tail}$ operation essentially preserves validity on good databases will be somewhat more involved. We will often refer back to the formal definition of $\mathsf{find\mbox{-}tail}$, \Cref{def:find-tail}. 

\begin{lemma}
    Let $\ket{\psi}_{ZTSKK'}$ be a valid state be such that $\norm{\Pi^{\neg \ig}_{KK'}\ket{\psi}_{ZTSKK'} } = \beta$. Then we have \begin{align*}
        \norm{\Pi^{\neg v}_{KK'} \mathsf{fT} \ket{\psi}} \leq \tilde O(t\beta + \sqrt{t^5 2^{-\min(r, c)}})
    \end{align*}
    \label{lem:ft-valid}
\end{lemma}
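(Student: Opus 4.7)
The strategy is to decompose $\ket{\psi}$ along the ideal-good/bad split and bound each piece separately. Set $\ket{\psi_g} := \Pi^{\ig}_{KK'}\ket{\psi}$ and $\ket{\psi_b} := \Pi^{\neg\ig}_{KK'}\ket{\psi}$, so $\|\ket{\psi_b}\|=\beta$. The key structural observation is that $\mathsf{fT}$ is a controlled operation on $KK'$ that leaves those registers unchanged in the computational basis; in particular $\mathsf{fT}$ commutes exactly with $\Pi^{x\in\mathsf{db}}$ for every $x$. Hence any violation of validity introduced by $\mathsf{fT}$ must come from its failure to commute with the local compressions $C^x$, and such failures are exactly what \Cref{lem:tail-insensitive} controls on good databases.

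To extract this commutator structure, use validity of $\ket{\psi}$ (so $\Pi^v\ket\psi=\ket\psi$) to write $\Pi^{\neg v}\mathsf{fT}\ket{\psi} = [\mathsf{fT},\Pi^v]\ket{\psi}$, expand $\Pi^v = \prod_x C^x \Pi^{x\in\mathsf{db}} C^x$ as a product of commuting projectors, and apply the telescoping commutator identity. Combined with $[\mathsf{fT},\Pi^{x\in\mathsf{db}}]=0$, this reduces the problem to bounding $\sum_x \|[\mathsf{fT},C^x]\ket{\psi}\|$ up to constants.

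For each $x$, split $\ket\psi = \ket{\psi_g} + \ket{\psi_b}$ again. On the good part, \Cref{lem:tail-insensitive} says that toggling $D(x)$ between $\bot$ and an arbitrary $y$ affects $\mathsf{tail}_1(z)$ for at most $\tilde O(t^3 n + t^3 2^{c-r})$ values of $y \in \bit^c$, out of $2^c$ total. Expanding $C^x\ket{\bot}_x = \ket\mu_x = \tfrac{1}{\sqrt{N}}\sum_y \ket{y}_x$ and using this bound on the fraction of ``tail-changing'' outputs (together with the analogous analysis for the case $D(x)\neq\bot$, where an extra $1/\sqrt N$ appears from the $\ket\bot$ correction in $C^x\ket y$) yields the per-register bound $\|[\mathsf{fT},C^x]\ket{\psi_g}\| \leq \tilde O(\sqrt{t^3 \cdot 2^{-\min(r,c)}})$. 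For the bad part, the commutator is bounded trivially by $2\beta$ since $\mathsf{fT}$ and $C^x$ are unitary.

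The main obstacle is summing these per-register bounds across $x$: naively ranging over all $2^r$ inputs would give a useless bound. The resolution is that $\ket{\psi}$ is supported on databases of size at most $t$, so only $O(t)$ positions contribute non-trivially to the commutator. Formally, one works in the efficient representation from \Cref{alg:compress-circ}, where $K,K'$ are each stored in $O(t)$ slots padded by $(\infty,\bot)$, so the validity condition becomes a check over these $O(t)$ slots and the sum has only $O(t)$ terms. Combining the good and bad per-register estimates and summing yields the claimed bound $\tilde O(t\beta + \sqrt{t^5 \cdot 2^{-\min(r,c)}})$.
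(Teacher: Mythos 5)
The proposal takes a genuinely different route from the paper's: it rewrites $\Pi^{\neg v}\mathsf{fT}\ket\psi = [\mathsf{fT},\Pi^v]\ket\psi$, telescopes $\Pi^v=\prod_x C^x\Pi^{x\in\mathsf{db}}C^x$, uses $[\mathsf{fT},\Pi^{x\in\mathsf{db}}]=0$, and reduces to a sum of per-register commutators $\|[\mathsf{fT},C^x]\ket\psi\|$. The commutator algebra itself is sound, and the per-register estimate on the good part via \Cref{lem:tail-insensitive} is plausible. The paper instead first decomposes $\ket\psi$ into orthogonal components with fixed input vectors $\mathbf x,\mathbf x'$ and fixed $z$ (\Cref{clm:ft-valid-special}), then directly expands $\Pi^{\neg v,x}\mathsf{fT}\ket{\psi_{\mathbf x,\mathbf x',z}}$ and bounds it using the validity identity $\sum_{y}\alpha_y=0$.

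The genuine gap is the step where you claim only $O(t)$ registers $x$ contribute to the sum. That claim is true for the quantity $\Pi^{\neg v,x}\mathsf{fT}\ket\psi$ that the telescope actually produces: if a basis component has $D(x)=\bot$ then $\mathsf{fT}$ leaves register $x$ in $\ket\bot_x$, and $\Pi^{\neg v,x}\ket\bot_x = C^x\proj{\bot}C^x\ket\bot_x = 0$. But it is \emph{false} for the further-reduced quantity $\|[\mathsf{fT},C^x]\ket\psi\|$: for a good basis state $\ket{D}$ with $D(x)=\bot$, one computes
$[\mathsf{fT},C^x]\ket{D}\ket{tl} = \tfrac{1}{\sqrt N}\sum_y \ket{D[x\to y]}\bigl(\ket{tl\oplus t_y}-\ket{tl\oplus t_\bot}\bigr)$,
which is generically nonzero of order $\tilde O(\sqrt{t^3/2^{\min(r,c)}})$ for \emph{every} $x\in\bit^r$, since the set of tail-changing $y$ from \Cref{lem:tail-insensitive} is typically nonempty regardless of $x$. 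Summing this over all $2^r$ inputs produces a factor $2^r$ that destroys the bound. The appeal to the efficient representation of \Cref{alg:compress-circ} does not save this: the slot index of the padded $O(t)$-length list is not the same object as the input index $x$ that $\Pi^v$ (and the telescope) range over, and the efficient representation is merely an implementation isomorphism that does not change the quantity $\|\Pi^{\neg v}\mathsf{fT}\ket\psi\|$ being bounded. To repair the argument you would need to stop the reduction at $\Pi^{\neg v,x}\mathsf{fT}\ket\psi$ rather than $[\mathsf{fT},C^x]\ket\psi$, and then do the orthogonal decomposition by input vectors (preserved because $\mathsf{fT}$ and $\Pi^{\neg v}$ both preserve the input vectors of $D_k,D_{k'}$ and the value of $z$) so that each component has genuinely only $O(t)$ nonzero terms; this is exactly what the paper does in \Cref{clm:ft-valid-special}. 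A secondary observation: your good/bad split uses $\Pi^{\ig}$ directly, which does not preserve validity, whereas the paper splits on whether $D[x\to\bot]$ is good (a predicate independent of $D(x)$), precisely so that the validity sum $\sum_y\alpha_y=0$ survives the projection and can be used to control the $u$-sum in the direct computation. Your commutator form seems to circumvent the need for validity per basis state, so this difference by itself is not fatal, but it is a real divergence from the paper's reasoning.
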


\begin{proof}
    Let us say that the input vector $\mathbf x$ of a database $D_k$ is the set of $\mathbf x$ values on which $D_k$ is defined, and similarly $\mathbf x'$ for $D_{k'}$. The output vector $\mathbf y$ and $\mathbf y'$ are similarly defined by the outputs of $D_k$ and $D_{k'}$ respectively. Databases with different input vectors are clearly orthogonal. Now observe that $\Pi^{\neg v}$ and $\mathsf{fT}$ preserve the input vectors of $D_k$ and $D_{k'}$, as well as the value of $z$. We can write \begin{align*}
        \ket{\psi} =& \sum_{\mathbf x, \mathbf x', z} \alpha_{\mathbf x, z} \ket{\psi_{\mathbf x, \mathbf x', z}},
    \end{align*}
    where $\ket{\psi_{\mathbf x, \mathbf x', z}}$ is an arbitrary state with input vectors $\mathbf x, \mathbf x'$ for registers $K, K'$ and fixed value $z$ for register $Z$. From the previous observation, we have \begin{align*}
        \norm{\Pi^{\neg v}_{KK'} \mathsf{fT} \ket{\psi}} =& \sqrt{\sum_{\mathbf x, \mathbf x', z} \abs{\alpha_{\mathbf x, z}}^2 \norm{\Pi^{\neg v}_{KK'} \mathsf{fT} \ket{\psi_{\mathbf x, \mathbf x', z}}}^2}.
    \end{align*}
    Let us now suppose the following claim, which we defer the proof of. \begin{claim}
        Let $\ket{\psi_{\mathbf x, \mathbf x', z}} \in \Pi^v_{KK'}$ be as above with fixed input vectors and $z$ value, such that $\norm{\Pi^{\neg \ig}_{KK'}\ket{\psi_{\mathbf x, \mathbf x', z}}} = \beta$. Then we have \begin{align*}
            \norm{\Pi^{\neg v}_{KK'} \mathsf{fT} \psi_{\mathbf x, \mathbf x', z}} \leq \tilde O(t\beta + \sqrt{t^5 2^{-\min(r, c)}}).
        \end{align*}
        \label{clm:ft-valid-special}
    \end{claim}

    Assuming such a claim, we can write  \begin{align*}
        \norm{\Pi^{\neg v}_{KK'} \mathsf{fT} \ket{\psi}} =& \sqrt{\sum_{\mathbf x, \mathbf x', z} \abs{\alpha_{\mathbf x, z}}^2 \norm{\Pi^{\neg v}_{KK'} \mathsf{fT} \ket{\psi_{\mathbf x, \mathbf x', z}}}^2} \\
        \leq& \tilde O\left(\sqrt{\sum_{\mathbf x, \mathbf x', z} \abs{\alpha_{\mathbf x, z}}^2 (t \norm{\Pi^{\neg \ig}_{KK'} \ket{\psi_{\mathbf x, \mathbf x', z}}})^2} + \sqrt{\sum_{\mathbf x, \mathbf x', z}\abs{\alpha_{\mathbf x, z}}^2 t^5 2^{-\min(r, c)}}\right) \\
        \leq& \tilde O\left(t \norm{\Pi^{\neg \ig}_{KK'} \ket{\psi}} + \sqrt{t^5 2^{-\min(r, c)}}\right).
    \end{align*}

    It simply remains to prove the claim. \end{proof}
\begin{proof}[Proof of \Cref{clm:ft-valid-special}]
    Consider a state of the form \begin{align}
        \ket{\psi_{\mathbf x, \mathbf x', z}} =& \sum_{\mathbf y, \mathbf y', tl, s} \alpha_{\mathbf y, \mathbf y', tl, s} \ket{D_k[\mathbf x \rightarrow \mathbf y]}_K \ket{D_{k'}[\mathbf x' \rightarrow \mathbf y']}_{K'} \ket{z}_Z \ket{tl}_T \ket{s}_S, \label{eqn:sform-tail-valid}
    \end{align}
    where $\mathbf x,\mathbf x',$ and $z$ are fixed strings. For ease of notation, for most of this proof we will drop the subscripts on $\ket{\psi}$. Recalling that database validity means every output has zero overlap with the uniform superposition, we can write the validity condition in this notation as \begin{align*}
        \forall tl, s, \mathbf x, \mathbf x', \mathbf y, \mathbf y': \sum_{y_i} \alpha_{\mathbf y|_{y_i}, \mathbf y', tl, s} &= 0, \\
        \sum_{y_i'} \alpha_{\mathbf y, \mathbf y'|_{y_i'}, tl, s} &= 0,
    \end{align*}
    where the notation $\mathbf y |_{y_i}$ means replacing the $i$-th index of $\mathbf y$ with $y_i$. Now let us define $\Pi^{\neg v, x}$ as the projector onto span of databases that are invalid on input $x$. For a general database, this projector can be written as \begin{align*}
        \Pi^{\neg v, x} = \sum_{D, x \not\in D} \frac{1}{N} \sum_{u, v} \ket{D[x\rightarrow u]}\bra{D[x\rightarrow v]}.
    \end{align*}
    In our case, we will use $\Pi^{\neg v, x}_K$ to denote such a projector on the $K$ register and $\Pi^{\neg v, x'}_{K'}$ to denote such a projector on the $K'$ register, with the subscripts sometimes dropped.
    Observe that in general \begin{align*}
        \norm{\Pi^{\neg v}_{KK'} \ket{\phi}} \leq& \sum_{x} \norm{\Pi^{\neg v, x}_{K} \ket{\phi}} + \sum_{x'} \norm{\Pi^{\neg v, x'}_{K'} \ket{\phi}}.
    \end{align*}
    For a state of the form in \Cref{eqn:sform-tail-valid}, at most $t$ of these terms will be non-zero, corresponding to the terms in the input vectors. Let us analyze the norm of one such term, say corresponding to $\Pi^{\neg v, x_1}_K$. Note that this operation (as well as $\mathsf{fT}$) preserves all other input-output points, except the value of $y_1$. We will for convenience simply use $D_k$ and $D_{k'}$ to refer to the databases, and $x$ and $y$ to refer to the selected input-output pair. We can write 
    \begin{align*}
        \ket{\psi^g} =& \sum_{\mathbf y, \mathbf y', tl, s; [\mathbf x \rightarrow (\bot, y_2, \dots)]_K, [\mathbf x' \rightarrow \mathbf y']_{K'} \text{ is good}} \alpha_{\mathbf y, \mathbf y', tl, s} \ket{D_k[\mathbf x \rightarrow \mathbf y]}_K \ket{D_{k'}[\mathbf x' \rightarrow \mathbf y']}_{K'} \ket{z}_Z \ket{tl}_T \ket{s}_S \\
        \ket{\psi^b} =& \sum_{\mathbf y, \mathbf y', tl, s; [\mathbf x \rightarrow (\bot, y_2, \dots)]_K, [\mathbf x' \rightarrow \mathbf y']_{K'} \text{ is bad}} \alpha_{\mathbf y, \mathbf y', tl, s} \ket{D_k[\mathbf x \rightarrow \mathbf y]}_K \ket{D_{k'}[\mathbf x' \rightarrow \mathbf y']}_{K'} \ket{z}_Z \ket{tl}_T \ket{s}_S,
    \end{align*}
    or in words $\ket{\psi^g}$ are the terms in which the database with $x$ assigned to $\bot$ is good, and $\ket{\psi^b}$ are the terms in which the database with $x$ assigned to $\bot$ is bad. Observe that we have \begin{align*}
        \norm{\Pi^{\neg v, x} \mathsf{fT} \ket{\psi}} =& \norm{\Pi^{\neg v, x} \mathsf{fT} (\ket{\psi^g} + \ket{\psi^b})} \\
        \leq& \norm{\Pi^{\neg v, x} \mathsf{fT} \ket{\psi^g}} + \norm{\Pi^{\neg v, x} \mathsf{fT} \ket{\psi^b}} & \text{(Triangle inequality)}\\
        \leq& \norm{\Pi^{\neg v, x} \mathsf{fT} \ket{\psi^g}} + \beta &\text{(Monotonicity of bad)}
    \end{align*}
    
    Note that the last inequality follows because adding a new input-output pair cannot convert a bad database into a good one; hence every term with non-zero amplitude in $\ket{\psi^b}$ is bad. It therefore remains to bound the first term. Observe that $\ket{\psi^g}$ is valid on input $x$, meaning $\Pi^{\neg v, x} \ket{\psi^g} = 0$. To see this, we can write the amplitudes \begin{align*}
        \ket{\psi^g} =& \sum_{\mathbf y, \mathbf y', tl, s} \beta_{\mathbf y, \mathbf y', tl, s} \ket{D_k[\mathbf x \rightarrow \mathbf y]}_K \ket{D_{k'}[\mathbf x' \rightarrow \mathbf y']}_{K'} \ket{z}_Z \ket{tl}_T \ket{s}_S \\
        \beta_{\mathbf y, \mathbf y', tl, s} \coloneqq& \begin{cases}
            \alpha_{\mathbf y, \mathbf y', tl, s} & \text{$([\mathbf x \rightarrow (\bot, y_2, \dots)]_K, [\mathbf x' \rightarrow \mathbf y']_{K'}$ is good)} \\
            0 & \text{($[\mathbf x \rightarrow (\bot, y_2, \dots)]_K, [\mathbf x' \rightarrow \mathbf y']_{K'}$ is bad)}
        \end{cases}
    \end{align*}
    and we can write the sums \begin{align*}
        \sum_{y_1} \beta_{\mathbf y|_{y_1}, \mathbf y', tl, s} =& \begin{cases}
            \sum_{y_1} \alpha_{\mathbf y|_{y_1}, \mathbf y', tl, s} & \text{($[\mathbf x \rightarrow (\bot, y_2, \dots)]_K, [\mathbf x' \rightarrow \mathbf y']_{K'}$ is good)} \\
            0 & \text{($[\mathbf x \rightarrow (\bot, y_2, \dots)]_K, [\mathbf x' \rightarrow \mathbf y']_{K'}$ is bad)}
        \end{cases} \\
        =& 0.
    \end{align*}
    
    Continuing on, denote by $t_y$ the output (first) tail on $D_k[x \rightarrow y], D_{k'}$. Let $S$ denote the possible values of $y$ such that the tail of $z$ will be different on $D_k[x \rightarrow y], D_{k'}$ than on $D_k[x \rightarrow \bot], D_{k'}$. Note that by \Cref{lem:tail-insensitive}, we have $|S|=O(t^3(n + 2^{c-r}))$. Finally, observe that $\Pi^{\neg v, x}$ annihilates databases not defined on $x$, so we may drop such terms. We now split into two cases. \begin{enumerate}[label=(\arabic*)]
        \item Suppose that $z$ has a tail $t_\bot$ in $D_k[x \rightarrow \bot], D_{k'}$. It follows that, for any $y$, $z$ will have a tail in $D_k[x \rightarrow y], D_{k'}$. We have \begin{align*}
            & \norm{\Pi^{\neg v, x_1}_K \mathsf{fT} \ket{\psi}} \\=& \norm{\Pi^{\neg v, x_1} \sum_{y, tl, s} \alpha_{y, tl, s} \ket{D_k[x \rightarrow y]}_K \ket{D_{k'}}_{K'} \ket{tl \oplus t_y}_T \ket{s\oplus 1}_S} \\
            =& \norm{\sum_{y, u, tl, s} \alpha_{y, tl, s}2^{-c} \ket{D_k[x \rightarrow u]}_K \ket{D_{k'}}_{K'} \ket{tl \oplus t_y}_T \ket{s\oplus 1}_S} \\
            =& \norm{\sum_{u, tl, s}  2^{-c}\left(\sum_{y \not\in S} \alpha_{y, tl\oplus t_{\bot}, s \oplus 1} + \sum_{y \in S} \alpha_{y, tl \oplus t_y, s\oplus 1}\right) \ket{u}\ket{tl}\ket{s}} \\
            \leq& \norm{\sum_{u, tl, s}  2^{-c}\left(-\sum_{y \in S} \alpha_{y, tl\oplus t_\bot, s \oplus 1} + \sum_{y \in S} \alpha_{y, tl \oplus t_y, s\oplus 1}\right) \ket{u}\ket{tl}\ket{s}} & \text{(Validity of $\ket{\psi}$)} \\
            \leq& \norm{\sum_{u, tl, s}  2^{-c}\sqrt{2|S|\left(\sum_{y \in S} |\alpha_{y, tl\oplus t_{\bot}, s \oplus 1}|^2 + \sum_{y \in S} |\alpha_{y, tl \oplus t_y, s\oplus 1}|^2\right)} \ket{u}\ket{tl}\ket{s}} & \text{($L_2$ bound on $L_1$)} \\
            =& 2^{-c}\sqrt{2|S|\sum_{y \in S, u, tl, s} \left(|\alpha_{y, tl\oplus t_{\bot}, s \oplus 1}|^2 + |\alpha_{y, tl \oplus t_y, s\oplus 1}|^2\right)} \\
            \leq& 2^{-c} \sqrt{2|S|\sum_{u} 2} & \text{(Normalization of $\ket{\psi}$)} \\
            \leq& \sqrt{4|S|2^{-c}} & \text{(Cauchy-Schwarz)} \\
            \leq& \tilde O(\sqrt{t^32^{-\min(r,c)}}).
        \end{align*}
        \item Suppose that $z$ has no tail in $D_k[x \rightarrow \bot], D_{k'}$. It follows by definition of $S$ that $z$ has a tail only when $y \in S$. We write \begin{align*}
            &\norm{\Pi^{\neg v, x_1}_K \mathsf{fT} \ket{\psi}} \\=& \norm{\Pi^{\neg v, x_1} \sum_{y, tl, s} \alpha_{y, tl, s} \ket{D_k[x \rightarrow y]}_K \ket{D_{k'}}_{K'} \ket{tl \oplus t_y}_T \ket{s\oplus \mathbb I[y \in S]}_S} \\
            =& \norm{\sum_{y, u, tl, s} \alpha_{y, tl, s}2^{-c} \ket{D_k[x \rightarrow u]}_K \ket{D_{k'}}_{K'} \ket{tl \oplus t_y}_T \ket{s\oplus \mathbb I[y \in S]}_S} \\
            =& \norm{\sum_{u, tl, s}  2^{-c}\left(\sum_{y \not\in S} \alpha_{y, tl, s \oplus 1} + \sum_{y \in S} \alpha_{y, tl \oplus t_y, s}\right) \ket{u}\ket{tl}\ket{s}} \\
            \leq& \norm{\sum_{u, tl, s}  2^{-c}\left(-\sum_{y \in S} \alpha_{y, tl, s \oplus 1} + \sum_{y \in S} \alpha_{y, tl \oplus t_y, s}\right) \ket{u}\ket{tl}\ket{s}} & \text{(Validity of $\ket{\psi}$)} \\
            \leq& \norm{\sum_{u, tl, s}  2^{-c}\sqrt{2|S|\left(\sum_{y \in S} |\alpha_{y, tl, s \oplus 1}|^2 + \sum_{y \in S} |\alpha_{y, tl \oplus t_y, s}|^2\right)} \ket{u}\ket{tl}\ket{s}} & \text{($L_2$ bound on $L_1$)} \\
            =& 2^{-c}\sqrt{2|S|\sum_{y \in S, u, tl, s} \left(|\alpha_{y, tl, s \oplus 1}|^2 + |\alpha_{y, tl \oplus t_y, s}|^2\right)} \\
            \leq& 2^{-c} \sqrt{2|S|\sum_{u} 1} & \text{(Normalization of $\ket{\psi}$)} \\
            \leq& \sqrt{2|S|2^{-c}} & \text{(Cauchy-Schwarz)} \\
            \leq& \tilde O(\sqrt{t^32^{-\min(r, c)}}).
        \end{align*}
    \end{enumerate}
    Putting everything together, we have \begin{align*}
        {\Pi^{\neg v, x_1}_K \mathsf{fT} \ket{\psi_{\mathbf x, \mathbf x', z}}} \leq& \tilde O(\beta + \sqrt{t^3 2^{-\min(r, c)}}).
    \end{align*}
    A similar argument implies the same bound for any $x_i \in \mathbf x$ and $x_i' \in \mathbf x'$, so we have \begin{align*}
        \norm{\Pi^{\neg v}_{K, K'} \mathsf{fT} \ket{\psi_{\mathbf x, \mathbf x', z}}} \leq& \sum_{x_i \in \mathbf x} \norm{\Pi^{\neg v, x_i}_K \mathsf{fT} \ket{\psi_{\mathbf x, \mathbf x', z}}} + \sum_{x_i' \in \mathbf x'} \norm{\Pi^{\neg v, x_i'}_{K'} \mathsf{fT} \ket{\psi_{\mathbf x, \mathbf x', z}}} \\
        \leq& \tilde O(t\beta + \sqrt{t^5 2^{-\min(r, c)}}).
    \end{align*}
\end{proof}

\paragraph{Putting the pieces together.}

We are now ready to argue that answering a query using the simulated Msponge (i.e. $\MSp^{\algo S^f}$) and using the ideal functionality (i.e. $f$) are indistinguishable. We will show that this is the case for any state which has a bounded number of input output points in each database, and is close to both valid and ideal good.

\begin{lemma}
    Consider a state $\ket{\psi}$ on adversary register $A$ and database registers $K, K', H, F$ that satisfies \begin{align*}
        \norm{\Pi^{\neg t} \ket{\psi}} = 0 && \norm{\Pi^{\neg v} \ket{\psi}} = \gamma && \norm{\Pi^{\neg \ig} \ket{\psi}} = \beta.
    \end{align*}
    If we define the ideal and real states after an $l$-block query to the sponge, we may write \begin{align*}
        \ket{\psi^I} \coloneqq& \cO_{AF} A \ket{\psi} \\
        \ket{\psi^R} \coloneqq& \Sp^{\algo S^f}_{AHKK'F} A \ket{\psi},
    \end{align*}
    where here the isometry $A$ creates the workspace used in the out-of-place sponge circuit---which is untouched in the ideal $\ket{\psi^I}$, and (approximately) uncomputed in the real $\ket{\psi^R}$. Then we have \begin{align*}
        \norm{\ket{\psi^I} - \ket{\psi^R}} \leq \tilde O( \beta + \gamma + l\sqrt{t^3 2^{-\min(r, c)}} + \sqrt[4]{l2^{-c}}).
    \end{align*}
    \label{lem:indis-one-query}
\end{lemma}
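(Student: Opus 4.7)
The plan is to write both $\ket{\psi^R}$ and $\ket{\psi^I}$ in a common form that isolates their difference inside the operators $O^R$ and $O^I$ from \Cref{defn:real-ideal-tail-isoms}, and then invoke \Cref{lem:pull-out-tail}. Specifically, I will show that there is a single post-processing isometry $G$, acting on the workspace registers together with $A, K, K', H, F$, such that
\[
    \Sp^{\algo S^f}_{AHKK'F}\, A \ket\psi = G\, O^R\, A\ket\psi
    \qquad\text{and}\qquad
    \cO_{AF}\, A \ket\psi = G\, O^I\, A\ket\psi.
\]
In the real chain $G \, O^R$, the map $G$ implements exactly what the simulator does on the final $h$-query on input $Z$: read the success flag $S$ (set by $\mathsf{fT}$ inside $O^R$), branch on it, either query $D_h$ or query $F$ on the tail portion of $T$ and XOR $\mathsf{head}(z)$ (the recovered head portion of $T$) back into the output register, then uncompute the workspace. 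In the ideal chain $G \, O^I$, the registers $T$ and $S$ have been populated by hand with $(X, Hd)$ and $1$ via the XORs of $O^I$, so the same post-processing $G$ degenerates into a single $\cO_{AF}$ call on input $X$ followed by uncomputation of the (never-touched) workspace. With this reduction the problem collapses to bounding $\norm{O^R A \ket\psi - O^I A\ket\psi}$.

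Next I would verify the hypotheses of \Cref{lem:pull-out-tail} on the state $A\ket\psi$. Since $A$ only appends fresh zero registers, it preserves database size, validity, and ideal-goodness exactly: $\Pi^{\neg t} A \ket\psi = 0$, $\norm{\Pi^{\neg v} A\ket\psi} = \gamma$, and $\norm{\Pi^{\neg \ig} A\ket\psi} = \beta$. To handle the validity defect, I would split $A\ket\psi = A\Pi^v\ket\psi + A\Pi^{\neg v}\ket\psi$ and apply the triangle inequality; since $O^R$ and $O^I$ are both isometries, this peels off a $2\gamma$ contribution and leaves a fully valid substate of norm at most $1$ to which \Cref{lem:pull-out-tail} applies directly with parameter $\beta$.

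Assembling the pieces, \Cref{lem:pull-out-tail} yields $\norm{(O^I - O^R)\,\Pi^v A \ket\psi} \leq \tilde O\bigl(\beta + l\sqrt{t^3 2^{-\min(r,c)}} + \sqrt[4]{l\, 2^{-c}}\bigr)$, and because $G$ is an isometry it preserves distances, so this bound upgrades to the same bound on $\norm{\ket{\psi^R} - \ket{\psi^I}}$ after adding the $2\gamma$ contribution from the validity split, matching the target. The main obstacle I foresee is cleanly establishing the factorization $\Sp^{\algo S^f} A = G \, O^R$ (and the analogous $\cO_{AF} A = G \, O^I$): this requires careful bookkeeping of the simulator's $h$-subroutine against the head/tail conventions baked into $U$, and it must be checked that the only "moving parts" distinguishing the real and ideal executions are precisely the operators of \Cref{defn:real-ideal-tail-isoms}. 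Once that factorization is made precise, the remaining estimates are direct consequences of \Cref{lem:pull-out-tail,lem:ft-valid} and \Cref{cor:always-valid}.
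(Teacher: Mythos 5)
Your reduction of the problem to bounding $\norm{(O^I - O^R)\,\Pi^v A\ket\psi}$ via \Cref{lem:pull-out-tail} is the right strategic core, and your handling of the validity defect by projecting onto $\Pi^v$ and peeling off $2\gamma$ matches the paper's hybrid $\ket\phi = \Pi^v\ket\psi$. However, the load-bearing claim that a \emph{single} post-processing isometry $G$ satisfies both $\Sp^{\algo S^f}A = G\,O^R$ and $\cO_{AF}A = G\,O^I$ exactly is false, and dismissing it as ``careful bookkeeping'' hides a genuine extra error term. The issue is the uncomputation of the $T, S$ registers. In the real chain the simulator uncomputes them by re-running $\mathsf{fT}$ (it has no access to the original input $X$; it only sees $z$), while in the ideal chain $O^I$ populated $T, S$ with $(X,\mathsf{Hd})$ and $1$, so the natural uncompute is to XOR $X$ and $\mathsf{Hd}$ back in. These two operators agree on a basis state only when $\mathsf{tail}_1(z) = X$ and $\mathsf{head}(z) = \mathsf{Hd}$, which is not an identity: it fails exactly on the states that have ``escaped'' the recorded subspace. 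The paper makes both post-processing chains explicit as $U^\dagger\,\mathsf{fT}\,\algo S^h\,O^R$ versus $U^\dagger\,X_S\,XOR_{Hd\to T}\,XOR_{X\to T}\,\cO_{XYF}\,O^I$, shows the middle step collapses exactly ($\algo S^h\,XOR_{Hd\to Y}\,O^I\ket\phi = \cO\,O^I\ket\phi$), and then uses \Cref{lem:always-tail} / \Cref{cor:always-tail} (the fundamental lemma applied to the final $\mathsf{fT}$) to bound the uncomputation discrepancy by $O(\sqrt[4]{l2^{-c}})$. You cite \Cref{lem:ft-valid} and \Cref{cor:always-valid} for ``the remaining estimates,'' but neither addresses this step --- the missing ingredient is \Cref{cor:always-tail}. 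In the end the extra term is absorbed into the same $\tilde O(\sqrt[4]{l2^{-c}})$ already present from \Cref{lem:pull-out-tail}, so your stated bound is correct, but the proof as written would not compile without replacing the exact-$G$ claim with the two-chain comparison plus this additional invocation.

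One smaller caveat: after projecting to $\Pi^v\ket\psi$, the ideal-good defect you feed to \Cref{lem:pull-out-tail} is not $\beta$ but $\norm{\Pi^{\neg\ig}\Pi^v\ket\psi}$, which since $\Pi^v$ and $\Pi^{\ig}$ do not commute can only be bounded by $\beta + \gamma$. This does not hurt the final bound, but you should say it explicitly rather than asserting the parameter is $\beta$.
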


\begin{proof}
    Recalling that we break the adversary state $A$ into an $X$ and $Y$ component (plus leftover storage), and using $X$ to denote a bitflip and $XOR_{A \rightarrow B}$ to denote bitwise xor from $A$ to $B$, we have \begin{align*}
        \ket{\psi^I} =& U^\dagger X_S XOR_{Hd \rightarrow T} XOR_{X \rightarrow T} \cO_{XYF} O^I \ket{\psi} \\
        \ket{\psi^R} =& U^\dagger \mathsf{fT}_{KK'TS} \algo S^h_{XYHTSF} O^R \ket{\psi}.
    \end{align*}
    Note that we are being somewhat sloppy with notation in the second equation; in fact, the calls to $\mathsf{find\mbox{-}tail}$ made by $\algo S^h$ are here being absorbed into $O^R$ and the latter $\mathsf{fT}$ call.
    
    We have \begin{align}
        \ket{\phi} =& \Pi^{v} \ket{\psi}, &&
        \norm{\ket{\psi} - \ket{\phi}} = \gamma, \label{eqn:cons-phi-near-psi}
    \end{align}
    so we will consider instead the states \begin{align*}
        \ket{\phi^I} =& U^\dagger X_S XOR_{Hd \rightarrow T} XOR_{X \rightarrow T} \cO_{XYF} O^I \ket{\phi} \\
        \ket{\phi^R} =& U^\dagger \mathsf{fT}_{KK'TS} \algo S^h_{XYHTF} XOR_{H\rightarrow Y} O^R \ket{\phi}.
    \end{align*}
    From the fact that $\ket{\phi}$ is valid, we have \begin{align}
        \norm{O^I\ket{\phi}-O^R\ket{\phi}} \leq& \tilde O\left(\beta + l\sqrt{t^32^{-\min(r, c)}} + \sqrt[4]{l2^{-c}}\right) & \text{(From \Cref{lem:pull-out-tail})}. \label{eqn:OI-to-OR-cons}
    \end{align}
    Observe that we have \begin{align}
        \algo S^h_{XYHTF} XOR_{Hd \rightarrow Y} O^I \ket{\phi} =& \cO_{XYF} O^I \ket{\phi}. \label{eqn:cons-IR-same}
    \end{align}
    To see this, recall that $O^I$ simply places the input $X$ into the tail register $T$, and sets the flag $S$ to be success. On states of this form, the simulator will implement the call to $h$ via a (compressed oracle) call to $f$, and then XOR in the recovered head. The head $Hd$ is already XORed into the $Y$ register, and so the two cancel out and the remaining operation is simply an oracle call to $f$ on input $X$ and output $Y$. Finally, we have \begin{align}
        \norm{\mathsf{fT}_{KK'TS}\cO_{XYF} O^I \ket{\phi} - X_S XOR_{Hd \rightarrow T} XOR_{X \rightarrow T} \cO_{XYF} O^I \ket{\phi}} \leq& O(\sqrt[4]{l2^{-c}}) \label{eqn:cons-always-tail}
    \end{align}
    from \Cref{lem:always-tail}.
    We can therefore write \begin{align*}
        \norm{\ket{\psi^I} - \ket{\psi^R}} \leq& \norm{\ket{\phi^I} - \ket{\psi^I}} + \norm{\ket{\phi^R} - \ket{\psi^R}} + \norm{\ket{\phi^I} - \ket{\phi^R}} & \text{(Triangle Inequality)} \\
        \leq& 2\gamma + \norm{(O^I - O^R)\ket{\phi}} + \norm{\algo S^h XOR_{Hd\rightarrow Y}O^I \ket{\phi} - \cO_{XYF} O^I \ket{\phi}} + \\& \Bigg\Vert X_S XOR_{Hd \rightarrow T} XOR_{X \rightarrow T} \cO_{XYF} O^I \ket{\phi} - \\& \mathsf{fT}_{KK'TS}\cO_{XYF} O^I \ket{\phi}\Bigg\Vert & \text{(Triangle Inequality)} \\
        \leq& \tilde O\left(\gamma + \beta + l\sqrt{t^32^{-\min(r, c)}} + \sqrt[4]{l2^{-c}}\right) & \text{(\Cref{eqn:cons-IR-same,eqn:OI-to-OR-cons,eqn:cons-phi-near-psi,eqn:cons-always-tail})}
    \end{align*}
\end{proof}

With this in place, we can argue that the simulator we describe is consistent.
        
\begin{theorem}
    The simulator described in \cref{subsec:sim} is consistent. In particular, for a $q$-query distinguisher which makes queries of block length at most $l$ we have \begin{align*}
    |\Pr[\algo A^{\pi,\pi^{-1},\algo S^f, f}() = 1] - \Pr[\algo A^{\pi, \pi^{-1}, \algo S^{f}, \Sp^{\algo S^f}}() = 1]| = O\left(\sqrt{q^9 2^{-\min(r, c)}} + l\sqrt[4]{q^5 2^{-\min(r, c)}}\right)
\end{align*}
    \label{thm:sim-equiv-compute}
\end{theorem}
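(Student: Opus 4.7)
The plan is a hybrid argument across the distinguisher's (M)sponge queries that swaps $\Sp^{\algo S^f}$ for $f$ one query at a time, invoking \Cref{lem:indis-one-query} at each swap. I would first invoke \Cref{lem:tail-fix-ok} to reduce to the Msponge, which lets me work with the cleaner out-of-place construction and only costs an $O(l)$ factor in primitive-query count per sponge query. Let $H_j$ denote the hybrid in which the first $j$ sponge queries are answered via $\cO$ on $F$ (i.e., via the ideal $f$) and the remaining $q-j$ sponge queries are answered via $\MSp^{\algo S^f}$, with primitive queries always routed through the simulator. Then $H_0$ is the real world of the consistency experiment and $H_q$ is the ideal world, so via the triangle inequality and \Cref{lem:approx-uhlmann} (to absorb the extra $F$ register introduced in intermediate hybrids into an isometric embedding) it suffices to bound $\sum_j \lVert \ket{H_j} - \ket{H_{j-1}} \rVert$.

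To apply \Cref{lem:indis-one-query} at the $j$-th transition I need uniform control on three quantities evaluated on the pre-query state: the database size $t_j$, the validity defect $\gamma_j = \lVert \Pi^{\neg v} \ket{H_j} \rVert$, and the ideal-good defect $\beta_j = \lVert \Pi^{\neg \ig} \ket{H_j} \rVert$. The total number of compressed-oracle calls to $K, K', H$ across the whole experiment is bounded by $O(ql)$ (at most $l$ internal primitive calls per sponge query plus at most $q$ direct primitive queries), so $t_j = O(ql)$ uniformly. For $\beta_j$ I invoke \Cref{rem:good-trans-cap} on this whole sequence; the only non-compressed operations touching $K, K', H$ are the $\mathsf{find\mbox{-}tail}$ calls, which are diagonal on $K$ and $K'$ and therefore cannot transport amplitude from the ideal-good into the ideal-bad subspace. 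This gives a uniform bound $\beta_j \leq \tilde O\bigl(\sqrt{(ql)^5 2^{-\min(r,c)}}\bigr)$.

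The main obstacle is controlling $\gamma_j$, because validity (unlike goodness) can actually be destroyed by the simulator: compressed-oracle calls preserve the valid subspace by \Cref{cor:always-valid}, but the $\mathsf{find\mbox{-}tail}$ calls made while answering $h$-queries and inside the out-of-place Msponge circuit leak amplitude off the valid subspace. \Cref{lem:ft-valid} bounds each such leak by $\tilde O\bigl(t \beta + \sqrt{t^5 2^{-\min(r,c)}}\bigr)$, and summing over the $O(ql)$ find-tail calls in the experiment with $t \leq ql$ and the $\beta$ bound above yields a uniform bound on $\gamma_j$. The awkwardness here mirrors the gap in the Merkle--Damg\r{a}rd proof flagged in \Cref{subsec:approach}: since $\Pi^v$ and $\Pi^{\ig}$ do not commute we cannot project onto their intersection, so we pay the goodness and validity defects separately on every query rather than jointly.

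Assembling: plugging the uniform bounds into \Cref{lem:indis-one-query} gives per-step cost
\begin{equation*}
  \lVert \ket{H_j} - \ket{H_{j-1}} \rVert \leq \tilde O\Bigl(\beta_j + \gamma_j + l \sqrt{t_j^3 2^{-\min(r,c)}} + \sqrt[4]{l\, 2^{-c}}\Bigr),
\end{equation*}
and summing over the $q$ hybrid transitions via triangle inequality, using $l \leq q$ to simplify the polynomial factors, produces the claimed bound $O\bigl(\sqrt{q^9 2^{-\min(r,c)}} + l \sqrt[4]{q^5 2^{-\min(r,c)}}\bigr)$ up to the polylog factors hidden in $\tilde O$. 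The $\sqrt{q^9 2^{-\min(r,c)}}$ term arises from summing the $\beta_j$ and $\gamma_j$ contributions across all $q$ hybrid transitions, while the $l \sqrt[4]{q^5 2^{-\min(r,c)}}$ term traces back to the $\sqrt[4]{l\,2^{-c}}$ correction in \Cref{lem:always-tail} picked up once per step.
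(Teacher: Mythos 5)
Your hybrid structure and the lemmas you invoke --- \Cref{lem:indis-one-query}, \Cref{rem:good-trans-cap}, \Cref{lem:ft-valid}, \Cref{lem:always-tail} --- all match the paper's proof, and the overall ideal-first hybrid plan is the right one. However, your database-size estimate $t_j = O(ql)$ for the pre-transition state is wrong, and the error is not a harmless looseness: with $t_j = O(ql)$ the per-step cost $\tilde O\bigl(\sqrt{t_j^7 2^{-\min(r,c)}}\bigr)$ coming from $\gamma_j$ sums over $q$ transitions to $\tilde O\bigl(\sqrt{q^9 l^7 2^{-\min(r,c)}}\bigr)$, which matches $\sqrt{q^9 2^{-\min(r,c)}}$ only when $l = O(1)$. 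The ``using $l \le q$'' simplification you invoke at the end actually makes things worse, giving $\sqrt{q^{16} 2^{-\min(r,c)}}$ in the extreme. The point you are missing is that the ideal-first hybrid direction is not just a convention but is what carries the load: at the $j$-th transition the pre-query state is an \emph{ideal-world} state, in which every sponge/hash query so far has been answered directly by $\cO_{AF}$ on the $F$ database and therefore never touched $K, K', H$ at all. Each direct primitive query to $\algo S^f$ adds at most one entry, so $D_k, D_{k'}, D_h$ have only $O(j) \le O(q)$ entries at the pre-transition point, not $O(jl)$. The $l$-fold database blowup you are worrying about occurs only in the real-answered sponge queries \emph{after} the divergence point, and those are cancelled by unitary invariance of the norm when comparing $\ket{H_j}$ to $\ket{H_{j-1}}$. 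Replacing your $t_j = O(ql)$ with $t_j = O(j)$ makes the summation produce the claimed $\tilde O\bigl(\sqrt{q^9 2^{-\min(r,c)}} + l\sqrt[4]{q^5 2^{-\min(r,c)}}\bigr)$. A smaller remark: the theorem is already stated in the Msponge setting of this section, so invoking \Cref{lem:tail-fix-ok} up front is redundant here; that reduction is applied once, later, when assembling \Cref{thm:main}.
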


\begin{proof}
    We will refer to the left world as the ideal (i.e. the one with $\algo A^{\pi,\pi^{-1},\algo S^f, f}()$), and the right as real. Observe that the oracle calls to $\Sp^{\algo S^f}$, as depicted in \Cref{fig:sponge-value-comp}, use intermediate workspace initialized to a fixed $0$ state. We will consider the experiment where this workspace is maintained in a separate garbage register $G$ for the remainder of the real experiment. In the ideal experiment, we simply initialize a corresponding amount of auxiliary space in the fixed $0$ state. The isometry which corresponds to this is $A$, as in \Cref{def:iso-A}. In other words, we purify both experiments; we will show that the purified states remain close in between queries.
    
    The initial state for each experiment is the same, \begin{align*}
        \ket{\psi^0} \coloneqq \ket{\alpha^0}_A \ket{\emptyset}_H \ket{\emptyset}_K \ket{\emptyset}_{K'} \ket{\emptyset}_F \ket{\emptyset}_G,
    \end{align*}
    where $\ket{\alpha^0}$ is an arbitrary initial state of the adversary. As in the proof of \Cref{thm:sim-indist}, let us consider in fact a strengthened adversary which is time unbounded, and which holds the entire truth table of $\pi$. The only constraint we require is that $\pi$ is good, which happens with probability $1 - O(2^{-n})$. Our proof will show that for any fixed, good $\pi$, the final real and ideal states are close. We will let $l$ be an upper bound on the block length of any query to the sponge/ideal functionality.
    
    We can WLOG take our adversary to alternate queries to the $\algo S^f$ oracle and the hash function ($f$ in ideal, and $\Sp^{\algo S^f}$ in real), increasing the total number of queries by at most  $2$. We can parameterize a $q$-query adversary of this form by unitaries $W^1, \dots, W^q$ acting on the internal register $A$. The final state in the real experiment is given by \begin{align*}
        \ket{\psi^R} \coloneqq W^{q}_A \dots W^2_A \Sp^{\algo S^f}_{AHKK'FG} A_G W^1_A \algo S^f_{AHKK'F} \ket{\psi^0}.
    \end{align*}
    The final state in the ideal is given by
    \begin{align*}
        \ket{\psi^I} \coloneqq W^{q}_A \dots W^2_A \cO_{AF} A_G W^1_A \algo S^f_{AHKK'F} \ket{\psi^0}.
    \end{align*}
    We define the ``hybrid'' states $\ket{\psi_t^H}$, that begins like the ideal world but switches after the $t$-th query to answering using the procedure from the real world. We have \begin{align*}
        \ket{\psi^H_t} \coloneqq&  W^q \dots \Sp^{\algo S^f}_{AHKK'FG} A_G W^{2t+1}_A  \algo S^f_{AHKK'F} W^{2t}_A \cO_{AF} A_G W^{2t-1}_A \dots W^2_A \cO_{AF} A_G W^1_A \algo S^f_{AHKK'F} \ket{\psi^0},
    \end{align*}
    where $\ket{\psi^H_0} = \ket{\psi^I}$ and $\ket{\psi^H_{t/2}} = \ket{\psi^R}$. 
    Let us also define the intermediate states in the ideal world, \begin{align*}
        \ket{\psi^I_t} \coloneqq& \cO_{AF} A_G W^{2t-1}_A \dots W^2_A \cO_{AF} A_G W^1_A \algo S^f_{AHKK'F} \ket{\psi^0},
    \end{align*}
    in other words the ideal states right after the $t+1$-th query to the hash function.
    Note that we have $\ket{\psi^I_{q/2}} = \ket{\psi^I}$, by definition. Observe that we have \begin{align}
        \norm{\Pi^{\neg \ig} \ket{\psi^I_t}} \leq& \tilde O(\sqrt{t^5 2^{-\min(r, c)}}) & \text{(By \Cref{rem:good-trans-cap})} \label{eqn:indis-good}\\
        \norm{\Pi^{\neg v} \ket{\psi^I_t}} \leq& \tilde O(\sqrt{t^7 2^{-\min(r, c)}}) & \text{(By \Cref{lem:ft-valid})} \label{eqn:indis-valid}
    \end{align}
    The first inequality comes from the fact that the $\mathsf{find\mbox{-}tail}$ operation preserves the computational basis for $D_k, D_{k'}, D_h$, and hence preserves the good subspace; the only other interaction with these databases is via compressed oracle calls. The second inequality comes from the fact that only $O(t)$ calls to $\mathsf{find\mbox{-}tail}$ are made to reach $\ket{\psi^I_t}$, and the bad component on each call can be bounded by the first inequality. This is the only operation which does not preserve the valid subspace.

    Now observe that \begin{align*}
        \norm{\ket{\psi^H_t} - \ket{\psi^H_{t-1}}} =& \norm{\ket{\psi^I_t} - \Sp^{\algo S^f}_{AHKK'FG} A_G W^{2t+1}_A  \algo S^f_{AHKK'F} W^{2t}_A \ket{\psi^I_{t-1}}},
    \end{align*}
    following from unitary and isometry invariance of norms. If we define \begin{align*}
        \ket{\phi_t} = W^{2t+1}_A  \algo S^f_{AHKK'F} W^{2t}_A \ket{\psi^I_{t-1}},
    \end{align*}
    this can be written as \begin{align*}
        \norm{\ket{\psi^H_t} - \ket{\psi^H_{t-1}}} =& \norm{\cO_{AF} A_G \ket{\phi} - \Sp^{\algo S^f}_{AHKK'FG} A_G \ket{\phi}} \\
        \leq& \tilde O\left(\sqrt{t^7 2^{-\min(r, c)}} + l\sqrt{t^3 2^{-\min(r, c)}} + \sqrt[4]{l2^{-c}}\right) & \text{(By \Cref{lem:indis-one-query})}. \numberthis \label{eqn:indis-hybrid}
    \end{align*}
    where we observe that $\ket{\phi_t}$ satisfies \Cref{eqn:indis-good,eqn:indis-valid} for the same reason that $\ket{\psi^I_t}$ does. We can then write \begin{align*}
        \norm{\ket{\psi^R} - \ket{\psi^I}} \leq& \sum_{t=1}^{q/2} \norm{\ket{\psi^H_t} - \ket{\psi^H_{t-1}}} & \text{(Triangle inequality)} \\
        \leq& O\left(\sqrt{q^9 2^{-\min(r, c)}} + l\sqrt{q^5 2^{-\min(r, c)}} + \sqrt[4]{lq^42^{-c}}\right) & \text{(By \Cref{eqn:indis-hybrid})} \\
        \leq& O\left(\sqrt{q^9 2^{-\min(r, c)}} + l\sqrt[4]{q^5 2^{-\min(r, c)}}\right)
    \end{align*}
\end{proof}

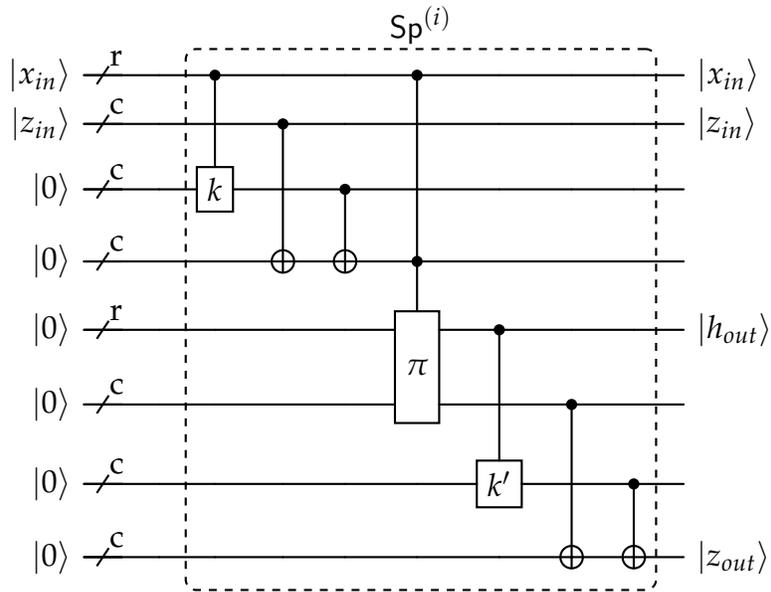
\begin{figure}[H]
    \centering
    \begin{quantikz}[transparent]
        \lstick{$\ket{x_{in}}$} & \qw\qwbundle{$r$} & & \ctrl{2}\gategroup[wires=8,steps=7,style={dashed,rounded corners,fill=white,inner xsep=0pt},background,label style={label position=above,anchor=north,yshift=0.5cm}]{$\Sp^{(i)}$} & & & \ctrl{3} & & & & \rstick{$\ket{x_{in}}$} \\
        \lstick{$\ket{z_{in}}$} & \qw\qwbundle{$c$} & & & \ctrl{2} & & & & & & \rstick{$\ket{z_{in}}$}\\
        \lstick{$\ket{0}$} & \qw\qwbundle{$c$} & & \gate{k} & & \ctrl{1} &  & & & & \\
        \lstick{$\ket{0}$} & \qw\qwbundle{$c$} & & & \targ{} & \targ{} & \ctrl{1} & & & & \\
        \lstick{$\ket{0}$} & \qw\qwbundle{$r$} & & & & & \gate[2]{\pi}& \ctrl{2} & & & \rstick{$\ket{h_{out}}$}  \\
        \lstick{$\ket{0}$} & \qw\qwbundle{$c$} & & & & & & & \ctrl{2} & & \\
        \lstick{$\ket{0}$} & \qw\qwbundle{$c$} & & & & & & \gate{k'} & & \ctrl{1} & \\
        \lstick{$\ket{0}$} & \qw\qwbundle{$c$} & & & & & & & \targ{} & \targ{} & \rstick{$\ket{z_{out}}$}
    \end{quantikz}
    \caption{An out-of-place quantum circuit for intermediate sponge rounds.}
    \label{fig:sponge-round-comp}
\end{figure}

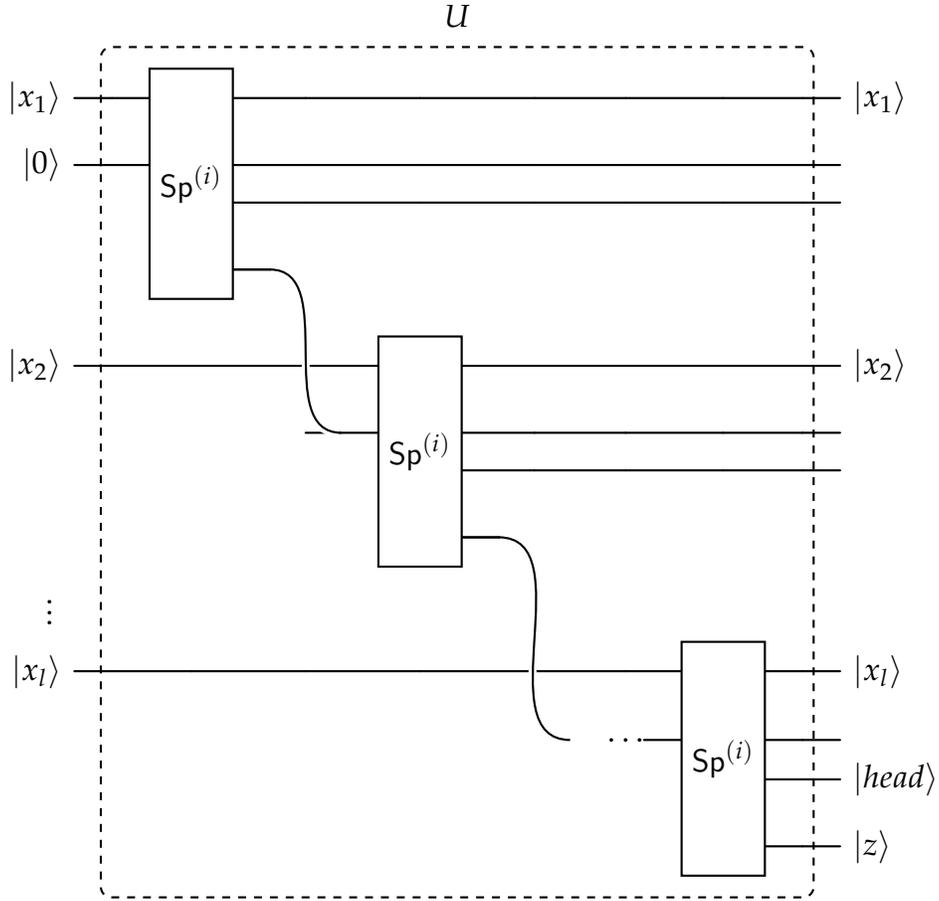
\begin{figure}[H]
    \centering
\begin{quantikz}
 \lstick{$\ket{x_1}$} &\gategroup[wires=13,steps=8,style={dashed,rounded corners,fill=white,inner xsep=0pt},background,label style={label position=above,anchor=north,yshift=0.5cm}]{$U$}&\gate[4]{\Sp^{(i)}} &&&&&&& \rstick{$\ket{x_1}$}\\
\lstick{$\ket{0}$}& &&&&&&&&\\
&\wireoverride{n} &\wireoverride{n}&&&&&&&\\
&\wireoverride{n}&\wireoverride{n} &\permute{3}&\wireoverride{n}&\wireoverride{n}&\wireoverride{n} &\wireoverride{n} &\wireoverride{n}\\
\lstick{$\ket{x_2}$} &&&&\gate[4]{\Sp^{(i)}} &&&&&\rstick{$\ket{x_2}$}\\
&\wireoverride{n}&\wireoverride{n}&\wireoverride{n}&&&&&&\\
&\wireoverride{n}&\wireoverride{n} &\wireoverride{n}&\wireoverride{n} & &&&&\\
&\wireoverride{n}&\wireoverride{n} &\wireoverride{n}&\wireoverride{n} &\permute{4}&\wireoverride{n} &\wireoverride{n}&\wireoverride{n}\\
\lstick{\myvdots}&\wireoverride{n}&\wireoverride{n}&\wireoverride{n}&\wireoverride{n}&\wireoverride{n}&\wireoverride{n} &\wireoverride{n}\\ 
\lstick{$\ket{x_l}$} &&&&&&&\gate[4]{\Sp^{(i)}} &&\rstick{$\ket{x_l}$}\\
&\wireoverride{n}&\wireoverride{n}&\wireoverride{n}&\wireoverride{n}&\wireoverride{n}&\wireoverride{n}\ldots&&&\\
&\wireoverride{n}&\wireoverride{n}&\wireoverride{n}&\wireoverride{n}&\wireoverride{n}&\wireoverride{n}&\wireoverride{n} &&\rstick{$\ket{head}$} \\
&\wireoverride{n}&\wireoverride{n}&\wireoverride{n}&\wireoverride{n}&\wireoverride{n}&\wireoverride{n}&\wireoverride{n} && \rstick{$\ket{z}$}
\end{quantikz}
    \caption{Circuit $U$, representing an out-of-place quantum circuit for computing the sponge state before the $h$ call. Because we consider the purified experiment, uncomputed workspace is not discarded.}
    \label{fig:sponge-state-comp}
\end{figure}

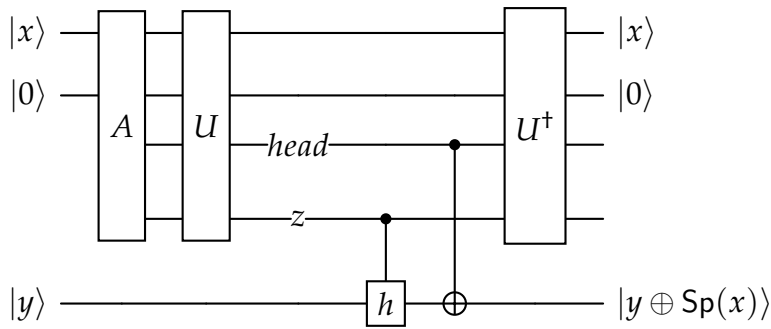
\begin{figure}[H]
    \centering
\begin{quantikz}
\lstick{$\ket{x}$} &\gate[4]{A} &\gate[4]{U} &&& &\gate[4]{U^\dagger} & \rstick{$\ket{x}$} \\
\lstick{$\ket{0}$} &&&&&&& \rstick{$\ket{0}$}\\
&\wireoverride{n}&&\push{head}&&\ctrl{2}&& \\
&\wireoverride{n}&&\push{z}&\ctrl{1}&&& \\
\lstick{$\ket{y}$} &&&&\gate{h}&\targ{}&&  \rstick{$\ket{y\oplus\Sp(x)}$}
\end{quantikz}   
    \caption{An out-of-place quantum circuit for computing the Msponge. Because we consider the purified experiment, uncomputed workspace is not discarded.}
    \label{fig:sponge-value-comp}
\end{figure}

\subsection{Main theorem}

Combining the results of \cref{subsec:indist,subsec:cons}, we obtain a bound as follows.

\begin{theorem}
    There exists an efficient simulator $\algo S$ for the sponge construction such that for all adversaries $\algo A$ making $q$ queries of block length at most $l$, they can distinguish with advantage \begin{align*}
        \norm{\Pr[\algo A^{\varphi, \varphi^{-1}, \Sp^{\varphi}}() = 1 - \algo A^{\algo S^f, f}() = 1]} = O\left(l^2 \sqrt{q^9 2^{-\min(r, c)}} + l^3 \sqrt[4]{q^5 2^{-\min(r, c)}}\right).
    \end{align*}
    \label{thm:main}
\end{theorem}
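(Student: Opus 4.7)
The plan is to assemble the main theorem from the two detailed analyses just completed. Combining the indistinguishability bound of \Cref{thm:sim-indist} with the consistency bound of \Cref{thm:sim-equiv-compute} via \Cref{lem:cons-and-ind} immediately yields, for any $q'$-query adversary of block length at most $l$, a Msponge-indifferentiability bound of
\[
O\left(\sqrt{{q'}^9 2^{-\min(r,c)}} + l\sqrt[4]{{q'}^5 2^{-\min(r,c)}}\right)
\]
in the decomposed-permutation model of \Cref{game:ideal-permutation-world}. The simulator $\algo S$ in the statement of the theorem will be the composition of the Msponge simulator of \Cref{subsec:sim} with the $\Sp \leftrightarrow \MSp$ translation provided by \Cref{lem:tail-fixing-implementation}, together with the implementation of $\varphi, \varphi^{-1}$ assembled from $\omega_h \circ \tau_{k'} \circ \pi \circ \sigma_k$ and its inverse.

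To lift this to the standard sponge in the uniform-permutation model, I would chain two indifferentiability reductions using \Cref{lem:indiff-trans}. The first, \Cref{lem:alternate-perm}, gives a perfect (zero-advantage) equivalence between \Cref{game:real-permutation-world} and \Cref{game:ideal-permutation-world} at no query cost, so one may work in the decomposed model for free. The second, \Cref{lem:tail-fix-ok}, gives perfect indifferentiability between $\Sp^\varphi$ and $\MSp^\varphi$ via the simulator of \Cref{lem:tail-fixing-implementation}, in which each query of block length $l$ is answered using $O(l)$ oracle queries. A $q$-query block-length-$l$ adversary against $(\varphi, \varphi^{-1}, \Sp^\varphi)$ is therefore simulated by a distinguisher making $O(ql)$ queries of block length at most $l$ against the Msponge in the decomposed model. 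Substituting $q' = O(ql)$ into the Msponge bound above and carefully tracking the powers of $l$ produces the advertised bound
\[
O\left(l^2 \sqrt{q^9 2^{-\min(r, c)}} + l^3 \sqrt[4]{q^5 2^{-\min(r, c)}}\right).
\]

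Finally, the whole analysis is conditioned on $\pi$ being good in the sense of \Cref{def:perm-good}; a uniformly random $\pi$ fails this test with probability $O(2^{-n})$, which is absorbed into the big-$O$. Efficiency of $\algo S$ is inherited from the efficient database representation of \Cref{alg:compress-circ}, the fact that $\mathsf{find\mbox{-}tail}$ runs in time polynomial in the database size, and the standard observation that $\pi$ may be instantiated with an efficient pseudorandom permutation without affecting a quantum-polynomial-time adversary's advantage (as $\pi$ is used only as a black box).

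The main obstacle is the careful bookkeeping of the powers of $l$ and $q$ across the chain of reductions: a blind substitution $q' \mapsto ql$ into the Msponge bound gives exponents of $l$ larger than the stated $l^2$ and $l^3$, so one should separate the adversary's queries by type, distinguishing queries to the primitive $\varphi, \varphi^{-1}$ (preserved at $O(1)$ overhead by both reductions) from queries to the construction $\Sp^\varphi$ (which incur the $O(l)$ blowup only when translated into Msponge queries), and propagate this distinction through the hybrid analysis of \Cref{lem:indiff-trans}. All substantive technical content was already carried out in \Cref{subsec:indist,subsec:cons}; this final theorem only stitches the pieces together.
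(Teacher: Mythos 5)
Your high-level route is the same as the paper's: combine the indistinguishability bound (\Cref{thm:sim-indist}) and the consistency bound (\Cref{thm:sim-equiv-compute}) via \Cref{lem:cons-and-ind} to get Msponge indifferentiability, pass to the decomposed-permutation model via \Cref{lem:alternate-perm}, translate Msponge to sponge via \Cref{lem:tail-fix-ok}, handle efficiency via a PRP for $\pi$, and absorb the $O(2^{-n})$ loss from conditioning on $\pi$ being good. That all matches.

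Two things diverge. First, you omit the step \Cref{claim:sponge-approach}, which passes from the single-squeeze sponge on $(\bit^r)^*$ to the full-output XOF on $\bit^*$ with a valid pad. The Msponge result is proved for a single round of squeezing, and this extra step is the source of the \emph{second} $O(l)$ factor; the paper multiplies the Msponge bound by $O(l)$ once for $\mathsf{fix}$ (\Cref{lem:tail-fix-ok}) and once for arbitrary squeezing (\Cref{claim:sponge-approach}), giving $l^0 \to l^2$ on the first term and $l^1 \to l^3$ on the second. Without the second translation you only account for one power of $l$.

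Second, your ``main obstacle'' paragraph correctly flags that a naive substitution $q \mapsto O(ql)$ into $\sqrt{q^9 \cdot 2^{-\min(r,c)}}$ produces $l^{4.5}$, far above the stated $l^2$. But the fix you sketch --- separating primitive queries from construction queries --- does not by itself recover the stated exponents: a single construction query of block length $l$ already contributes $O(l)$ new entries to the compressed databases (the out-of-place circuit makes $O(l)$ calls to $k,k'$), so the relevant database size $t$ grows by $O(l)$ per construction query no matter how the queries are typed. The paper avoids this by asserting the translations cost ``an additional factor $O(l)$'' on the \emph{advantage} rather than on the query count; this is what gives $l^2$ and $l^3$, but it is stated without justification. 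In short, your concern is legitimate, your proposed remedy is incomplete, and you should be aware that the paper's own proof of \Cref{thm:main} does not resolve the bookkeeping more carefully than you do --- the loose constants here are part of why the final indifferentiability bound is advertised as non-tight.
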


\begin{proof}
    It follows from \Cref{thm:sim-equiv-compute,thm:sim-indist,lem:cons-and-ind} that the Msponge is indifferentiable up to a bound $O\left(\sqrt{q^9 2^{-\min(r, c)}} + l \sqrt[4]{q^5 2^{-\min(r, c)}}\right)$, in the model where both the simulator and the adversary query a truly random permutation $\pi$, and the simulator provides oracles for random functions $h, k, k'$. To obtain a computationally bounded simulator, the simulator can construct a quantum secure psuedorandom permutation \cite{zhandry2016notequantumsecureprps} in place of $\pi$, and we maintain security against computationally bounded adversaries. In the computationally unbounded setting, the simulator may use something sufficiently close to a $q$-wise independent permutation, such as a $\mathsf{\poly}(q)$ round unbalanced Feistel cipher. Note also that our indifferentiability result does not directly consider superposition queries over different lengths. However, this can be handled by \Cref{lem:ortho-subspaces-norm}, noting that the query operators will be controlled on the length control register, and therefore have orthogonal ranges and domains.
    
    Plugging into \Cref{lem:tail-fix-ok}, we obtain an indifferentiability result for the proper sponge with a single round of squeezing and inputs from $(\bit^r)^*$, with an additional factor $O(l)$. In turn plugging this into \Cref{claim:sponge-approach}, we incur yet another factor $O(l)$, but now obtain an indifferentiability result of the full sponge construction with any valid $\mathsf{PAD}$ function.
\end{proof}

\subsection{On the gap in Merkle-Damg\r ard indifferentiability}
\label{subsec:MD-gap}
In the original indifferentiability proof of Merkle-Damg\r ard \cite{Zhandry2018}, the validity of compressed databases throughout the experiment is not explicitly considered. Recall that validity is the property that decompressing on any input leads to a well defined output. By \Cref{cor:always-valid} (or analogous result in \cite{Zhandry2018}), interacting with compressed databases through compressed oracle calls perfectly preserves validity. However, subroutines which inspect the database in a more direct way, such as $\mathsf{find\mbox{-}tail}$ in this work and $\mathsf{find\mbox{-}input}$ in the Merkle-Damg\r ard proof, do not perfectly preserve validity. In addition, projecting onto good databases is diagonal in the computational basis of the database register, and projecting onto valid databases is diagonal in the fourier basis; hence, these projectors also do not commute.

To see why this is a problem, consider for instance the simpler two-round domain extender of Section 5, in particular Lemma 16, of \cite{Zhandry2018}. The setting of this lemma is a hybrid in which certain bad events, namely round function collisions, have been projected out, and further the simulator has potentially made many calls to $\mathsf{find\mbox{-}input}$ while answering previous queries. The proof of Lemma 16 in \cite{Zhandry2018} implicitly assumes that, after decompressing on input $x_1$, there is an output pair in the database. This is necessary because the simulator must later recover the input to answer consistently. The prior calls to $\mathsf{find\mbox{-}input}$ and projections onto good databases however break this assumption, making the database not fully valid.

However, we believe this gap is fixable. Using an argument similar to that in \Cref{subsec:usually-valid}, one could likely bound how much $\mathsf{find\mbox{-}input}$ deviates from the valid subspace. Observe that the good projector also negligibly disturbs the state, so it cannot dramatically leave the valid subspace. Applying this fix directly would almost certainly result in a looser bound: because one cannot simultaneously project onto good and valid databases, it seems one has to pay for both the bad and the invalid component on each query. Indeed, this is what happens in our \Cref{thm:sim-equiv-compute}, and is one of the reasons why our indifferentiability bound is weaker than our direct bounds for collision and preimage resistance presented in \cref{sec:bounds-proofs}.

\newpage
\appendix

\section{Deferred proofs}
\label{sec:def-proofs}

\subsection{Indifferentiability}

Restatement of \Cref{lem:comp-V-comm-K'}:
\begin{lemma}
    The commutator between the isometry $V$ and the local compression operators on $K'$ almost commute on the good subspace: \begin{align*}
        \norm{[V_{KK'H}, L_{XK'}]\Pi^t_{KK'H}\Pi^{\rg}_{KK'}} \leq \tilde O(\left(\sqrt{t^3 2^{-\min(r, c)}}\right).
    \end{align*}
\end{lemma}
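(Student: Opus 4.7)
The plan is to adapt the argument of \Cref{lem:comp-V-comm-K} to the $K'$ register, replacing the role of bad attach events with bad completion events. The only nontrivial structural difference is that a $k'$-query can change \emph{both} the set of intermediate pairs and the set of state values with a tail (since a fresh output of $k'$ immediately contributes a new tail, whereas a fresh output of $k$ only contributes via a subsequent $\pi$-step). This is accounted for by \Cref{lem:few-bad-comps} and \Cref{lem:tail-insensitive}, which are the $K'$-analogs of \Cref{lem:few-bad-attach} and the tail-sensitivity bound used in the $K$ case.

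First I would split using $\Pi^{\in\mathsf{db}}_{XK'} + \Pi^{\not\in\mathsf{db}}_{XK'} = I$. In the case $x \notin D_{k'}$, the operator $L_{XK'}$ preserves the computational basis on everything except the $x$-th register of $D_{k'}$, and on distinct databases the images of $V$ are orthogonal. By \Cref{lem:ortho-subspaces-norm} it suffices to fix a computational basis input with $x \notin D_{k'}$ and good databases of size at most $t$. Write out the two states $L_{XK'} V\ket{\psi}$ and $V L_{XK'}\ket{\psi}$ as uniform superpositions over images $y \in \bit^c$ of $x$. Let $B$ be the set of images causing a bad completion (together with the at most one image $y$ making $z = z'_i \oplus y$ land on a previously-reachable state, per the proof of \Cref{lem:few-bad-comps}); then $|B| = O(t^3 n + t^3 2^{c-r})$. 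For $y \notin B$, the assignment $[x \rightarrow y]$ does not create new intermediate pairs sharing a prefix with existing ones, and the update $V$ performs on the $H,F$ registers is identical on both sides. Projecting onto $\Pi^{B\perp}$ and using the triangle inequality together with $\sqrt{|B|\,2^{-c}} \leq \tilde O(\sqrt{t^3\,2^{-\min(r,c)}})$ handles this case, just as in \Cref{lem:comp-V-comm-K}.

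The main obstacle is the case $x \in D_{k'}$, because here the input state $\ket{\psi}$ may be supported entirely on images that cause the $V$-rewiring to differ, so projecting onto ``non-bad'' images is not enough. Following \Cref{lem:comp-V-comm-K}, I would fix good $D_k, D_{k'}, D_h$ and consider a superposition
\[
\ket{\psi} = \sum_{z} \alpha_z \ket{x}_X \ket{D_k}_K \ket{D_{k'}[x\rightarrow z]}_{K'} \ket{D_h}_H,
\]
define databases $D_h^{I,y}$ and $D_f^{I,y}$ for each candidate image $y$ (as in the $K$ case but now cutting along the $k'$ slot), and write out $\ket{\psi^I} - V\ket{\psi^R}$ explicitly. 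After cancelling the $\ket{D_{k'}[x\rightarrow z]}$ terms, the difference reduces to two sums analogous to $T_1, T_2$ in the $K$ proof: one over $2^{-c} \sum_u \ket{D_{k'}[x\rightarrow u]}(\ket{D_h^{I,u}}\ket{D_f^{I,u}} - \ket{D_h^{I,z}}\ket{D_f^{I,z}})$ and one involving $\ket{D_{k'}}$ with the $\bot$ assignment. The crucial bound is that the set of $z$ for which $(D_h^{I,z}, D_f^{I,z})$ differs from $(D_h^{I,\bot}, D_f^{I,\bot})$ has size $O(t^3 n + t^3 2^{c-r})$, which is exactly the content of \Cref{lem:tail-insensitive} combined with the bad-completion count of \Cref{lem:few-bad-comps} (used here instead of the bad-attach count).

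Finally I would apply the $L_1$-vs-$L_2$ inequality to this restricted set of $z$ to conclude $\sum_{z:\, \text{relevant}} |\alpha_z| \leq O(\sqrt{t^3 n + t^3 2^{c-r}})$, and then split the inner sum $T_1$ into the sub-cases ``$z$ makes the ideal databases differ from $\bot$'' and ``$z$ agrees with $\bot$'' exactly as in the $K$ proof. Each piece is bounded by $\tilde O(\sqrt{t^3 2^{-\min(r,c)}})$, and the $T_2$ contribution is handled by the same $L_1$-$L_2$ inequality. Summing the two cases via the triangle inequality yields the claimed bound. The only real novelty compared to \Cref{lem:comp-V-comm-K} is the bookkeeping for the extra ``one image creates a new tail at an $h$-queried $z$'' term, which contributes at most $O(t)$ bad images and is absorbed into the $\tilde O(\cdot)$.
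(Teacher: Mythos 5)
Your proposal is correct and takes essentially the same approach as the paper's own proof, which is a near-verbatim adaptation of \Cref{lem:comp-V-comm-K} with bad attaches replaced by bad completions and the roles of $D_k$ and $D_{k'}$ swapped. One small imprecision: in your Case 1 the parenthetical about ``the at most one image $y$ making $z = z_i' \oplus y$ land on a previously-reachable state'' is already subsumed by the definition of a bad completion (case (3) of \Cref{def:badcomp}), and that sub-case actually contributes $O(t)$ images per \Cref{lem:few-bad-comps} rather than one, as you correctly note later; neither point affects the final bound.
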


\begin{proof}[Proof of \Cref{lem:comp-V-comm-K'}]
    The key observation is essentially that $V$ is a bijective function on good databases which leaves the $k, k'$ databases invariant, and a query to $k'$ remains mostly within the set of good databases. In more detail, consider projectors $\inD_{XK'}, \ninD_{XK'}$ which sum to the identity. We split into two cases. \begin{enumerate}[label=(Case \arabic*)]
        \item $x \not\in D_{k'}$, or $\norm{[V_{KK'H}, L_{XK'}]\Pi^{\rg}_{KK'}\ninD_{XK'}}$. Note that $\L_{XK'}$ preserves the computational basis everywhere except the $x$-th register of $D_k$, and on distinct databases the images of $V$ are orthogonal. By \Cref{lem:ortho-subspaces-norm}, it suffices to fix $x, D_k, D_{k'}, D_h$ as computational basis states with good databases of size at most $t$ and $x \not\in D_{k'}$, and consider the action on the state \begin{align*}
            \ket{\psi} =& \ket{x}_X \ket{D_k}_K \ket{D_{k'}}_{K'} \ket{D_h^R}_{H}.
        \end{align*}
        
        Let us define $D_{h}^I$ (in the ideal world) as the database of input/output pairs in ${D_h^R}$ (in the real world) without a tail under $D_k, D_{k'}$, and ${D_{f}^I}$ (in the ideal world) as the database for $f$ constructed from the set of input/output pairs in ${D_h}$ with a tail. These are such that \begin{align*}
            V \ket{D_k}_K\ket{D_{k'}}_{K'}\ket{D_h^R}_H = \ket{D_k}_K\ket{D_{k'}}_{K'}\ket{D_h^I}_H \ket{D_f^I}_F.
        \end{align*}
        Recall that databases without a superscript do not change between the ideal and real world.
        We may now compute \begin{align*}
                \ket{\psi^I} =& L_{XK'} V \ket{\psi} \\
                =& \ket{A}_A \ket{x}_X \ket{D_h^I}_H \ket{D_{k}}_{K}\ket{D_f^I}_F \left( \sum_{y \in \bit^c} 2^{-c/2} \ket{D_{k'}[x\rightarrow y]}_{K'}\right)
            \end{align*}
        as well as \begin{align*}
            \ket{\psi^{R}} =& L_{XK} \ket{\psi} \\
            =& \ket{A}_A \ket{x}_X \ket{D_h^R}_H \ket{D_{k}}_{K} \left( \sum_{y \in \bit^c} 2^{-c/2} \ket{D_{k'}[x\rightarrow y]}_{K'}\right).
        \end{align*}
        
        Let $B$ be the set of images $y$ of $x$ such that assigning $x$ to $y$ will cause a bad completion. Observe that, from the analysis of \Cref{lem:few-bad-comps}, we have $|B| \leq O(t^3 n + t^3 2^{r-c})$. For any value $y\not\in B$, we have the identity \begin{align*}
            V \ket{D_k}_X \ket{D_{k'}[x\rightarrow y]}_{K'} \ket{D_h^I}_H =& \ket{D_k}_X \ket{D_{k'}[x\rightarrow y]}_{K'} \ket{D_h^R}_H \ket{D_f^R},
        \end{align*}
        because in such cases no new values $z \in D_h$ will have a tail under the assignment $[x\rightarrow y]$. It follows that \begin{align}
            V \Pi^{B\perp} \ket{\psi^R} = \Pi^{B\perp} \ket{\psi^I}. \label{eqn:k'-xnin-nobad}
        \end{align}
        We can write \begin{align*}
            \norm{\ket{\psi^R} - \Pi^{B\perp}\ket{\psi^R}} =& \norm{\ket{A}_A \ket{x}_X \ket{D_h^R}_H \ket{D_{k}}_{K} \left( \sum_{y \in B} 2^{-c/2} \ket{D_{k'}[x\rightarrow y]}_{K'}\right)} \\
            =& \sqrt{|B| 2^{-c}} \numberthis \label{eqn:k'-xnin-real} \\
            \leq& \tilde O(\sqrt{t^3 2^{-\min(r, c)}}), \\
            \norm{\ket{\psi^I} - \Pi^{B\perp}\ket{\psi^I}} =& \norm{\ket{A}_A \ket{x}_X \ket{D_h^I}_H \ket{D_{k}}_{K} \ket{D_f^I}_F \left( \sum_{y \in B} 2^{-c/2} \ket{D_{k'}[x\rightarrow y]}_{K'}\right)} \\
            =& \sqrt{|B| 2^{-c}} \\
            \leq& \tilde O(\sqrt{t^3 2^{-\min(r, c)}}). \numberthis \label{eqn:k'-xnin-ideal}
        \end{align*}
        Putting everything together, we have \begin{align*}
            \norm{\ket{\psi^I} - V \ket{\psi^{R}}} \leq& \norm{\ket{\psi^I} - \Pi^{B\perp} \ket{\psi^{I}}} + \norm{\Pi^{B\perp} \ket{\psi^I} - V\Pi^{B\perp} \ket{\psi^R}}\\& + \norm{V\ket{\psi^R} - V \Pi^{B \perp}\ket{\psi^{R}}} & \text{(Triangle inequality)} \\
            =& \norm{\ket{\psi^I} - \Pi^{B\perp} \ket{\psi^{I}}} + \norm{V\ket{\psi^R} - V \Pi^{B \perp}\ket{\psi^{R}}} & \text{(\Cref{eqn:k'-xnin-nobad})} \\ 
            =& \norm{\ket{\psi^I} - \Pi^{B\perp} \ket{\psi^{I}}} + \norm{\ket{\psi^R} - \Pi^{B \perp}\ket{\psi^{R}}} & \text{($V$ an isometry)} \\
            \leq& \tilde O(\sqrt{t^3 2^{-\min(r, c)}}) & \text{(\Cref{eqn:k'-xnin-real,eqn:k'-xnin-ideal})}
        \end{align*}
        \item $x \in D_{k'}$, or $\norm{[V_{KK'H}, L_{XK'}]\Pi^{\rg}_{KK'}\inD_{XK'}}$. Once again, note that $\L_{XK'}$ preserves the computational basis everywhere except the $x$-th register of $D_k$, and on distinct databases the images of $V$ are orthogonal. By \Cref{lem:ortho-subspaces-norm}, it suffices to fix $x, D_k, D_{k'}, D_h$ as computational basis states with good databases of size at most $t$ and where $x \not\in D_{k'}$, and consider the action on a state of the form \begin{align*}
            \ket{\psi} =& \sum_{z \text{ s.t. } D_k, D_{k'}[x\rightarrow z] \text{ is good}} \alpha_z \ket{x}_X \ket{D_k}_K \ket{D_{k'}[x\rightarrow z]}_{K'} \ket{D_h}_H.
        \end{align*}
        Let us similarly define ${D_{h}^{I,y}}$ (in the ideal world) as the database of input/output pairs in ${D_h^R}$ (in the real world) without a tail under $D_k, D_{k'}[x\rightarrow y]$, and ${D_{f}^{I, y}}$ (in the ideal world) as the database for $f$ constructed from the set of input/output pairs in ${D_h}$ with a tail in $D_k, D_{k'}[x\rightarrow y]$. These are such that \begin{align*}
            V \ket{D_k}_K\ket{D_{k'}[x\rightarrow y]}_{K'}\ket{D_h^R}_H = \ket{D_k}_K\ket{D_{k'}[x\rightarrow y]}_{K'}\ket{D_h^{I, y}}_H \ket{D_f^{I, y}}_F.
        \end{align*}
        Recall that databases without a superscript do not change between the ideal and real world. We may now compute \begin{align*}
                \ket{\psi^I} =& L_{XK'} V \ket{\psi} \\
                =& \sum_{z \text{ s.t. } D_k, D_{k'}[x\rightarrow z] \text{ is good}} \alpha_z \ket{A}_A \ket{x}_X \ket{D_h^{I, z}}_H \ket{D_{k}}_{K}\ket{D_f^{I, z}}_F \otimes \\
                &\left(\ket{D_{k'}[x\rightarrow z]}_{K'} - 2^{-c} \sum_{u \in \bit^c} \ket{D_{k'}[x\rightarrow u]}_{K'} + 2^{-c/2} \ket{D_{k'}}_{K'}\right)
            \end{align*}
        as well as \begin{align*}
            \ket{\psi^{R}} =& L_{XK'} \ket{\psi} \\
                =& \sum_{z \text{ s.t. } D_k, D_{k'}[x\rightarrow z] \text{ is good}} \alpha_z \ket{A}_A \ket{x}_X \ket{D_h^R}_H \ket{D_{k}}_{K}\otimes \\
                &\left(\ket{D_{k'}[x\rightarrow z]}_{K'} - 2^{-c} \sum_{u \in \bit^c} \ket{D_{k'}[x\rightarrow u]}_{K'} + 2^{-c/2} \ket{D_{k'}}_{K'}\right)
        \end{align*}
        Let $B$ be the set of images $y$ of $x$ such that assigning $x$ to $y$ in $D_{k'}$ will cause a bad completion. Observe that, from the analysis of \Cref{lem:few-bad-comps}, we have $|B| \leq O(t^3 n + t^3 2^{r-c})$. For any value $y\not\in B$, we have the identity \begin{align*}
            V \ket{D_k}_X \ket{D_{k'}[x\rightarrow y]}_{K'} \ket{D_h^I}_H =& \ket{D_k}_X \ket{D_{k'}[x\rightarrow y]}_{K'} \ket{D_h^R}_H \ket{D_f^R},
        \end{align*}
        because in such cases no new state values which are in the database $D_h$ will have a tail under the assignment $[x\rightarrow y]$. Observe here that the initial state $\ket{\psi}$ may be supported on images that lead to a ``bad completion'', i.e. if $x$ is part of a tail. Let us analyze the difference \begin{align*}
            \ket{\psi^I} - V\ket{\psi^R} =& \sum_{z \text{ s.t. } D_k, D_{k'}[x\rightarrow z] \text{ is good}} \alpha_z \ket{A}_A \ket{x}_X \ket{D_h^{I, z}}_H \ket{D_{k}}_{K}\ket{D_f^{I, z}}_F \otimes \\
                &\left(\ket{D_{k'}[x\rightarrow z]}_{K'} - 2^{-c} \sum_{u \in \bit^c} \ket{D_{k'}[x\rightarrow u]}_{K'} + 2^{-c/2} \ket{D_{k'}}_{K'}\right)- \\
                &\sum_{z \text{ s.t. } D_k, D_{k'}[x\rightarrow z] \text{ is good}} \alpha_z \ket{A}_A \ket{x}_X \ket{D_{k}}_{K}\otimes \\
                &\Bigg(\ket{D_{k'}[x\rightarrow z]}_{K'} \ket{D_h^{I, z}}_H \ket{D_f^{I, z}}_F  - 2^{-c} \sum_{u \in \bit^c} \ket{D_{k'}[x\rightarrow u]}_K\ket{D_h^{I, u}}_H \ket{D_f^{I, u}}_F +\\& 2^{-c/2} \ket{D_{k'}}_{K'} \ket{D_h^{I, \bot}}_H \ket{D_f^{I, \bot}}_F\Bigg)
        \end{align*}
        Which, after collapsing terms, can be written as \begin{align*}
            \ket{\psi^I} - V\ket{\psi^R} =& \sum_{z \text{ s.t. } D_k, D_{k'}[x\rightarrow z] \text{ is good}} \alpha_z \ket{A}_A \ket{x}_X \ket{D_{k}}_{K} \otimes \\
                &\left(2^{-c} \sum_{u \in \bit^c} \ket{D_{k'}[x\rightarrow u]}_{K'}(\ket{D_h^{I, u}}_H \ket{D_f^{I, u}}_F - \ket{D_h^{I, z}}_H\ket{D_f^{I, z}}_F)\right) + \\
                &\sum_{z \text{ s.t. } D_k, D_{k'}[x\rightarrow z] \text{ is good}} \alpha_z \ket{A}_A \ket{x}_X \ket{D_{k}}_{K}\otimes \\
                &\left(\ket{D_{k'}}_{K'} 2^{-c/2}(\ket{D_h^{I, z}}_H \ket{D_f^{I, z}}_F  - \ket{D_h^{I, \bot}}_H \ket{D_f^{I, \bot}}_F)\right).
        \end{align*}
        We can then write \begin{align*}
            \norm{VL\ket{\psi} - LV \ket{\psi}} =& \norm{V\ket{\psi^R} - \ket{\psi^I}} \\
            \leq& 2\underbrace{\norm{\sum_{z \in \bit^c} \alpha_z \sum_{u \in \bit^c, (D^{I, u}_h D^{I, u}_f) \neq (D^{I, z}_h D^{I, z}_f)} 2^{-c} \ket{D_{k'}[x\rightarrow u]}}}_{T_1} +\\& \underbrace{2\norm{\sum_{z \in \bit^c, (D^{I, z}_h D^{I, z}_f) \neq (D^{I, \bot}_h D^{I, \bot}_f)} \alpha_z 2^{-c/2}}}_{T_2} & \text{(Triangle Inequality)}
        \end{align*}
        Let us focus on bounding each term individually. We begin with \begin{align*}
            T_1 =& \norm{\sum_{z \in \bit^c\quad} \sum_{u \in \bit^c, (D^{I, u}_h D^{I, u}_f) \neq (D^{I, z}_h D^{I, z}_f)} \alpha_z 2^{-c} \ket{D_{k'}[x\rightarrow u]}} \\
            \leq& 2 \underbrace{\norm{\sum_{z \in \bit^c, (D^{I, z}_h D^{I, z}_f) \neq (D^{I, \bot}_h D^{I, \bot}_f)\quad} \sum_{u \in \bit^c, (D^{I, u}_h D^{I, u}_f) \neq (D^{I, z}_h D^{I, z}_f)} \alpha_z 2^{-c} \ket{D_{k'}[x\rightarrow u]}}}_{T_{11}} + \\&2 \underbrace{\norm{\sum_{z \in \bit^c, (D^{I, z}_h D^{I, z}_f) = (D^{I, \bot}_h D^{I, \bot}_f)\quad} \sum_{u \in \bit^c, (D^{I, u}_h D^{I, u}_f) \neq (D^{I, z}_h D^{I, z}_f)} \alpha_z 2^{-c} \ket{D_{k'}[x\rightarrow u]}}}_{T_{12}} & \text{(Triangle inequality)}
        \end{align*}
        Observe that the second sum in $T_{11}$ is over at most $2^c$ terms, which gives an upper bound of $\abs{\alpha_z}2^{-c/2}$ for it's norm. The first sum in $T_{11}$ is over a set of size $O(t^3 n + t^3 2^{c-r})$ by \Cref{lem:tail-insensitive,lem:few-bad-comps}, so by the relation between $L_1$ and $L_2$ norm we have \begin{align*}
            \sum_{z \in \bit^c, (D^{I, z}_h D^{I, z}_f) \neq (D^{I, \bot}_h D^{I, \bot}_f)} \abs{\alpha_z} \leq O(\sqrt{t^3 n + t^3 2^{c-r}}). \numberthis \label{eqn:sum-alphs-k'}
        \end{align*}
        It follows that $T_{11} \leq \tilde O(\sqrt{t^3 2^{-\min(r, c)}})$.
        
        Observe that the second sum in $T_{12}$ is over a set of size $O(t^3 n + t^3 2^{c-r})$ by \Cref{lem:tail-insensitive,lem:few-bad-comps}, which gives an upper bound of $\abs{\alpha_z}\sqrt{O(t^3 + t^3 2^{c-r}} \cdot 2^{-c}$ for it's norm. The first sum in $T_{12}$ is over a set of size at most $2^{-c}$, so by the relation between $L_1$ and $L_2$ norm we have \begin{align*}
            \sum_{z \in \bit^c, (D^{I, z}_h D^{I, z}_f) = (D^{I, \bot}_h D^{I, \bot}_f)} \abs{\alpha_z} \leq 2^{c/2}.
        \end{align*}
        It follows that $T_{12} \leq \tilde O(\sqrt{t^3 2^{-\min(r, c)}})$.

        Finally, it follows from \Cref{eqn:sum-alphs-k'} that $T_2 \leq \tilde O(\sqrt{t^3 2^{-\min(r, c)}})$.
    \end{enumerate}
\end{proof}

\section{Permutation tail bounds}
\label{sec:perm-tail-bounds}

Let $X, Y$ be subsets of $[N]$, and consider choosing a random permutation $\pi \sim S_N$ acting on $[N]$. We are interested in the number of elements sent by $\pi$ from $X$ to $Y$, and potentially want to consider many different $X$ and $Y$ at the same time. The most relevant characterization is below, which depends on the lemmas which follow it. 

Intuitively, suppose we have a partition of preimages into equal sized bins, and a partition of images into equal sized buckets (potentially of a different size than the bins). \Cref{thm:most-collisions} states that the maximum number of elements sent from any given bin to any given bucket will be within a constant factor of the average, if the average is appreciable. If the average number is small (say $O(1)$ or even less than $1$), then the max will be a positive integer which scales like $n=\log N$.

\begin{theorem}
    Let $N \in \N$, $X_1\sqcup X_2 \sqcup ... \sqcup X_l$ be a partition of $[N]$ such that $|X_i|=x$ for all $i$, and similarly $Y_1\sqcup Y_2 \sqcup ... \sqcup Y_k$ be a partition of $[N]$ such that $|Y_i|=y$ for all $i$. Define $m=xy/N$ and suppose $N=2^n$. Then, over the uniform choice of $\pi \sim S_N$, we have the bound$$
        \Pr_{\pi \sim S_N}[\max_{i \in [l], j \in [k]} (|\pi(X_i) \cap Y_j|) \geq 7m+3n] \leq 2^{-n}.
    $$
    \label{thm:most-collisions}
\end{theorem}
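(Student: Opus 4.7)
The plan is to apply a standard Chernoff-type tail bound to each pair $(X_i, Y_j)$ separately, and then take a union bound over the at most $N^2$ pairs. The key observation is that, because $\pi$ is a uniformly random permutation of $[N]$, the image $\pi(X_i)$ is distributed as a uniformly random $x$-subset of $[N]$, so $X_{ij} \coloneqq |\pi(X_i) \cap Y_j|$ is a hypergeometric random variable with mean $m = xy/N$.

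For any fixed $(i,j)$ I would use the Chernoff bound in the form
\[
\Pr\bigl[X_{ij} \geq t\bigr] \;\leq\; \left(\frac{em}{t}\right)^t \qquad \text{for } t \geq m,
\]
which follows from the standard multiplicative binomial bound $\Pr[X \geq (1+\delta)\mu] \leq \bigl(e^\delta/(1+\delta)^{1+\delta}\bigr)^\mu$ by substituting $t = (1+\delta)m$ and dropping the leading $e^{-m}$ factor; Hoeffding's classical observation that the hypergeometric upper tail is dominated by the binomial upper tail of the same mean then transfers the bound to our setting. I would then apply this with $t = 7m + 3n$: since $t \geq 7m$ we get $em/t \leq e/7 < 1$, hence $(em/t)^t \leq (e/7)^t$, and since $t \geq 3n$ and the base is less than $1$ we further have $(e/7)^t \leq (e/7)^{3n}$. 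A direct numerical check shows $\log_2(7/e) > 4/3$, and therefore
\[
(e/7)^{3n} \;=\; 2^{-3n \log_2(7/e)} \;\leq\; 2^{-4n}.
\]

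Finally, because $x, y \geq 1$ we have $l, k \leq N$, and so $lk \leq N^2 = 2^{2n}$. A union bound over all pairs $(i,j)$ then yields
\[
\Pr\!\Bigl[\max_{i,j} |\pi(X_i) \cap Y_j| \geq 7m + 3n\Bigr] \;\leq\; lk \cdot 2^{-4n} \;\leq\; 2^{-2n} \;\leq\; 2^{-n},
\]
as required. There is no substantive obstacle in this argument; the only real point of care is that the constants in the statement are tuned so that the per-pair tail beats the $2^{2n}$ factor coming from the union bound. The choice $\alpha = 7$ gives $\log_2(\alpha/e) > 4/3$, which is exactly the numerical slack needed, and the additive $3n$ ensures the tail has decay $2^{-\Theta(n)}$ even when $m$ is tiny.
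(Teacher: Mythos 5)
Your proposal is correct and follows the same high-level strategy as the paper: a Chernoff-type tail bound on $|\pi(X_i)\cap Y_j|$ for each fixed pair, followed by a union bound over the at most $N^2=2^{2n}$ pairs. The only difference is that the paper invokes a ready-made tail bound for this quantity (Theorem 3.15 of \cite{carolan24oneway}, giving $\Pr[\,|\pi(X)\cap Y|\ge m+u\,]\le e^{-3u/4}$ for $u\ge 6m$), whereas you re-derive an equivalent (in fact slightly tighter, $2^{-4n}$ per pair versus $2^{-3n}$) bound from first principles via the hypergeometric interpretation, Hoeffding's domination of the hypergeometric by the binomial, and the standard multiplicative Chernoff form $\Pr[X\ge t]\le(e\mu/t)^t$ — a clean, self-contained alternative.
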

\begin{proof} 
    Let us first consider any fixed $X =X_i$ and $Y=Y_j$. From \Cref{lem:X-pairs-uniform} and \Cref{lem:X-pairs-uniform-tail-bound}, we have the equation \begin{align*}
        \Pr_{\pi \sim S_N}[|\pi(X) \cap Y| \geq 7m + k] \leq& \exp(-3k/4) \\
        \leq& 2^{-k}.
    \end{align*}
    Let us choose $k$ to be $3n$, and union bound over all possible values of $i, j$. Note that there are at most $N=2^n$ values for each $i$ and $j$, as each element of the partition is of the same (non-empty) size. We then have  \begin{align*}
        \Pr_{\pi \sim S_N}[\max_{i \in [l], j \in [k]}(|\pi(X_i) \cap Y_j|) \geq 7m + 3n] \leq& \sum_{i \in [l], j \in [k]} \Pr_{\pi \sim S_N}[(|\pi(X_i) \cap Y_j|) \geq 7m + 3n] \\
        \leq& \sum_{i \in [l], j \in [k]} 2^{-3n} \\
        \leq& 2^{-n},
    \end{align*}
    where the first line follows from a union bound. This completes the proof.
\end{proof}

\subsection{Helper lemmas}

We first can compute the expectation by linearity of expectation.

\begin{lemma}
Let $N \in \N$ and let $X, Y \subseteq [N]$ be subsets. Then, on average over the uniform choice of $\pi \sim S_N$, the expected number of elements sent from $X$ to $Y$ by $\pi$  equals
$$
\underset{\pi \sim S_N}{\E}[|\pi(X) \cap Y|]=\frac{|X||Y|}{N} = m.
$$
    \label{lem:X-pairs-uniform}
\end{lemma}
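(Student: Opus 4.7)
The plan is to decompose the random variable $|\pi(X)\cap Y|$ as a sum of indicator variables and apply linearity of expectation. For each $x \in X$, let $I_x$ denote the indicator of the event $\pi(x) \in Y$. Then by definition of the image,
\begin{equation*}
|\pi(X) \cap Y| = \sum_{x \in X} I_x,
\end{equation*}
and linearity of expectation gives $\E[|\pi(X)\cap Y|] = \sum_{x\in X} \Pr_{\pi \sim S_N}[\pi(x) \in Y]$.

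The only substantive fact I need is that the marginal distribution of $\pi(x)$, for any fixed $x \in [N]$, is uniform over $[N]$ when $\pi$ is sampled uniformly from $S_N$. This follows from the symmetry of the uniform measure on $S_N$: for any two targets $a,b \in [N]$, composing $\pi$ on the left with the transposition $(a\;b)$ is a bijection of $S_N$ sending permutations with $\pi(x)=a$ to those with $\pi(x)=b$, so $\Pr[\pi(x)=a] = \Pr[\pi(x)=b] = 1/N$. Summing over $a \in Y$ yields $\Pr[\pi(x)\in Y] = |Y|/N$ for every $x$.

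Plugging this back in gives $\E[|\pi(X)\cap Y|] = \sum_{x \in X} |Y|/N = |X||Y|/N = m$, which is exactly the claim. There is no real obstacle here: the lemma is a standard linearity-of-expectation calculation and serves only as the mean around which the concentration lemma \Cref{lem:X-pairs-uniform-tail-bound} is then applied in the proof of \Cref{thm:most-collisions}. Note that the statement does not require the $I_x$ to be independent, which is fortunate since they are not (a random permutation imposes nontrivial joint structure); independence will only matter for the tail bound, not for this expectation calculation.
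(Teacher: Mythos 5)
Your proof is correct. The paper does not actually give a proof of \Cref{lem:X-pairs-uniform} — it simply cites it as Theorem 3.13 of \cite{carolan24oneway} — so there is nothing in the paper to compare against, but your argument is the standard and essentially canonical one: decompose $|\pi(X)\cap Y|$ as $\sum_{x\in X}\mathbb{1}[\pi(x)\in Y]$ (valid because $\pi$ is injective), apply linearity of expectation, and use the fact that the marginal of $\pi(x)$ under a uniform permutation is uniform on $[N]$. Your transposition-conjugation argument for that marginal uniformity is correct, and your closing remark that independence of the indicators is neither needed nor available here is an accurate and worthwhile observation.
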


\begin{proof} \cite{carolan24oneway}, Theorem 3.13.
\end{proof}

We now state our tail bound, which will be sufficiently tight for the case where the expected number of subset pairs is small.

\begin{lemma}
   Let $N \in \N$, $X, Y \subseteq [N]$ be subsets, and $N=2^n$. Denote $x=|X|$, $y=|Y|$, and $m=\underset{{\sigma \sim S_N}}{\mathbb{E}}[|\sigma(X) \cap Y|]$. Then, for any real number $u \geq 6m$, it holds that \begin{align*}
    \underset{\pi \sim S_N}{\Pr}\Big[|\pi(X) \cap Y|\geq m + u \Big] \leq \exp\left(-\frac{3}{4}u\right).
    \end{align*}
    \label{lem:X-pairs-uniform-tail-bound}
\end{lemma}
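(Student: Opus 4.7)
The plan is to recognize $Z := |\pi(X) \cap Y|$ as a hypergeometric random variable and apply a standard Chernoff-type tail bound. When $\pi$ is drawn uniformly from $S_N$, the image $\pi(X)$ is a uniformly random $|X|$-element subset of $[N]$, so $Z$ counts the number of elements of this random subset landing in $Y$; thus $Z \sim \mathrm{Hyp}(N, |Y|, |X|)$ with mean $m = |X||Y|/N$ by \Cref{lem:X-pairs-uniform}. Although $Z$ is a sum of dependent Bernoulli indicators (one per element of $X$), Hoeffding's classical result on sampling without replacement guarantees that its moment generating function is dominated by that of the corresponding binomial $\mathrm{Bin}(|X|, |Y|/N)$, so all upper-tail Chernoff bounds for $\mathrm{Bin}$ transfer to $Z$.

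Applying the standard MGF argument with $\mathbb{E}[e^{\lambda Z}] \leq \exp(m(e^\lambda - 1))$ for $\lambda > 0$, combining with Markov's inequality, and optimizing $\lambda = \ln(t/m)$, I obtain the multiplicative Chernoff bound
\begin{equation*}
    \Pr[Z \geq t] \;\leq\; \exp\bigl(t - m - t\ln(t/m)\bigr) \qquad \text{for all } t \geq m.
\end{equation*}
It then remains to plug in $t = m + u$ and verify that the exponent is at most $-3u/4$ under the assumption $u \geq 6m$. Setting $r = u/m \geq 6$, the exponent becomes $u - (m+u)\ln(1+r)$, so the target inequality reduces to showing $(1+r)\ln(1+r) \geq 7r/4$ for all $r \geq 6$.

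Verifying this last inequality is the one piece of actual calculation, and it is the mildest obstacle: defining $f(r) = (1+r)\ln(1+r) - 7r/4$, one computes $f(6) = 7\ln 7 - 21/2 > 0$ and $f'(r) = \ln(1+r) - 3/4 > 0$ for $r \geq 6$, so $f$ is nonnegative on $[6, \infty)$. This yields $\Pr[Z \geq m + u] \leq \exp(-3u/4)$ as claimed. The only conceptual point is knowing that hypergeometric upper tails are dominated by binomial ones; once that is invoked, the rest is a routine Chernoff computation together with a short check that the constants $6$ and $3/4$ line up.
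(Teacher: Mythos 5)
The paper does not give a proof of this lemma at all---it simply cites Theorem 3.15 of \cite{carolan24oneway}---so there is no paper proof to compare against word-for-word. Your reconstruction is correct and self-contained, and it is also very likely the same route the cited reference takes: it is the natural one. The key observation that $|\pi(X)\cap Y|$ is hypergeometric (since $\pi(X)$ is a uniformly random $|X|$-subset of $[N]$) is right, the invocation of Hoeffding's domination of hypergeometric MGFs by the corresponding binomial MGF is the correct tool to get around the dependence among indicators, and the Chernoff computation checks out: with $t=m+u$ and $r=u/m$, the exponent $u-(m+u)\ln(1+r)=m\bigl(r-(1+r)\ln(1+r)\bigr)$ is at most $-\tfrac{3}{4}mr=-\tfrac{3}{4}u$ iff $(1+r)\ln(1+r)\ge\tfrac{7}{4}r$, and your check $f(6)=7\ln 7-21/2>0$ together with $f'(r)=\ln(1+r)-3/4>0$ for $r\ge 6$ settles that. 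The only nit is the degenerate case $m=0$ (i.e.\ $X$ or $Y$ empty), where $r=u/m$ is undefined; there $Z\equiv 0$ and the claimed bound holds trivially, but a one-line remark covering this would make the argument airtight. In short: the proposal is sound, and it supplies the elementary argument where the paper merely defers to an external reference.
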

\begin{proof} \cite{carolan24oneway}, Theorem 3.15.
\end{proof}

\printbibliography

\end{document}